%% file: main.tex
\documentclass[12pt]{article}

\usepackage{amsmath,amssymb,amsfonts,amsthm}	
\usepackage{hyperref}							
\usepackage{orcidlink}                          

\usepackage[ngerman,american]{babel}
\usepackage[T1]{fontenc}			
\usepackage[utf8]{inputenc}			
\usepackage{fourier}				
\usepackage{mathtools}              
\usepackage{etoc}                   

\usepackage[onehalfspacing]{setspace}	

\usepackage{tikz}		
\usepackage{graphicx}	

\usepackage{geometry}
\usepackage{fancyhdr}
\usepackage{caption}
\usepackage{booktabs}
\usepackage{minipage-marginpar}

\theoremstyle{plain}				            
\newtheorem{theorem}{Theorem}[section]          
\newtheorem{corollary}{Corollary}[section]      
\theoremstyle{definition}			            
\newtheorem{definition}{Definition}[section]	
\newtheorem{assumption}{Assumption}[section]	
\newtheorem{lemma}{Lemma}[section]	            
\newtheorem{algorithm}{Algorithm}               

\usepackage{enumitem}

\usepackage[figure]{hypcap}
\usepackage{pdfpages}
\usepackage{colortbl} 
\usepackage{xcolor}
\usepackage{placeins}       
\usepackage{subcaption}
\usepackage{tabularx}
\usepackage{appendix}
\usepackage{array}

\DeclareFontFamily{U}{mathx}{}
\DeclareFontShape{U}{mathx}{m}{n}{<-> mathx10}{}
\DeclareSymbolFont{mathx}{U}{mathx}{m}{n}
\DeclareMathAccent{\widecheck}{0}{mathx}{"71}

\makeatletter
\newcommand*\rel@kern[1]{\kern#1\dimexpr\macc@kerna}
\newcommand*\widebar[1]{%
	\begingroup
	\def\mathaccent##1##2{%
		\rel@kern{0.8}%
		\overline{\rel@kern{-0.8}\macc@nucleus\rel@kern{0.2}}%
		\rel@kern{-0.2}%
	}%
	\macc@depth\@ne
	\let\math@bgroup\@empty \let\math@egroup\macc@set@skewchar
	\mathsurround\z@ \frozen@everymath{\mathgroup\macc@group\relax}%
	\macc@set@skewchar\relax
	\let\mathaccentV\macc@nested@a
	\macc@nested@a\relax111{#1}%
	\endgroup
}
\makeatother

\newcommand{\E}{\mathbb{E}}

\renewcommand{\vec}{\text{vec}}
\renewcommand{\u}{\boldsymbol{u}}

\usepackage[authoryear]{natbib}	
\usepackage{chapterbib}         
\usepackage{authblk}  
\title{
    \huge{Inference in Panel SVARs \\
    with Two-Way Dependence}
}
\author{%
    \begin{minipage}[t]{0.46\textwidth}
    \centering
    Lennart Empting*~\orcidlink{0009-0004-5068-4639}\\ \vspace{-2.75ex}
    {\footnotesize University of Duisburg-Essen}
    \end{minipage}
    \hspace{0.01\textwidth}
    \begin{minipage}[t]{0.46\textwidth}
    \centering
    Saskia Öztürk~\orcidlink{0009-0008-1347-3313}\\ \vspace{-2.75ex}
    {\footnotesize Georg August University of Göttingen}
    \end{minipage}

    \vspace{1.5ex}

    \begin{minipage}[t]{0.46\textwidth}
    \centering
    Simone Maxand~\orcidlink{0000-0002-3153-7922}\\ \vspace{-2.75ex}
    {\footnotesize Leipzig University}
    \end{minipage}
    \hspace{0.01\textwidth}
    \begin{minipage}[t]{0.46\textwidth}
    \centering
    Konstantin Wagner~\orcidlink{0009-0006-0216-0755}\\ \vspace{-2.75ex}
    {\footnotesize Humboldt University of Berlin}
    \end{minipage}
    
    \vspace{1.5ex}
    
    {\footnotesize * Corresponding author. Universitätsstraße 12, D-45141 Essen. Email: \textit{lennart.empting@vwl.uni-due.de}}
}
\date{\today}

\begin{document}

\maketitle				
\input{0_Abstract}
\pagenumbering{arabic}	
\thispagestyle{empty}	


\include{MainBody}

\clearpage			
\appendix

\setcounter{page}{1}

\include{Appendix}

\end{document}

%% file: 0_Abstract.tex
\hrule
\begin{abstract}
    We develop inference for heterogeneous panel vector autoregressive (VAR) models and their structural impulse response functions, where the error terms are dependent in the cross-sectional and time dimensions (two-way dependence). For proxy-identified structural VARs, we first adapt mean-group estimation and construct a closed-form pooled identification. Considering the reduced-form VAR dynamics, residual covariances, and structural parameters, we derive a joint central limit theorem under joint limits in the cross-sectional and time dimensions. We then propose a recursive-design panel moving-block bootstrap that resamples the estimated error terms in (i) the temporal, (ii) the cross-sectional, or (iii) both dimensions jointly, and prove consistency of the joint panel-block scheme under two-way dependence. Simulations show coverage close to the nominal level for the joint scheme but severe undercoverage for cross-sectional resampling.
\end{abstract}

\noindent
\textit{Keywords:}
    Residual-based moving block bootstrap, structural vector autoregressive models, panel VAR, proxy SVAR, pooled identification, two-way dependence. \\[8pt]
\textit{JEL Classification:}
    C30, C32, C33. \\
\hrule

%% file: MainBody.tex
\input{1_Introduction}

\input{2_PVAR}

\input{3_Bootstrap}

\input{4_Simulation}

\input{5_Conclusion}

    \addcontentsline{toc}{section}{References}
    \bibliographystyle{ecta}  
    \bibliography{Lit_Boot.bib,Lit_Eco.bib,Lit_SoftRefs.bib,Lit_Asym.bib}

%% file: 1_Introduction.tex
\section{Introduction} \label{sec:Introduction}
Bootstrapping has become an established method for the estimation of sampling distributions of multivariate time series processes $ y_{tk}, t=1,\ldots,T,k=1,\ldots,K$. 
Initially, the so-called \textit{iid~bootstrap} \citep{Efron1979,Efron1982} has been adapted to VARs as a system of $ K $ equations \citep{Runkle1987}. The \textit{bootstrap-after-bootstrap} procedure corrects the small-sample bias of OLS in this case \citep{Kilian1998}. Recent studies exercise particular caution about temporal dependencies over $ t = 1,\ldots,T $ in the error terms such as conditional heteroskedasticity and stochastic volatility. The \textit{wild bootstrap} \citep{Liu1988,Mammen1993,DavidsonFlachaire2008} has been approved for univariate AR processes \citep{GoncalvesKilian2004,GoncalvesKilian2007}, for tests on the coefficients of VAR dynamics \citep{HafnerHerwartz2009}, and for structural VAR models identified by conditional heteroskedasticity \citep{LutkepohlSchlaak2019}. The wild bootstrap yet cannot correctly replicate the sampling distribution of residual covariance matrices and derivatives thereof such as structural IRFs. In contrast, the \textit{moving-block bootstrap} (MBB) \citep{Kunsch1989,Liu1992} has been validated for the IRFs of VARs \citep{BruggemannEtAl2016} and \textit{proxy SVARs} identified by external instruments \citep{JentschLunsford2019,JentschLunsford2021}. The related studies provide joint asymptotics for the dynamic VAR parameters and residual covariances, and they prove consistency of the residual-based MBB for SVARs and their IRFs under general assumptions on the error terms. The MBB thus constitutes a promising approach to be extended by the cross-section dimension for panel data.

The VAR analysis of panel data $ y_{itk}$ has developed extensively over the last decades. Combining several individuals $ i=1,\ldots,N $ such as countries, the model development ranges from \textit{large-scale VAR} and \textit{global VAR} \citep{DeesEtAl2007b} to homogeneous \textit{panel VAR} \citep{HoltzEtAl1988}. While the former two emphasize dynamic interactions among the individuals, the pooled sample of the latter provides increased degrees of freedom and sample variation. This induces a more precise estimation and higher test power.\footnote{For an overview, see \citet{Hsiao2007} on the advantages of panel data analysis and  \citet{CanovaCiccarelli2013} as well as \citet[ch.~5]{Hsiao2022} on the VAR analysis of panel data.} In the case of heterogeneous panel data, \textit{mean-group} (MG) estimation \citep{PesaranSmith1995} has been adapted for VAR coefficients \citep{Rebucci2010} or structural IRFs \citep{GambacortaEtAl2014,CesaEtAl2015,BernothHerwartz2021,Herwartz2017}, where the heterogeneous parameters are separately estimated by individual methods and subsequently combined by cross-sectional averages. Specifically, \cite{Herwartz2017} estimate a common rotation matrix of the SVAR identification problem by \textit{independent component analysis}. Their selective pooling of the shock series may be subsumed under \textit{pooled MG identification} in analogy to \textit{pooled MG estimation} of long-run coefficients in single equations \citep{PesaranEtAl1999} and bivariate systems \citep{ChudikEtAl2023}.
Overall, these developments in panel estimation demand panel inference methods analogous to the mentioned VAR procedures.

Despite this demand, an established panel VAR bootstrap that adapts the insights from the well-developed individual MBB is not yet available. Tellingly, panel VAR studies often only employ ad-hoc created bootstrap procedures while omitting the theoretical underpinnings. This aligns with their focus on panel modeling and point estimation but evades a systematic review, advancement, and evaluation of the panel bootstrap methodology. Different configurations of the MBB have been evaluated for single-equation panel regression models \citep{Kapetanios2008,Goncalves2011} using a fixed design. This design resamples the observations instead of the residuals as done by the recursive-design MBB of the VAR literature. Open questions are thus \textit{(i)} how configurations of the promising MBB can be adapted to panel SVARs, \textit{(ii)} how cross-sectional dependence (CSD) --as a typical panel data property \citep{SarafidisWansbeek2012}-- affects the bootstrap performance in addition to temporal dependence, and finally \textit{(iii)} which method a practitioner should prefer in finite samples. The latter issue in particular may necessitate adjustments of the estimation and identification.

\textbf{Objective.} In this article, we extend multivariate time series analysis by the cross-section dimension and develop inferential methods for panel vector autoregressive (VAR) models and their structural impulse response functions (IRFs). Starting from the data generating process of a heterogeneous panel VAR model with residual structure in the cross-sectional and temporal dimensions (i.e.~\textit{two-way dependence}), we arrive at two main contributions: 
\begin{itemize}
    \item We adapt mean-group (MG) estimation and construct pooled mean-group (PMG) identification for proxy SVARs and their structural IRFs $ \Theta_h = f_h{\left( \boldsymbol{\beta}, \boldsymbol{\sigma}, \boldsymbol{\varphi} \right)} $ at horizon $ h = 0,\ldots,H $. 
    The setup covers the uncertainty from estimating the VAR dynamics $ \boldsymbol{\beta} $, residual covariances $ \boldsymbol{\sigma} $, and structural parameters $ \boldsymbol{\varphi} $. Using asymptotic theory, we derive a joint central limit theorem for $ \left\{ \boldsymbol{\beta}, \boldsymbol{\sigma}, \boldsymbol{\varphi} \right\} $ and for $ \Theta_h $ subsequently. Specifically for the structural identification, our pooled estimator of $ \boldsymbol{\varphi} $ uses external data panels of proxy variables to determine the common rotation matrix. Unlike the MG of individual identification results, this PMG identification can greatly mitigate inferential issues of ``individually weak'' proxies with increasing precision from the cross-section dimension.

    \item We propose a recursive-design moving-block bootstrap procedure suitable for panel VARs which resamples the estimated errors either in \textit{(i)} the temporal, \textit{(ii)} the cross-sectional, or \textit{(iii)} both dimensions jointly. Under two-way dependence, we show that the panel-block bootstrap with joint resampling is asymptotically valid by combining results from \cite{BruggemannEtAl2016} and \cite{Menzel2021}. We further find superior performance in finite samples according to our comprehensive MC simulations. For small time series dimensions $T$, we incorporate a panel bootstrap-after-bootstrap procedure for bias correction in the MG estimation of $ \boldsymbol{\beta} $.
\end{itemize}

The remainder of this article is structured as follows: Section~\ref{sec:PVAR} presents the heterogeneous panel SVAR model \eqref{sec:Model}, its identification \eqref{sec:Identification}, mean-group estimation \eqref{sec:Estimation}, and asymptotic inference \eqref{sec:AsyInference}. In Section~\ref{sec:Bootstrap}, we propose the new panel-block bootstrap procedure for the panel VAR model and outline its asymptotic performance. Section~\ref{sec:Simulation} is a Monte Carlo study, which assesses and discusses the finite-sample performance of competing bootstrap resampling schemes. Finally, Section~\ref{sec:Conlusion} summarizes and gives an outlook. Supporting material such as proofs and simulation results can be found in our online appendix. The implementation of the assessed methods is available in the \textsf{R}-package \href{https://cran.r-project.org/package=pvars}{\textbf{pvars}} by \citet{Empting2021}.

\textbf{Notation.} The symbols $ \overset{d}{\longrightarrow} $ and $ \overset{p}{\longrightarrow} $ designate convergence in distribution and in probability, respectively. $ A $ is denoted by $ A_i $ if specific to the individual $ i = 1,\ldots,N $, by $ A_t $ if specific to the time period $ t=1,\ldots,T $, by plain $ A $ if homogeneous across all $i,t$ such as common parameters, by $ \widehat{A} $ if estimated by least squares, by $ \widebar{A} $ if estimated by the mean, and finally by $ A^* $ if obtained from a bootstrap procedure. The operator $ \vec(A) $ stacks the columns of a matrix $ A $, and $ \text{vech}(A) $ stacks only its lower-triangular elements, i.e.~those on and below the main diagonal. The set of $ K \times K $ orthonormal matrices is denoted by $ \mathbb{O}(K) = \left\{ Q \in \mathbb{R}^{K \times K} : Q'Q = QQ' = I_K \right\} $. For ease of notation, we write $ \boldsymbol{x} \leq \boldsymbol{y} $ for vectors $ \boldsymbol{x}, \boldsymbol{y} \in \mathbb{R}^J $ if $ x_j \leq y_j \ \forall j=1,\ldots,J $.

%% file: 2_PVAR.tex
\section{Panel vector autoregression} \label{sec:PVAR}

\subsection{Model} \label{sec:Model}
Our starting point for panel estimation and inference is the \textit{reduced-form VAR} model
\begin{align} \label{eq:PVAR}
\begin{split}
	\boldsymbol{y}_{it} = \Phi_i \boldsymbol{d}_{it} + A_{i,1} \boldsymbol{y}_{i,t-1} + ... + A_{i,p} \boldsymbol{y}_{i,t-p} + \boldsymbol{u}_{it} \quad \text{ with } \boldsymbol{u}_{it} \sim (0,\Sigma_{u,i})
\end{split}
\end{align}
for each individual $ i=1,\ldots,N $. Therein, $\boldsymbol{y}_{it}$ is a vector of $ K $ time series, $A_{i,j}, j=1,\ldots,p,$ are $ K \times K $ coefficient matrices for the VAR process of order $p$, which describe the propagation of the endogenous variables over the time periods $ t=1,\ldots,T $. Without loss of generality, we exclude deterministic terms by suppressing the $ K \times n_d $ coefficient matrix $ \Phi_i $ for the $ n_d $ deterministic regressors $ \boldsymbol{d}_{it} $ and impose $ \Phi_i \boldsymbol{d}_{it} = 0_K  $. Nonetheless, deterministic components can be easily reintroduced, e.g.~via the general formulation of the $ K \times (n_d + Kp) $ compact coefficient matrix $ A_i = \left[ \Phi_i : A_{i,1} : \ldots : A_{i,p} \right] $.

In \textit{structural VAR} analysis, the $ K $-variate reduced-form error series $ \boldsymbol{u}_{it} = \mathsf{B}_i \boldsymbol{\epsilon}_{it} $ is a linear combination of the $ K $ structural shock series $ \boldsymbol{\epsilon}_{it} $, where $ \mathsf{B}_i $ denotes the $ K \times K $ coefficient matrix of structural impact. The mutually independent shock series $ \boldsymbol{\epsilon}_{it} $  have a meaningful economic interpretation. In order to properly recover the associated structural IRFs from the reduced-form VAR model of Eq.~\eqref{eq:PVAR}, standard assumptions of the SVAR literature apply: 
\begin{assumption}[Panel SVAR assumption] \label{as:SVAR}
    \quad
    \begin{enumerate}[label=(\roman*)]
        \item The VAR process has a known lag order $ p $ and is stable (stationary and causal), i.e., $ \det(A_i(z)) $ of the characteristic polynomial $ A_i(z) = I_K - A_{i,1} z - \ldots - A_{i,p} z^{p} $ has all roots $ z $ outside the unit circle uniformly over $i$.
        
        \item The shock series $ \boldsymbol{\epsilon}_{it} $ obeys $ \E \left(\boldsymbol{\epsilon}_{it} \right) = \boldsymbol{0}_K $, $ \E \left(\boldsymbol{\epsilon}_{it}^{\ } \boldsymbol{\epsilon}_{it}' \right) = I_K $, and $ \E \left(\boldsymbol{\epsilon}_{it}^{\ } \boldsymbol{\epsilon}_{it^\bullet}' \right) = 0_{K \times K}, \ t \neq t^\bullet $. 
        
        \item The square coefficient matrix $ \mathsf{B}_i $ has full rank $ \text{rk}\left( \mathsf{B}_i \right) = K $ uniformly over $i$. 

        \item When factorizing $ \mathsf{B}_i = \mathsf{B}_i^\circ \mathsf{Q} $ into a $ K \times K $ lower-triangular matrix $ \mathsf{B}_i^\circ $ and a $ K \times K $ orthogonal matrix $ \mathsf{Q} \in \mathbb{O}(K) $, $ \mathsf{B}_i^\circ $ can be heterogeneous, but $ \mathsf{Q} $ is homogeneous over $ i $.
    \end{enumerate}
\end{assumption}
The SVAR assumptions guarantee that the model can be consistently estimated and identified. For example, Assumptions~\ref{as:SVAR}(ii) and (iii) imply that the $ K \times K $ error covariance matrix
\begin{align} \label{eq:ID_decomposition}
\begin{split}
	\E \left[ \u_{it}^{\ } \u_{it}' \right] = \mathsf{B}_i^{\ } \E \left[ \boldsymbol{\epsilon}_{it}^{\ } \boldsymbol{\epsilon}_{it}' \right] \mathsf{B}_i' = \mathsf{B}_i^{\ } \mathsf{B}_i' = \mathsf{B}_i^{\circ} \mathsf{Q} \mathsf{Q}' \mathsf{B}_i^{\circ '} =  \mathsf{B}_i^{\circ} \mathsf{B}_i^{\circ '} = \Sigma_{u,i}
\end{split}
\end{align}
has full rank $ \text{rk}\left( \Sigma_{u,i} \right) = K $ and thus admits standard inversion. Assumption~\ref{as:SVAR}(iv) describes the identification problem specifically for the panel SVAR model of $ \boldsymbol{y}_{it} $ \citep[see][]{Herwartz2017}, while Assumptions~\ref{as:SVAR}(i)-(iii) pertain to the literature on the individual SVAR of $ \boldsymbol{y}_{t} $ alike. From Eq.~\eqref{eq:ID_decomposition} it is clear that the SVAR Assumption~\ref{as:SVAR}(ii) holds under $ \mathsf{B}_i = \mathsf{B}_i^\circ \mathsf{Q} $ and $ \boldsymbol{\epsilon}_{it} = \mathsf{Q}' \boldsymbol{\epsilon}_{it}^\circ = \mathsf{B}_i^{-1} \boldsymbol{u}_{it} $ for any candidate $ \mathsf{Q} \in \mathbb{O}(K) $ and even under the baseline decomposition $ \mathsf{B}_i^\circ = \text{chol}(\Sigma_{u,i}) $ and $ \boldsymbol{\epsilon}_{it}^{\circ} = \mathsf{B}_i^{\circ -1} \boldsymbol{u}_{it} $. Finding the unique $ \mathsf{Q} $ that additionally permits a meaningful economic interpretation is a matter of structural identification. 

Assumption~\ref{as:SVAR}(i) on stability guarantees that the VAR model in Eq.~\eqref{eq:PVAR} has a \textit{vector moving average} (VMA) Wold representation. Leaving the deterministic terms $ \Phi_i \boldsymbol{d}_{it} = 0_K $ aside, this is given respectively in the reduced form and in the structural form by
\begin{align} \label{eq:VMA}
\begin{split}
	\boldsymbol{y}_{it} = \sum_{h=0}^{\infty} \Xi_{i,h} \boldsymbol{u}_{i,t-h} = \sum_{h=0}^{\infty} \Xi_{i,h} \mathsf{B}_i \boldsymbol{\epsilon}_{i,t-h} = \sum_{h=0}^{\infty} \Theta_{i,h}  \boldsymbol{\epsilon}_{i,t-h}.
\end{split}
\end{align}

Therein, the element $ \left[ \Theta_{i,h} \right]_{ks} $ of the $ K \times K $ structural VMA coefficient matrix $ \Theta_{i,h} = \Xi_{i,h} \mathsf{B}_i $ can be interpreted as the structural response of $ y_{i,t+h,k} $ to an isolated unit shock in $ \epsilon_{its}, \ s=1,\ldots,K $. Characteristically, $ \left[ \Theta_{i,h} \right]_{ks} $ is plotted over $ h=0,\ldots,H $ to visualize this structural IRF from the initial impulse at $ h=0 $ up to a finite horizon $ H $. Note that the initial responses are $ \Theta_{i,0} = \mathsf{B}_i $ by $ \Xi_{i,0} = I_K $, that $ \Xi_{i,h} = \sum_{j=1}^{h} \Xi_{i,h-j} A_{i,j} $, and that $ \Xi_{i,h} \to 0_{K \times K} $ for $ h \to \infty $ in the stable VAR model. We do not consider any subsequent normalization of the initial responses.

In Eq.~\eqref{eq:PVAR}, the VAR parameters $ A_i, \Sigma_{u,i}, \mathsf{B_i} $ have been introduced as individual-specific. However, they may actually have a common base suitable for the econometric methods of heterogeneous panel data. Extending from the single-equation panel model by \citet{PesaranSmith1995} and the reduced-form panel VAR model by \citet{Rebucci2010}, this is described by 
\begin{assumption}[Random parameters] \label{as:random_parameters}
	If some parameters in $ \text{vec}\left( A_i \right) = \boldsymbol{\beta}_i $, $ \text{vech}\left( \Sigma_{u,i} \right) = \boldsymbol{\sigma}_i $ are heterogeneous over $ i $, they vary around common constants $ \text{vec}\left( A \right) = \boldsymbol{\beta} $ , $ \text{vech}\left( \Sigma_{u} \right) = \boldsymbol{\sigma} $ by
	\begin{enumerate}[label=(\roman*)]
		\item $ \boldsymbol{\beta}_i = \boldsymbol{\beta} + \boldsymbol{\eta}_{\beta i}, \quad \boldsymbol{\eta}_{\beta i} \sim \left(0, \Sigma_{\eta_\beta} \right) $,
		
		\item $ \boldsymbol{\sigma}_i = \boldsymbol{\sigma} + \boldsymbol{\eta}_{\sigma i}, \quad \boldsymbol{\eta}_{\sigma i} \sim \left(0, \Sigma_{\eta_\sigma} \right) $.
		
		\item The three groups $ \left\{ \boldsymbol{\eta}_{\beta i} \right\}, \ \left\{ \boldsymbol{\eta}_{\sigma i} \right\}, $ and $ \left\{ \boldsymbol{\epsilon}_{it}^{ \ } \right\} $ are mutually independent for all $ i,t $. At most weak dependence within each group is allowed.
	\end{enumerate}
\end{assumption}
Assumption~\ref{as:random_parameters}(ii) specifies the randomness of the individual-specific, non-zero lower-triangular coefficients of the baseline $ \mathsf{B}_i^\circ $, while the structural $ \mathsf{B}_i = \mathsf{B}_i^\circ \mathsf{Q} $ also contains the panel-wide constant matrix $ \mathsf{Q} $ by Assumption~\ref{as:SVAR}(iv) for the panel SVAR model. All three common parameter vectors then define the baseline IRFs $ \Theta_h = f_h{\left( \boldsymbol{\beta}, \boldsymbol{\sigma}, \boldsymbol{\varphi} \right)} $ over $ h = 0,\ldots,H $.

For the asymptotic analysis, it is useful to distinguish between assumptions required for the large-sample properties of the estimator and those additionally needed for bootstrap validity. While asymptotic normality is established under mixing conditions in both the temporal and cross-sectional dimensions (see Section \ref{sec:AsyInference} below), bootstrap consistency requires further restrictions on the degree of heterogeneity when the cross-sectional dimension grows (Section \ref{sec:BootAsy}). The MC study in Section \ref{sec:Simulation} nevertheless shows that, within the random-parameter framework, the procedure remains robust to alternative forms of cross-sectional dependence beyond those covered by the theoretical assumptions.

\subsection{Identification} \label{sec:Identification}
To find the unique $ \mathsf{Q} $, we construct \textit{pooled mean-group} (PMG) identification for proxy SVARs. The principles of PMG identification have been established by \citet{Herwartz2017}, who use independent component analysis (ICA) for a statistical identification of $ \mathsf{Q} $. Our PMG identification extends the moment-based approach of the individual proxy SVAR literature \citep[e.g.][]{MertensRavn2013,JentschLunsford2019,JentschLunsford2021} and uses panel data of external proxies $ \boldsymbol{m}_{it} $. In this section, we first define the candidate set $ \mathbb{O}(K) $ for $ \mathsf{Q} $, then reduce the moment-based approach on  $ {\mathsf{B}}_i $ to a partial identification of $ \mathsf{Q} $, and finally reconstruct an even simpler solution for full identification of $ \mathsf{Q} $.

\subsubsection{The candidate set}
The reduction of the identification problem of $ {\mathsf{B}}_i $ to $ \mathsf{Q} $, i.e.~removing the heterogeneous variances and correlations in $ \mathsf{B}^\circ_i $ from $ {\mathsf{B}}_i = {\mathsf{B}}_i^\circ {\mathsf{Q}} $, has several advantages. \textit{(i)} We can focus on the sampling properties of estimator $ \widehat{\mathsf{Q}} $, where only $ n_{K} = K(K-1)/2 $ rotation angles describe its $ K^2 $ elements \citep{HoffmanEtAl1972} and keep $ \boldsymbol{\widehat{\epsilon}}_{it} $ uncorrelated for all candidates. \textit{(ii)} We can easily adapt and compare additional identifying restrictions discussed in the literature. \textit{(iii)} As done in Section~\ref{sec:Estimation}, we can simply pool the panel of proxies $ \boldsymbol{m}_{it} $ and baseline shocks $ \boldsymbol{\widehat{\epsilon}}_{it}^\circ $ in the vein of \citet{Herwartz2017} to improve the estimation performance.

The set of suitable candidates for $ \mathsf{Q} $ is given by the group of orthogonal matrices $ \mathbb{O}(K) $ \citep{Rubio-RamirezEtAl2010}, which is \textit{closed} under the operations of matrix multiplication and inversion \citep{Neusser2016}. For the identification based on proxies, we thus arrive at
\begin{definition}[Candidate set] \label{def:CandidateSet} 
    Consider Assumption~\ref{as:SVAR}(iv) on $ {\mathsf{B}}_i = {\mathsf{B}}_i^\circ {\mathsf{Q}} $ in the SVAR model. \textit{Global full exact} identification is given by the unique $ \mathsf{Q} = \mathsf{U} \mathsf{W} \mathsf{P}^{\pm} \in \mathbb{O}(K) $, while the identification is just (i) \textit{local} for any signed permutation matrix $ \mathsf{P}^{\pm} $, (ii) \textit{partial} for a unique $ K \times K_1 $ block $ {\mathsf{Q}}_1 $ in $ \mathsf{Q} = \left[ {\mathsf{Q}}_1 : {\mathsf{Q}}_{2} \right] \mathsf{P}^{\pm} $, and (iii) a \textit{set}-identification of $ \mathsf{U}_{1} = \mathsf{Q}_1 \mathsf{W}_{11} $ for any $ \mathsf{W}_{11} \in \mathbb{O}(K_1) $.
\end{definition}

Since there are typically not enough proxy variables to identify all $ K $ components of the SVAR model, individual methods of the moment-based approach resort to a partial identification of $ \mathsf{B}_i $. With $ K_1 $ instrumented and $ K_2 = K - K_1 $ non-instrumented components, it is common practice to partition $ \boldsymbol{\epsilon}_{it} $ into $ \left( \boldsymbol{\epsilon}_{1,it}', \boldsymbol{\epsilon}_{2,it}' \right)' $ and $ \mathsf{B}_{i} $ into $ \left[ \mathsf{B}_{i,1} : \mathsf{B}_{i,2} \right] $, where the $ K_1 $ identified components are ordered first. Meaningful signs are chosen for the coefficients and shock series after estimation. Overall, this constitutes a signed permutation in $ \boldsymbol{u}_{it} = \mathsf{B}_i \mathsf{P}^{\pm '} \mathsf{P}^{\pm} \boldsymbol{\epsilon}_{it} = \left[ \mathsf{B}_{i,1} : \mathsf{B}_{i,2} \right] \left( \boldsymbol{\epsilon}_{1,it}', \boldsymbol{\epsilon}_{2,it}' \right)' $, wherein the proxies render local partial identification.

\subsubsection{The moment-based approach}
We reduce the moment-based approach in the case of multiple proxy variables \citep[e.g.][sec.~2.2.2]{JentschLunsford2021} to the partial identification of $ \mathsf{Q}$. 
The approach builds upon the proxy assumption, where the $ L $ proxies $ \boldsymbol{m}_{it} $ need to be \textit{relevant} for the $ K_1 $ instrumented shocks $ \boldsymbol{\epsilon}_{1,it} $ and \textit{exogenous} to the $ K_2 $ non-instrumented $ \boldsymbol{\epsilon}_{2,it} $. This is formalized in 
\begin{assumption}[Proxy assumption] \label{as:Proxy}
    \quad
    \begin{enumerate}[label=(\roman*)]
        \item $ \text{Cov} \left( \boldsymbol{m}_{it}, \boldsymbol{\epsilon}_{1,it} \right) = \E \left( \boldsymbol{m}_{it} \boldsymbol{\epsilon}_{1,it}' \right) = \Sigma_{m\epsilon_1,i} $. 
        \item $ \text{Cov} \left( \boldsymbol{m}_{it}, \boldsymbol{\epsilon}_{2,it} \right) = \E \left( \boldsymbol{m}_{it} \boldsymbol{\epsilon}_{2,it}' \right) = 0_{L \times K_2} $.
        \item The $ L \times K_1 $ matrix $ \E \left( \Sigma_{m\epsilon_1,i} \right) = \Sigma_{m\epsilon_1} $ has rank $ \text{rk} \left( \Sigma_{m\epsilon_1} \right) = K_1 $.
    \end{enumerate}
\end{assumption}
The relevance Assumption~\ref{as:Proxy}(i) stipulates correlation between the proxies and the instrumented shocks. Specifically, for the $ L $ proxies to provide distinct information about the $ K_1 $ instrumented shocks, an individual identification requires $ \text{rk} \left( \Sigma_{m\epsilon_1 i} \right) = K_1 $ and a pooled identification requires Assumption~\ref{as:Proxy}(iii) equivalently. The exogeneity Assumption~\ref{as:Proxy}(ii) stipulates no correlation between proxies and non-instrumented shocks and therewith suppresses a contamination of the proxies from the non-instrumented shocks. Both moment conditions (i) and (ii) can be summarized in the $ L \times K $ proxy moment matrix $ \Sigma_{m\epsilon,i} = \left[ \Sigma_{m\epsilon_1,i} : 0_{L \times K_2} \right] \mathsf{P}^{\pm} $. Using $ \boldsymbol{u}_{it} = \mathsf{B}_{i} \boldsymbol{\epsilon}_{it} = \mathsf{B}_{i}^\circ \mathsf{Q} \mathsf{Q}' \boldsymbol{\epsilon}_{it}^\circ = \mathsf{B}_{i}^\circ \boldsymbol{\epsilon}_{it}^\circ $ from Assumption~\ref{as:SVAR}, we can then separate the individual-specific $ \mathsf{B}_{i}^\circ $ from the estimable proxy moment matrix $ \Sigma_{um,i} $ according to
\begin{align} \label{eq:ID_moments}
	\begin{split}		
		\mathsf{B}_i \E \left[ \boldsymbol{\epsilon}_{it} \boldsymbol{m}_{it}' \right] = \mathsf{B}_i \Sigma_{\epsilon m, i} = \E \left[ \boldsymbol{u}_{it} \boldsymbol{m}_{it}' \right] = \Sigma_{um,i} = \mathsf{B}_i^\circ \E \left[ \boldsymbol{\epsilon}_{it}^\circ \boldsymbol{m}_{it}' \right] = \mathsf{B}_i^\circ \Sigma_{\epsilon^\circ m, i}. 
	\end{split}
\end{align}

With the ``cleaned'' population moment $ \Sigma_{\epsilon^\circ m, i} $, the derivation of a moment-based estimator proceeds just as in the proxy SVAR literature. Combining Eqs.~\eqref{eq:ID_decomposition} and \eqref{eq:ID_moments} yields
\begin{align} \label{eq:ID_square}
	\begin{split}
		{\Sigma}^{\ }_{mu,i} {\Sigma}^{-1}_{u,i} {\Sigma}^{\ }_{um,i} = {\Sigma}_{m \epsilon^\circ, i} {\Sigma}_{\epsilon^\circ m, i} = {\Sigma}_{m \epsilon, i} {\Sigma}_{\epsilon m, i}.
	\end{split}
\end{align}
Under $ L =K_1  $, the inverted $ L \times L $ matrix square root of either side of Eq.~\eqref{eq:ID_square} eliminates $ {\Sigma}_{\epsilon m, i} $ from the term $ \mathsf{B}_i \Sigma_{\epsilon m, i} $ of Eq.~\eqref{eq:ID_moments}. We therewith obtain for both partitions, the individual $ \mathsf{B}_i = \left[ \mathsf{B}_{i,1} : \mathsf{B}_{i,2} \right] \mathsf{P}^{\pm} $ and the common $ \mathsf{Q} = \left[ \mathsf{Q}_{1} : \mathsf{Q}_{2} \right] \mathsf{P}^{\pm} $, the estimable identification relations
\begin{align} 
		{\mathsf{B}}_{i,1} = {\mathsf{B}}_{i}^\circ {\mathsf{Q}}_{1}^{\ } 
        & = {\Sigma}^{\ }_{um,i} \left[ {\Sigma}_{mu,i}^{\ } {\Sigma}^{-1}_{u,i} {\Sigma}^{\ }_{um,i} \right]^{-1/2} \label{eq:ID_Bi} \\
		& = {\mathsf{B}}_{i}^\circ {\Sigma}_{\epsilon^\circ m, i} \left[ {\Sigma}_{m \epsilon^\circ, i} {\mathsf{B}}_{i}^{\circ'} \left( {\mathsf{B}}_{i}^\circ {\mathsf{B}}_{i}^{\circ'} \right)^{-1} {\mathsf{B}}_{i}^\circ {\Sigma}_{\epsilon^\circ m, i} \right]^{-1/2} \\
		& = {\mathsf{B}}_{i}^\circ {\Sigma}_{\epsilon^\circ m, i} \left[ {\Sigma}_{m \epsilon^\circ, i} {\Sigma}_{\epsilon^\circ m, i} \right]^{-1/2}. \label{eq:ID_Qi}
\end{align}

Eq.~\eqref{eq:ID_Qi} reveals that Eq.~\eqref{eq:ID_Bi}, often used in the proxy SVAR literature, already contains the closed-form solution for finding the $ K_1 $ orthonormal column vectors nearest to $ {\Sigma}_{\epsilon^\circ m, i} $. This proximity $ \| {\Sigma}_{\epsilon^\circ m, i} - {\mathsf{Q}}_{1} \| = \min \left\{ \| {\Sigma}_{\epsilon^\circ m, i} - Q \| : Q'Q = I_{K_1}  \right\}  $ applies to any unitarily invariant norm, such as the Frobenius norm. Its minimizing solution given in Eq.~\eqref{eq:ID_Qi} results from the (non-square) polar decomposition (PD) of $ {\Sigma}_{\epsilon^\circ m, i} = {\mathsf{Q}}_{1} \mathsf{H} $, where the real symmetric matrix $ \mathsf{H} = \left( {\Sigma}_{\epsilon^\circ m, i}' {\Sigma}_{\epsilon^\circ m, i}^{\ } \right)^{1/2} $ solves for the orthonormal $ {\mathsf{Q}}_{1} = {\Sigma}_{\epsilon^\circ m, i} \mathsf{H}^{-1} $.\footnote{See \citet[ch.~8]{Higham2008} on ``Theorem~8.4 \citep{FanHoffman1955}''.}

\subsubsection{The general solution by singular value decomposition}
We extend the moment-based solution of Eq.~\eqref{eq:ID_Qi} in three directions, namely \textit{(i)} pooled, \textit{(ii)} full, and \textit{(iii)} over-identification. Specifically, we bind a zero block to the panel proxy moment $ \Sigma_\circ = \left[ \Sigma_{\epsilon^\circ m}  : 0_{K \times K_2} \right] $, which serves as a placeholder for the yet missing information on the $ K_2 $ non-instrumented $ \boldsymbol{\epsilon}_{2,it} $. To find its $ K \times K $ \textit{nearest orthogonal matrix} $ \mathsf{Q} $, we adapt a more practical solution known in other disciplines\footnote{In psychometrics, \citet{Gibson1962}, \citet{Cliff1966}, and \citet{Schonemann1966} discuss the SVD solution \textit{(i)} for the \textit{nearest orthogonal matrix}, \textit{(ii)} for the problem under singularity of the decomposed matrix, and \textit{(iii)} for the closely related \textit{orthogonal Procrustes problem}. In physics and engineering, \citet{ArunEtAl1987} and \citet{HornEtAl1988} propose the SVD and PD solution respectively for a model with noise. This noisy model may be reformulated as the proxy process often used in the augmented-VAR approach of the proxy SVAR literature. However, the definition of the proxy moment $ \Sigma_{m\epsilon,i} $ by Assumption~\ref{as:Proxy} is sufficient for our moment-based approach.} and prefer the singular value decomposition (SVD).
The solution follows by inserting the SVD of $ \Sigma_\circ = \mathsf{U} \mathsf{D} \mathsf{V}' $ into the PD solution, i.e.
\begin{align} \label{eq:ID_svd}
\begin{split}
    \left[ \mathsf{Q}_{1} : \mathsf{Q}_{2} \right] & = \Sigma_\circ^{\ } \left( \Sigma_\circ' \Sigma_\circ^{\ } \right)^{-1/2} = \mathsf{U} \mathsf{D} \mathsf{V}' \left( \left( \mathsf{U} \mathsf{D} \mathsf{V}' \right)' \mathsf{U} \mathsf{D} \mathsf{V}' \right)^{-1/2} \\
    & = \mathsf{U} \mathsf{D} \mathsf{V}' (\mathsf{V} \mathsf{D}^2 \mathsf{V}')^{-1/2} = \mathsf{U} \mathsf{D} \mathsf{V}' \mathsf{V} \mathsf{D}^{-1}\mathsf{V}' \\
    & = \mathsf{U} \mathsf{V}'.
\end{split}
\end{align}
By the orthogonality $ \mathsf{U}' \mathsf{U} = \mathsf{V}' \mathsf{V} = I $ and by the matrix square root via eigendecomposition, the nearest orthogonal matrix $ \left[ {\mathsf{Q}}_1 : {\mathsf{Q}}_{2} \right] = \mathsf{U} \mathsf{V}' $ is directly obtained as the product of the $ K \times K $ orthogonal matrices $ \mathsf{U} $ and $ \mathsf{V}' $ collecting the left and right singular vectors respectively. In effect, the diagonal matrix $ \mathsf{D} $ of the $ K $ singular values $ \boldsymbol{\mathsf{d}} = \left( \boldsymbol{\mathsf{d}}_1', \boldsymbol{\mathsf{d}}_2' \right)' $ is just scaled to $ I_K $. Thereof, the first $ \boldsymbol{\mathsf{d}}_1 \neq \boldsymbol{0}_{K_1} $ is related to the $ K_1 $ orthonormal columns in $ \mathsf{Q}_{1} = \mathsf{U}_{1}^{\ } \mathsf{V}_{11}' $, which is the $ K \times K_1 $ SVD solution for Eq.~\eqref{eq:ID_Qi} with the plain $ \Sigma_{\epsilon^\circ m} $. Their non-zero size reflects Assumption~\ref{as:Proxy}(iii) and indicates the proxy strength. The second $ \boldsymbol{\mathsf{d}}_2 = \boldsymbol{0}_{K_2} $  is related to the $ K_2 $ columns in $ \mathsf{Q}_{1\perp} = \mathsf{U}_2^{\ } \mathsf{V}_{22}' $, which is the orthonormal complement of $ \mathsf{Q}_{1} $ induced by $ 0_{K \times K_2} $.\footnote{In general, a $ K \times K_2 $ matrix $ \mathsf{Q}_{1\perp} $ as an orthogonal complement of the $ K \times K_1 $ matrix $ \mathsf{Q}_{1} $ obeys $ \text{rk}\left( \left[ {\mathsf{Q}}_1 : {\mathsf{Q}}_{1\perp} \right]  \right) = K $ and $ {\mathsf{Q}}_{1}' {\mathsf{Q}}_{1\perp}^{\ } = 0_{K_1 \times K_2} $, see e.g.~\citet[App.~A.8.2]{Lutkepohl2005}. It collects the maximum number $ K_2 $ of columns that are mutually orthogonal and orthogonal to $ \mathsf{Q}_1 $. Here, they are unit-normalized due to the SVD such that $ {\mathsf{Q}}_{1\perp}' {\mathsf{Q}}_{1\perp}^{\ } = I_{K_2} $ and thus $ \left[ {\mathsf{Q}}_1 : {\mathsf{Q}}_{1\perp} \right]'\left[ {\mathsf{Q}}_1 : {\mathsf{Q}}_{1\perp} \right] = I_K $. Hence, $ {\mathsf{Q}}_{1\perp} $ accords with the orthogonality of $ \left[ {\mathsf{Q}}_1 : {\mathsf{Q}}_{1\perp} \right] $.} The accordant partition of the fully identified $ \left[ \mathsf{B}_{i,1} : \mathsf{B}_{i,2} \right] =  \mathsf{B}_{i}^\circ \left[ \mathsf{Q}_{1} : \mathsf{Q}_{2} \right] $ is thus
\begin{align} \label{eq:ID_subrotation}
	\begin{split}		
		\begin{bmatrix}
			{\mathsf{B}}_{i,11} & {\mathsf{B}}_{i,12} \\[-6pt]
			{\scriptstyle (K_1 \times K_1)} & {\scriptstyle (K_1 \times K_2)} \\
			{\mathsf{B}}_{i,21} & {\mathsf{B}}_{i,22} \\[-6pt]
			{\scriptstyle (K_2 \times K_1)} & {\scriptstyle (K_2 \times K_2)} \\
		\end{bmatrix}
		 =
		 \begin{bmatrix}
			{\mathsf{B}}_{i,11}^\circ & 0 \\[-6pt]
			{\scriptstyle (K_1 \times K_1)} & {\scriptstyle (K_1 \times K_2)} \\
			{\mathsf{B}}_{i,21}^\circ & {\mathsf{B}}_{i,22}^\circ \\[-6pt]
			{\scriptstyle (K_2 \times K_1)} & {\scriptstyle (K_2 \times K_2)} \\
		\end{bmatrix}
		\begin{bmatrix}
			{\mathsf{U}}_{11} & {\mathsf{U}}_{12} \\[-6pt]
			{\scriptstyle (K_1 \times K_1)} & {\scriptstyle (K_1 \times K_2)} \\
			{\mathsf{U}}_{21} & {\mathsf{U}}_{22} \\[-6pt]
			{\scriptstyle (K_2 \times K_1)} & {\scriptstyle (K_2 \times K_2)} \\
		\end{bmatrix}
		\begin{bmatrix}
			  {\mathsf{V}}_{11}' & 0 \\[-6pt]
			{\scriptstyle (K_1 \times K_1)} & {\scriptstyle (K_1 \times K_2)} \\
			0 & {\mathsf{V}}_{22}' \\[-6pt]
			{\scriptstyle (K_2 \times K_1)} & {\scriptstyle (K_2 \times K_2)} \\
		\end{bmatrix}.
	\end{split}
\end{align}

This explicit representation highlights the counting rule of $ n_{K} =  K_1 \cdot K_2 + n_{K1} + n_{K2} $ restrictions required for local full identification. The proxy Assumption~\ref{as:Proxy} provides $ K_1 \cdot K_2 $ restrictions by the separation of instrumented and non-instrumented components in $ \left[ {\mathsf{Q}}_1 : {\mathsf{Q}}_{2} \right] $, which is given by the baseline separation $ \mathsf{U} $ and preserved under $ \mathsf{V}_{12}' = \mathsf{V}_{21} = 0_{K_1 \times K_2} $. Matrix $ \mathsf{V}_{11} \in \mathbb{O}(K_1) $ has $ n_{K1} = K_1(K_1-1)/2 $ rotation angles for the identification within the instrumented components, and the yet arbitrary $ \mathsf{V}_{22} \in \mathbb{O}(K_2) $ has equivalent $ n_{K2} = K_2(K_2-1)/2 $ for the identification within the non-instrumented ones. Both blocks on the main diagonal preserve the orthogonality of $ \mathsf{V} $ and thus keep $ \mathsf{Q} = \left[ {\mathsf{U}}_1^{\ } \mathsf{V}_{11}' : {\mathsf{U}}_{2}^{\ } \mathsf{V}_{22}' \right] \mathsf{P}^\pm $ in its candidate set $ \mathbb{O}(K) $.

\textbf{Remark~2.1.} The SVD solution nests the available closed-form procedures of the moment-based approach, which rely on $ n_{K1} $ specific restrictions to exactly identify $ \mathsf{B}_{i,1} $ under $ L = K_1 $. The SVD-based $ \mathsf{V}_{11}' = \mathsf{W}_{11}^{\text{NQ}} $ aligns $ \mathsf{Q}_{1} = \mathsf{U}_{1}^{\ } \mathsf{W}_{11}^{\text{NQ}} $ closest to the non-orthonormal proxy moment,  which is a statistical principle and requires $ \Sigma_{m\epsilon_1} $ to hold meaningful information. In particular, the proxies could be mutually uncorrelated and one-to-one correlated with the shocks as in \citet{BrunsEtAl2025}. \citet[\textbf{MR}]{MertensRavn2013} impose that the Schur complement $ S_{i,1} = \mathsf{B}_{i,11}^{\ } - \mathsf{B}_{i,{12}}^{\ }  \mathsf{B}_{i,{22}}^{-1} \mathsf{B}_{i,21}^{\ } $ is lower-triangular with positive diagonal elements. \citet[\textbf{JL}]{JentschLunsford2021} use the Cholesky decomposition of Eq.~\eqref{eq:ID_square} to obtain the matrix square root for Eq.~\eqref{eq:ID_Bi}, placing the $ n_{K1} $ zero-restrictions on the lower-triangular $ \Sigma_{m\epsilon_1,i} $. In Appendix~\ref{app:S2ID}, we show that, after performing SVD under $ L \geq K_1 $ and $ N \geq 1 $, alternative restriction templates such as $ \bullet \in \left\{ \textbf{MR}, \textbf{JL} \right\} $ can be recovered as specific $ {\mathsf{W}}_{11}^\bullet $ (and similarly $ \mathsf{W}_{22}^\bullet $ for the $ K_2 $ non-instrumented components).

\textbf{Remark~2.2.} Singleton dimensions imply straightforward simplifications of Eq.~\eqref{eq:ID_subrotation}. If $ K_1=1 $ and thus $ n_{K1} = 0$, the set ambiguity collapses into $ \mathsf{W}_{11} = \pm 1 $. In this case, the individual $ {\mathsf{B}}_{i,1} $ determined by \citet{MertensRavn2013}, \citet{JentschLunsford2021}, or Eq.~\eqref{eq:ID_Qi} under $ L = N = 1 $ is identical up to the sign choice. If $ K_2=1 $ and thus $ n_{K2} = 0 $ additionally, the SVAR model is fully identified with $ \mathsf{Q}_2 = \pm \mathsf{Q}_{1\perp} $ just by respecting the candidate set $ \mathbb{O}(K) $. In Section~\ref{sec:SimDGP}, we adapt the bivariate VAR processes of \citet{JentschLunsford2021} for our MC simulations, where $ K = 2 $ leaves the two problems of set-identification aside.

\subsection{Estimation} \label{sec:Estimation}
To determine the parameter matrices for the VAR dynamics $ A_i $, the error covariance $ \Sigma_{u,i} $, and the structural impact $ \mathsf{B}_i $ from the observable data panel $ \boldsymbol{y}_{it} $, we consider moment-based estimators in this section. We define the $ K \times T $ regressand matrix $ Y_i = \left[ \boldsymbol{y}_{i1} : \ldots : \boldsymbol{y}_{iT} \right] $ and the $ K \times T $ matrices $ Z_{i,-h} = \left[ \boldsymbol{y}_{i,1-h} : \ldots : \boldsymbol{y}_{i,T-h} \right] $ of the regressors at a lag $ h \in \left\{ 1,\ldots, p \right\} $. Further, we define the $ K \times T $ error matrix $ U_i = \left[ \boldsymbol{u}_{i1} : \ldots : \boldsymbol{u}_{iT} \right] $ and the $ L \times T $ proxy matrix  $ M_i = \left[ \boldsymbol{m}_{i1} : \ldots : \boldsymbol{m}_{iT} \right] $. The concatenated coefficients $ A_i = \left[ A_{i,1} : \ldots : A_{i,p} \right] $ and regressors $ Z_i = \left[ Z_{i,-1}' : \ldots : Z_{i,-p}' \right]' $ form the compact model $ Y_i = A_i Z_i + U_i $. The individual sample moments are
\begin{align} \label{eq:sample_moments}
    \widehat{\Sigma}_{YZ,i} = \frac{{Y}_{i}^{\ } {Z}_{i}'}{T}, \quad \widehat{\Sigma}_{ZZ,i} = \frac{{Z}_{i}^{\ } {Z}_{i}'}{T}, \quad \widehat{\Sigma}_{u,i} = \frac{\widehat{U}_{i}^{\ } \widehat{U}_{i}'}{T}, \text{\quad and \quad} \widehat{\Sigma}_{m\epsilon^\circ,i} = \frac{{M}_{i}^{\ } \widehat{E}_{i}^{\circ '}}{T}.
\end{align}
The multivariate least squares estimator for the VAR coefficients is then $ \widehat{A}_i = \widehat{\Sigma}_{YZ,i}^{\ } \widehat{\Sigma}_{ZZ,i}^{-1} $, the estimated residuals are $ \widehat{U}_i = Y_i - \widehat{A}_i Z_i $, the baseline Cholesky decomposition is $ \widehat{\mathsf{B}}_i^\circ = \text{chol}(\widehat{\Sigma}_{u,i}) $, and orthogonal baseline shocks are $ \widehat{E}_i^{\circ} = \widehat{\mathsf{B}}_i^{\circ -1} \widehat{U}_i $. The underlying common parameters of Assumption~\ref{as:random_parameters} are MG-estimated by $ \widebar{A} = N^{-1} \sum_{i=1}^{N} \widehat{A}_i $ and $ \widebar{\mathsf{B}}^\circ = N^{-1} \sum_{i=1}^N \widehat{\mathsf{B}}^\circ_{i} $.

For the PMG identification, the population moment $ \Sigma_{m \epsilon^\circ} $ in the SVD solution  of Eq.~\eqref{eq:ID_svd} is estimated by the pooled sample moment
\begin{align} \label{eq:IDpooled_moment}
	\begin{split}		
		\widebar{\Sigma}_{m \epsilon^\circ} = \frac{\left[ M_1 : \ldots : M_N \right] \left[ \widehat{E}^\circ_1 : \ldots : \widehat{E}^\circ_N \right]'}{NT} = \frac{1}{N} \sum_{i=1}^{N} \widehat{\Sigma}_{m \epsilon^\circ, i}.
	\end{split}
\end{align}
Hence, we obtain the pooled estimator $ \widehat{\mathsf{Q}} = \left[ \widehat{\mathsf{Q}}_1 : \widehat{\mathsf{Q}}_2 \right] \mathsf{P}^{\pm} $ for all $ K = K_1 + K_2 $ components by 
\begin{align} \label{eq:IDpooled_estimator}
	\begin{split}		
		\widehat{\mathsf{Q}} = \widehat{\mathsf{U}} \widehat{\mathsf{V}}' \mathsf{P}^{\pm} \quad \text{from the SVD} \quad \left[ \widebar{\Sigma}_{\epsilon^\circ m}  : 0_{K \times K_2} \right] = \widehat{\mathsf{U}} \widehat{\mathsf{D}} \widehat{\mathsf{V}}'.
	\end{split}
\end{align}
Notably, alternative restrictions from Appendix~\ref{app:S2ID} on $ \widehat{\mathsf{W}} \mathsf{P}^{\pm} $ are simply right-multiplied in the individual $ \widehat{\mathsf{B}}_{i} = \widehat{\mathsf{B}}_{i}^\circ \widehat{\mathsf{Q}} $ and PMG estimates $ \widebar{\mathsf{B}} =  N^{-1} \sum_{i=1}^N \widehat{\mathsf{B}}_{i} = \widebar{\mathsf{B}}^\circ \widehat{\mathsf{Q}} $. This further extends to the individual $ \widehat{\Theta}_{ih} = \widehat{\Xi}_{ih} \widehat{\mathsf{B}}_{i}^\circ \mathsf{\widehat{Q}} $ and PMG estimates $ \widebar{\Theta}_h = \widebar{\Xi}_h \widebar{\mathsf{B}}^\circ \mathsf{\widehat{Q}} $ of the structural IRFs, where $ \widehat{\Xi}_{ih} $ and $ \widebar{\Xi}_{h} $ are derived from $ \widehat{A}_i $ resp.~$ \widebar{A} $ via $ f_h $ just like $ {\Xi}_{ih} $ from $ A_i $ in the VMA model of Eq.~\eqref{eq:VMA}. 

For the asymptotic analysis in  the subsequent Section~\ref{sec:AsyInference}, we follow the notation of \citet[ch.~3]{Lutkepohl2005}, \citet{BruggemannEtAl2016}, and \citet{JentschLunsford2021}. We vectorize (i.e.~use the $ \text{vec}(\cdot) $-operator to stack the columns of) the individual time series matrices into the $ KT \times 1 $ regressand vector $ \boldsymbol{y}_i = \vec(Y_i) $, the $ KT \times 1 $ error vector $ \boldsymbol{u}_i = \vec(U_i) $, and the $ Kp \times 1 $ regressor vectors $ \boldsymbol{z}_{it} = \vec \left( \left[ \boldsymbol{y}_{i,t-1}: \dots : \boldsymbol{y}_{i,t-p} \right] \right) $. The latter can be concatenated into the same $ Kp \times T $ regressor matrix $  Z_i = \left[ \boldsymbol{z}_{i1} : \dots : \boldsymbol{z}_{iT} \right] $ as already defined for the compact model. Further, we vectorize the individual parameters and their estimators
\begin{alignat*}{4}
    \boldsymbol{\beta}_i & = \vec(A_{i}), && \boldsymbol{\widehat{\beta}}_i = \vec(\widehat{A}_{i}), && (K^2p \times 1) \\
    \boldsymbol{\sigma}_i & = \text{vech}(\Sigma_{u,i}), && \boldsymbol{\widehat{\sigma}}_i = \text{vech}(\widehat{\Sigma}_{u,i}), && ((K^2+K)/2 \times 1) \\
    \boldsymbol{\varphi}_i & = \vec( \Sigma_{\epsilon^\circ m, i} ), \quad && \boldsymbol{\widehat{\varphi}}_i = \vec( \widehat{\Sigma}_{\epsilon^\circ m, i} ), \quad && (KL \times 1) 
\end{alignat*} 
where $ {\Sigma}_{\epsilon^\circ m, i} = \mathsf{Q}_1 \Sigma_{\epsilon_1 m, i} = \mathsf{B}_i^{\circ -1} \mathsf{B}_i \Sigma_{\epsilon m, i} $ and $ \widebar{\Sigma}_{\epsilon^\circ m} = \widebar{\mathsf{Q} \Sigma_{\epsilon m}} = \widebar{\mathsf{Q}_1 \Sigma_{\epsilon_1 m}} $. The vectorized least squares estimator is thus $ \boldsymbol{\widehat{\beta}}_i = \vec(\widehat{A}_{i}) = \left( \left( Z_i^{\ } Z_i' \right)^{-1} Z_i \otimes \text{I}_K \right) \boldsymbol{y}_i $. Accordant dimensions apply to the common parameters $ \boldsymbol{\beta} = \text{vec}(A), \ \boldsymbol{\sigma} = \text{vech}(\mathsf{B}^\circ \mathsf{B}^{\circ '} ), \ \boldsymbol{\varphi} = \text{vec}(\Sigma_{\epsilon^\circ m}) $ and their estimators
\begin{align*}
    \boldsymbol{\widebar{\beta}} = \vec(\widebar{A}) = \frac{\sum_{i=1}^{N} \boldsymbol{\widehat{\beta}}_i}{N}, \quad \boldsymbol{\widebar{\sigma}} = \text{vech}(\widebar{\Sigma}_u) = \frac{\sum_{i=1}^{N} \boldsymbol{\widehat{\sigma}}_i}{N}, \quad \boldsymbol{\widebar{\varphi}} = \text{vec}( \widebar{\Sigma}_{\epsilon^\circ m} ) =\frac{\sum_{i=1}^{N} \boldsymbol{\widehat{\varphi}}_i}{N}.
\end{align*}

\subsection{Asymptotic inference} \label{sec:AsyInference}
In the following, we derive the asymptotic properties of the PMG estimation described in Section~\ref{sec:Estimation}. To do so, we first define the variance and proxy factor with true error terms as \(\Tilde{\varphi} = \vec(\Tilde{\Sigma}_{\epsilon^{\circ} m})\) with  \(\Tilde{\Sigma}_{\epsilon^{\circ} m} = (NT)^{-1} \sum_{i=1}^N \sum_{t=1}^T \operatorname{chol}(\Tilde{\Sigma}_{u,i})^{-1} u_{it} m_{it}' \) and \(\Tilde{\Sigma}_{u,i} = T^{-1} \sum_{t=1}^T u_{it} u_{it}'\), and \(\Tilde{\sigma} = \vec(\Tilde{\Sigma}_u)\) with \( \Tilde{\Sigma}_u = (NT)^{-1} \sum_{i=1}^N \sum_{t=1}^T u_{it} u_{it}'\). We show unconditional joint asymptotic normality of $\boldsymbol{\widebar{\beta}}$, $\boldsymbol{\Tilde{\sigma}}$ and $\boldsymbol{\Tilde{\varphi}}$. Subsequently, we follow the equivalent result for $\boldsymbol{\widebar{\beta}}$, $\boldsymbol{\widebar{\sigma}}$ and $\boldsymbol{\widebar{\varphi}}$ and formulate a CLT for the structural IRFs. The results largely follow analogously to the statements made in \cite{BruggemannEtAl2016} and \cite{JentschLunsford2021}, but need additional careful derivations due to the complexity introduced by the panel dimension.

We embed the panel index set in the irregularly spaced lattice \((\mathbb N_+\times\mathbb Z)\subset\mathbb R^2\) and equip it with the supremum metric. Hence, for any two distinct elements \((i,t)\neq(j,s)\) in \((\mathbb N_+\times\mathbb Z)\),  \(\operatorname{dist}((i,t),(j,s))\geq 1\). Therefore, the lattice satisfies the minimum-spacing condition of \citet{jenish2012}.

Our (P)MG estimates involve double sums over cross-sectional and time dimensions. To derive asymptotic properties, we use random field theory and assume strong mixing conditions \citep[as imposed in][]{driscoll1998}, for the stacked residuals and proxies $ \mathbf{x}_{it} = \left( \boldsymbol{u}_{it}', \boldsymbol{m}_{it}' \right)' $. This controls for potential residual correlation in both temporal and cross-sectional dimensions. In the simulation study, we additionally assess the robustness of the bootstrap procedure to alternative (stronger) forms of cross-sectional dependence that are not covered by this assumption.

\begin{assumption}[Mixing and moment conditions]
\label{as:mixing_conditions}
    Let \( \mathbf{x}_{it} = \left( \boldsymbol{u}_{it}', \boldsymbol{m}_{it}' \right)' \in \mathbb{R}^{(K + L)} \) denote the random field indexed by \( (i,t) \in (\mathbb{N}_+ \times \mathbb{Z}) \). Assume that 
    \begin{enumerate}[label=(\roman*)]
        \item \( \left( \mathbf{x}_{it} \right)_{t \in \mathbb{Z}} \) is strictly stationary in \( t \) for every \( i \),
        
        \item \( \left( \mathbf{x}_{it} \right)_{i\in \mathbb{N}_+, t\in \mathbb{Z}} \) is \(\alpha\)-mixing with coefficients 
        \begin{equation} \label{eq:alpha-mixing coefs}
            \begin{aligned}
                \widebar{\alpha}(u,v,r) &= \sup \left\{ \alpha\left( \sigma(U), \sigma(V) \right) : |U| \leq u,\ |V| \leq v, \operatorname{ dist}(U,V) \geq r \right\}, \\
                \widebar{\alpha}(u,v,r) &\leq \varphi(u,v) \hat{\alpha}(r),
            \end{aligned}
        \end{equation}
        where $\alpha(\mathcal{A}, \mathcal{B})\coloneqq \sup \{ |\mathbb P(A\cap B) - \mathbb P(A) \mathbb P(B)|: A\in\mathcal{A},B\in\mathcal{B}\}$ is the standard $\alpha$-mixing coefficient \citep[ch. 1.1]{doukhan1994} and $\varphi(u,v)$ is non-decreasing in each argument. Furthermore, there exists \( \delta > 0 \) such that \(\sup_{i\in\mathbb N} \mathbb E \| \mathbf{x}_{it} \|^{4(2 + \delta)} < \infty, \) as well as
        \[\hat{\alpha}(r) = O(r^{-\gamma}), \quad \text{with} \quad \gamma > \frac{4(2+\delta)}{\delta}. \]

        \item For the long-run covariance matrices $\Sigma_{NT}$, as defined in Appendix \ref{app:AsyMG}, it holds
        \[
        \liminf_{N,T\to\infty}
        \lambda_{\min}\left(\Sigma_{NT}\right)>0.
        \]
	\end{enumerate}
\end{assumption}

In contrast to standard mixing coefficients of time-series processes, the mixing coefficients for random fields as in \ref{as:mixing_conditions}~(ii) depend not only on the distance between two datasets but also on their sizes. To explicitly account for such dependence, we assume that $\varphi(u,v) = u + v$ is of a certain form \citep[p.~182]{jenish2012}. Assumption \ref{as:mixing_conditions}~(iii) states that the joint limit in the upcoming theorem is non-degenerate.

Under these assumptions, joint asymptotic normality can be stated for $\boldsymbol{\widebar{\beta}}$, $\boldsymbol{\Tilde{\sigma}}$ and $\boldsymbol{\Tilde{\varphi}}$.

\begin{theorem}[CLT for MG and PMG estimators with true errors] \label{th:AsyMG}
    Under Assumptions~ \ref{as:SVAR}, \ref{as:Proxy}, \ref{as:mixing_conditions} and the joint limit of $ (N,T) \rightarrow ( \infty, \infty ) $, 
     where $N/T\to 0$, 
    \begin{align} \label{eq:AsyMG}
    \begin{split}
    	\sqrt{NT} & 
        \begin{pmatrix}
            \boldsymbol{\widebar{\beta}} - \boldsymbol{\beta} \\
            \boldsymbol{\Tilde{\sigma}} - \boldsymbol{\sigma} \\
            \boldsymbol{\Tilde{\varphi}} - \boldsymbol{\varphi}
        \end{pmatrix}
        \overset{d}{\longrightarrow} \mathcal{N} \left(
        \begin{pmatrix}
            0 \\
            0 \\  
            0
        \end{pmatrix}
        , 
        \begin{pmatrix}
            \Tilde{V}_{11} & \Tilde{V}_{21}' & \Tilde{V}_{31}'\\
            \Tilde{V}_{21} & \Tilde{V}_{22} & \Tilde{V}_{32}' \\  
            \Tilde{V}_{31} & \Tilde{V}_{32} & \Tilde{V}_{33}
        \end{pmatrix}
        \right).
        \end{split}
    \end{align}
\end{theorem}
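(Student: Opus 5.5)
The plan is to reduce all three components to normalized double sums of a stationary-in-$t$, mixing random field and then apply a random-field CLT of \citet{jenish2012} (their Theorem~1 for $\alpha$-mixing arrays on irregular lattices), together with the Cramér--Wold device. The reduction gives linear representations of the form
\[
\sqrt{NT}(\boldsymbol{\widebar\beta}-\boldsymbol\beta)=\frac{1}{\sqrt{NT}}\sum_{i,t}\boldsymbol a_{it}+o_p(1),\qquad
\sqrt{NT}(\boldsymbol{\Tilde\sigma}-\boldsymbol\sigma)=\frac{1}{\sqrt{NT}}\sum_{i,t}\boldsymbol b_{it}+o_p(1),
\]
and similarly $\sqrt{NT}(\boldsymbol{\Tilde\varphi}-\boldsymbol\varphi)=(NT)^{-1/2}\sum_{i,t}\boldsymbol c_{it}+o_p(1)$. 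Here the summands $\boldsymbol a_{it},\boldsymbol b_{it},\boldsymbol c_{it}$ are centered, measurable functions of finitely many $\mathbf x_{is}$ (plus the VMA filter). The stacked field $\boldsymbol\xi_{it}=(\boldsymbol a_{it}',\boldsymbol b_{it}',\boldsymbol c_{it}')'$ then inherits the mixing rate and moment bounds of Assumption~\ref{as:mixing_conditions}.

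For $\boldsymbol{\widebar\beta}$ I write
\[
\boldsymbol{\widehat\beta}_i-\boldsymbol\beta_i=\big((Z_iZ_i'/T)^{-1}\otimes I_K\big)\,T^{-1}\sum_t \vec(\boldsymbol u_{it}\boldsymbol z_{it}').
\]
I replace $Z_iZ_i'/T$ by its population counterpart $\Gamma_i$, which is uniformly invertible by the uniform stability in Assumption~\ref{as:SVAR}(i). The remainder is $O_p(T^{-1/2})$ per unit, so after averaging and scaling by $\sqrt{NT}$ it contributes
\[
\sqrt{NT}\cdot O_p(T^{-1/2})\cdot O_p\big((NT)^{-1/2}\big)\text{-type terms},
\]
which are $o_p(1)$. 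I also need to handle an $O_p(1/T)$ bias per unit: it is $\sqrt{NT}\cdot O(T^{-1})=O(\sqrt{N/T})\to0$, and this is exactly where $N/T\to0$ enters. Since $\boldsymbol z_{it}=\sum_h \Xi$-weighted lags of $\boldsymbol u$, I truncate the VMA at lag $\ell_T\to\infty$. The truncated summands $\vec(\boldsymbol u_{it}\boldsymbol z_{it}^{(\ell)\prime})$ are functions of a growing block; I would rather use the near-epoch-dependence version in \citet{jenish2012}, since the geometric decay of $\Xi_{i,h}$ makes $\boldsymbol z_{it}$ $L_2$-NED on $\mathbf x$ with geometric coefficients. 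Throughout, heterogeneity in $\boldsymbol\beta_i$ is treated as fixed, i.e.\ conditional on the $\boldsymbol\eta$'s, so that the centering is at the average $\boldsymbol\beta_i$. For $\boldsymbol{\Tilde\sigma}$, $\boldsymbol b_{it}=\text{vech}(\boldsymbol u_{it}\boldsymbol u_{it}'-\Sigma_{u,i})$ directly.

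For $\boldsymbol{\Tilde\varphi}$ I linearize $\operatorname{chol}(\Tilde\Sigma_{u,i})^{-1}$ around $\mathsf B_i^{\circ-1}$ using the differentiability of the Cholesky map. This gives $\boldsymbol c_{it}=\vec(\mathsf B_i^{\circ-1}\boldsymbol u_{it}\boldsymbol m_{it}'-\Sigma_{\epsilon^\circ m,i})$ plus a term of the form $D_i\,\text{vech}(\boldsymbol u_{it}\boldsymbol u_{it}'-\Sigma_{u,i})$, where $D_i$ is the derivative of $\operatorname{chol}^{-1}$ applied to $\Sigma_{\epsilon^\circ m,i}$ contributions. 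The product of the two $O_p(T^{-1/2})$ errors is again $O_p(T^{-1})$ per unit and vanishes under $N/T\to0$.

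With the linear representations in hand, the CLT step goes as follows. For arbitrary $\boldsymbol\lambda$, I apply \citet{jenish2012}'s CLT to $\boldsymbol\lambda'\boldsymbol\xi_{it}$ on the lattice $\{1,\dots,N\}\times\{1,\dots,T\}$ of size $NT$. I verify their conditions in turn. Uniform $L_{2+\delta}$-integrability follows from $\sup\E\|\mathbf x\|^{4(2+\delta)}<\infty$, since quadratic functions need fourth-order moments. The mixing-coefficient summability $\sum_r r^{d(k+1)-1}\widehat\alpha(r)^{\delta/(2+\delta)}<\infty$ holds with $d=2$, and this is where $\gamma>4(2+\delta)/\delta$ is used. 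The conditions on $\varphi(u,v)=u+v$ are those stated in the paper. Nondegeneracy is supplied by Assumption~\ref{as:mixing_conditions}(iii). The limit covariance $\Tilde V$ is defined as $\lim\Sigma_{NT}$, the long-run covariance of $(NT)^{-1/2}\sum\boldsymbol\xi_{it}$; if it does not converge, I state the result as $\Sigma_{NT}^{-1/2}$-normalized.

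I expect the main obstacle to be the $\boldsymbol{\widebar\beta}$ block. Its summands are not functions of finitely many $\mathbf x$, so the mixing CLT does not apply verbatim. Moreover, the uniformity over $i$ of the remainder bounds (inverse-moment and bias terms) must be shown carefully so that averaging $N$ of them keeps them $o_p((NT)^{-1/2})$. I would handle this via the NED extension together with uniform fourth-moment bounds on $\boldsymbol z_{it}$.
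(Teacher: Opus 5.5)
Your plan is essentially the paper's proof. It linearizes each block, using a first-order Taylor expansion of the inverse Cholesky factor and removing the $O_p(T^{-1})$ second-order terms via $N/T\to 0$. It treats the VAR-coefficient block as geometrically $L_2$-NED on the $\alpha$-mixing field $\mathbf{x}_{it}$, centers at the realized cross-sectional averages, and applies the NED random-field CLT of \citet{jenish2012} with the Cram\'er--Wold device and Assumption~\ref{as:mixing_conditions}(iii) for nondegeneracy.

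You are in fact slightly more careful than the paper in two places. You keep the population regressor moment $\Gamma_i$ in the summand instead of the sample moment. You also explicitly use $N/T\to 0$ to dispose of the $O(T^{-1})$ OLS bias in the $\boldsymbol{\beta}$ block, whereas the paper invokes $N/T\to 0$ only for the Cholesky remainder.
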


To prove this result, we apply a CLT for triangular arrays of near-epoch dependent (NED) processes on $\alpha$-mixing random fields \citep[][]{jenish2012}. Besides the convergence result itself, the unconditional CLT provides the joint limiting covariance matrix $\Tilde{V}$. The definition of the blocks in $ \Tilde{V} $ and the detailed proof are given in Appendix \ref{sec:AppAsymp}.

\begin{corollary}[Unconditional CLT for MG and PMG estimators] \label{cor:clt_pmg}
    Under the assumptions of Theorem \ref{th:AsyMG} and the joint limit of $ (N,T) \rightarrow ( \infty, \infty ) $, 
    \begin{align} \label{eq:UncondAsyMG}
    \begin{split}
    	\sqrt{NT} & 
        \begin{pmatrix}
            \boldsymbol{\widebar{\beta}} - \boldsymbol{\beta} \\
            \boldsymbol{\widebar{\sigma}} - \boldsymbol{\sigma} \\
            \boldsymbol{\widebar{\varphi}} - \boldsymbol{\varphi}
        \end{pmatrix}
        \overset{d}{\longrightarrow} \mathcal{N} \left(
        \begin{pmatrix}
            0 \\
            0 \\  
            0
        \end{pmatrix}
        , 
        \begin{pmatrix}
            {V}_{11} & {V}_{21}' & V_{31}'\\
            V_{21} & V_{22} & V_{32}' \\  
            V_{31} & V_{32} & V_{33}
        \end{pmatrix}
        \right).
        \end{split}
    \end{align}
\end{corollary}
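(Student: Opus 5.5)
The plan is to derive Corollary~\ref{cor:clt_pmg} from Theorem~\ref{th:AsyMG} in two layers. First, replace the infeasible quantities $\boldsymbol{\Tilde{\sigma}},\boldsymbol{\Tilde{\varphi}}$ by the feasible $\boldsymbol{\widebar{\sigma}},\boldsymbol{\widebar{\varphi}}$. Second, and this is the step that makes the result unconditional, remove the conditioning on the random coefficient draws $\mathcal{F}_\eta=\sigma(\{\boldsymbol{\eta}_{\beta i}\},\{\boldsymbol{\eta}_{\sigma i}\})$ of Assumption~\ref{as:random_parameters}. I read Theorem~\ref{th:AsyMG} as a statement conditional on $\mathcal{F}_\eta$: for a given realization of $\boldsymbol{\eta}$, the individual $A_i$ and $\Sigma_{u,i}$ are fixed, and the NED/mixing CLT of \citet{jenish2012} applies to the array $\mathbf{x}_{it}$. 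Since $\{\boldsymbol{\epsilon}_{it}\}$ is independent of $\mathcal{F}_\eta$, conditioning does not change the mixing and moment structure of Assumption~\ref{as:mixing_conditions}.

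\textbf{Step 1 (feasible versus true errors).} Write $\widehat{U}_i=U_i-(\widehat{A}_i-A_i)Z_i$. Expanding, $\widehat{\Sigma}_{u,i}-\Tilde{\Sigma}_{u,i}$ consists of cross terms $(\widehat{A}_i-A_i)Z_iU_i'/T$ and a quadratic term $(\widehat{A}_i-A_i)\widehat{\Sigma}_{ZZ,i}(\widehat{A}_i-A_i)'$. Each is $O_p(T^{-1})$, uniformly in $i$ by the uniform stability in Assumption~\ref{as:SVAR}(i) and the uniform moments in Assumption~\ref{as:mixing_conditions}(ii). Averaging over $i$ keeps the order $O_p(T^{-1})$, so $\sqrt{NT}\,(\boldsymbol{\widebar{\sigma}}-\boldsymbol{\Tilde{\sigma}})=O_p(\sqrt{N/T})=o_p(1)$ under $N/T\to0$. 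For $\boldsymbol{\widebar{\varphi}}$ I will linearize $\operatorname{chol}(\cdot)^{-1}$ around $\Tilde{\Sigma}_{u,i}$; it is smooth because $\Sigma_{u,i}$ is uniformly positive definite by Assumption~\ref{as:SVAR}(iii). The same expansion of $\widehat{U}_iM_i'/T$ then gives an $O_p(T^{-1})$ remainder per individual, hence $o_p(1)$ after scaling by $\sqrt{NT}$.

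\textbf{Step 2 (unconditioning).} Let $X_{NT}$ denote the stacked, scaled vector in \eqref{eq:UncondAsyMG}. By Step 1 and Theorem~\ref{th:AsyMG}, the conditional characteristic function satisfies
\begin{align*}
\E\left[\exp(\mathrm{i}\,s'X_{NT})\mid\mathcal{F}_\eta\right]-\exp\left(-\tfrac12 s'\Tilde{V}_{NT}(\eta)s\right)\overset{p}{\longrightarrow}0 .
\end{align*}
Here $\Tilde{V}_{NT}(\eta)$ is the conditional long-run covariance, a cross-sectional average of functionals of $(A_i,\Sigma_{u,i})$. Because the $\boldsymbol{\eta}_i$ are at most weakly dependent across $i$ (Assumption~\ref{as:random_parameters}(iii)) and all moments are uniformly bounded, a weak LLN over $i$ gives $\Tilde{V}_{NT}(\eta)\overset{p}{\longrightarrow}V$, a deterministic matrix that is positive definite by Assumption~\ref{as:mixing_conditions}(iii). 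Both terms in the display are bounded by one, so dominated convergence yields $\E\exp(\mathrm{i}\,s'X_{NT})\to\exp(-\tfrac12 s'Vs)$. Cramér--Wold then gives \eqref{eq:UncondAsyMG} with the blocks $V_{jk}$.

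\textbf{Main obstacle: centering.} Unconditionally, $\boldsymbol{\widebar{\beta}}-\boldsymbol{\beta}$ equals $N^{-1}\sum_i(\boldsymbol{\widehat{\beta}}_i-\boldsymbol{\beta}_i)$ plus the heterogeneity mean $\boldsymbol{\widebar{\eta}}_\beta=N^{-1}\sum_i\boldsymbol{\eta}_{\beta i}=O_p(N^{-1/2})$, and the analogous decomposition holds for $\boldsymbol{\sigma}$ and $\boldsymbol{\varphi}$. After scaling by $\sqrt{NT}$, the $\boldsymbol{\widebar{\eta}}$ term is $O_p(\sqrt{T})$, so it is not negligible for fixed $\Sigma_{\eta_\beta}$ and $\Sigma_{\eta_\sigma}$.

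I would first try to show it is $o_p(1)$ under local heterogeneity, $\Sigma_{\eta_\beta},\Sigma_{\eta_\sigma}=o(T^{-1})$. This is in the spirit of the heterogeneity restrictions flagged for Section~\ref{sec:BootAsy}.

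If that is too strong, the fallback is to read $\boldsymbol{\beta},\boldsymbol{\sigma},\boldsymbol{\varphi}$ as the cross-sectional means $N^{-1}\sum_i\boldsymbol{\beta}_i$ (and analogously for $\boldsymbol{\sigma}_i,\boldsymbol{\varphi}_i$), for which the $\boldsymbol{\widebar{\eta}}$ term vanishes identically. Under that reading Step 2 goes through exactly as stated.

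Apart from this, the remaining technical point is uniformity in $i$ of the $O_p(T^{-1})$ bounds in Step 1. I would obtain it via a maximal inequality combining the $4(2+\delta)$ moments with the mixing rate $\gamma$ of Assumption~\ref{as:mixing_conditions}(ii).
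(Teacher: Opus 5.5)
Your Step~1 combined with your fallback centering is the paper's argument. Appendix~\ref{app:AsyMG} centers all mean-group statements at the realized averages $\boldsymbol{\beta}_N=N^{-1}\sum_{i=1}^N\boldsymbol{\beta}_i$, $\boldsymbol{\sigma}_N$, $\boldsymbol{\varphi}_N$. It notes, as you do, that the heterogeneity term would otherwise only allow a $\sqrt{N}$ rate. The corollary is then read off from Theorem~\ref{th:AsyMG} by arguing that $\sqrt{NT}(\boldsymbol{\widebar{\sigma}}-\boldsymbol{\tilde{\sigma}})$ and $\sqrt{NT}(\boldsymbol{\widebar{\varphi}}-\boldsymbol{\tilde{\varphi}})$ vanish under $N/T\to0$.

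The paper does not use your Step~2 or the local-heterogeneity route. There, ``unconditional'' only contrasts with the bootstrap law $P^*$, and Theorem~\ref{th:AsyMG} is described the same way. Theorem~\ref{th:AsyMG} does not invoke Assumption~\ref{as:random_parameters}, and the unit-level parameters are taken as given, so nothing is integrated over $\mathcal{F}_\eta$.

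Step~2 is a legitimate strengthening, since it yields a deterministic $V$ averaged over the coefficient distribution. It does, however, need hypotheses the corollary lacks:
\begin{itemize}
\item Assumption~\ref{as:random_parameters} itself;
\item a version of Assumption~\ref{as:mixing_conditions} holding conditionally on $\mathcal{F}_\eta$ (independence of $\{\boldsymbol{\epsilon}_{it}\}$ and $\boldsymbol{\eta}$ does not deliver this for $\boldsymbol{m}_{it}$, on which Assumption~\ref{as:random_parameters}(iii) is silent);
\item a law of large numbers for a conditional long-run covariance that contains cross-unit terms.
\end{itemize}
The local-heterogeneity route likewise needs $\|\Sigma_{\eta_\beta}\|,\|\Sigma_{\eta_\sigma}\|=o(T^{-1})$. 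It also needs a restriction on the heterogeneity of $\Sigma_{m\epsilon_1,i}$, which drives $\boldsymbol{\varphi}_i$ and is not part of $\boldsymbol{\eta}$.

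One part of Step~1 is not ``the same expansion'' and needs an explicit condition. The paper's claim that the differences vanish rests on the same condition implicitly. Here $\widehat{\Sigma}_{um,i}-\tilde{\Sigma}_{um,i}=-(\widehat{A}_i-A_i)Z_iM_i'/T$. Unlike $Z_iU_i'/T$, the factor $Z_iM_i'/T$ converges to $\E[\boldsymbol{z}_{it}\boldsymbol{m}_{it}']=\sum_{j\geq1}C_{ij}\E[\boldsymbol{u}_{i,t-j}\boldsymbol{m}_{it}']$. This limit is zero only if the proxies are uncorrelated with lagged reduced-form errors, whereas Assumptions~\ref{as:Proxy} and \ref{as:mixing_conditions} restrict only contemporaneous moments.

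Without that condition the per-unit remainder is $O_p(T^{-1/2})$, not $O_p(T^{-1})$. Consequently $\sqrt{NT}\,N^{-1}\sum_i\mathsf{B}_i^{\circ-1}(\widehat{A}_i-A_i)\E[\boldsymbol{z}_{it}\boldsymbol{m}_{it}']$ is $O_p(1)$ and must be kept in the linearization rather than dropped. It is a sum of $\boldsymbol{\beta}$-type summands, so a Gaussian limit survives, but $V_{31},V_{32},V_{33}$ change.

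Finally, for cross-sectional averages you do not need a maximal inequality over $i\leq N$. Bounds of the form $\sup_i\E\|\cdot\|\lesssim T^{-1}$ suffice once inverse moments of $\widehat{\Sigma}_{ZZ,i}$ are controlled; the paper uses this kind of bound for the Cholesky remainder $R_{iT}$.
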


This result is a direct consequence of Theorem \ref{th:AsyMG}. Analogously to \cite{JentschLunsford2021} it suffices to consider the differences between $ \boldsymbol{\widebar{\sigma}} $ and $ \boldsymbol{\tilde{\sigma}} $ and $ \boldsymbol{\widebar{\varphi}} $ and $ \boldsymbol{\tilde{\varphi}} $, which vanish under the long-panel regime $N/T\to 0$, as shown in Appendix~\ref{sec:AppAsymp}. Note that $ \Tilde{V} $ of Theorem \ref{th:AsyMG} is used for the asymptotic results of the bootstrap estimators in Section \ref{sec:BootAsy}. With Corollary \ref{cor:clt_pmg}, we state finiteness of the $ 3 \times 3 $ blocks in the covariance matrix $ V $ but omit their exact definition for space reasons.

\begin{corollary}[CLT for structural IRF] \label{cor:CLT_SIRF}
    Under the assumptions of Theorem \ref{th:AsyMG} and the joint limit of $ (N,T) \rightarrow ( \infty, \infty ) $, 
    \begin{align} \label{eq:Asy_IRF}
    \begin{split}
    	\sqrt{NT} & \textup{ vec} \left( \widebar{\Theta}_h - \Theta_h \right) \overset{d}{\longrightarrow} \mathcal{N} \left(0, \Sigma_{\widebar{\Theta},h}  \right) \textup{\quad for } h=1,\ldots,H. \\
    \end{split}
    \end{align}
\end{corollary}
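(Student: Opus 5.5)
The plan is to combine Corollary~\ref{cor:clt_pmg} with the delta method, viewing $\widebar{\Theta}_h = \widebar{\Xi}_h \widebar{\mathsf{B}}^\circ \widehat{\mathsf{Q}}$ as a function $f_h(\boldsymbol{\widebar{\beta}}, \boldsymbol{\widebar{\sigma}}, \boldsymbol{\widebar{\varphi}})$ of the three MG/PMG estimators. Several maps are composed:
\begin{itemize}
\item $\boldsymbol{\beta} \mapsto \Xi_h$ is polynomial through the recursion $\Xi_h = \sum_{j=1}^h \Xi_{h-j} A_j$, hence smooth.
\item $\boldsymbol{\sigma} \mapsto \mathsf{B}^\circ = \text{chol}(\Sigma_u)$ is continuously differentiable on the set of positive definite matrices, which contains $\Sigma_u$ by Assumption~\ref{as:SVAR}(ii)--(iii).
\item $\boldsymbol{\varphi} \mapsto \mathsf{Q} = \mathsf{U}\mathsf{V}'$ is the polar factor of $[\Sigma_{\epsilon^\circ m} : 0_{K\times K_2}]$, taken with a fixed signed permutation $\mathsf{P}^{\pm}$ and fixed normalizations of the $\mathsf{V}_{11}$, $\mathsf{V}_{22}$ blocks.
\end{itemize}

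One point needs care: the MG estimator $\widebar{\mathsf{B}}^\circ$ averages Cholesky factors, whereas $f_h$ is evaluated at $\text{chol}(\widebar{\Sigma}_u)$. I would handle this in one of two ways. The first is to take $\mathsf{B}^\circ$ as the primitive parameter and note that $\text{vech}(\widebar{\mathsf{B}}^\circ)$ and $\boldsymbol{\widebar{\sigma}}$ are related by a first-order Taylor expansion with remainder $O_p((NT)^{-1})$. The second is to note that under the long-panel regime $N/T\to0$ the individual expansions $\widehat{\mathsf{B}}_i^\circ - \mathsf{B}_i^\circ = D\,\text{chol}(\Sigma_{u,i})(\widehat{\Sigma}_{u,i}-\Sigma_{u,i}) + O_p(T^{-1})$ average to a term that is $\sqrt{NT}$-negligible, because $\sqrt{NT}\,O_p(T^{-1}) = O_p(\sqrt{N/T}) = o_p(1)$. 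Either way, $\sqrt{NT}\,\text{vec}(\widebar{\Theta}_h - \Theta_h)$ equals a linear map $J_h = \partial \text{vec}(\Theta_h)/\partial(\boldsymbol{\beta}',\boldsymbol{\sigma}',\boldsymbol{\varphi}')$ applied to the vector in \eqref{eq:UncondAsyMG}, plus $o_p(1)$. The limit then follows from Corollary~\ref{cor:clt_pmg} and Slutsky, with $\Sigma_{\widebar{\Theta},h} = J_h V J_h'$.

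By the product rule, the Jacobian has three pieces:
\begin{itemize}
\item $\partial\text{vec}(\Xi_h)/\partial\boldsymbol{\beta}'$, given by the standard formula $\sum_{m=0}^{h-1} J(\mathbf{A}')^{h-1-m}\otimes \Xi_m$ of \citet[ch.~3]{Lutkepohl2005}, premultiplied by $((\mathsf{B}^\circ\mathsf{Q})'\otimes I_K)$;
\item the derivative of the Cholesky factor, $L_K'\{L_K(I+K_{KK})(\mathsf{B}^\circ\otimes I_K)L_K'\}^{-1}$, premultiplied by $(\mathsf{Q}'\otimes \Xi_h)$;
\item the derivative of the polar factor, premultiplied by $(I_K\otimes \Xi_h\mathsf{B}^\circ)$.
\end{itemize}

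The main obstacle is differentiability of the polar/SVD map at the population $\Sigma_\circ = [\Sigma_{\epsilon^\circ m}:0]$. This matrix is singular whenever $K_2>0$, and the polar factor of a singular matrix is not unique. However, Assumption~\ref{as:Proxy}(iii) gives $\text{rk}(\Sigma_{\epsilon^\circ m})=K_1$. By the analysis in \eqref{eq:ID_Qi}, $\mathsf{Q}_1 = \Sigma_{\epsilon^\circ m}(\Sigma_{m\epsilon^\circ}\Sigma_{\epsilon^\circ m})^{-1/2}$, and this is a smooth function of $\boldsymbol{\varphi}$ near the truth, because the inverse square root is analytic on positive definite $K_1\times K_1$ matrices. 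For $\mathsf{Q}_2$ I would argue through the orthogonal projector $I_K - \mathsf{Q}_1\mathsf{Q}_1'$, which is smooth, with $\mathsf{Q}_2$ fixed via the chosen normalization $\mathsf{V}_{22}$. When $K_2=1$ this choice is the sign, so $\mathsf{Q}_2$ is locally smooth; for general $K_2$ I would impose the paper's $\mathsf{W}_{22}$ restriction, for example the one recovered in Appendix~\ref{app:S2ID}, so that $\mathsf{Q}_2$ becomes a smooth function of $\mathsf{Q}_1$. Given this, the continuous mapping theorem together with consistency $\boldsymbol{\widebar{\varphi}}\overset{p}{\to}\boldsymbol{\varphi}$ ensures that, with probability tending to one, the sample SVD selects the same branch and sign convention $\mathsf{P}^{\pm}$. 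The delta method therefore applies, and the corollary follows for each $h$.
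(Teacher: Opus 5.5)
Your overall route is the paper's. You treat $\widebar{\Theta}_h=\widebar{\Xi}_h\widebar{\mathsf{B}}^\circ\widehat{\mathsf{Q}}$ as a smooth function of the estimators in Corollary~\ref{cor:clt_pmg} and apply the delta method, so your $\Sigma_{\widebar{\Theta},h}=J_hVJ_h'$ corresponds to the paper's $Dg(\theta_0)\Sigma Dg(\theta_0)'$. Your treatment of the polar factor supplies the differentiability that the paper only asserts: the rank-$K_1$ formula for $\mathsf{Q}_1$, plus a normalization that makes $\mathsf{Q}_2$ a smooth function of $I_K-\mathsf{Q}_1\mathsf{Q}_1'$. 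As written, this needs $L=K_1$. For $L>K_1$, argue instead via smoothness of the leading $K_1$-dimensional left singular subspace, using the gap $\mathsf{d}_{K_1}>0=\mathsf{d}_{K_1+1}$ at the truth.

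The step that fails as stated is reducing $\widebar{\mathsf{B}}^\circ=N^{-1}\sum_i\operatorname{chol}(\widehat{\Sigma}_{u,i})$ to a function of $\boldsymbol{\widebar{\sigma}}$. Unit-by-unit linearization gives
\[
\sqrt{NT}\Bigl(\widebar{\mathsf{B}}^\circ-\frac1N\sum_{i=1}^N\mathsf{B}_i^\circ\Bigr)=\frac{\sqrt{NT}}{N}\sum_{i=1}^N G(\Sigma_{u,i})\bigl[\widehat{\Sigma}_{u,i}-\Sigma_{u,i}\bigr]+O_p\bigl(\sqrt{N/T}\bigr),
\]
where $G(\Sigma)$ is the derivative of the Cholesky map. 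The averaged remainder is $O_p(T^{-1})$, not $O_p((NT)^{-1})$, because the quadratic terms have nonzero mean; this is harmless under $N/T\to0$.

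The real problem is that the Jacobians $G(\Sigma_{u,i})$ are unit-specific whenever $\mathsf{B}_i^\circ$ is heterogeneous. Assumption~\ref{as:SVAR}(iv) explicitly allows this, and Theorem~\ref{th:AsyMG} does not exclude it. The leading term is then a weighted average of the $\widehat{\Sigma}_{u,i}-\Sigma_{u,i}$, not $G(\Sigma_u)$ applied to $\widebar{\Sigma}_u-\Sigma_u$. The discrepancy is $O_p(1)$ but not $o_p(1)$ unless the dispersion of the $\Sigma_{u,i}$ vanishes, so it changes the limit variance.

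There is a second issue if $\Theta_h$ is built from $\operatorname{chol}(\Sigma_u)$, as $\boldsymbol{\sigma}=\text{vech}(\mathsf{B}^\circ\mathsf{B}^{\circ\prime})$ suggests. Then the Jensen gap $N^{-1}\sum_i\operatorname{chol}(\Sigma_{u,i})-\operatorname{chol}(N^{-1}\sum_i\Sigma_{u,i})$ is of fixed order and diverges after $\sqrt{NT}$-scaling. For example, with $\Sigma_{u,i}=c_i^2\Sigma$ as in SN2, $G(\Sigma_{u,i})=c_i^{-1}G(\Sigma)$, and $N^{-1}\sum_ic_i<(N^{-1}\sum_ic_i^2)^{1/2}$ unless all $c_i$ are equal.

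So your ``either way'' conclusion, including $\Sigma_{\widebar{\Theta},h}=J_hVJ_h'$, holds only under homogeneous $\Sigma_{u,i}$ or a local-to-homogeneity rate such as $\|\Sigma_{\eta_\sigma}\|=o(T^{-1})$ from Theorem~\ref{th:pbb_consist}. To cover the general case:
\begin{itemize}
\item rerun the Jenish--Prucha argument of Theorem~\ref{th:AsyMG} with the $\boldsymbol{\sigma}$-block summand replaced by $G(\Sigma_{u,i})[u_{it}u_{it}'-\Sigma_{u,i}]$; the weights are deterministic and uniformly bounded by Assumption~\ref{as:SVAR}(iii);
\item center at $N^{-1}\sum_i\mathsf{B}_i^\circ$ and define $\Theta_h$ with that baseline matrix.
\end{itemize}
The delta method then goes through, with a covariance different from $J_hVJ_h'$. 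Alternatively, use $\operatorname{chol}(\widebar{\Sigma}_u)$ as the estimator, so that $\widebar{\Theta}_h=f_h(\boldsymbol{\widebar{\beta}},\boldsymbol{\widebar{\sigma}},\boldsymbol{\widebar{\varphi}})$ holds literally. The paper's one-line delta-method proof passes over this point as well.
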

The estimates of the structural IRF are \( \widebar{\Theta}_h = \widebar{\Xi}_h \widebar{\mathsf{B}}^\circ\widehat{\mathsf{Q}} \), i.e., functional forms of the MG estimates $ \widebar{A}$, $ \widebar{\mathsf{B}}^\circ $ and pooled $ \widehat{\mathsf{Q}} $. Thus, the result follows from the application of the delta method with a continuously differentiable function $g$. The asymptotic covariance matrix is  
\[
\Sigma_{\widebar{\Theta},h} = Dg(\theta_0) \Sigma Dg(\theta_0)'.
\]

%% file: 3_Bootstrap.tex
\section{Bootstrap inference for panel SVARs} \label{sec:Bootstrap}

To estimate the limiting distributions of Section~\ref{sec:AsyInference} consistently, this section considers block bootstrapping. The idea of block bootstrapping is to arrange blocks of consecutive data points to preserve temporal or cross-sectional dependencies during the resampling. We propose and later evaluate a recursive-design moving-block bootstrap procedure where the resampling scheme takes particular account of residual structure in both dimensions of the panel. Specifically in Section~\ref{sec:BootBC}, we moreover consider adjustments for panel SVAR models estimated from small samples, most notably, a panel-block bootstrap-after-bootstrap. 

\subsection{Residual-based panel-block bootstrap} \label{sec:BootScheme}
The necessary steps to implement our adaptation of the moving-block bootstrap procedure are as follows. Starting from this general form, the empirical practitioner can configure a wide variety of resampling schemes, as shown in {Remark~3.1} and listed in Table~\ref{tab:Methods}.
\begin{algorithm}[Panel-block bootstrap, PBB] \label{algo:pmb} \end{algorithm}
\begin{enumerate}[label=\textbf{Step \arabic*},align=left]
    \item Estimate the $ N $ individual VAR($ p $) models of Eq.~\eqref{eq:PVAR} by least squares using the data panel $ \boldsymbol{y}_{it} $ to get their \textit{(i)} $ K \times (n_d + Kp) $ coefficient matrices $ \widehat{A}_{i} $ and \textit{(ii)} $ K \times T $ residual matrices $ \widehat{U}_i = Y_i - \widehat{A}_i Z_i $ for $ i=1,\ldots,N $. In the course of point estimation, also identify $ \widehat{\textsf{B}}_i $ (e.g.~by the PMG of Eq.~\eqref{eq:IDpooled_estimator}), calculate the panel mean $ \widebar{A}, \widebar{\mathsf{B}} $, and derive the structural IRFs $ \widebar{\Theta}_h = \widebar{\Xi}_h \widebar{\mathsf{B}} $, although the subsequent iterations of \textbf{Steps~2--5} do not require this.

    \item Resample the residuals $ \widehat{\boldsymbol{u}}_{it} $ into $ \widecheck{\boldsymbol{u}}_{it}^* $ using panel moving-blocks with the pre-selected temporal and cross-sectional size $ \left( b_{(t)}, b_{(i)} \right) $. For this, draw independently with equal probability and replacement \textit{(i)} $ T^\circ = [T/b_{(t)}] $ integers $ t^\circ $ from $ \{ 0,\ldots,T-b_{(t)} \} $ and \textit{(ii)} $ N^\circ = [N/b_{(i)}] $ integers $ i^\circ $ from $ \{ 0,\ldots,N-1 \} $. Here, `$ [\cdot] $' rounds up to the nearest integer so that $ T^\circ b_{(t)} \geq T $ and $ N^\circ b_{(i)} \geq N $. Then, assemble and trim the selection vectors
    \begin{align} \label{eq:Resampling}
    \begin{split}
        \boldsymbol{t^*} & = \left[ t^\circ_1 + \boldsymbol{s_{(t)}}, \ldots, t^\circ_{T^\circ} + \boldsymbol{s_{(t)}} \right]_{1:T}, \ \boldsymbol{s_{(t)}} = \left( 1, \ldots, b_{(t)} \right), \text{\quad and} \\
        \boldsymbol{i^*} & = \left[ i^\circ_1 + \boldsymbol{s_{(i)}}, \ldots, i^\circ_{N^\circ} +\boldsymbol{s_{(i)}} \right]_{1:N}, \ \boldsymbol{s_{(i)}} = \left( 1, \ldots, b_{(i)} \right), \text{\quad with } i^*_i = i^*_i - N \text{ if } i^*_i > N. 
    \end{split}
    \end{align}
    Finally, shuffle the residuals and, if selected, the proxies for each $ i $ by $ i^*_i $ according to
    \begin{align} \label{eq:Shuffle}
    \begin{split}
         [\widehat{U}_{i^*_i}]_{\boldsymbol{t^*}} = & [\widehat{\boldsymbol{u}}_{i^*_i, t^*_1} : \ldots : \widehat{\boldsymbol{u}}_{i^*_i, t^*_T}] = [\widecheck{\boldsymbol{u}}_{i1}^* : \ldots : \widecheck{\boldsymbol{u}}_{iT}^* ] = \widecheck{U}^*_i \text{\quad and} \\
         [{M}_{i^*_i}]_{\boldsymbol{t^*}} = & [{\boldsymbol{m}}_{i^*_i, t^*_1} : \ldots : {\boldsymbol{m}}_{i^*_i, t^*_T}] = [{\boldsymbol{m}}_{i1}^* : \ldots : {\boldsymbol{m}}_{iT}^* ] = {M}^*_i.
    \end{split}
    \end{align}

    \item Re-center $ \widecheck{\boldsymbol{u}}_{it}^* $ by subtracting its bootstrap expectation $ \text{E}^*( \widecheck{\boldsymbol{u}}_{it}^* ) $ so that the recentered errors satisfy $ \text{E}^*( \boldsymbol{u}_{it}^* ) = 0 $ for all $ i,t $. Specifically, estimate $ \text{E}^*( \widecheck{\boldsymbol{u}}_{it}^* ) $ by its sample mean and subtract this mean for $ i =1,\ldots,N, \ j = 0,\ldots,T^\circ-1, $ and $ s_{(t)} = 1,\ldots,b_{(t)} $ as follows: 
    \begin{align} \label{eq:Recentering}
    \begin{split}
    	\boldsymbol{u}_{i, jb_{(t)} + s_{(t)}}^* = \widecheck{\boldsymbol{u}}_{i, jb_{(t)} + s_{(t)}}^* - \frac{1}{T - b_{(t)} + 1} \sum_{r = 0}^{T - b_{(t)}} \widehat{\boldsymbol{u}}_{i^*_i, r + s_{(t)}}. 
    \end{split}
    \end{align}

    \item Generate the bootstrap sample $ \boldsymbol{y}_{it}^* $ recursively from $ \boldsymbol{u}_{it}^* $. For each $ i $, initialize the pre-sample by $ [\boldsymbol{y}_{i, -p+1}^* : \ldots : \boldsymbol{y}_{i,0}^*] = [\boldsymbol{y}_{i, -p+1} : \ldots : \boldsymbol{y}_{i,0}] $ and reconstruct recursively
    \begin{align} \label{eq:Recursive}
    \begin{split}
    	\boldsymbol{y}_{it}^* = \widehat{\Phi}_i \boldsymbol{d}_{it} + \widehat{A}_{i,1} \boldsymbol{y}_{i,t-1}^* + ... + \widehat{A}_{i,p} \boldsymbol{y}_{i,t-p}^* + \boldsymbol{u}_{it}^* \text{\quad over } t=1,\ldots,T .
    \end{split}
    \end{align}

    \item Re-estimate the bootstrapped parameters using the bootstrap sample $ \boldsymbol{y}_{it}^* $. By the very same methods of \textbf{Step~1}, estimate $ \widehat{A}_i^*, \widehat{U}_i^*, \widehat{\Sigma}_{u,i}^* $, identify $ \widehat{\textsf{B}}_i^* $, and calculate their panel mean $ \widebar{A}^*, \widebar{\mathsf{B}}^* $. Finally, derive the structural IRFs $ \widebar{\Theta}_h^* = \widebar{\Xi}_h^* \widebar{\mathsf{B}}^* $.
    
    \item Estimate the metrics of the sampling variance after $ R_{\text{bt}} $ iterations of \textbf{Steps~2--5}. For example, construct confidence intervals surrounding $ \widebar{\Theta}_h $ at a level of $ \alpha \in (0,1) $ using standard \citep{Efron1979} or Hall's \citeyearpar{Hall1992} $ \alpha/2 $ and $ (1 - \alpha/2) $ percentiles of $ \{ \widebar{\Theta}^*_{h,1}, \ldots, \widebar{\Theta}^*_{h,R_{\text{bt}}} \} $.~\footnote{See \citet[App.~D.3]{Lutkepohl2005} or \citet[ch.~12.2.6]{KilianLutkepohl2017}, who give a detailed overview on the different construction methods for confidence intervals.}
\end{enumerate}

\textbf{Remark~3.1.} $ \left( b_{(t)}, b_{(i)} \right) $ allows to configure the resampling scheme in \textbf{Step~2}. For example, block sizes of $ b_{(t)} = b_{(i)} = 1 $ imply an \textit{iid.}~resampling in both dimensions. With fixed $ \boldsymbol{i^*} = (1,\ldots,N) $ as a single block, only the temporal dimension is resampled like in the moving-block bootstrap by \citet{BruggemannEtAl2016}, but the $ NK \times b_{(t)} $ blocks now stretch over the panel. With $ \boldsymbol{t^*} = (1,\ldots,T) $ on the other hand, the resampling scheme collapses into a cross-sectional resampling as favored by \citet{Kapetanios2008} for single-equation panel models. Note that, despite the variety, all of these schemes preserve the panel structure because the indices $i$ and $t$ separately have their specific $ i^*_i $ and $ t^*_t $ in each iteration. Assessments of these schemes and illustrations of their implied panel structures are given in Section~\ref{sec:SimSchemes}.

\textbf{Remark~3.2.} The temporal resampling and recentering of \textbf{Steps~2} and \textbf{3} proceed just as in \citet{BruggemannEtAl2016}. Although we do not explicitly define and glue residual blocks here, the consecutive resampling indices in $ \boldsymbol{t^*} $ imply the same temporal moving-block structure to be present in $ \widecheck{U}^*_i $ as illustrated in Section~\ref{sec:SimSchemes}. Within moving-block bootstraps, the time periods at the beginning and end of the time series sample have fewer neighbors (e.g.~only one direct neighbor in case of $\{ 1,T \} $) than the other residuals. In effect, their probabilities to be drawn into $ \boldsymbol{t^*} $ are lower than those of interior periods such that the estimator for $ \text{E}^*( \widecheck{\boldsymbol{u}}_{it}^* ) $ of \citet{BruggemannEtAl2016} applies in Eq.~\eqref{eq:Recentering}. Notably, if the temporal dimension is resampled \textit{iid.}, there is just $ s_{(t)}=1 $ single mean subtracted from all residuals of a given individual.

\textbf{Remark~3.3.} The cross-sectional resampling and recentering of \textbf{Steps~2} and \textbf{3} has been simplified. Without a pre-defined structure, the cross-sectional dimension offers some freedom to pre-select a convenient ordering. Firstly, the cross-section has thus been re-ordered to maximize correlation between neighbors. This mimics a spatial ordering equivalent to the serial ordering of time series, where the error terms that are ``close'' to each other have a stronger correlation. The weak CSD is thus locally present in neighborhoods and can be captured in the blocks of consecutive residuals.
Secondly, the last and the first individuals have been connected as neighbors in a circular moving-block resampling.\footnote{\citet{PolitisRomano1992} propose a circular resampling scheme for non-overlapping blocks of observations in their fixed-design bootstrap procedure. See also \citet[ch.~2.7.1]{Lahiri2003}.} This allows to capture also their potential residual correlation and assigns equal draw probability to all individuals. Notably, the mean in \textbf{Step~3} as an estimator uses residual time series of individual $ i^*_i $ as the sample so that the recentering is observation-specific $ \text{E}^*( \boldsymbol{u}_{it}^* ) = 0 $ after resampling in any dimension. Here, the sub-selection \textit{(i)} of sample blocks shifted by $ s_{(t)} $ accounts for the unequal draw probability in the temporal dimension and \textit{(ii)} of individual $ i^*_i $ accounts for the heterogeneity in the cross-sectional dimension.

\subsection{Asymptotic bootstrap theory} \label{sec:BootAsy}
Similarly to \cite{JentschLunsford2021}, we presume that the process $ \mathbf{x}_{it} = \left( \boldsymbol{u}_{it}', \boldsymbol{m}_{it}' \right)' $ obeys 

\begin{assumption}[Cumulant summability]\label{as:cumulants}
Denoting $\mathbf{x}_{it}=(x_{it,1},\ldots,x_{it,K+L})'$ for
$i\in\mathbb N_+, t\in\mathbb Z$ the random field
$(\mathbf{x}_{it})_{i\in\mathcal \mathbb N_+, t\in\mathbb Z}$ has absolutely summable
joint cumulants up to order eight. That is, for every
$q=2,\ldots,8$ and every coordinate vector
$a=(a_1,\ldots,a_q)\in\{1,\ldots,K+L\}^q$,
\[
\sup_{i\in\mathbb N_+, t\in\mathbb Z}
\sum_{\substack{h_2,\ldots,h_q\in\mathbb Z^2\\
(i,t)+h_r\in\mathcal (\mathbb N_+\times\mathbb Z),\ r=2,\ldots,q}}
\left|
\operatorname{cum}
\left(
x_{it,a_1},
x_{(i,t)+h_2,a_2},
\ldots,
x_{(i,t)+h_q,a_q}
\right)
\right|
<\infty .
\]
Here, for $r=2,\ldots,q$ and
$h_r=(h_{r,i},h_{r,t})\in\mathbb Z^2$,
$x_{(i,t)+h_r,a_r}
=
x_{i+h_{r,t},\,t+h_{r,t},\,a_r},$
and $\operatorname{cum}(\cdot)$ denotes the joint cumulant as defined in \citet{Lukkarinen2018}. 
\end{assumption}

A joint cumulant summability condition for random fields works similarly to the time series case \citep[see e.g.,]{Brillinger1981}, as pointed out by \citet{Grainger2026} for the case of higher order cumulants. Together with the assumptions imposed in Section~\ref{sec:AsyInference}, we show that the conditional distributions of the centered bootstrap estimators approximate the sampling distribution of the corresponding original estimators. Let
\[
    P^* \equiv P^*_{NT}(\cdot)
    =
    \mathbb P\!\left(
    \cdot \mid 
    \sigma\!\left\{\boldsymbol y_{it}:
    i\leq N,\ t\leq T
    \right\} \right)
\]
denote the bootstrap probability measure induced by the resampling scheme conditional on the observed sample. Bootstrap consistency then means that the conditional law under \(P^*_{NT}\) converges, in probability, to the sampling law of the original statistic.

\begin{theorem}[Residual-based PBB consistency]\label{th:pbb_consist}
    Under (i) Assumptions~\ref{as:SVAR}, \ref{as:Proxy}, \ref{as:mixing_conditions}, and \ref{as:cumulants}, 
    (ii) Assumption \ref{as:random_parameters} with $ \left\|\Sigma_{\eta_\sigma} \right\| = o\left(T^{-1}\right)$, 
    (iii) $ b_{(i)}, b_{(t)} \rightarrow \infty $ such that $b_{(i)}/N \rightarrow 0$, $b_{(t)}/T\rightarrow 0$, $ b_{(i)}b_{(t)}^3/T \rightarrow 0  $ as $ N,T \rightarrow \infty $ and $N/T\to 0$ ,
    \begin{align} \label{eq:AsyPBB_Esti}
    \begin{split}
    	\underset{ \boldsymbol{x} \in \mathbb{R}^{K^\dagger}}{\textup{sup}} \left| P^* \left( \sqrt{NT}
        \begin{pmatrix}
           \boldsymbol{\widebar{\beta}}^*  - \boldsymbol{\widebar{\beta}} \\
           \boldsymbol{\widebar{\sigma}}^* - \boldsymbol{\widebar{\sigma}} \\    
           \boldsymbol{\widebar{\varphi}}^* - \boldsymbol{\widebar{\varphi}} \\  
        \end{pmatrix} 
        \leq \boldsymbol{x} \right) - \mathbb P \left( \sqrt{NT} 
        \begin{pmatrix}
           \boldsymbol{\widebar{\beta}}  - \boldsymbol{\beta} \\
           \boldsymbol{\Tilde{\sigma}} - \boldsymbol{\sigma} \\    
           \boldsymbol{\Tilde{\varphi}} - \boldsymbol{\varphi} \\  
        \end{pmatrix}
        \leq \boldsymbol{x} \right) \right| \overset{p}{\longrightarrow} 0,
    \end{split}
    \end{align}
    with $ K^\dagger = K^2p + (K^2 + K)/2 + KL $.
\end{theorem}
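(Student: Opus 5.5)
We follow the route of \citet{BruggemannEtAl2016} and \citet{JentschLunsford2021}, replacing their time-series block CLT by a random-field one in the spirit of \citet{Menzel2021}. By Theorem~\ref{th:AsyMG} the target law converges to $\mathcal N(0,\Tilde V)$, which is continuous. Pólya's theorem therefore reduces the claim to two facts: the bootstrap statistic converges in $P^*$-distribution, in probability, to $\mathcal N(0,\Tilde V)$, and the uniform Kolmogorov distance follows.

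\textbf{Step 1: linearization.} We linearize the bootstrap estimators as in the original sample. Step~4 of Algorithm~\ref{algo:pmb} regenerates $\boldsymbol y^*_{it}$ recursively with $\widehat A_i$, so
\[
\sqrt{NT}(\boldsymbol{\widebar\beta}^*-\boldsymbol{\widebar\beta}) = N^{-1}\sum_i \bigl(\widehat\Gamma_i^{*-1}\otimes I_K\bigr)\,(NT)^{-1/2}\sqrt N\sum_t \vec(\boldsymbol u^*_{it}\boldsymbol z^{*\prime}_{it}).
\]
We then show $\widehat\Gamma^*_i-\Gamma_i=o_{P^*}(1)$ uniformly in $i$, using stability uniform in $i$ (Assumption~\ref{as:SVAR}(i)) and fourth-moment bounds. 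This lets us write $\boldsymbol z^*_{it}=\sum_{j}\widehat\Xi_{i,j}$-filtered lags of $\boldsymbol u^*$. Truncating the MA($\infty$) expansion at a lag $m\to\infty$ slowly expresses all three components as $(NT)^{-1/2}\sum_{i,t}$ of centered functionals of $(\boldsymbol u^*_{it},\boldsymbol m^*_{it})$ and finitely many lags.

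For $\boldsymbol{\widebar\sigma}^*$ and $\boldsymbol{\widebar\varphi}^*$ we show, as in Corollary~\ref{cor:clt_pmg}, that replacing $\widehat{\boldsymbol u}^*$ by $\boldsymbol u^*$ costs $O_{P^*}(\sqrt{N/T})=o_{P^*}(1)$. The Cholesky factor $\operatorname{chol}(\widehat\Sigma^*_{u,i})^{-1}$ is handled by a first-order expansion around $\operatorname{chol}(\Sigma_{u})$. This is where condition (ii), $\|\Sigma_{\eta_\sigma}\|=o(T^{-1})$, enters. Heterogeneous $\Sigma_{u,i}$ would otherwise create a nonvanishing discrepancy between the bootstrap centering, which is individual-specific after cross-sectional reshuffling by $i^*_i$, and the target centering at $\boldsymbol\sigma$. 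Under (ii), $\sqrt{NT}\,\|\Sigma_{u,i^*_i}-\Sigma_u\|\to0$ is avoided at the level of sums because the deviations are $o_p(T^{-1/2})$ and average out.

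\textbf{Step 2: bootstrap CLT and covariance consistency.} The linear statistic is a sum over $N^\circ T^\circ$ conditionally independent panel blocks of size $b_{(i)}\times b_{(t)}$; the blocks are independent because the $t^\circ$ and $i^\circ$ are drawn independently. We apply the Lyapunov CLT conditionally. The Lyapunov ratio is $O_p\bigl((b_{(i)}b_{(t)}/(NT))^{\delta/2}\bigr)\to0$ by the $4(2+\delta)$ moments of Assumption~\ref{as:mixing_conditions}(ii). The main work is to show $\operatorname{Var}^*(\cdot)\overset{p}{\to}\Tilde V$.

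The bootstrap variance equals the block-averaged empirical autocovariance of the products $\boldsymbol u_{it}\boldsymbol u_{it}'$, $\boldsymbol u_{it}\boldsymbol m_{it}'$ and $\boldsymbol u_{it}\boldsymbol z_{it}'$, summed over lags within a $b_{(i)}\times b_{(t)}$ window, i.e.\ a Bartlett-type random-field long-run variance estimator. We proceed in four parts:
\begin{enumerate}[label=(\alph*)]
\item Bias: it is $o(1)$ because $b_{(i)},b_{(t)}\to\infty$ and $\hat\alpha(r)=O(r^{-\gamma})$ gives summable covariances of these products.
\item Variance: it is $O(b_{(i)}b_{(t)}/(NT))$, via fourth-order cumulant summability of the products. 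These are cumulants up to order eight of $\mathbf x_{it}$ (Assumption~\ref{as:cumulants}), expanded with the product theorem for cumulants.
\item Circular wrap-around: the error from joining the ends of the cross-section is $O(b_{(i)}/N)$.
\item Recentering: the recentering of Step~3 and the replacement of $\widehat{\boldsymbol u}$ by $\boldsymbol u$ contribute $O_p(b_{(t)}/T)$ plus the estimation error of $\widehat A_i$. The latter, propagated through blocks, gives a term of order $b_{(i)}b_{(t)}^3/T$, which is exactly condition (iii).
\end{enumerate}
Assumption~\ref{as:mixing_conditions}(iii) guarantees nondegeneracy, so the limit variance is $\Tilde V$.

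\textbf{Main obstacle.} The hard step is (d), controlling the effect of using estimated residuals $\widehat{\boldsymbol u}_{it}=\boldsymbol u_{it}-(\widehat A_i-A_i)\boldsymbol z_{it}$ inside second-order block sums, jointly over $N$ individuals with CSD. I would first bound $\E\|\widehat A_i-A_i\|^4=O(T^{-2})$ uniformly in $i$. Summing the cross terms over one block then gives $O_p(b_{(t)}T^{-1/2})$ per block entry. This is squared and accumulated over the $b_{(i)}b_{(t)}$ lag window in the variance, which is what should produce the rate $b_{(i)}b_{(t)}^3/T$.
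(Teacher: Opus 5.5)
Your Step~2 rests on a false premise. You treat the bootstrap statistic as a sum of $N^\circ T^\circ$ conditionally independent panel blocks ``because the $t^\circ$ and $i^\circ$ are drawn independently.'' But Algorithm~\ref{algo:pmb} draws only $N^\circ+T^\circ$ independent starting indices, and block $(\ell,r)$ is $\{\widehat{\boldsymbol u}_{i^\circ_\ell+m,\,t^\circ_r+s}: m\le b_{(i)},\, s\le b_{(t)}\}$.

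All $T^\circ$ blocks in row $\ell$ therefore share $i^\circ_\ell$, and all $N^\circ$ blocks in column $r$ share $t^\circ_r$. This is exactly the grid-preserving property of the scheme (Remark~3.1, Section~\ref{sec:SimSchemes}). Conditionally on the data, the two kinds of sharing create dependence:
\begin{itemize}
\item Blocks in a common row are correlated through unit-specific features of the resampled units. An example is the dispersion of $\widehat\Sigma_{u,i}$ across $i$, which the unit-specific recentering of Step~3 does not remove from the $\boldsymbol\sigma$- and $\boldsymbol\varphi$-components.
\item Blocks in a common column are correlated through common time effects, i.e.\ precisely the cross-sectional dependence the scheme is designed to keep.
\end{itemize}
So the conditional Lyapunov CLT for independent summands is not available. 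Your conditional variance, computed as a within-window Bartlett-type sum, also leaves out $\operatorname{Cov}^*(X^*_{\ell r_1},X^*_{\ell r_2})$ and $\operatorname{Cov}^*(X^*_{\ell_1 r},X^*_{\ell_2 r})$. These aggregate to the scaled variances of the row and column projections, of order $T^{\circ 2}N^\circ\sigma_a^2$ and $N^{\circ 2}T^\circ\sigma_g^2$ in the paper's notation. They are $O_p(1)$ but in general not $o_p(1)$, so they cannot be dropped when you identify the limit of $\operatorname{Var}^*$. Likewise, your Lyapunov rate $(b_{(i)}b_{(t)}/(NT))^{\delta/2}$ controls only the pure two-way part.

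The missing idea is that the array of bootstrap block sums is \emph{separately exchangeable} in $(\ell,r)$ rather than independent. The paper handles this in three steps, using the two-way clustering CLT of \citet{Menzel2021}:
\begin{itemize}
\item It writes $\widetilde X^{*(q)}_{\ell r}=a_\ell+g_r+v_{\ell r}+e_{\ell r}$ (Hoeffding decomposition into row projection, column projection, interaction and idiosyncratic parts).
\item The row and column projections are conditionally i.i.d.\ sums over $\ell$ and over $r$, with Lyapunov ratios $O_p((b_{(i)}/N)^{\delta/2})$ and $O_p((b_{(t)}/T)^{\delta/2})$. Their scaled variances are shown to be bounded, so they enter $\widetilde V_q$ without changing the $\sqrt{NT}$ rate.
\item An additional spectral regularity condition on the interaction kernel $v(i_\ell,t_r)$ rules out the non-Gaussian limits that a persistent low-rank interaction would produce.
\end{itemize}
You need an argument of this kind before your steps (a)--(d) can deliver a Gaussian limit. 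At minimum this means an explicit two-stage conditioning: a CLT over $r$ given $\boldsymbol{i^*}$, a separate CLT over $\ell$ for the random conditional mean, and control of the random conditional variance.

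The rest of your outline follows the paper's equivalence lemma, truncation argument and term $A_{32}$. That covers the P\'olya reduction, replacing estimated by true residuals, truncating the MA expansion, and the role of $\|\Sigma_{\eta_\sigma}\|=o(T^{-1})$ in the centering of the inverse Cholesky factor. One correction to Step~1: row $i$ of the bootstrap sample is generated with $\widehat A_i$ but with residuals of unit $i^*_i$. So $\widehat\Gamma_i^*-\Gamma_i=o_{P^*}(1)$ also needs the near-homogeneity in (ii), not just uniform stability.
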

The proof is given in Appendix \ref{sec:AppAsymp}. We state two remarks on the asymptotics:

\textbf{Remark~3.4.} The assumption on $\eta_{\sigma_i}$ listed under (ii) relates to cross-sectional heteroskedasticity in $\Sigma_u$. It accommodates standard patterns of near-homogeneous heterogeneity, e.g., local-to-homogeneous deviations, in the spirit of the local-alternative framework in \citet{PesaranYamagata2008}, as well as fixed-size deviations affecting at most $ o(N) $ units. Covering patterns with asymptotically non-vanishing cross-sectional heterogeneity, such as spatial heterogeneity or unrestricted random-coefficient heterogeneity \citep[e.g.][]{Pesaran2006}, would require adapting the bootstrap algorithm, e.g., applying localized cross-sectional resampling, or adapting the estimation technique. In contrast to these large sample results, the MC study below assesses finite-sample performance under general heteroskedasticity assumptions. 

\textbf{Remark~3.5.} When restricting resampling to one dimension only, i.e., either cross-sectional or time resampling, the asymptotics hold when assuming mixing in the respective dimension only and independence of the residuals in the other. Note that this can be regarded as a special case of Theorem \ref{th:pbb_consist}. Under dependence in the non-sampled dimension, different asymptotic theory is needed.

Asymptotic validity of the panel-block bootstrap of $ \widebar{\Theta}_h = f_h{\left( \boldsymbol{\widebar{\beta}}, \boldsymbol{\widebar{\sigma}}, \boldsymbol{\widebar{\varphi}} \right)} $ follows directly by
\begin{corollary}[Asymptotic validity of bootstrap structural impulse responses] \label{cor:asym_PBBSIRF}
    Under the assumptions of Theorem \ref{th:pbb_consist},
    \begin{align} \label{eq:AsyPBB_IRF}
    \begin{split}
    	\underset{ \boldsymbol{x} \in \mathbb{R}^{K^2}}{\textup{sup}} \left| P^* \left( \sqrt{NT} \textup{vec}\left(\widebar{\Theta}_h^* - \widebar{\Theta}_h \right) \leq \boldsymbol{x} \right) - \mathbb P \left( \sqrt{NT} \textup{vec}\left(\widebar{\Theta}_h - \Theta_h \right) \leq \boldsymbol{x} \right) \right| \overset{p}{\longrightarrow} 0.
    \end{split}
    \end{align}
\end{corollary}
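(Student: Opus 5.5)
The plan is a bootstrap delta method that combines Theorem~\ref{th:pbb_consist} with Corollaries~\ref{cor:clt_pmg} and~\ref{cor:CLT_SIRF}. Write $\widebar{\Theta}_h = g_h(\boldsymbol{\widebar{\beta}}, \boldsymbol{\widebar{\sigma}}, \boldsymbol{\widebar{\varphi}})$. Here $g_h$ composes three maps: $\boldsymbol{\beta} \mapsto \Xi_h$ through the VMA recursion, which is polynomial; $\boldsymbol{\sigma} \mapsto \text{chol}(\Sigma_u)$, which is smooth on positive definite matrices; and $\boldsymbol{\varphi} \mapsto \mathsf{Q} = \mathsf{U}\mathsf{V}'$ from the SVD of $[\Sigma_{\epsilon^\circ m} : 0_{K\times K_2}]$.

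The first step is to argue that $g_h$ is continuously differentiable in a neighbourhood of the true $\theta_0 = (\boldsymbol{\beta}', \boldsymbol{\sigma}', \boldsymbol{\varphi}')'$. This is the same differentiability already used in Corollary~\ref{cor:CLT_SIRF}. For the rotation, write $\mathsf{Q}_1 = \Sigma_{\epsilon^\circ m}(\Sigma_{m\epsilon^\circ}\Sigma_{\epsilon^\circ m})^{-1/2}$ as in Eq.~\eqref{eq:ID_Qi}, which is smooth whenever Assumption~\ref{as:Proxy}(iii) holds. The complement $\mathsf{Q}_2$ is fixed by the chosen normalization $\mathsf{W}_{22}$ (trivially when $K_2 = 1$). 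I take this smoothness as given, as the paper does for Corollary~\ref{cor:CLT_SIRF}.

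The second step is a first-order expansion on both sides:
\[
\sqrt{NT}\,\vec(\widebar{\Theta}_h^* - \widebar{\Theta}_h) = Dg_h(\widebar{\theta})\sqrt{NT}(\widebar{\theta}^* - \widebar{\theta}) + o_{P^*}(1),
\]
\[
\sqrt{NT}\,\vec(\widebar{\Theta}_h - \Theta_h) = Dg_h(\theta_0)\sqrt{NT}(\widebar{\theta} - \theta_0) + o_P(1).
\]
The remainders vanish for the following reasons:
\begin{itemize}
\item On the bootstrap side, Theorem~\ref{th:pbb_consist} together with the tightness of the original limit gives $\sqrt{NT}(\widebar{\theta}^*-\widebar{\theta}) = O_{P^*}(1)$ in probability.
\item Corollary~\ref{cor:clt_pmg} gives $\widebar{\theta} \overset{p}{\to} \theta_0$. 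Continuity of $Dg_h$ then yields $Dg_h(\widebar{\theta}) \overset{p}{\to} Dg_h(\theta_0)$.
\item A mean-value argument with $Dg_h$ evaluated at intermediate points handles both remainders.
\end{itemize}

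The third step transfers the Kolmogorov-distance result to the IRF. By Theorem~\ref{th:pbb_consist}, the conditional law of $\sqrt{NT}(\widebar{\theta}^*-\widebar{\theta})$ is close in Kolmogorov distance to that of $\sqrt{NT}(\boldsymbol{\widebar{\beta}}-\boldsymbol{\beta}, \boldsymbol{\Tilde{\sigma}}-\boldsymbol{\sigma}, \boldsymbol{\Tilde{\varphi}}-\boldsymbol{\varphi})$. By Theorem~\ref{th:AsyMG} the latter converges to $\mathcal N(0,\Tilde V)$. Because the Gaussian limit is continuous, Pólya's theorem upgrades this to weak convergence in probability of the bootstrap law to $\mathcal N(0,\Tilde V)$. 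Applying the continuous linear map $Dg_h(\theta_0)$ and Slutsky's lemma conditionally gives
\[
\sqrt{NT}\,\vec(\widebar{\Theta}_h^*-\widebar{\Theta}_h) \overset{d^*}{\to} \mathcal N\bigl(0, Dg_h(\theta_0)\Tilde V Dg_h(\theta_0)'\bigr) \quad \text{in probability.}
\]
On the original side, Corollary~\ref{cor:clt_pmg} with the same delta method gives $\mathcal N(0, Dg_h V Dg_h')$. Under $N/T\to0$, the appendix shows that $\boldsymbol{\widebar{\sigma}}-\boldsymbol{\Tilde{\sigma}}$ and $\boldsymbol{\widebar{\varphi}}-\boldsymbol{\Tilde{\varphi}}$ are $o_P((NT)^{-1/2})$. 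Hence $V = \Tilde V$, and the two limits coincide.

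Finally, both laws converge to the same continuous Gaussian limit, so Pólya's theorem gives uniform convergence of the distribution functions over $\boldsymbol{x}\in\mathbb R^{K^2}$. This yields \eqref{eq:AsyPBB_IRF}. A degenerate limit, which occurs when $Dg_h\Tilde V Dg_h'$ is singular in some coordinates, needs a separate treatment. Either the statement is restricted to continuity points, or one notes that degenerate coordinates put mass on a point where the supremum still converges. I would flag this and assume nonsingularity.

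The main obstacle I expect is identifying $V$ with $\Tilde V$. Theorem~\ref{th:pbb_consist} compares the bootstrap to the statistic built from the true errors, whereas the target here is the feasible $\widebar{\Theta}_h$ built from $\boldsymbol{\widebar{\sigma}}, \boldsymbol{\widebar{\varphi}}$. The equality $V=\Tilde V$ therefore rests entirely on the negligibility shown for Corollary~\ref{cor:clt_pmg}. I would invoke that result and check that it holds at rate $\sqrt{NT}$, not merely in probability.
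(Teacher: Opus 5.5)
Your bootstrap delta-method argument is correct and follows essentially the paper's route: the paper only asserts that the corollary ``follows directly'' from Theorem~\ref{th:pbb_consist} through the same continuously differentiable map used for Corollary~\ref{cor:CLT_SIRF}, and your two-sided expansion fills in that claim, with $V=\Tilde{V}$ obtained from the $o_p((NT)^{-1/2})$ negligibility of $\boldsymbol{\widebar{\sigma}}-\boldsymbol{\Tilde{\sigma}}$ and $\boldsymbol{\widebar{\varphi}}-\boldsymbol{\Tilde{\varphi}}$ under $N/T\to0$, and with the nonsingularity caveat needed for Pólya's theorem. One detail, left implicit both by you and by the paper, is that $\widebar{\mathsf{B}}^\circ$ is defined as the average of the $\operatorname{chol}(\widehat{\Sigma}_{u,i})$ rather than as $\operatorname{chol}(\widebar{\Sigma}_u)$, so writing $\widebar{\Theta}_h=g_h(\boldsymbol{\widebar{\beta}},\boldsymbol{\widebar{\sigma}},\boldsymbol{\widebar{\varphi}})$ requires noting that the two differ only by $O_p(T^{-1}+\|\Sigma_{\eta_\sigma}\|)$, since the first-order terms cancel; this is $o_p((NT)^{-1/2})$ under $N/T\to0$ and $\|\Sigma_{\eta_\sigma}\|=o(T^{-1})$, and the same holds for the bootstrap analogue.
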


\subsection{Small-T adjustments for MG estimation} \label{sec:BootBC}
Although consistently estimated and correctly bootstrapped as $ T \to \infty $ and $ \ N \to \infty $, the three parameters in $ \Theta_h = f_h{\left( \boldsymbol{\beta}, \boldsymbol{\sigma}, \boldsymbol{\varphi} \right)} $ are subject to bias in small samples. Using the same scenarios of data generation as the MC study in Section~\ref{sec:Simulation}, our introductory simulations in Appendix~\ref{sec:SimRes_point} compare the performance of point estimators with and without small-sample adjustments. The results illustrate small-$ T $ biases in $ \Theta_h = f_h{\left( \boldsymbol{\beta}, \boldsymbol{\sigma}, \boldsymbol{\varphi} \right)} $ and also confirm effective remedies. 

On the individual level, the estimators of Section~\ref{sec:Estimation} exhibit small-$ T $ biases already known from the time series literature. The OLS estimates $ \boldsymbol{\widehat{\beta}}_i $ for a persistent VAR process have a bias that diminishes the root size of the VAR-companion matrix.\footnote{See \citet{NichollsPope1988} and \citet{Pope1990}. In general, the magnitude and direction of the bias depend on the deterministic terms and root size of the VAR model \citep[ch.~2.3.3]{KilianLutkepohl2017}.} Consequently, the IRFs have distributions that are skewed towards the zero-line and understate the response persistence \citep{Kilian1998}. This is most prominent at mid-range horizons $ h $, where the reduced-form $ \widehat{\Xi}_{i,h} $ is the exponentiation of biased $ \boldsymbol{\widehat{\beta}}_i $. For $ h \to \infty $, the biased and actual IRFs both collapse to the zero-line under the assumed (and estimated) stability of the VAR process. For small $ h $, a twofold bias of $ \mathsf{\widehat{B}}_i $ is more prominent in $ \widehat{\Theta}_{i,h} = \widehat{\Xi}_{i,h} \mathsf{\widehat{B}}_i $: Firstly, $ \widehat{\Sigma}_{u,i} $ in Eq.~\eqref{eq:sample_moments} underestimates $ {\Sigma}_{u,i} $ by a factor of $ T/(T - n_Z) $, where $ n_Z = n_d + Kp $ denotes the number of regressors per equation. Secondly, the individually identified $ \mathsf{B}_i = \mathsf{B}_i \mathsf{Q}_i $ rely on imperfect correlation between $ \boldsymbol{m}_{it}$ and $ \boldsymbol{\epsilon}_{1,it} $. If the proxy strength $ \boldsymbol{\mathsf{d}}_1 $ in Eq.~\eqref{eq:ID_svd} is low compared to the sampling variability of $ \boldsymbol{\widehat{\mathsf{d}}}_1 $ (as is typical in small samples), the unfavorable ratio induces an instrumental variable bias on $ \mathsf{\widehat{Q}}_i $ and confidence intervals may become very wide or even unbounded.\footnote{Compare \citet{AndrewsEtAl2019} on instrumental variables (IV) regression. Using the \textit{local-to-zero} asymptotics proposed by \citet{StaigerStock1997} for IV regression, \citet{OleaEtAl2021} discuss weak proxies of asymptotically vanishing proxy relevance in individual proxy SVARs. In Appendix~\ref{sec:SimRes_point}, we also show results on these individual proxy SVARs next to the panel results in order to illustrate the issues of weak proxies.}

On the panel level, these individual small-$ T $ biases are averaged and thus persist in the MG estimates.  
The individual $ \widehat{\Theta}_{i,h} $ often hide $ \text{Bias} \left[ \text{vec} \left( \widehat{\Theta}_{i,h} \right) \right] = \E \left[  \text{vec} \left( \widehat{\Theta}_{i,h} - \Theta_{i,h} \right) \right] $ under a large $  \text{Cov} \left[ \text{vec} \left( \widehat{\Theta}_{i,h}  \right) \right] $ such that this bias has a minor share within the mean squared error
\begin{align} \label{eq:MSE_IRF}
\begin{split}
    \text{MSE} \left[ \text{vec} \left( \widehat{\Theta}_{i,h} \right) \right]
    & = \text{Cov} \left[ \text{vec} \left( \widehat{\Theta}_{i,h}  \right) \right] + \text{Bias} \left[ \text{vec} \left( \widehat{\Theta}_{i,h}  \right) \right] \text{Bias} \left[ \text{vec} \left( \widehat{\Theta}_{i,h}  \right) \right]'.
\end{split}
\end{align}
MG estimation only decreases the estimator variance under $ N \to \infty $ such that the three small-$ T $ biases become more pronounced within the MSE. The simulations in Appendix~\ref{sec:SimRes_point} even indicate that the 90\%-quantile bands of the uncorrected MG estimates $ \widebar{\Theta}_{h} $ miss the actual $ \Theta_{h} $.

Against this, we first consider small-$ T $ adjustments for $ \left\{ \boldsymbol{\sigma}, \boldsymbol{\varphi} \right\} $. As is standard for VAR estimation \citep[p.~31]{KilianLutkepohl2017}, we adjust for the degrees of freedom $ T - n_Z $ and use $ \widehat{\boldsymbol{\sigma}}_{\text{BC},i} = T/(T - n_Z) \cdot \widehat{\boldsymbol{\sigma}}_{i} $. Notably, $ \mathsf{\widehat{Q}} $ is invariant to this correction because the bias is ``scaled out'' irrespective of the denominator used for $ \mathsf{\widehat{B}}_i^\circ = \text{chol} \left( \widehat{\Sigma}_{u,i} \right) $ and $ \widebar{\Sigma}_{\epsilon^\circ m} $. This becomes evident as the scaling $ \mathsf{D} $ is dropped from the SVD of $ \left[ \widebar{\Sigma}_{\epsilon^\circ m}  : 0_{K \times K_2} \right] = \widehat{\mathsf{U}} \widehat{\mathsf{D}} \widehat{\mathsf{V}}' $. Moreover, the PMG-identification equipped with the pooled $ \widehat{\mathsf{Q}} $ increases the sample by a factor of $ N $ and thus the precision of $ \boldsymbol{\widehat{\mathsf{d}}}_1 $. Although not a panacea, this pooled estimator avoids notable distortions under \textit{(i)} standard panel sizes of empirical applications, \textit{(ii)} the ``individually weak'' proxies considered in the MC simulations of \cite{JentschLunsford2021}, and \textit{(iii)} elemental heterogeneity in the identification problem (unlike MG estimation) as shown in Appendix~\ref{sec:SimRes_local}.

In contrast to the pair of $ \left\{ \boldsymbol{\sigma}, \boldsymbol{\varphi} \right\} $, the estimator for $ \boldsymbol{\beta} $ is subjected to an iterative procedure. The pooled estimation of the dynamic panel model could imply a sufficiently large panel sample of size $ NT $, but would be biased under fixed effects \citep{Nickell1981} and inconsistent under coefficient heterogeneity \citep{PesaranSmith1995}. In direct reference to Algorithm~\ref{algo:pmb}, we resort to a bootstrap-based technique instead and expand the bootstrap-after-bootstrap from \citet{Kilian1998} by the cross-sectional dimension. We obtain an equivalent procedure for the heterogeneous panel VAR model, which is carried out in the following steps.
\begin{algorithm}[Panel bootstrap-after-bootstrap]  \label{algo:bab} \end{algorithm}
\begin{itemize}[align=left]
    \item[\textbf{Step~1a}] Estimate the individual bias $ \Psi_i = \E \left[ \widehat{\boldsymbol{\beta}}_i - \boldsymbol{\beta}_i \right] \ \forall i $. For this, apply the bootstrap Algorithm~\ref{algo:pmb} to generate $ R_{\text{bt}} $ replications of $ \widehat{\boldsymbol{\beta}}_i^* $ and calculate $ \widehat{\Psi}_i = \left( R_{\text{bt}}^{-1} \sum_j^{R_{\text{bt}}} \widehat{\boldsymbol{\beta}}_{ij}^* \right) - \widehat{\boldsymbol{\beta}}_i $.

    \item[\textbf{Step~1b}] Calculate the bias-corrected coefficients $ \widehat{\boldsymbol{\beta}}_{\text{BC},i} \ \forall i $. 
    To keep the VAR processes stationary, the diminution bias of $ \widehat{\boldsymbol{\beta}}_i $ is only corrected if the modulus of the largest root of the companion matrix obeys $ \left| \lambda \left( \widehat{\boldsymbol{\beta}}_i \right) \right| < 1 $. In this case, calculate $ \widehat{\boldsymbol{\beta}}_{\text{BC},ij} = \widehat{\boldsymbol{\beta}}_i - \delta_j \widehat{\Psi}_i $ within an iterative stationarity correction that initializes the weight with $ \delta_1 = 1 $, reduces $ \delta_j \rightarrow 0 $ over increasing iterations $ j $ as long as $ \left| \lambda \left( \widehat{\boldsymbol{\beta}}_{\text{BC},ij} \right) \right| \geq 1 $, and stops if $ \left| \lambda \left( \widehat{\boldsymbol{\beta}}_{\text{BC},ij} \right) \right| < 1 $. Choose $ \widehat{\boldsymbol{\beta}}_{\text{BC},i} = \widehat{\boldsymbol{\beta}}_{\text{BC},ij} $ in case of stationarity and $ \widehat{\boldsymbol{\beta}}_{\text{BC},i} = \widehat{\boldsymbol{\beta}}_i $ in case of non-stationarity.

    \item[\textbf{Step~2a}] Re-conduct \textbf{Steps~2--5} from the panel-block bootstrap Algorithm~\ref{algo:pmb} using $ \widehat{\boldsymbol{\beta}}_{\text{BC},i} $. In \textbf{Step~4} of this second-step bootstrap, $ \widehat{\boldsymbol{\beta}}_{\text{BC},i} $ replaces the biased first-step estimates $ \widehat{\boldsymbol{\beta}}_i $ in Eq.~\eqref{eq:Recursive}, which recursively generates the bootstrap samples. 

    \item[\textbf{Step~2b}] Re-estimate the bias-corrected coefficients $ \widehat{\boldsymbol{\beta}}_{\text{BC},i}^* $ from $ \widehat{\boldsymbol{\beta}}_{\text{BC},i} $. Here, reuse\footnote{The \textit{double bootstrap} would re-estimate and correct the bias $ \widehat{\Psi}_i^* $ based on a third layer of nested bootstrap loops in \textbf{Step 2b}. However, this option has much smaller room for improved performance and is computationally expensive, especially in the panel VAR setting, where the factor $ N $ is effective at each procedural step.} $ \widehat{\Psi}_i $ from \textbf{Step~1a} with the correction procedure of \textbf{Step~1b} to obtain $ \widehat{\boldsymbol{\beta}}_{\text{BC},i}^* \ \forall i $, their panel mean $ \boldsymbol{\widebar{\beta}}_{\text{BC}}^* $, and $ \widebar{\Theta}_{\text{BC},h}^* = \widebar{\Xi}_{\text{BC},h}^* \widebar{\mathsf{B}}_{\text{BC}}^* = f_h{\left( \boldsymbol{\widebar{\beta}}_{\text{BC}}^*, \boldsymbol{\widebar{\sigma}}_{\text{BC}}^*, \boldsymbol{\widebar{\varphi}}^* \right)} $ with $\boldsymbol{\widebar{\sigma}}_{\text{BC}}^* = T/(T - n_Z) \cdot \boldsymbol{\widebar{\sigma}}^* $.

    \item[\textbf{Step~3\ }] Estimate the metrics of the sampling variance after $ R_{\text{bt}} $ iterations of \textbf{Steps~2a+b}. For example, construct confidence intervals surrounding $ \widebar{\Theta}_{\text{BC},h} $ at a level of $ \alpha \in (0,1) $ using standard \citep{Efron1979} or Hall's \citeyearpar{Hall1992} $ \alpha/2 $ and $ (1 - \alpha/2) $ percentiles of $ \{ \widebar{\Theta}_{\text{BC},h,1}^*, \ldots, \widebar{\Theta}_{\text{BC},h,R_{\text{bt}}}^* \} $.    
\end{itemize}

\textbf{Remark~3.6.} For a more flexible specification and straightforward computation, the stationarity correction in \textbf{Step~1b} slightly deviates from the implementation of \citet{Kilian1998}. The weights $ \delta_j $ control how carefully this iterative procedure pushes $ \lambda \left( \widehat{\boldsymbol{\beta}}_{\text{BC},i} \right) $ towards the unit circle. For a bounded number of iterations $ J $, they can be collected in the vector $ \boldsymbol{\delta} = \left( \delta_1, \ldots , \delta_J \right) $. The iterative procedure in \citet{Kilian1998} can be recursively solved and then corresponds to our scheme with the specification $ \boldsymbol{\delta} = \{ \left( \delta_1,  \ldots,\ \delta_{101} \right) \mid \delta_j = \prod_{n=0}^{j-1} \frac{100-n}{100} \} $. In the MC study of Section~\ref{sec:Simulation}, we just specify $ \boldsymbol{\delta} = 1 $ for both corrections of \textbf{Steps~1b} and \textbf{2b} to deactivate the stationarity correction. As it turns out, $ \boldsymbol{\widebar{\beta}}_{\text{BC}} $ is robust to the non-stationary individual extremes in the distribution of $ \widehat{\boldsymbol{\beta}}_{\text{BC},i} $. By contrast, the individual stationarity correction would dampen those extremes in a lopsided way such that the MG then understates the persistence.

\textbf{Remark~3.7.} The small-sample bias corrections do not affect the asymptotic derivations and their validity results. As \citet{Kilian1998} shows, the estimated bias $ \Psi (\boldsymbol{\widehat{\beta}}_i) $ decays at a rate of $ O_p(T^{-1}) $, i.e., faster than the OLS estimator converges (at a rate of $ T^{-1/2}$). For the degrees of freedom, it holds that $ T/(T - n_Z) \to 1 $ and thus $ \widehat{\boldsymbol{\sigma}}_{\text{BC},i} \to \widehat{\boldsymbol{\sigma}}_{i} $ for $ T \to \infty $.

%% file: 4_Simulation.tex
\section{Monte Carlo study} \label{sec:Simulation}
To evaluate the finite-sample performance of the inferential methods, we conduct Monte Carlo (MC) simulations using various methods and panel data settings. In the following, we first describe competing resampling schemes from Table~\ref{tab:Methods} for the panel bootstrap, then define data generating processes, and finally present results from the MC experiments. 
\begin{table}[ht]
    \centering
    \caption{Configurations of the inferential methods}
    \resizebox{0.90\textwidth}{!}{

\input{Tab_Methods}}
    \label{tab:Methods}
\end{table}

\subsection{Resampling schemes} \label{sec:SimSchemes}
Table~\ref{tab:Methods} lists the evaluated methods along their abbreviation and a brief description. Therein, a broad distinction is made between the four main categories of \textit{(i)} cross-sectional resampling, \textit{(ii)} temporal resampling, \textit{(iii)} panel resampling, and \textit{(iv)} mean-group (MG) inference. The prevalent MG-IRF estimation by $ N^{-1} \sum_{i=1}^{N}{ \widehat{\Theta}_{i,h} } = N^{-1} \sum_{i=1}^{N} f_h{\left( \boldsymbol{\widehat{\beta}}_i, \boldsymbol{\widehat{\sigma}}_i, \boldsymbol{\widebar{\varphi}} \right)} $ and the related MG inference by the empirical $ \alpha/2 $ and $ (1 - \alpha/2) $ percentiles of individual $ \{ \widehat{\Theta}_{h,1}, \ldots, \widehat{\Theta}_{h,N} \} $ are easily available ad-hoc methods. They are loosely adapted from \citet{PesaranSmith1995} as a benchmark that emphasizes methodological simplicity. The resampling schemes configured in \textbf{Step 2} of Algorithm~\ref{algo:pmb} accord with \citet[p.~380]{Kapetanios2008}, but the reduced-form residuals are exclusively the elements to be resampled as the recursive design is used. In the MC experiments, we let a generous rule of thumb determine block sizes of $ b_{(i)} \approx N/10 $ and $ b_{(t)} \approx T/10 $ for the block bootstraps. The rule is suggested in the time series literature \citep[e.g.][]{LangeEtAl_fc,BruggemannEtAl2016},  which we simply adopt for cross-sectional resampling, too. Next to each moving-block resampling, sizes of $ b_{(i)} = 1 $ or $ b_{(t)} = 1 $ configure an \textit{iid.} resampling in the respective dimension.

The different configurations of \textbf{Step~2} result in the resampling schemes listed in Table~\ref{tab:Methods}, which shall be described and exemplarily applied to the $ NK \times T $ residual panel matrix
\input{Fig_Uorig}
for a single bootstrap iteration. Each row of the matrix refers to the residual vector of an individual $ i = 1, \ldots, 5 $ (marked by the hue), and each column represents a time period $ t = 1, \ldots, 7 $ (marked by the saturation). In the cases of moving-block bootstraps, sizes of $ b_{(i)} = b_{(t)} = 2 $ are applied such that $ N^\circ = 3 $ cross-sectional and $ T^\circ = 4 $ temporal overlapping blocks (delineated by white borders) are drawn, assembled, and trimmed in each iteration.

\textbf{Cross-sectional resampling.} This scheme selects $ NK $ rows from the $ NK \times T $ matrix randomly with replacement either as \textit{iid.} or in moving blocks, while the columns as the temporal dimension remain fixed. For example, $ N $ individuals are drawn \textit{iid.} ($ \boldsymbol{i^*} = 2, 4, 3, 1, 3 $) or as consecutive rows ($ \boldsymbol{i^*} = 4, 5, 2, 3, 1 $) to construct respectively
\input{Fig_Ucsr}

\textbf{Temporal resampling.} This scheme selects $ T $ columns from the $ NK \times T $ matrix randomly with replacement either as \textit{iid.} or in moving blocks, while the rows as the cross-sectional dimension remain fixed. For example, $ T $ periods are drawn \textit{iid.} ($ \boldsymbol{t^*} = 7, 3, 4, 7, 1, 5, 2 $) or as consecutive columns ($ \boldsymbol{t^*} = 2, 3, 5, 6, 4, 5, 1 $) to construct respectively
\input{Fig_Utmpr}

\textbf{Panel resampling.} This scheme selects $ NK $ rows and $ T $ columns from the $ NK \times T $ matrix randomly with replacement, each either as \textit{iid.} or in moving blocks. Although not illustrated here, the scheme can be configured for both dimensions separately (e.g.~the cross-sectional rows can be resampled as \textit{iid.}, while the temporal columns are resampled in blocks). In the example, $ N $ individuals and $ T $ periods are drawn \textit{iid.} ($ \boldsymbol{i^*} = 2, 4, 3, 1, 3; \ \boldsymbol{t^*} = 7, 3, 4, 7, 1, 5, 2 $) or as consecutive rows and columns ($ \boldsymbol{i^*} = 4, 5, 2, 3, 1; \ \boldsymbol{t^*} = 2, 3, 5, 6, 4, 5, 1 $) to construct respectively
\input{Fig_Upanelr}

As evident from the illustrations, none of the resampling schemes breaks the grid of the dimensions $i,t,$ and $ k$. More specifically, none mixes $ \widehat{u}_{it,k} $ and $ \widehat{u}_{i^\bullet t^\bullet,k^\bullet} $ across variables $ k\neq k^\bullet, $ even if the cross-section is resampled. Further, none dissolves the panel structure of $ \{ i,t \} $ such that the resampled $ i^*_i $ (the hue) is uniform in each row $ i $, and $ t^*_t $ (the saturation) is uniform in each column $ t $. Hence, all presented schemes preserve residual structure specific to rows $ i $ or columns $ t $ beyond the blocks, such as potential cross-sectional heteroskedasticity or common factors respectively. Due to the grid, the panel resampling scheme is invariant to the implementation choice of which dimension $ \{ i,t \} $ is resampled first.

\subsection{Scenarios of data generating processes} \label{sec:SimDGP}
For our MC study, the data generating process (DGP) in \citet{JentschLunsford2021} is transferred to a panel VAR setting. To generate $ \boldsymbol{y}_{it} $, we specify Eq.~\eqref{eq:PVAR} as a bivariate PVAR model without deterministic terms $ \Phi_i \boldsymbol{d}_{it} $ and with a lag order $ p = 2 $ of the baseline VAR coefficients
\begin{align*}
    A_{1} = 
    \begin{pmatrix}
         0.44 & 0.66 \\
        -0.11 & 1.32
    \end{pmatrix} \text{\quad and \quad} A_{2} = 
    \begin{pmatrix}
        -0.18 & 0 \\
        -0.18 & -0.09
    \end{pmatrix}. 
\end{align*}
These parameters imply fairly persistent, yet stable dynamics with moduli of $ 0.930 $ and $ 0.634 $, pertaining to the roots of their characteristic polynomial. In all MC experiments, we introduce slope heterogeneity $ A_{i,1} = \sigma_{(A1),i} \cdot A_1 $ and $  A_{i,2} = \sigma_{(A2),i} \cdot A_2 $ by two scaling factors $ \sigma_{(Aj),i} = 1 + w_j \cdot u_{(A),i} $ with $ w_1 = 0.073 $,  $ w_2 = 0.263 $, and joint $ u_{(A),i} \sim U(-1, +1) $, see Appendix~\ref{sec:HetIRF}. The reduced-form errors are generated by $ \boldsymbol{u}_{it} = \mathsf{B}_i \boldsymbol{\epsilon}_{it} = \mathsf{B}_i^\circ \mathsf{Q}(\theta) \boldsymbol{\epsilon}_{it}^{\ } $ with $ \theta = \pi / 4 $ and baseline
\begin{align*}
    \mathsf{B} & =  
    \begin{pmatrix}
        1 & 0 \\
        0.5 & \sqrt{1-0.5^2}
    \end{pmatrix}  
    \begin{pmatrix}
         \cos \theta & \sin \theta \\
        -\sin \theta & \cos \theta
    \end{pmatrix}
    =
    \begin{pmatrix}
         0.707 & 0.707 \\
        -0.259 & 0.966
    \end{pmatrix}
    \\
    & \text{\quad such that \ } \Sigma_{u} = \mathsf{B}^{\ } \mathsf{B}' = 
    \begin{pmatrix}
        1 & 0.5 \\
        0.5 & 1 
    \end{pmatrix}. 
\end{align*}
This assigns a moderately strong correlation within $ \boldsymbol{u}_{it} $ of an individual unit. The DGP of \citet{JentschLunsford2021} is similar to \cite{BruggemannEtAl2016}, but the rotated $ \mathsf{B} $ is no longer lower-triangular, which thus exhibits no recursive causality and rules out identification by Cholesky decomposition. The single $ L = 1 $ proxy variable necessary for identification is generated by $ m_{it} = \psi_{(m)} \epsilon_{it,1} + e_{(m),it} $ with $ e_{(m),it} \overset{iid}{\sim} \mathcal{N}(0,1) $ and ``individually weak'' proxy strength of $ \psi_{(m)} = 0.17 $. Overall, this DGP parameterization produces hump-shaped IRFs as displayed in Figure~\ref{fig:MeanCI_SN4N20T100}.

We are particularly interested in evaluating how the resampling schemes capture different residual structures in the panel DGP. Hence, we incrementally introduce cross-sectional and temporal dependence in $  \boldsymbol{u}_{it} = \mathsf{B}_i \boldsymbol{\epsilon}_{it} $ via the structural shocks $ \boldsymbol{\epsilon}_{it} $ in four scenarios:
\begin{enumerate}[label=\textbf{SN\arabic*}]	
	\item \textbf{\textit{iid.}} The first scenario specifies the DGP as \textit{iid.}~$ \boldsymbol{\epsilon}_{it} \sim \mathcal{N} \left(0, I_K \right) $ for all dimensions $ i,t,k $.
    
	\item \textbf{\textit{Heteroskedasticity}.} In this scenario, we impose heteroskedasticity in both dimensions of $ \boldsymbol{u}_{it} $. Cross-sectional heteroskedasticity in $ \boldsymbol{u}_{it} $ emerges from parameter heterogeneity in $ \mathsf{B}_i^\circ = \sigma_{(CH),i} \cdot \mathsf{B}^\circ$ by a scaling factor $ \sigma_{(CH),i} \sim U(0.75, 1.25) $. Temporal heteroskedasticity is given as stochastic volatility in a univariate process $ \epsilon_{jt} = \sigma_{(SV),jt} \cdot \varepsilon_{jt} $ for each series $ j = 1,\ldots,NK $. The ``primary shocks'' $ \varepsilon_{jt} \sim \mathcal{N}(0,1) $ are \textit{iid.}~across both $ t,j $, and the volatility process is $ \ln \sigma_{(SV),jt} = \nu_{(SV)} + \rho_{(SV)} \ln \sigma_{(SV),j,t-1} + e_{(SV),jt} $ with volatility innovations $ e_{(SV),jt} \sim \mathcal{N}(0, \sigma_{e(SV)}^2) $, $ \sigma_{e(SV)} = 0.15 $, and a volatility persistence of $ \rho_{(SV)} = 0.85 $ as in \citet[p.~13]{JentschLunsford2021}. Using $ \nu_{(SV)} = (1-\rho_{(SV)}) (- \sigma_{e(SV)}^2 /  (1-\rho_{(SV)}^2)) $, this form of heteroskedasticity still aligns with the unit-variance normalization of $ \epsilon_{jt} $.
    
	\item \textbf{\textit{Cross-sectional dependence}.} Here, cross-correlation is confined to each panel $ k=1,\ldots,K $ of shocks $ \epsilon_{itk} $. We generate $ K $ panels of $ \boldsymbol{\epsilon}_{tk} = \left(\epsilon_{1tk}, \ldots, \epsilon_{Ntk}  \right)' \sim \mathcal{N} \left(0, \Sigma_{\epsilon} \right) $, which are \textit{iid.}~across $ t,k $ and dependent across $ i $ by the $ N \times N $ correlation matrix $ \Sigma_{\epsilon} = 0.25 \cdot \Sigma_{(CF)} + 0.75 \cdot \Sigma_{(CC)} $. The rank-deficient matrix $ \Sigma_{(CF)} = 1_{N \times N} $ introduces a common factor for strong cross-sectional dependence. Similarly to \citet[p.~627]{Herwartz2017}, $ \Sigma_{(CC)} $ is a Toeplitz matrix collecting the elements $ \sigma^2_{(CC)ii^\bullet} = \rho^{\delta_{ii^\bullet}} $ with a moderate spatial correlation of $ \rho = 0.6 $. We yet specify the distances as $ \delta_{ii^\bullet} = \text{min}(|i-i^\bullet|, \ N-|i-i^\bullet|) $ so that the correlation increases again after $ N/2 $ individuals and the first and last individuals enter the cyclical neighborhood. As a result, $ \Sigma_{\epsilon} $ contains $ \sigma^2_{\epsilon,12} = \sigma^2_{\epsilon,1N} = 0.7 $ and becomes in the exemplary case of $ N = 5 $ the circulant matrix
    \begin{align*}
        \Sigma_{\epsilon} =
        0.25 \cdot 1_{N \times N} + 0.75 \cdot
        \begin{pmatrix}
              1  & 0.60 & 0.36 & 0.36 & 0.60 \\
            0.60 &   1  & 0.60 & 0.36 & 0.36 \\
            0.36 & 0.60 &   1  & 0.60 & 0.36 \\
            0.36 & 0.36 & 0.60 &   1  & 0.60 \\
            0.60 & 0.36 & 0.36 & 0.60 & 1
        \end{pmatrix}
        = 
        \begin{pmatrix}
            1    & 0.70 & 0.52 & 0.52 & 0.70 \\
            0.70 &   1  & 0.70 & 0.52 & 0.52 \\
            0.52 & 0.70 &   1  & 0.70 & 0.52 \\
            0.52 & 0.52 & 0.70 &  1   & 0.70 \\
            0.70 & 0.52 & 0.52 & 0.70 & 1
        \end{pmatrix}.
    \end{align*}
	
	\item \textbf{\textit{Two-way dependence}.} Finally, we combine the error generation of SN2 and SN3 to assess the asymptotic performance of the research approaches under this most complex residual structure. For this, we iteratively draw primary shocks $ \left\{ \left\{ \varepsilon_{jt} \right\}_{t=1}^T \right\}_{j=1}^{NK} $ by SN3 instead of \textit{iid.} $ \varepsilon_{jt} \sim \mathcal{N}(0,1) $ and then introduce heteroskedasticity by SN2.
\end{enumerate}

With these DGPs, we generate samples of $ \boldsymbol{y}_{it} $ with four combinations of varying panel sizes $ N \in \{20,100\} \times  T \in \{100, 500\} $. Specifically, we consider a \textit{small-sample case} $ \left\{ N=20, T=100 \right\} $, two \textit{medium-sample cases} $ \left\{ N=20, T=500 \right\}$ and $ \left\{ N=100, T=100 \right\} $, and a \textit{large-sample case} $ \left\{ N=100, T=500 \right\} $. The smaller (resp.~larger) sizes $T,N$ of both medium-sample cases correspond to the small (resp.~large)-sample case so that these incremental increases reveal the effects of a specific dimension size on the inferential performance. All cases obey $ N \leq T $ and thus relate to the typical data panels of macroeconomic applications. Sizes of the total samples found specifically in panel VAR applications are $ \left\{ N = 8, T = 42 \right\} $ in \citet{GambacortaEtAl2014}, $ \left\{ N = 57, T = 52 \right\} $ in \citet{CesaEtAl2015}, $ \left\{ N = 16, T = 147 \right\} $ in \citet{BernothHerwartz2021}, and $ \left\{ N = 14, T = 76 \right\} $ in \cite{Herwartz2017}, but the cross-section is often split into subsets for assessing heterogeneity between groups of individuals.

In our simulations, the panel time series are generated with additional 200 periods to ``burn in''. For this, the effective size $ T $ of the temporal dimension is extended by 200 initial periods, which are discarded after the data generation. The initial periods allow the stochastic process to approach its equilibrium, thus reduce the impact of the initial state of the recursive data generation, and therewith achieve more reliable MC results.

\subsection{Simulation results} \label{sec:SimRes}
We conduct the MC experiments for each inferential method of Section~\ref{sec:SimSchemes} with each DGP setup of Section~\ref{sec:SimDGP} as follows. We generate $ R_{\text{MC}} = 600 $ panel samples from the DGP. For each sample, the individual VAR models are estimated with an intercept just as in \cite{JentschLunsford2021}, but $ \widehat{\Sigma}_{u,i} $ is now based on the degrees of freedom $ T-n_Z $ and $ \widehat{\mathsf{B}}_i $ stems from the pooled SVD estimate of $ \mathsf{Q} $. IRFs of the panel SVAR are derived over a horizon of $ h=0,\ldots,20 $. For the mean-group inference, the confidence bands are constructed as the 90\%-percentile intervals from the set of individual $ \widehat{\Theta}_{ih} $. For each resampling scheme, $ R_{\text{bt}} = 500 $ iterations in the first and second-step bootstrap of Algorithm~\ref{algo:bab} create a set of $\widebar{\Theta}_{h}^{*}$. Standard percentiles\footnote{To keep the dimensionality of MC results reasonable, we focus on the standard percentile intervals by \citet{Efron1979} just like \citet{Kilian1998}, \citet{LutkepohlSchlaak2019}, and \citet{JentschLunsford2021}. \citet[p.~75]{BruggemannEtAl2016} use Hall's \citeyearpar{Hall1992} percentile intervals, but find that their results are relatively similar to those of standard percentile intervals. For the empirical practitioner confronted with typical sample sizes $ T $, this decision is of minor importance after all \citep[compare][ch.~12.5]{KilianLutkepohl2017}. \citet{JentschLunsford2021} also consider \textit{grid MBB AR} confidence sets \citep{AndersonRubin1949,OleaEtAl2021} for normalized IRFs of $ K_1 = L = 1$, which cannot be constructed for standard-deviation IRFs as used in our setup.} with nominal $ \alpha = 0.1$ are constructed from this set. Although possible, we do not mix different resampling schemes between the first-step and second-step bootstrap. The simulation results are finally evaluated by the following performance criteria. Additional performance results, e.g.~on point estimation and local identification, are documented in Appendix~\ref{app:Simultaion}.

\subsubsection{Average point estimates and confidence intervals}  \label{sec:SimRes_MeanCI}
The \textit{average point estimates} and \textit{average confidence intervals} indicate the expected point estimates and their CIs formed by the inferential methods over the $ R_{\text{MC}} = 600 $ replications. They provide a visual overview of the finite-sample performance, in particular, on the potential bias as the deviation from the ``true'' $ \left[ \Theta_h  \right]_{ks} $ and on the width $ {\ell}_{h,ks} $ of the CIs over $ h $. 

\begin{figure}[ht] 
	\centering
	\resizebox{0.98\textwidth}{!}{
		\input{Fig_MeanCI_standalone.tex}}
	\caption[Average confidence intervals.]{Average confidence intervals $ \widehat{CI}_{h,ks} $ of the inferential methods for the PMG-identified proxy SVARs according to Monte Carlo simulations. $ R_{\text{MC}} = 600 $ replications with sample size $ \left\{ N=20,  T=100 \right\} $ have been used to assess the SN4 \textit{small sample} performance.}
	\label{fig:MeanCI_SN4N20T100}
\end{figure}
Figure~\ref{fig:MeanCI_SN4N20T100} displays the $ 2 \times 2 $ IRF results exemplarily for the SN4 \textit{small sample} with residual two-way dependence as the most challenging DGP setup. Overall, the IRF point estimates $ \widebar{\Theta}_h = f_h{\left( \boldsymbol{\widebar{\beta}}, \boldsymbol{\widebar{\sigma}}, \boldsymbol{\widebar{\varphi}} \right)} $ from PMG-SVAR (dashed black lines) are subject to the small-$ T $ bias of $ \boldsymbol{\widehat{\beta}}_i $ particularly at mid-range horizons after $ h=5 $ before approaching the zero-line in the long run due to the stability of the VAR process. The bias is slightly offset in the (turquoise) MG-IRF estimates by averaging over the skewed sampling distribution of $ \widehat{\Theta}_{i,h} $, where the DGP limits influences from parameter heterogeneity by $ \Theta_h \approx \E \left[ \Theta_{i,h} \right] $, see Appendix~\ref{sec:HetIRF}. The remaining bias is not prominent in the MG-inference due to the wide CIs.

Figure~\ref{fig:MeanCI_SN4N20T100} further illustrates how Algorithm~\ref{algo:bab} corrects the bias only under those schemes that resample in the temporal dimension. The cross-sectional block bootstrap (pink blocks) slightly reduces the deviation from (the black solid lines of) $ \Theta_h $ but is insufficient in view of its very narrow CIs. Its (pink) 90\%-quantiles even miss $ \Theta_h $ at the problematic mid-range horizons. In contrast, the (purple) temporal and (green) panel block bootstraps remove the bias effectively. Their CIs turn out to be centered around $ \Theta_h $ and, in terms of width, take the middle ground between the CIs of the MG-inference and of the cross-sectional bootstrap.

Evidently, our large-scale MC study carries the ``curse of dimensionality'' and thus compels us to condense the simulation results. In total, we have 16 DGP setups (for 4 scenarios $ \times $ 4 sample sizes) producing 84 IRF values (for $ 2 \times 2 $ IRFs $ \Theta_h $ over a horizon of $ H+1 = 21 $), which are subjected to 8 inferential methods from Table~\ref{tab:Methods} and then evaluated by 3 performance criteria $ f_\bullet(\cdot) $. In Figures~\ref{fig:MeanLengths_SN4N20T100} and \ref{fig:CovRates_SN4N20T100}, separate $ 2 \times 2 $ IRF results on the selected SN4 \textit{small sample} scenario are plotted over the full horizon up to $ H=20 $ to give an intuitive and complementary picture of the dynamics. To reduce the dimensionality of Tables~\ref{tab:MeanLengths} and \ref{tab:CoverageProb}, we report averaged results $ \left( \frac{100}{K^2} \cdot \sum_{k,s=1}^{K,K} f_\bullet(\cdot)_{h,ks} \right) $ of all $ 2 \times 2 $ IRF elements in $ \Theta_h $ at $ h = \left\{ 0, 5, 15 \right\} $ for each of the 16 DGP setups. The horizons of $ h = \left\{ 0, 5, 15 \right\} $ are particularly interesting since $ \widebar{\Theta}_0 = \widebar{\mathsf{B}} $ holds the structural coefficients directly, $ {\Theta}_5 $ contains the largest effects among $ {\Theta}_h $, and $ \widebar{\Theta}_{15} $ is highly prone to bias from $ \widehat{\boldsymbol{\beta}}_i $. Differentiating between IRF shocks and target variables $ \left\{s,k\right\} $ is less interesting since their performance behavior is similar, as visible e.g. in Figure~\ref{fig:MeanCI_SN4N20T100}.

\subsubsection{Mean length} \label{sec:SimRes_MeanLengths}
The \textit{mean length} $ \widehat{\ell}_{h,ks} $ of the CIs over all $ R_{\text{MC}} = 600 $ replications can be recognized as the width of the mean CIs in Figure~\ref{fig:MeanCI_SN4N20T100} and shall be considered for all 8 inferential methods and 16 DGP setups. The criterion $ \widehat{\ell}_{h,ks} $ indicates the estimation uncertainty derived by each inferential method. Wider CIs are more likely to contain the zero-line, such as those of the MG-inference at $ h > 15 $ in Figure~\ref{fig:MeanCI_SN4N20T100}, and thus indicate more conservative inference. Further, in line with the econometric theory of Section~\ref{sec:AsyInference}, $ \widehat{\ell} _{h,ks} $ should decrease with increasing sample size and increase with the aggravating residual dependence from SN1 to SN4.

\begin{figure}[ht] 
	\centering
	\resizebox{0.98\textwidth}{!}{
		\input{Fig_MeanLengths_SN4N20T100.tex}}
	\caption[Mean lengths.]{Mean lengths $ \widehat{\ell}_{h,ks} $ of the inferential methods for the PMG-identified proxy SVARs according to Monte Carlo simulations. $ R_{\text{MC}} = 600 $ replications with sample size $ \left\{ N=20,  T=100 \right\} $ have been used to assess the SN4 \textit{small sample} performance.}
	\label{fig:MeanLengths_SN4N20T100}
\end{figure}
Figure~\ref{fig:MeanLengths_SN4N20T100} displays the $ 2 \times 2 $ IRF results for the SN4 \textit{small sample} DGP setup. The ordering of $ \widehat{\ell}_{h,ks} $ becomes more apparent, particularly at mid-range horizons, where the MG inference surpasses all bootstraps and cross-sectional resampling yields smallest  $ \widehat{\ell}_{h,ks} $. This pattern holds for both sub-configurations because \textit{iid.} and moving-block bootstraps of the same resampling scheme provide almost congruent CIs.
Moreover, Table~\ref{tab:MeanLengths} juxtaposes the mean lengths for all $ 4 \times 4 $ DGP setups. Indeed, larger sample sizes of $ N $ and particularly of $ T $ lead to narrower CIs, while stronger residual dependence leads to wider CIs.
\begin{table}[ht]
    \centering
    \caption{Mean lengths $ \left( \frac{100}{K^2} \cdot \sum_{k,s=1}^{K,K} \widehat{\ell}_{h,ks} \right) $}
    \resizebox{0.99\textwidth}{!}{
        \input{Tab_MeanLengths}}
    \label{tab:MeanLengths}
\end{table}

\subsubsection{Coverage probability}  \label{sec:SimRes_CoverageProb}
The \textit{coverage probability} $ \widehat{P}_{h,ks} $ indicates the share of the $ R_{\text{MC}} = 600 $ confidence intervals that contain the ``true'' IRF coefficient $ \left[ {\Theta}_{h} \right]_{ks} $. A confidence interval $ CI_{1-\alpha}^{*} $ based on $ \left[ \widebar{\Theta}_{h}^* \right]_{ks} $ is consistent if plim$ \left( \left[ {\Theta}_{h} \right]_{ks} \in CI_{1-\alpha}^{*} \right) = 1 - \alpha $. The ideal CI of our MC simulations would thus always align with the nominal coverage rate of $ 1-\alpha = 0.9 $ over the course of $ h $.

Figure~\ref{fig:CovRates_SN4N20T100} displays the $ 2 \times 2 $ IRF results for the SN4 \textit{small sample} DGP setup. 
As expected, the very wide CIs of the MG-inference almost always capture $ \Theta_{h} $, but the substantial diminution bias of $ \boldsymbol{\widebar{\beta}} $ due to the small $T$ still  causes the coverage to fall below the nominal level to 80\% at longer horizons. Small CI width and insufficient bias correction let the coverage rate of the cross-sectional resampling immediately fall after the initial shock period down to 20\%. The temporal and the panel resampling both show good performance at longer horizons, but the panel resampling is closer to the nominal 90\% in the short-run. Notably for all four inferential methods, the two sub-configurations again do not produce any relevant differences.
\begin{figure}[ht] 
	\centering
	\resizebox{0.98\textwidth}{!}{
		\input{Fig_CovRates_SN4N20T100.tex}}
	\caption[Coverage probabilities.]{Coverage probabilities $ \widehat{P}_{h,ks} $ of the inferential methods for the PMG-identified proxy SVARs according to Monte Carlo simulations. $ R_{\text{MC}} = 600 $ replications with sample size $ \left\{ N=20,  T=100 \right\} $ have been used to assess the SN4 \textit{small sample} performance.}
	\label{fig:CovRates_SN4N20T100}
\end{figure}

Table~\ref{tab:CoverageProb} juxtaposes the mean absolute coverage errors for all $ 4 \times 4 $ DGP setups as the mean absolute deviations from the nominal confidence level (expressed in percentage points). In general, the bootstrap procedures deliver their ``personal best'' (i.e. results close to zero) in case of large $ T $. Although a large $ N $ reduces sampling variability and accordingly the length of CIs, a large ratio of $ N/T $ in the medium sample poses a challenge for all bootstrap procedures,  particularly, for the cross-sectional resampling at $ h \ge 5 $ and for the temporal resampling at $ h=0 $. Effects from residual structure are less clear; at least heteroskedasticity of SN2 has a noticeable effect on the performance of temporal resampling, and cross-sectional dependence of SN3 worsens the performance of cross-sectional resampling bootstraps. 
Note that, if overstated, $ \widehat{P}_{h,ks} $ can reach a deviation of 10 percentage points at most. This is the case for the 100\%-coverage of the two methods for MG-inference, which surpass the nominal 90\% by 10 percentage points in large-$T$ samples due to their large intervals.
\begin{table}[ht]
    \centering
    \caption{Mean absolute coverage errors $ \left( \frac{100}{K^2} \cdot \sum_{k,s=1}^{K,K} \left| \widehat{P}_{h,ks} - (1-\alpha) \right| \right) $}
    \resizebox{0.99\textwidth}{!}{
        \input{Tab_CovRates}}
    \label{tab:CoverageProb}
\end{table}

\subsection{Summarizing recommendations}  \label{sec:recommendations}
Summarizing the finite-sample results of Section~\ref{sec:Simulation} and Appendix~\ref{app:Simultaion}, we recommend the following configurations to the empirical practitioner. Those may also serve as a basis for future extensions of the panel econometric methodology.

\textit{\textbf{(1)}} Consider small-$ T $ adjustments for the MG estimation of the three heterogeneous parameters in $ \Theta_h = f_h{\left( \boldsymbol{\beta}, \boldsymbol{\sigma}, \boldsymbol{\varphi} \right)} $. Panel samples with sizes of $ N \gtrsim T $ in particular require the bootstrap-after-bootstrap for the MG estimator $ \widebar{A} $, consideration of the degrees of freedom $ T-n_Z $ in $ \widehat{\Sigma}_{u,i} $, and pooled identification of $ \mathsf{Q} $. The latter may encompass different approaches, strengthening the identification in our study against weak proxies and in \citet{Herwartz2017} against insufficient non-Gaussianity \citep{OleaEtAl2022} of individual VARs.

The simulations indicate good performance under common factors without dynamics, but persistent multi-factor errors fall outside this scope. These would correlate with the lagged endogenous regressors such that the individual $ \boldsymbol{\widehat{\beta}}_i $ and their MG $ \boldsymbol{\widebar{\beta}} $ lose consistency. Common solutions for single-equation models address \textit{common correlated effects} \citep{Pesaran2006} or \textit{interactive fixed effects} \citep{Bai2009}. Related bootstrap procedures yet build on variants of the wild bootstrap by \citet{GoncalvesPerron2014,GoncalvesPerron2020}, which would not correctly replicate the sampling distribution of $ \boldsymbol{\widebar{\sigma}} $ in panel SVARs, or use cross-sectional resampling \citep{VosOvidijus2024,VosOvidijus2026}, which the subsequent recommendation advises against.

\textit{\textbf{(2)}} Select the panel resampling scheme. The panel resampling performs better than the temporal one at short horizons $ h $. Due to insufficient bias correction at mid-range horizons, the cross-sectional resampling should be avoided for panel SVARs. This differs from the recommendations given by \citet{Kapetanios2008} for single-equation models and by \citet{SigmundFerstl2019} for (homogeneous) panel VAR models estimated by (pooled) system-GMM.

\textit{\textbf{(3)}} The size of blocks turns out to be less relevant than the choice of resampling scheme. Even for the boundary specifications of \textit{(i)} \textit{iid.} draws of atomic residuals and \textit{(ii)} fairly large blocks of $ b_{(i)} \approx N/10 $ and $ b_{(t)} \approx T/10 $, the procedures display almost congruent performance for the SN4 small sample simulations. Largest differences for the panel resampling are found at $ h > 0 $ in the setups with medium sample sizes, where block assembling thus matters most. 

Although ad-hoc rules are prevalent, the VAR literature \citep[e.g.][]{JentschLunsford2019} acknowledges a trade-off when fragmenting samples into blocks. While larger block sizes preserve residual dependence, larger block numbers increase bootstrap variation. The trade-off is particularly relevant in the temporal resampling of small time series but mitigated in the panel resampling as the maximum number of blocks increases by a factor of $ N $. To determine optimal panel block sizes that balance over both dimensions, future research may combine selection rules for temporal \citep{HallEtAl1995} and spatial blocks \citep{NordmanEtAl2007}.

%% file: Tab_Methods.tex
\begin{tabular}{p{0.01\textwidth} p{0.09\textwidth} p{0.40\textwidth} p{0.50\textwidth} }
    \hline \hline
    \multicolumn{2}{l}{\textbf{Symbol}} & \textbf{Inferential method} & \textbf{Short description} \\
    
    \hline
    \multicolumn{3}{l}{\textit{\textbf{Cross-sectional resampling}}} & \\
    \textcolor{pink}{ \rotatebox{45}{$\boldsymbol{\square}$} } & N/no & Cross-sectional moving-block bootstrap & Cross-sectional dimension is resampled in blocks, temporal dimension is fixed. \\ [3ex]
    \textcolor{pink}{\large $ {\bullet} $ } & N\textit{iid}/no & Cross-sectional \textit{iid.} bootstrap & Cross-sectional dimension is \textit{iid.} resampled, temporal dimension is fixed. \\[3ex]
    
    \hline
    \multicolumn{3}{l}{\textit{\textbf{Temporal resampling}}} & \\
    \textcolor{violet}{ \rotatebox{45}{$\boldsymbol{\square}$} } & no/T & Temporal moving-block bootstrap & Cross-sectional dimension is fixed, temporal dimension is resampled in blocks. \\ [3ex]
    \textcolor{violet}{\large $ \boldsymbol{\bullet} $ } & no/T\textit{iid}& Temporal \textit{iid.} bootstrap & Cross-sectional dimension is fixed, temporal dimension is \textit{iid.} resampled. \\[3ex]
    
    \hline
    \multicolumn{3}{l}{\textit{\textbf{Panel resampling}}} & \\
    \textcolor{green}{ $ \boldsymbol{\square} $ } & N/T & Panel moving-block bootstrap & Both dimensions are resampled in blocks. \\ [3ex]
    \textcolor{green}{\large $ \boldsymbol{\bullet} $ } & \textit{iid} & Panel \textit{iid.} bootstrap & Both dimensions are \textit{iid.} resampled. \\
    
    $ \times $ & N/T\textit{iid} & Panel cross-sectional moving-block temporal \textit{iid.} bootstrap & Cross-sectional dimension is resampled in blocks, temporal dimension is \textit{iid.} resampled. \\
    $ \times $ & N\textit{iid}/T & Panel cross-sectional \textit{iid.} temporal moving-block bootstrap & Cross-sectional dimension is \textit{iid.} resampled, temporal dimension is resampled in blocks. \\ [3ex]
    
    \hline
    \multicolumn{3}{l}{\textit{\textbf{Mean-group inference}}} & \\
    \textcolor{teal}{ $ \boldsymbol{\triangle} $ } & MG-PC & Mean-group percentile & The individual IRF estimates depict the sampling variation of their common parameter. \\
    \textcolor{teal}{\large $ \boldsymbol{\triangledown} $ } & MG-SE & Mean-group standard error & Standard errors for the confidence intervals are calculated from individual IRF estimates. \\
    \hline \hline
\end{tabular}

%% file: Fig_Uorig.tex
\begin{align*}
    \begin{bmatrix}
        \widehat{U}_{1} \\ \widehat{U}_{2} \\ \widehat{U}_{3} \\ \widehat{U}_{4} \\ \widehat{U}_{5} \\ 
    \end{bmatrix}
    =
    \left[
    \begin{array}{ccccccc}
        \cellcolor{red!100!lightgray} \textcolor{white}{\boldsymbol{\widehat{u}}_{11}} & \cellcolor{red!88!lightgray} \textcolor{white}{\boldsymbol{\widehat{u}}_{12}} & \cellcolor{red!76!lightgray} \textcolor{white}{\boldsymbol{\widehat{u}}_{13}} & \cellcolor{red!64!lightgray} \textcolor{white}{\boldsymbol{\widehat{u}}_{14}} & \cellcolor{red!52!lightgray} \textcolor{white}{\boldsymbol{\widehat{u}}_{15}} & \cellcolor{red!40!lightgray} \textcolor{white}{\boldsymbol{\widehat{u}}_{16}} & \cellcolor{red!28!lightgray} \textcolor{white}{\boldsymbol{\widehat{u}}_{17}} \\
        \cellcolor{orange!100!lightgray} \textcolor{black}{\boldsymbol{\widehat{u}}_{21}} & \cellcolor{orange!88!lightgray} \textcolor{black}{\boldsymbol{\widehat{u}}_{22}} & \cellcolor{orange!76!lightgray} \textcolor{black}{\boldsymbol{\widehat{u}}_{23}} & \cellcolor{orange!64!lightgray} \textcolor{black}{\boldsymbol{\widehat{u}}_{24}} & \cellcolor{orange!52!lightgray} \textcolor{black}{\boldsymbol{\widehat{u}}_{25}} & \cellcolor{orange!40!lightgray} \textcolor{black}{\boldsymbol{\widehat{u}}_{26}} & \cellcolor{orange!28!lightgray} \textcolor{black}{\boldsymbol{\widehat{u}}_{27}} \\
        \cellcolor{yellow!100!lightgray} \textcolor{black}{\boldsymbol{\widehat{u}}_{31}} & \cellcolor{yellow!88!lightgray} \textcolor{black}{\boldsymbol{\widehat{u}}_{32}} & \cellcolor{yellow!76!lightgray} \textcolor{black}{\boldsymbol{\widehat{u}}_{33}} & \cellcolor{yellow!64!lightgray} \textcolor{black}{\boldsymbol{\widehat{u}}_{34}} & \cellcolor{yellow!52!lightgray} \textcolor{black}{\boldsymbol{\widehat{u}}_{35}} & \cellcolor{yellow!40!lightgray} \textcolor{black}{\boldsymbol{\widehat{u}}_{36}} & \cellcolor{yellow!28!lightgray} \textcolor{black}{\boldsymbol{\widehat{u}}_{37}} \\
        \cellcolor{green!100!lightgray} \textcolor{black}{\boldsymbol{\widehat{u}}_{41}} & \cellcolor{green!88!lightgray} \textcolor{black}{\boldsymbol{\widehat{u}}_{42}} & \cellcolor{green!76!lightgray} \textcolor{black}{\boldsymbol{\widehat{u}}_{43}} & \cellcolor{green!64!lightgray} \textcolor{black}{\boldsymbol{\widehat{u}}_{44}} & \cellcolor{green!52!lightgray} \textcolor{black}{\boldsymbol{\widehat{u}}_{45}} & \cellcolor{green!40!lightgray} \textcolor{black}{\boldsymbol{\widehat{u}}_{46}} & \cellcolor{green!28!lightgray} \textcolor{black}{\boldsymbol{\widehat{u}}_{47}} \\
        \cellcolor{blue!100!lightgray} \textcolor{white}{\boldsymbol{\widehat{u}}_{51}} & \cellcolor{blue!88!lightgray} \textcolor{white}{\boldsymbol{\widehat{u}}_{52}} & \cellcolor{blue!76!lightgray} \textcolor{white}{\boldsymbol{\widehat{u}}_{53}} & \cellcolor{blue!64!lightgray} \textcolor{white}{\boldsymbol{\widehat{u}}_{54}} & \cellcolor{blue!52!lightgray} \textcolor{white}{\boldsymbol{\widehat{u}}_{55}} & \cellcolor{blue!40!lightgray} \textcolor{white}{\boldsymbol{\widehat{u}}_{56}} & \cellcolor{blue!28!lightgray} \textcolor{white}{\boldsymbol{\widehat{u}}_{57}} 
    \end{array}
    \right]
\end{align*}

%% file: Fig_Ucsr.tex
\begin{align*}
\arrayrulewidth=2pt
\arrayrulecolor{white}
\begin{split}
    \begin{bmatrix}
        \widecheck{U}^*_{1} \\ \widecheck{U}^*_{2} \\ \widecheck{U}^*_{3} \\ \widecheck{U}^*_{4} \\ \widecheck{U}^*_{5} \\ 
    \end{bmatrix}
    =
    \left[
    \begin{array}{ccccccc}
        \cellcolor{orange!100!lightgray} \textcolor{black}{\boldsymbol{\widehat{u}}_{21}} & \cellcolor{orange!88!lightgray} \textcolor{black}{\boldsymbol{\widehat{u}}_{22}} & \cellcolor{orange!76!lightgray} \textcolor{black}{\boldsymbol{\widehat{u}}_{23}} & \cellcolor{orange!64!lightgray} \textcolor{black}{\boldsymbol{\widehat{u}}_{24}} & \cellcolor{orange!52!lightgray} \textcolor{black}{\boldsymbol{\widehat{u}}_{25}} & \cellcolor{orange!40!lightgray} \textcolor{black}{\boldsymbol{\widehat{u}}_{26}} & \cellcolor{orange!28!lightgray} \textcolor{black}{\boldsymbol{\widehat{u}}_{27}} \\
        \hline
        \cellcolor{green!100!lightgray} \textcolor{black}{\boldsymbol{\widehat{u}}_{41}} & \cellcolor{green!88!lightgray} \textcolor{black}{\boldsymbol{\widehat{u}}_{42}} & \cellcolor{green!76!lightgray} \textcolor{black}{\boldsymbol{\widehat{u}}_{43}} & \cellcolor{green!64!lightgray} \textcolor{black}{\boldsymbol{\widehat{u}}_{44}} & \cellcolor{green!52!lightgray} \textcolor{black}{\boldsymbol{\widehat{u}}_{45}} & \cellcolor{green!40!lightgray} \textcolor{black}{\boldsymbol{\widehat{u}}_{46}} & \cellcolor{green!28!lightgray} \textcolor{black}{\boldsymbol{\widehat{u}}_{47}} \\
        \hline
        \cellcolor{yellow!100!lightgray} \textcolor{black}{\boldsymbol{\widehat{u}}_{31}} & \cellcolor{yellow!88!lightgray} \textcolor{black}{\boldsymbol{\widehat{u}}_{32}} & \cellcolor{yellow!76!lightgray} \textcolor{black}{\boldsymbol{\widehat{u}}_{33}} & \cellcolor{yellow!64!lightgray} \textcolor{black}{\boldsymbol{\widehat{u}}_{34}} & \cellcolor{yellow!52!lightgray} \textcolor{black}{\boldsymbol{\widehat{u}}_{35}} & \cellcolor{yellow!40!lightgray} \textcolor{black}{\boldsymbol{\widehat{u}}_{36}} & \cellcolor{yellow!28!lightgray} \textcolor{black}{\boldsymbol{\widehat{u}}_{37}} \\
        \hline
        \cellcolor{red!100!lightgray} \textcolor{white}{\boldsymbol{\widehat{u}}_{11}} & \cellcolor{red!88!lightgray} \textcolor{white}{\boldsymbol{\widehat{u}}_{12}} & \cellcolor{red!76!lightgray} \textcolor{white}{\boldsymbol{\widehat{u}}_{13}} & \cellcolor{red!64!lightgray} \textcolor{white}{\boldsymbol{\widehat{u}}_{14}} & \cellcolor{red!52!lightgray} \textcolor{white}{\boldsymbol{\widehat{u}}_{15}} & \cellcolor{red!40!lightgray} \textcolor{white}{\boldsymbol{\widehat{u}}_{16}} & \cellcolor{red!28!lightgray} \textcolor{white}{\boldsymbol{\widehat{u}}_{17}} \\
        \hline
        \cellcolor{yellow!100!lightgray} \textcolor{black}{\boldsymbol{\widehat{u}}_{31}} & \cellcolor{yellow!88!lightgray} \textcolor{black}{\boldsymbol{\widehat{u}}_{32}} & \cellcolor{yellow!76!lightgray} \textcolor{black}{\boldsymbol{\widehat{u}}_{33}} & \cellcolor{yellow!64!lightgray} \textcolor{black}{\boldsymbol{\widehat{u}}_{34}} & \cellcolor{yellow!52!lightgray} \textcolor{black}{\boldsymbol{\widehat{u}}_{35}} & \cellcolor{yellow!40!lightgray} \textcolor{black}{\boldsymbol{\widehat{u}}_{36}} & \cellcolor{yellow!28!lightgray} \textcolor{black}{\boldsymbol{\widehat{u}}_{37}} \\
    \end{array}
    \right]
\end{split},
\arrayrulewidth=2pt
\arrayrulecolor{white}
\begin{split}
    \left[
    \begin{array}{ccccccc}
        \cellcolor{green!100!lightgray} \textcolor{black}{\boldsymbol{\widehat{u}}_{41}} & \cellcolor{green!88!lightgray} \textcolor{black}{\boldsymbol{\widehat{u}}_{42}} & \cellcolor{green!76!lightgray} \textcolor{black}{\boldsymbol{\widehat{u}}_{43}} & \cellcolor{green!64!lightgray} \textcolor{black}{\boldsymbol{\widehat{u}}_{44}} & \cellcolor{green!52!lightgray} \textcolor{black}{\boldsymbol{\widehat{u}}_{45}} & \cellcolor{green!40!lightgray} \textcolor{black}{\boldsymbol{\widehat{u}}_{46}} & \cellcolor{green!28!lightgray} \textcolor{black}{\boldsymbol{\widehat{u}}_{47}} \\
        \cellcolor{blue!100!lightgray} \textcolor{white}{\boldsymbol{\widehat{u}}_{51}} & \cellcolor{blue!88!lightgray} \textcolor{white}{\boldsymbol{\widehat{u}}_{52}} & \cellcolor{blue!76!lightgray} \textcolor{white}{\boldsymbol{\widehat{u}}_{53}} & \cellcolor{blue!64!lightgray} \textcolor{white}{\boldsymbol{\widehat{u}}_{54}} & \cellcolor{blue!52!lightgray} \textcolor{white}{\boldsymbol{\widehat{u}}_{55}} & \cellcolor{blue!40!lightgray} \textcolor{white}{\boldsymbol{\widehat{u}}_{56}} & \cellcolor{blue!28!lightgray} \textcolor{white}{\boldsymbol{\widehat{u}}_{57}} \\
        \hline
        \cellcolor{orange!100!lightgray} \textcolor{black}{\boldsymbol{\widehat{u}}_{21}} & \cellcolor{orange!88!lightgray} \textcolor{black}{\boldsymbol{\widehat{u}}_{22}} & \cellcolor{orange!76!lightgray} \textcolor{black}{\boldsymbol{\widehat{u}}_{23}} & \cellcolor{orange!64!lightgray} \textcolor{black}{\boldsymbol{\widehat{u}}_{24}} & \cellcolor{orange!52!lightgray} \textcolor{black}{\boldsymbol{\widehat{u}}_{25}} & \cellcolor{orange!40!lightgray} \textcolor{black}{\boldsymbol{\widehat{u}}_{26}} & \cellcolor{orange!28!lightgray} \textcolor{black}{\boldsymbol{\widehat{u}}_{27}} \\
        \cellcolor{yellow!100!lightgray} \textcolor{black}{\boldsymbol{\widehat{u}}_{31}} & \cellcolor{yellow!88!lightgray} \textcolor{black}{\boldsymbol{\widehat{u}}_{32}} & \cellcolor{yellow!76!lightgray} \textcolor{black}{\boldsymbol{\widehat{u}}_{33}} & \cellcolor{yellow!64!lightgray} \textcolor{black}{\boldsymbol{\widehat{u}}_{34}} & \cellcolor{yellow!52!lightgray} \textcolor{black}{\boldsymbol{\widehat{u}}_{35}} & \cellcolor{yellow!40!lightgray} \textcolor{black}{\boldsymbol{\widehat{u}}_{36}} & \cellcolor{yellow!28!lightgray} \textcolor{black}{\boldsymbol{\widehat{u}}_{37}} \\
        \hline
        \cellcolor{red!100!lightgray} \textcolor{white}{\boldsymbol{\widehat{u}}_{11}} & \cellcolor{red!88!lightgray} \textcolor{white}{\boldsymbol{\widehat{u}}_{12}} & \cellcolor{red!76!lightgray} \textcolor{white}{\boldsymbol{\widehat{u}}_{13}} & \cellcolor{red!64!lightgray} \textcolor{white}{\boldsymbol{\widehat{u}}_{14}} & \cellcolor{red!52!lightgray} \textcolor{white}{\boldsymbol{\widehat{u}}_{15}} & \cellcolor{red!40!lightgray} \textcolor{white}{\boldsymbol{\widehat{u}}_{16}} & \cellcolor{red!28!lightgray} \textcolor{white}{\boldsymbol{\widehat{u}}_{17}} \\
    \end{array}
    \right]
\end{split}.
\end{align*}
\global\arrayrulewidth=0.4pt

%% file: Fig_Utmpr.tex
\begin{align*}
\arrayrulewidth=2pt
\arrayrulecolor{white}
\begin{split}
    \begin{bmatrix}
        \widecheck{U}^*_{1} \\ \widecheck{U}^*_{2} \\ \widecheck{U}^*_{3} \\ \widecheck{U}^*_{4} \\ \widecheck{U}^*_{5} \\ 
    \end{bmatrix}
    =
    \left[
    \begin{array}{c|c|c|c|c|c|c}
        \cellcolor{red!28!lightgray} \textcolor{white}{\boldsymbol{\widehat{u}}_{17}} & \cellcolor{red!76!lightgray} \textcolor{white}{\boldsymbol{\widehat{u}}_{13}} & \cellcolor{red!64!lightgray} \textcolor{white}{\boldsymbol{\widehat{u}}_{14}} & \cellcolor{red!28!lightgray} \textcolor{white}{\boldsymbol{\widehat{u}}_{17}} & \cellcolor{red!100!lightgray} \textcolor{white}{\boldsymbol{\widehat{u}}_{11}} & \cellcolor{red!52!lightgray} \textcolor{white}{\boldsymbol{\widehat{u}}_{15}} & \cellcolor{red!88!lightgray} \textcolor{white}{\boldsymbol{\widehat{u}}_{12}} \\
        \cellcolor{orange!28!lightgray} \textcolor{black}{\boldsymbol{\widehat{u}}_{27}} & \cellcolor{orange!76!lightgray} \textcolor{black}{\boldsymbol{\widehat{u}}_{23}} & \cellcolor{orange!64!lightgray} \textcolor{black}{\boldsymbol{\widehat{u}}_{24}} & \cellcolor{orange!28!lightgray} \textcolor{black}{\boldsymbol{\widehat{u}}_{27}} & \cellcolor{orange!100!lightgray} \textcolor{black}{\boldsymbol{\widehat{u}}_{21}} & \cellcolor{orange!52!lightgray} \textcolor{black}{\boldsymbol{\widehat{u}}_{25}} & \cellcolor{orange!88!lightgray} \textcolor{black}{\boldsymbol{\widehat{u}}_{22}} \\
        \cellcolor{yellow!28!lightgray} \textcolor{black}{\boldsymbol{\widehat{u}}_{37}} & \cellcolor{yellow!76!lightgray} \textcolor{black}{\boldsymbol{\widehat{u}}_{33}} & \cellcolor{yellow!64!lightgray} \textcolor{black}{\boldsymbol{\widehat{u}}_{34}}& \cellcolor{yellow!28!lightgray} \textcolor{black}{\boldsymbol{\widehat{u}}_{37}}  & \cellcolor{yellow!100!lightgray} \textcolor{black}{\boldsymbol{\widehat{u}}_{31}} & \cellcolor{yellow!52!lightgray} \textcolor{black}{\boldsymbol{\widehat{u}}_{35}} & \cellcolor{yellow!88!lightgray} \textcolor{black}{\boldsymbol{\widehat{u}}_{32}} \\
        \cellcolor{green!28!lightgray} \textcolor{black}{\boldsymbol{\widehat{u}}_{47}} & \cellcolor{green!76!lightgray} \textcolor{black}{\boldsymbol{\widehat{u}}_{43}} & \cellcolor{green!64!lightgray} \textcolor{black}{\boldsymbol{\widehat{u}}_{44}} & \cellcolor{green!28!lightgray} \textcolor{black}{\boldsymbol{\widehat{u}}_{47}} & \cellcolor{green!100!lightgray} \textcolor{black}{\boldsymbol{\widehat{u}}_{41}} & \cellcolor{green!52!lightgray} \textcolor{black}{\boldsymbol{\widehat{u}}_{45}} & \cellcolor{green!88!lightgray} \textcolor{black}{\boldsymbol{\widehat{u}}_{42}} \\
        \cellcolor{blue!28!lightgray} \textcolor{white}{\boldsymbol{\widehat{u}}_{57}} & \cellcolor{blue!76!lightgray} \textcolor{white}{\boldsymbol{\widehat{u}}_{53}} & \cellcolor{blue!64!lightgray} \textcolor{white}{\boldsymbol{\widehat{u}}_{54}} & \cellcolor{blue!28!lightgray} \textcolor{white}{\boldsymbol{\widehat{u}}_{57}} & \cellcolor{blue!100!lightgray} \textcolor{white}{\boldsymbol{\widehat{u}}_{51}} & \cellcolor{blue!52!lightgray} \textcolor{white}{\boldsymbol{\widehat{u}}_{55}} & \cellcolor{blue!88!lightgray} \textcolor{white}{\boldsymbol{\widehat{u}}_{52}} \\
    \end{array}
    \right]
\end{split},
\arrayrulewidth=2pt
\arrayrulecolor{white}
\begin{split}
    \left[
    \begin{array}{cc|cc|cc|c}
        \cellcolor{red!88!lightgray} \textcolor{white}{\boldsymbol{\widehat{u}}_{12}} & \cellcolor{red!76!lightgray} \textcolor{white}{\boldsymbol{\widehat{u}}_{13}} & \cellcolor{red!52!lightgray} \textcolor{white}{\boldsymbol{\widehat{u}}_{15}} & \cellcolor{red!40!lightgray} \textcolor{white}{\boldsymbol{\widehat{u}}_{16}} & \cellcolor{red!64!lightgray} \textcolor{white}{\boldsymbol{\widehat{u}}_{14}} & \cellcolor{red!52!lightgray} \textcolor{white}{\boldsymbol{\widehat{u}}_{15}} & \cellcolor{red!100!lightgray} \textcolor{white}{\boldsymbol{\widehat{u}}_{11}} \\
        \cellcolor{orange!88!lightgray} \textcolor{black}{\boldsymbol{\widehat{u}}_{22}} & \cellcolor{orange!76!lightgray} \textcolor{black}{\boldsymbol{\widehat{u}}_{23}} & \cellcolor{orange!52!lightgray} \textcolor{black}{\boldsymbol{\widehat{u}}_{25}} & \cellcolor{orange!40!lightgray} \textcolor{black}{\boldsymbol{\widehat{u}}_{26}} & \cellcolor{orange!64!lightgray} \textcolor{black}{\boldsymbol{\widehat{u}}_{24}} & \cellcolor{orange!52!lightgray} \textcolor{black}{\boldsymbol{\widehat{u}}_{25}} & \cellcolor{orange!100!lightgray} \textcolor{black}{\boldsymbol{\widehat{u}}_{21}} \\
        \cellcolor{yellow!88!lightgray} \textcolor{black}{\boldsymbol{\widehat{u}}_{32}} & \cellcolor{yellow!76!lightgray} \textcolor{black}{\boldsymbol{\widehat{u}}_{33}} & \cellcolor{yellow!52!lightgray} \textcolor{black}{\boldsymbol{\widehat{u}}_{35}} & \cellcolor{yellow!40!lightgray} \textcolor{black}{\boldsymbol{\widehat{u}}_{36}} & \cellcolor{yellow!64!lightgray} \textcolor{black}{\boldsymbol{\widehat{u}}_{34}} & \cellcolor{yellow!52!lightgray} \textcolor{black}{\boldsymbol{\widehat{u}}_{35}} & \cellcolor{yellow!100!lightgray} \textcolor{black}{\boldsymbol{\widehat{u}}_{31}} \\
        \cellcolor{green!88!lightgray} \textcolor{black}{\boldsymbol{\widehat{u}}_{42}} & \cellcolor{green!76!lightgray} \textcolor{black}{\boldsymbol{\widehat{u}}_{43}} & \cellcolor{green!52!lightgray} \textcolor{black}{\boldsymbol{\widehat{u}}_{45}} & \cellcolor{green!40!lightgray} \textcolor{black}{\boldsymbol{\widehat{u}}_{46}} & \cellcolor{green!64!lightgray} \textcolor{black}{\boldsymbol{\widehat{u}}_{44}} & \cellcolor{green!52!lightgray} \textcolor{black}{\boldsymbol{\widehat{u}}_{45}} & \cellcolor{green!100!lightgray} \textcolor{black}{\boldsymbol{\widehat{u}}_{41}} \\
        \cellcolor{blue!88!lightgray} \textcolor{white}{\boldsymbol{\widehat{u}}_{52}} & \cellcolor{blue!76!lightgray} \textcolor{white}{\boldsymbol{\widehat{u}}_{53}} & \cellcolor{blue!52!lightgray} \textcolor{white}{\boldsymbol{\widehat{u}}_{55}} & \cellcolor{blue!40!lightgray} \textcolor{white}{\boldsymbol{\widehat{u}}_{56}} & \cellcolor{blue!64!lightgray} \textcolor{white}{\boldsymbol{\widehat{u}}_{54}} & \cellcolor{blue!52!lightgray} \textcolor{white}{\boldsymbol{\widehat{u}}_{55}} & \cellcolor{blue!100!lightgray} \textcolor{white}{\boldsymbol{\widehat{u}}_{51}} \\
    \end{array}
    \right]
\end{split}.
\end{align*}

%% file: Fig_Upanelr.tex
\begin{align*}
\arrayrulewidth=2pt
\arrayrulecolor{white}
\begin{split}
    \begin{bmatrix}
        \widecheck{U}^*_{1} \\ \widecheck{U}^*_{2} \\ \widecheck{U}^*_{3} \\ \widecheck{U}^*_{4} \\ \widecheck{U}^*_{5} \\ 
    \end{bmatrix}
    =
    \left[
    \begin{array}{c|c|c|c|c|c|c}
        \cellcolor{orange!28!lightgray} \textcolor{black}{\boldsymbol{\widehat{u}}_{27}} & \cellcolor{orange!76!lightgray} \textcolor{black}{\boldsymbol{\widehat{u}}_{23}} & \cellcolor{orange!64!lightgray} \textcolor{black}{\boldsymbol{\widehat{u}}_{24}} & \cellcolor{orange!28!lightgray} \textcolor{black}{\boldsymbol{\widehat{u}}_{27}} & \cellcolor{orange!100!lightgray} \textcolor{black}{\boldsymbol{\widehat{u}}_{21}} & \cellcolor{orange!52!lightgray} \textcolor{black}{\boldsymbol{\widehat{u}}_{25}} & \cellcolor{orange!88!lightgray} \textcolor{black}{\boldsymbol{\widehat{u}}_{22}} \\
        \hline
        \cellcolor{green!28!lightgray} \textcolor{black}{\boldsymbol{\widehat{u}}_{47}} & \cellcolor{green!76!lightgray} \textcolor{black}{\boldsymbol{\widehat{u}}_{43}} & \cellcolor{green!64!lightgray} \textcolor{black}{\boldsymbol{\widehat{u}}_{44}} & \cellcolor{green!28!lightgray} \textcolor{black}{\boldsymbol{\widehat{u}}_{47}} & \cellcolor{green!100!lightgray} \textcolor{black}{\boldsymbol{\widehat{u}}_{41}} & \cellcolor{green!52!lightgray} \textcolor{black}{\boldsymbol{\widehat{u}}_{45}} & \cellcolor{green!88!lightgray} \textcolor{black}{\boldsymbol{\widehat{u}}_{42}} \\
        \hline
        \cellcolor{yellow!28!lightgray} \textcolor{black}{\boldsymbol{\widehat{u}}_{37}} & \cellcolor{yellow!76!lightgray} \textcolor{black}{\boldsymbol{\widehat{u}}_{33}} & \cellcolor{yellow!64!lightgray} \textcolor{black}{\boldsymbol{\widehat{u}}_{34}} & \cellcolor{yellow!28!lightgray} \textcolor{black}{\boldsymbol{\widehat{u}}_{37}} & \cellcolor{yellow!100!lightgray} \textcolor{black}{\boldsymbol{\widehat{u}}_{31}} & \cellcolor{yellow!52!lightgray} \textcolor{black}{\boldsymbol{\widehat{u}}_{35}} & \cellcolor{yellow!88!lightgray} \textcolor{black}{\boldsymbol{\widehat{u}}_{32}} \\
        \hline
        \cellcolor{red!28!lightgray} \textcolor{white}{\boldsymbol{\widehat{u}}_{17}} & \cellcolor{red!76!lightgray} \textcolor{white}{\boldsymbol{\widehat{u}}_{13}} & \cellcolor{red!64!lightgray} \textcolor{white}{\boldsymbol{\widehat{u}}_{14}} & \cellcolor{red!28!lightgray} \textcolor{white}{\boldsymbol{\widehat{u}}_{17}} & \cellcolor{red!100!lightgray} \textcolor{white}{\boldsymbol{\widehat{u}}_{11}} & \cellcolor{red!52!lightgray} \textcolor{white}{\boldsymbol{\widehat{u}}_{15}} & \cellcolor{red!88!lightgray} \textcolor{white}{\boldsymbol{\widehat{u}}_{12}} \\
        \hline
        \cellcolor{yellow!28!lightgray} \textcolor{black}{\boldsymbol{\widehat{u}}_{37}} & \cellcolor{yellow!76!lightgray} \textcolor{black}{\boldsymbol{\widehat{u}}_{33}} & \cellcolor{yellow!64!lightgray} \textcolor{black}{\boldsymbol{\widehat{u}}_{34}} & \cellcolor{yellow!28!lightgray} \textcolor{black}{\boldsymbol{\widehat{u}}_{37}} & \cellcolor{yellow!100!lightgray} \textcolor{black}{\boldsymbol{\widehat{u}}_{31}} & \cellcolor{yellow!52!lightgray} \textcolor{black}{\boldsymbol{\widehat{u}}_{35}} & \cellcolor{yellow!88!lightgray} \textcolor{black}{\boldsymbol{\widehat{u}}_{32}} \\
    \end{array}
    \right]
\end{split},
\arrayrulewidth=2pt
\arrayrulecolor{white}
\begin{split}
    \left[
    \begin{array}{cc|cc|cc|c}
        \cellcolor{green!88!lightgray} \textcolor{black}{\boldsymbol{\widehat{u}}_{42}} & \cellcolor{green!76!lightgray} \textcolor{black}{\boldsymbol{\widehat{u}}_{43}} & \cellcolor{green!52!lightgray} \textcolor{black}{\boldsymbol{\widehat{u}}_{45}} & \cellcolor{green!40!lightgray} \textcolor{black}{\boldsymbol{\widehat{u}}_{46}} & \cellcolor{green!64!lightgray} \textcolor{black}{\boldsymbol{\widehat{u}}_{44}} & \cellcolor{green!52!lightgray} \textcolor{black}{\boldsymbol{\widehat{u}}_{45}} & \cellcolor{green!100!lightgray} \textcolor{black}{\boldsymbol{\widehat{u}}_{41}} \\
        \cellcolor{blue!88!lightgray} \textcolor{white}{\boldsymbol{\widehat{u}}_{52}} & \cellcolor{blue!76!lightgray} \textcolor{white}{\boldsymbol{\widehat{u}}_{53}} & \cellcolor{blue!52!lightgray} \textcolor{white}{\boldsymbol{\widehat{u}}_{55}} & \cellcolor{blue!40!lightgray} \textcolor{white}{\boldsymbol{\widehat{u}}_{56}} & \cellcolor{blue!64!lightgray} \textcolor{white}{\boldsymbol{\widehat{u}}_{54}} & \cellcolor{blue!52!lightgray} \textcolor{white}{\boldsymbol{\widehat{u}}_{55}} & \cellcolor{blue!100!lightgray} \textcolor{white}{\boldsymbol{\widehat{u}}_{51}} \\
        \hline
        \cellcolor{orange!88!lightgray} \textcolor{black}{\boldsymbol{\widehat{u}}_{22}} & \cellcolor{orange!76!lightgray} \textcolor{black}{\boldsymbol{\widehat{u}}_{23}} & \cellcolor{orange!52!lightgray} \textcolor{black}{\boldsymbol{\widehat{u}}_{25}} & \cellcolor{orange!40!lightgray} \textcolor{black}{\boldsymbol{\widehat{u}}_{26}} & \cellcolor{orange!64!lightgray} \textcolor{black}{\boldsymbol{\widehat{u}}_{24}} & \cellcolor{orange!52!lightgray} \textcolor{black}{\boldsymbol{\widehat{u}}_{25}} & \cellcolor{orange!100!lightgray} \textcolor{black}{\boldsymbol{\widehat{u}}_{21}} \\
        \cellcolor{yellow!88!lightgray} \textcolor{black}{\boldsymbol{\widehat{u}}_{32}} & \cellcolor{yellow!76!lightgray} \textcolor{black}{\boldsymbol{\widehat{u}}_{33}} & \cellcolor{yellow!52!lightgray} \textcolor{black}{\boldsymbol{\widehat{u}}_{35}} & \cellcolor{yellow!40!lightgray} \textcolor{black}{\boldsymbol{\widehat{u}}_{36}} & \cellcolor{yellow!64!lightgray} \textcolor{black}{\boldsymbol{\widehat{u}}_{34}} & \cellcolor{yellow!52!lightgray} \textcolor{black}{\boldsymbol{\widehat{u}}_{35}} & \cellcolor{yellow!100!lightgray} \textcolor{black}{\boldsymbol{\widehat{u}}_{31}} \\
        \hline
        \cellcolor{red!88!lightgray} \textcolor{white}{\boldsymbol{\widehat{u}}_{12}} & \cellcolor{red!76!lightgray} \textcolor{white}{\boldsymbol{\widehat{u}}_{13}} & \cellcolor{red!52!lightgray} \textcolor{white}{\boldsymbol{\widehat{u}}_{15}} & \cellcolor{red!40!lightgray} \textcolor{white}{\boldsymbol{\widehat{u}}_{16}} & \cellcolor{red!64!lightgray} \textcolor{white}{\boldsymbol{\widehat{u}}_{14}} & \cellcolor{red!52!lightgray} \textcolor{white}{\boldsymbol{\widehat{u}}_{15}} & \cellcolor{red!100!lightgray} \textcolor{white}{\boldsymbol{\widehat{u}}_{11}} \\
    \end{array}
    \right]
\end{split}.
\end{align*}
\arrayrulecolor{black}

%% file: Tab_MeanLengths.tex
\begin{tabular}{lrrrrrcrrrrcrrrrcrrrr} 
 \hline \hline 
 & & \multicolumn{4}{c}{\textbf{Small sample}} && \multicolumn{4}{c}{\textbf{Medium sample}} && \multicolumn{4}{c}{\textbf{Medium sample}} && \multicolumn{4}{c}{\textbf{Large sample}} \\  
 & & \multicolumn{4}{c}{$ N=20, \ T=100 $} && \multicolumn{4}{c}{$ N=20, \ T=500 $} && \multicolumn{4}{c}{$ N=100, \ T=100 $} && \multicolumn{4}{c}{$ N=100, \ T=500 $} \\  
 \cmidrule(lr){3-6} \cmidrule(lr){8-11} \cmidrule(lr){13-16} \cmidrule(lr){18-21} 
 \textbf{Method} & $ h $ & SN1 & SN2 & SN3 & SN4 && SN1 & SN2 & SN3 & SN4 && SN1 & SN2 & SN3 & SN4 && SN1 & SN2 & SN3 & SN4 \\ 
 \hline 
\multicolumn{21}{l}{\textbf{$ \bullet $ Cross-sectional resampling}} \\ 
	N/no & 0 & 28.0 & 30.0 & 28.4 & 30.3 &  & 11.9 & 14.7 & 12.2 & 14.9 &  & 12.7 & 13.5 & 13.0 & 13.7 &  &  5.4 &  6.6 &  5.6 &  6.7 \\
	  & 5 & 38.8 & 40.4 & 39.4 & 40.8 &  & 18.0 & 21.2 & 18.3 & 21.6 &  & 17.7 & 18.3 & 18.0 & 18.4 &  &  8.1 &  9.4 &  8.4 &  9.7 \\
	  & 15 & 18.1 & 18.3 & 17.2 & 17.6 &  &  9.6 & 11.2 &  9.7 & 11.3 &  &  8.2 &  8.3 &  8.5 &  8.6 &  &  4.3 &  4.9 &  4.5 &  5.1 \\
	Niid/no & 0 & 29.1 & 31.0 & 29.1 & 31.1 &  & 12.5 & 15.3 & 12.5 & 15.5 &  & 13.4 & 14.3 & 13.3 & 14.3 &  &  5.8 &  7.0 &  5.8 &  7.0 \\
	  & 5 & 40.3 & 41.8 & 40.4 & 41.8 &  & 18.8 & 22.0 & 18.9 & 22.3 &  & 18.7 & 19.2 & 18.5 & 19.2 &  &  8.7 & 10.1 &  8.7 & 10.1 \\
	  & 15 & 18.6 & 18.7 & 16.9 & 17.6 &  & 10.0 & 11.6 & 10.0 & 11.7 &  &  8.5 &  8.5 &  7.9 &  8.2 &  &  4.6 &  5.3 &  4.6 &  5.3 \\
\multicolumn{21}{l}{\textbf{$ \bullet $ Temporal resampling}} \\ 
	no/T & 0 & 28.2 & 28.2 & 29.5 & 29.3 &  & 12.1 & 12.3 & 12.6 & 13.1 &  & 12.5 & 12.9 & 14.1 & 14.0 &  &  5.4 &  5.6 &  6.3 &  6.4 \\
	  & 5 & 47.4 & 47.0 & 56.4 & 54.4 &  & 20.1 & 20.2 & 23.8 & 24.0 &  & 22.8 & 23.2 & 31.0 & 29.9 &  &  9.2 &  9.5 & 13.4 & 13.1 \\
	  & 15 & 42.8 & 41.8 & 58.1 & 55.2 &  & 16.3 & 16.5 & 23.5 & 22.7 &  & 25.8 & 25.0 & 36.8 & 35.2 &  &  9.3 &  9.5 & 15.1 & 14.7 \\
	no/Tiid & 0 & 29.7 & 29.8 & 31.3 & 31.2 &  & 13.0 & 13.1 & 13.8 & 13.8 &  & 13.2 & 13.3 & 14.9 & 14.9 &  &  5.8 &  5.8 &  6.7 &  6.6 \\
	  & 5 & 48.3 & 48.1 & 58.3 & 56.4 &  & 21.4 & 21.5 & 25.8 & 25.2 &  & 22.2 & 21.9 & 31.4 & 30.1 &  &  9.6 &  9.6 & 14.1 & 13.6 \\
	  & 15 & 39.4 & 38.6 & 57.3 & 54.0 &  & 16.1 & 16.1 & 24.6 & 23.4 &  & 19.1 & 18.5 & 34.0 & 31.8 &  &  7.4 &  7.4 & 15.2 & 14.3 \\
\multicolumn{21}{l}{\textbf{$ \bullet $ Panel resampling}} \\ 
	N/T & 0 & 49.9 & 50.7 & 51.0 & 52.3 &  & 20.9 & 23.0 & 21.5 & 23.7 &  & 21.5 & 22.3 & 22.8 & 23.2 &  &  9.3 & 10.4 & 10.1 & 11.0 \\
	  & 5 & 79.2 & 79.2 & 86.7 & 86.0 &  & 33.8 & 35.8 & 36.6 & 38.7 &  & 35.6 & 36.0 & 42.4 & 41.7 &  & 15.3 & 16.4 & 18.7 & 19.2 \\
	  & 15 & 58.6 & 57.4 & 72.9 & 70.3 &  & 24.3 & 25.3 & 30.2 & 30.3 &  & 30.8 & 30.0 & 42.2 & 40.5 &  & 12.3 & 12.8 & 17.9 & 17.7 \\
	iid & 0 & 52.1 & 53.4 & 52.7 & 54.4 &  & 21.9 & 24.0 & 22.4 & 24.6 &  & 22.7 & 23.2 & 23.7 & 24.2 &  & 10.0 & 10.9 & 10.5 & 11.3 \\
	  & 5 & 81.2 & 81.5 & 87.8 & 87.5 &  & 35.4 & 37.4 & 38.2 & 39.9 &  & 36.0 & 35.9 & 42.3 & 41.5 &  & 16.2 & 16.9 & 19.2 & 19.5 \\
	  & 15 & 56.4 & 55.6 & 70.9 & 68.1 &  & 24.7 & 25.5 & 30.7 & 30.6 &  & 25.8 & 25.2 & 38.2 & 36.2 &  & 11.3 & 11.6 & 17.5 & 16.9 \\
\multicolumn{21}{l}{\textbf{$ \bullet $ Mean-group inference}} \\ 
	MG-PC & 0 &  19.1 &  40.9 &  17.6 &  40.1 &  &   8.3 &  30.9 &   7.7 &  30.7 &  &  21.3 &  45.8 &  20.5 &  45.5 &  &   9.3 &  34.0 &   8.8 &  33.9 \\
	  & 5 &  90.1 & 107.6 &  86.1 & 104.6 &  &  58.5 &  76.9 &  57.0 &  76.0 &  & 101.1 & 120.9 &  98.4 & 119.1 &  &  64.8 &  85.5 &  63.8 &  85.1 \\
	  & 15 &  81.1 &  84.3 &  76.5 &  80.0 &  &  55.4 &  62.4 &  53.9 &  60.5 &  &  91.0 &  93.4 &  88.1 &  92.1 &  &  62.2 &  68.9 &  60.8 &  68.0 \\
	MG-SE & 0 &  21.5 &  46.6 &  19.9 &  45.8 &  &   9.5 &  35.5 &   8.7 &  35.4 &  &  21.9 &  47.6 &  21.1 &  47.3 &  &   9.6 &  35.9 &   9.1 &  35.7 \\
	  & 5 & 102.7 & 123.7 &  98.0 & 120.0 &  &  66.7 &  87.8 &  65.0 &  86.7 &  & 104.3 & 125.9 & 101.2 & 123.9 &  &  67.6 &  88.5 &  66.6 &  87.9 \\
	  & 15 &  94.8 & 100.4 &  88.7 &  95.0 &  &  63.3 &  71.6 &  61.4 &  69.5 &  &  97.6 & 102.1 &  93.7 & 100.2 &  &  64.5 &  72.2 &  63.0 &  71.0 \\
\hline \hline 
\end{tabular}

%% file: Tab_CovRates.tex
\begin{tabular}{lrrrrrcrrrrcrrrrcrrrr} 
 \hline \hline 
 & & \multicolumn{4}{c}{\textbf{Small sample}} && \multicolumn{4}{c}{\textbf{Medium sample}} && \multicolumn{4}{c}{\textbf{Medium sample}} && \multicolumn{4}{c}{\textbf{Large sample}} \\  
 & & \multicolumn{4}{c}{$ N=20, \ T=100 $} && \multicolumn{4}{c}{$ N=20, \ T=500 $} && \multicolumn{4}{c}{$ N=100, \ T=100 $} && \multicolumn{4}{c}{$ N=100, \ T=500 $} \\  
 \cmidrule(lr){3-6} \cmidrule(lr){8-11} \cmidrule(lr){13-16} \cmidrule(lr){18-21} 
 \textbf{Method} & $ h $ & SN1 & SN2 & SN3 & SN4 && SN1 & SN2 & SN3 & SN4 && SN1 & SN2 & SN3 & SN4 && SN1 & SN2 & SN3 & SN4 \\ 
 \hline 
\multicolumn{21}{l}{\textbf{$ \bullet $ Cross-sectional resampling}} \\ 
	N/no & 0 &  6.2 &  6.8 &  7.8 & 10.7 &  &  5.1 &  5.1 &  5.3 &  5.6 &  &  6.7 &  9.3 & 12.7 & 12.0 &  &  6.3 &  5.4 &  8.5 &  6.8 \\
	  & 5 & 30.7 & 29.2 & 33.5 & 33.4 &  & 16.8 & 14.1 & 22.3 & 18.6 &  & 58.8 & 65.9 & 57.6 & 61.9 &  & 26.9 & 23.0 & 34.8 & 29.3 \\
	  & 15 & 85.5 & 86.6 & 80.1 & 79.4 &  & 52.5 & 48.0 & 52.1 & 50.8 &  & 90.0 & 90.0 & 89.0 & 88.8 &  & 84.8 & 83.8 & 73.3 & 70.7 \\
	Niid/no & 0 &  3.4 &  4.6 &  7.9 &  8.1 &  &  2.5 &  3.3 &  4.0 &  4.0 &  &  3.2 &  6.5 & 12.0 &  8.7 &  &  2.7 &  1.5 &  6.9 &  3.5 \\
	  & 5 & 29.0 & 26.9 & 33.0 & 31.6 &  & 14.5 & 11.6 & 19.8 & 16.0 &  & 57.6 & 63.9 & 56.5 & 60.0 &  & 23.2 & 18.0 & 32.8 & 26.9 \\
	  & 15 & 84.8 & 86.2 & 80.4 & 79.1 &  & 51.8 & 46.6 & 51.1 & 49.2 &  & 90.0 & 90.0 & 89.1 & 89.0 &  & 83.5 & 83.5 & 73.9 & 70.5 \\
\multicolumn{21}{l}{\textbf{$ \bullet $ Temporal resampling}} \\ 
	no/T & 0 & 16.2 & 19.7 & 13.0 & 17.9 &  &  5.7 & 17.8 &  4.2 & 15.0 &  & 27.1 & 31.4 & 19.7 & 25.5 &  & 11.9 & 18.5 &  7.0 & 14.3 \\
	  & 5 &  9.6 & 13.2 &  9.3 & 12.0 &  & 11.6 & 16.0 &  9.4 & 14.4 &  & 15.0 & 28.2 & 14.1 & 21.9 &  & 13.2 & 15.2 & 10.9 & 12.5 \\
	  & 15 &  7.4 & 12.9 &  8.9 & 13.9 &  &  7.8 &  9.7 &  7.5 & 10.2 &  & 10.0 & 22.0 & 13.1 & 17.7 &  &  3.5 &  3.7 &  8.1 &  8.6 \\
	no/Tiid & 0 & 11.9 & 16.0 &  7.6 & 12.8 &  &  1.2 & 13.9 &  1.8 & 10.9 &  & 24.5 & 27.3 & 15.4 & 20.0 &  &  5.7 & 15.4 &  2.9 & 10.4 \\
	  & 5 &  7.2 & 12.4 &  4.8 &  9.1 &  &  5.8 & 11.4 &  4.1 &  7.8 &  & 22.9 & 36.4 & 13.9 & 21.7 &  &  9.3 & 13.4 &  4.8 &  8.1 \\
	  & 15 & 10.2 & 17.0 &  7.3 & 11.2 &  &  6.6 & 10.0 &  2.5 &  5.7 &  & 34.5 & 46.8 & 14.3 & 20.0 &  &  7.5 & 10.9 &  4.6 &  4.0 \\
\multicolumn{21}{l}{\textbf{$ \bullet $ Panel resampling}} \\ 
	N/T & 0 & 11.5 & 10.8 & 11.1 & 11.9 &  &  8.6 &  6.4 &  8.6 &  6.2 &  & 22.7 & 19.9 & 13.6 & 14.6 &  &  6.3 &  5.8 &  6.4 &  5.2 \\
	  & 5 &  6.5 &  3.7 &  4.9 &  3.0 &  &  6.6 &  4.9 &  6.2 &  4.9 &  &  3.0 &  4.1 &  2.5 &  4.7 &  &  4.7 &  2.7 &  2.9 &  2.2 \\
	  & 15 &  1.9 &  0.7 &  2.0 &  4.3 &  &  4.7 &  3.6 &  2.2 &  1.4 &  &  1.5 &  9.3 &  6.9 & 10.3 &  &  5.3 &  4.3 &  0.6 &  1.4 \\
	iid & 0 & 10.5 & 11.0 & 11.0 & 11.7 &  &  9.2 &  7.2 &  9.0 &  6.9 &  & 23.4 & 20.6 & 14.6 & 15.7 &  &  7.9 &  5.9 &  7.1 &  6.3 \\
	  & 5 &  7.2 &  5.0 &  6.0 &  4.2 &  &  8.1 &  6.6 &  7.5 &  5.9 &  &  2.8 &  6.3 &  2.0 &  5.2 &  &  6.7 &  5.5 &  4.7 &  4.7 \\
	  & 15 &  0.8 &  2.8 &  1.3 &  3.1 &  &  6.7 &  5.5 &  3.7 &  3.7 &  & 13.4 & 24.6 &  9.1 & 12.9 &  &  6.0 &  5.3 &  0.9 &  2.0 \\
\multicolumn{21}{l}{\textbf{$ \bullet $ Mean-group inference}} \\ 
	MG-PC & 0 & 25.5 &  5.9 & 29.6 &  7.8 &  & 26.3 &  8.0 & 29.0 &  8.0 &  &  5.6 & 10.0 &  4.3 &  9.8 &  &  4.8 & 10.0 &  4.2 & 10.0 \\
	  & 5 &  6.3 &  7.8 &  3.9 &  4.9 &  &  9.7 &  9.9 &  9.2 &  9.9 &  & 10.0 & 10.0 & 10.0 & 10.0 &  & 10.0 & 10.0 & 10.0 & 10.0 \\
	  & 15 &  1.2 &  1.5 &  7.6 &  6.8 &  &  9.5 &  9.8 &  9.0 &  9.5 &  & 10.0 & 10.0 &  9.4 &  9.7 &  & 10.0 & 10.0 & 10.0 & 10.0 \\
	MG-SE & 0 & 21.7 &  3.6 & 25.6 &  4.6 &  & 21.6 &  7.9 & 24.1 &  7.6 &  &  6.1 & 10.0 &  3.9 &  9.8 &  &  5.9 & 10.0 &  4.4 & 10.0 \\
	  & 5 &  8.4 &  8.6 &  5.7 &  7.5 &  &  9.9 & 10.0 &  9.5 & 10.0 &  & 10.0 & 10.0 & 10.0 & 10.0 &  & 10.0 & 10.0 & 10.0 & 10.0 \\
	  & 15 &  3.8 &  3.5 &  5.4 &  4.5 &  & 10.0 & 10.0 &  9.6 &  9.9 &  & 10.0 & 10.0 &  8.7 &  9.3 &  & 10.0 & 10.0 & 10.0 & 10.0 \\
\hline \hline 
\end{tabular}

%% file: 5_Conclusion.tex
\section{Conclusion} \label{sec:Conlusion}
In this article, we develop inferential methods for panel vector autoregressive (VAR) models and their structural impulse response functions (IRFs). Accounting for parameter heterogeneity and \textit{two-way dependence}, we make two main contributions: First, we adapt mean-group estimation \citep{PesaranSmith1995} and construct closed-form moment-based \textit{pooled mean-group} identification for proxy SVARs and their IRFs $ \Theta_h = f_h{\left( \boldsymbol{\beta}, \boldsymbol{\sigma}, \boldsymbol{\varphi} \right)} $, for which we establish a joint central limit theorem under standard regularity conditions. Second, we construct a recursive-design \textit{panel moving-block bootstrap} procedure and prove its consistency with the derived central limit theorem by combining results from \cite{BruggemannEtAl2016} and \cite{Menzel2021}. Comprehensive finite-sample simulations indicate that the practitioner should prefer the panel-block bootstrap with joint resampling in both dimensions, as covered by our asymptotic results under two-way dependence.

%% file: Appendix.tex
    \thispagestyle{plain} 
    \begin{center}
        {\huge ONLINE APPENDIX} \\[10pt]
        {\huge Inference in Panel SVARs \\[10pt]
            with Two-Way Dependence}\\[30pt]
        \begin{tabular}{cc}
            Lennart Empting~\orcidlink{0009-0004-5068-4639} & Saskia Öztürk~\orcidlink{0009-0008-1347-3313} \\[-0.5ex]
            {\footnotesize University of Duisburg-Essen} & {\footnotesize Georg August University of Göttingen} \\[1.5ex]

            Simone Maxand~\orcidlink{0000-0002-3153-7922} & Konstantin Wagner~\orcidlink{0009-0006-0216-0755} \\[-0.5ex]
            {\footnotesize Leipzig University} & {\footnotesize Humboldt University of Berlin}
        \end{tabular}
    \end{center}

    \renewcommand\thefigure{App.\arabic{figure}}
    \renewcommand\thetable{App.\arabic{table}}
    \setcounter{figure}{0}
    \setcounter{table}{0}
    \input{A_AppS2ID}

\input{B_AppSimRes}

    \clearpage
    \input{C_AppAsymp}

    \addcontentsline{toc}{section}{References}
    \bibliographystyle{ecta}  
    \bibliography{Lit_Boot.bib,Lit_Eco.bib,Lit_SoftRefs.bib,Lit_Asym.bib}

%% file: A_AppS2ID.tex
\section*{Overview}
This online appendix collects the supporting material referenced in the main text. \textbf{Appendix~\ref{app:S2ID}} provides the technical details behind Remark~2.1 to impose the additional restrictions from the proxy SVAR literature. \textbf{Appendix~\ref{app:Simultaion}} complements the Monte Carlo study of Section~\ref{sec:Simulation}: Appendix~\ref{sec:HetIRF} documents the scaling scheme that introduces slope heterogeneity into the panel DGP. Appendix~\ref{sec:SimRes_point} reports the finite-sample performance of IRF point estimation, which motivates the small-$T$ adjustments of Section~\ref{sec:BootBC}. Appendix~\ref{sec:SimRes_local} isolates the performance of local identification for the simulations of Appendix~\ref{sec:SimRes_point} and for additional simulations under heterogeneity in the identification problem. In \textbf{Appendix~\ref{sec:AppAsymp}}, Appendix~\ref{app:AsyMG} proves the joint central limit theorem of Theorem~\ref{th:AsyMG} and defines the blocks of its limiting covariance matrix $\tilde{V}$. Appendix~\ref{app:pbb_consist} proves the bootstrap consistency of Theorem~\ref{th:pbb_consist}.

\section{Technical details on recovering exact identification} \label{app:S2ID}
The SVAR literature proposes a variety of additional restrictions to overcome the set ambiguity from unidentified $ \mathsf{W}_{11} $ and achieve exact identification. In this technical appendix, we briefly show how the empirical practitioner can recover exemplary cases $ \bullet \in \left\{ \textbf{MR}, \textbf{JL}, \textbf{LT} \right\} $ of these restrictions as specific $ \mathsf{W}_{11}^\bullet \in \mathbb{O}(K_1) $ after performing the SVD. Usual interpretations of the moment-based approach are thus adopted under the pooled full over-identification from SVD. The $ n_{K1} $ restrictions in the form of lower-triangularity are imposed on either $ \mathsf{B} $, $ \Sigma_{m\epsilon} $, or --as we propose-- $ \mathsf{Q} $ directly. In each case, a (signed) LQ decomposition $ M = LQ $ thus provides $ Q \in \mathbb{O}(K_1) $ to obtain lower-triangular $ L = MQ' $ (with positive elements on the main diagonal). An implementation of these cases is available in the \textsf{R}-package \href{https://cran.r-project.org/package=pvars}{\textbf{pvars}} by \citet{Empting2021}.
\begin{itemize}
    \item[\textbf{MR}] \citet[p.~1224]{MertensRavn2013} impose restrictions on $ \mathsf{B} $ such that its Schur complement $ {S}_{1}^{\text{MR}} $ under the partition of Eq.~\eqref{eq:ID_subrotation} is lower-triangular with positive diagonal elements. For estimated $ \widebar{\mathsf{B}}_{1}^{\text{MR}} = \widebar{\mathsf{B}}_{1}^{\ } \widehat{\mathsf{W}}_{11}^{\text{MR}} $, this is recovered by the LQ decomposition of
    \begin{align*} 
        \widebar{S}_{1}^{\ } 
        & = \widebar{\mathsf{B}}_{11}^{\ } - \widebar{\mathsf{B}}_{12}^{\ }  \widebar{\mathsf{B}}_{22}^{-1} \widebar{\mathsf{B}}_{21}^{\ } \\
        & = \widebar{\mathsf{B}}_{11}^{\text{MR}'} \widehat{\mathsf{W}}_{11}^{\text{MR}'} - \widebar{\mathsf{B}}_{12}^{\text{MR}} \widehat{\mathsf{W}}_{22}^{\text{MR}'} \left( \widebar{\mathsf{B}}_{22}^{\text{MR}} \widehat{\mathsf{W}}_{22}^{\text{MR}'} \right)^{-1} \widebar{\mathsf{B}}_{21}^{\text{MR}} \widehat{\mathsf{W}}_{11}^{\text{MR}'} \\
        & = \left[ \widebar{\mathsf{B}}_{11}^{\text{MR}} - \widebar{\mathsf{B}}_{12}^{\text{MR}}  \widebar{\mathsf{B}}_{22}^{\text{MR}-1} \widebar{\mathsf{B}}_{21}^{\text{MR}} \right] \widehat{\mathsf{W}}_{11}^{\text{MR}'} \\
        & = \widebar{S}_{1}^{\text{MR}} \widehat{\mathsf{W}}_{11}^{\text{MR}'}
    \end{align*}
    Without actually estimating $ \mathsf{B}_{2} $, \citet{MertensRavn2013} also consider $ {S}_{2}^{\text{MR}} $, which could be analogously derived via $ \widebar{S}_{2}^{\ } = \widebar{S}_{2}^{\text{MR}} \widehat{\mathsf{W}}_{22}^{\text{MR}'} $ and complement $ \widehat{\mathsf{W}}_{11}^{\text{MR}} $ for $ \widebar{\mathsf{B}}^{\text{MR}} = \widebar{\mathsf{B}} \widehat{\mathsf{W}}^{\text{MR}} $.
    
    \item[\textbf{JL}] \citet{JentschLunsford2021} presume lower-triangularity of the $ K_1 \times K_1 $ proxy moment matrix $ \Sigma_{m\epsilon_1}^{JL} = {\Sigma}_{m\epsilon^\circ} {\mathsf{Q}}_{1}^{\text{JL}} $ when solving the matrix square root in Eq.~\eqref{eq:ID_Bi} by the Cholesky decomposition of Eq.~\eqref{eq:ID_square}. For estimated $ \widehat{\mathsf{Q}}_{1}^{\text{JL}} = \widehat{\mathsf{U}}_{1} \widehat{\mathsf{W}}_{11}^{\text{JL}} $ under $ L \geq K_1 $, this is recovered by the signed LQ decomposition of 
    \begin{align*}
        \widehat{\mathsf{M}}
        & = \widehat{\Sigma}_{m\epsilon^\circ} \widehat{\mathsf{U}}_1
          = \widehat{\mathsf{V}} \widehat{\mathsf{D}}' \widehat{\mathsf{U}}' \widehat{\mathsf{U}}_1
          = \widehat{\mathsf{V}}_1 \widehat{\mathsf{D}}_1 \\
        & = \widehat{\mathsf{M}}^{\text{JL}} \widehat{\mathsf{W}}_{11}^{\text{JL}'}.
    \end{align*}
    By the lower-trapezoidal $ \widehat{\mathsf{M}}^{\text{JL}} $, proxy $ m_{it,l} $ is positively correlated with its own shock $ \epsilon_{it,l} $ for $ l = 1,\dots,K_1 $ and uncorrelated with $ \epsilon_{it,k}, \ k = l+1,\dots,K_1 $. The proxies $ m_{it,l}, \ l = K_1,\dots,L $, can correlate with all $ \epsilon_{it,k}, \ k=1,\dots, K_1 $ because $ \widehat{\mathsf{W}}_{11}^{\text{JL}} $ is already determined by the leading $ K_1 \times K_1 $ block $ \widehat{\mathsf{M}}_{1} = \widehat{\mathsf{V}}_{11} \widehat{\mathsf{D}}_1 $. Under $ L = K_1 $ and $ \boldsymbol{\widehat{\mathsf{d}}}_1 \neq \boldsymbol{0}_{K_1} $, the matrix square root of Eq.~\eqref{eq:ID_square} is recovered as $ \widehat{\mathsf{H}} = \widehat{\mathsf{M}}_{1} \widehat{\mathsf{V}}_{11}' = \widehat{\mathsf{V}}_{11} \widehat{\mathsf{D}}_1 \widehat{\mathsf{V}}_{11}' $ and $ \widehat{\mathsf{M}}_{1}^{\text{JL}} = \widehat{\Sigma}_{m\epsilon_1}^{JL} = \operatorname{chol} \big( \widehat{\Sigma}_{m \epsilon^\circ} \widehat{\Sigma}_{\epsilon^\circ m} \big) $.
    
    \item[\textbf{LT}] We propose restrictions on $ \widehat{\mathsf{Q}}^{\text{LT}} = \widehat{\mathsf{U}} \widehat{\mathsf{W}}^{\text{LT}} $ that imply $ \widebar{\mathsf{B}}_{1}^{\text{LT}} = \widebar{\mathsf{B}}^\circ \widehat{\mathsf{U}}_{1}^{\ } \widehat{\mathsf{W}}_{11}^{\text{LT}} $ and $ \widebar{\mathsf{B}}_{2}^{\text{LT}} = \widebar{\mathsf{B}}^\circ \widehat{\mathsf{U}}_{2}^{\ } \widehat{\mathsf{W}}_{22}^{\text{LT}} $ to be lower-trapezoidal. More specifically, this lower-triangularity of the leading square block is imposed on
    \begin{align*}
        \widebar{\mathsf{B}}_{1}^{\text{LT}} = 
		\begin{bmatrix}
			\underset{(K_1 \times K_1)}{\widebar{\mathsf{B}}_{11}^{\text{LT}}} \\
			\underset{(K_2 \times K_1)}{\widebar{\mathsf{B}}_{21}^{\text{LT}}} \\
		\end{bmatrix}
        = \widebar{\mathsf{B}}^\circ 
		\begin{bmatrix}
			\underset{(K_1 \times K_1)}{\widehat{\mathsf{U}}_{11}} \\
			\underset{(K_2 \times K_1)}{\widehat{\mathsf{U}}_{21}} \\
		\end{bmatrix}
         \widehat{\mathsf{W}}_{11}^{\text{LT}}
        \quad \text{ and } \quad
        \widebar{\mathsf{B}}_{2}^{\text{LT}} = 
		\begin{bmatrix}
			\underset{(K_2 \times K_2)}{\widebar{\mathsf{B}}_{12}^{\text{LT}}} \\
			\underset{(K_1 \times K_2)}{\widebar{\mathsf{B}}_{22}^{\text{LT}}} \\
		\end{bmatrix}
        = \widebar{\mathsf{B}}^\circ 
		\begin{bmatrix}
			\underset{(K_2 \times K_2)}{\widehat{\mathsf{U}}_{12}} \\
			\underset{(K_1 \times K_2)}{\widehat{\mathsf{U}}_{22}} \\
		\end{bmatrix}
         \widehat{\mathsf{W}}_{22}^{\text{LT}},
    \end{align*}
    where the partitions of $ \widebar{\mathsf{B}}_{2}^{\text{LT}} $ and $ \widehat{\mathsf{U}}_{2} $ differ from Eq.~\eqref{eq:ID_subrotation}. The two LQ decompositions of $ \widehat{\mathsf{U}}_{11} = \widehat{\mathsf{Q}}_{11}^{\text{LT}} \widehat{\mathsf{W}}_{11}^{\text{LT}'} $ and $ \widehat{\mathsf{U}}_{12} = \widehat{\mathsf{Q}}_{12}^{\text{LT}} \widehat{\mathsf{W}}_{22}^{\text{LT}'} $ provide the orthogonal $ \widehat{\mathsf{W}}_{11}^{\text{LT}}, \widehat{\mathsf{W}}_{22}^{\text{LT}} $ and lower-triangular $ \widehat{\mathsf{Q}}_{11}^{\text{LT}}, \widehat{\mathsf{Q}}_{12}^{\text{LT}} $. Positive elements on the main diagonal of $ \widehat{\mathsf{Q}}^{\text{LT}} $ fix the column signs. Being all lower-triangular, $ \widehat{\mathsf{Q}}_{11}^{\text{LT}}, \widehat{\mathsf{Q}}_{12}^{\text{LT}} $ and $ \widehat{\mathsf{B}}_{i}^\circ \ \forall i $ yield $ n_{K1} $ zero-restrictions on $ \widehat{\mathsf{B}}_{i,11}^{\text{LT}} = \widehat{\mathsf{B}}_{i,11}^\circ \widehat{\mathsf{Q}}_{11}^{\text{LT}} $ and $ n_{K2} $ zero-restrictions on $ \widehat{\mathsf{B}}_{i,12}^{\text{LT}} = \widehat{\mathsf{B}}_{i,11}^\circ \widehat{\mathsf{Q}}_{12}^{\text{LT}} $ for each individual and for the panel mean $ \widebar{\mathsf{B}}^{\text{LT}} = \widebar{\mathsf{B}}^\circ \widehat{\mathsf{Q}}^{\text{LT}} $. In reference to $ \widebar{\mathsf{B}}^{\text{LT}} = \widebar{\mathsf{B}}^\circ $ as the edge case $ K_1 = 0 $, this may establish a convenient baseline for subsequent identification procedures that build on the mechanics of \cite{Rubio-RamirezEtAl2010}, such as sign-restrictions or ICA. Accordant block-diagonal choices of $ \mathbb{O}(K) $ are simply multiplied to the right side of $ \widebar{\mathsf{B}}^{\text{LT}} $ or $ \widebar{\Theta}_h^{\text{LT}} = \widebar{\Xi}_h \widebar{\mathsf{B}}^{\text{LT}} $.
    
    \item[\textbf{NQ}] The solution of the nearest orthogonal matrix $ \mathsf{W}_{11}^{\text{NQ}} = \mathsf{V}_{11}' $ in Eq.~\eqref{eq:ID_svd} simply passes identifying properties from the proxy moment matrix under $ L = K_1 $. In particular, the proxy variables could be mutually uncorrelated and instrumenting the shocks one-by-one as in \cite{BrunsEtAl2025} and \citet{GregoryEtAl2024}. In this case, $ \widehat{\mathsf{V}}_{11} $ from the SVD-solution re-permutes and re-signs $ \widehat{\mathsf{U}}_{1} $, thus reverting the order and sign conventions of SVD imposed on the singular vectors. In the general case, the leading $ K_1 \times K_1 $ block of $ \Sigma_{m\epsilon1} = \Sigma_{m\epsilon^\circ} \mathsf{Q}_{1}^{\text{NQ}} = \mathsf{M} \mathsf{W}_{11}^{\text{NQ}} $ is presumed to be symmetric positive definite, i.e.~$ \mathsf{M}_{1} \mathsf{W}_{11}^{\text{NQ}} = \mathsf{P} $ for the leading block $ \mathsf{M}_{1} = \mathsf{V}_{11} \mathsf{D}_1 $ of $ \mathsf{M} = \Sigma_{m\epsilon^\circ} \mathsf{U}_1 $. If $ L > K_1 $, the leading block $ \widehat{\mathsf{V}}_{11} $ is no longer orthogonal and the estimated $  \widehat{\mathsf{W}}_{11}^{\text{NQ}} $ is thus obtained as the orthogonal factor of the PD of $ \widehat{\mathsf{M}}_{1} = \widehat{\mathsf{P}}  \widehat{\mathsf{W}}_{11}^{\text{NQ}'} $, which nests $  \widehat{\mathsf{W}}_{11}^{\text{NQ}} = \widehat{\mathsf{V}}_{11}' $ under $ L = K_1 $. The arbitrary SVD result $ \widehat{\mathsf{V}}_{22} $ needs to be replaced by a unique choice from $ \mathbb{O}(K_2) $ such as \textbf{LT}.
\end{itemize}

%% file: B_AppSimRes.tex
\section{Details on Monte Carlo study} \label{app:Simultaion}

\subsection{Parameter heterogeneity in VAR coefficients} \label{sec:HetIRF}
To maintain the wave pattern of the IRFs in \citet{JentschLunsford2021}, the heterogeneous VAR coefficients are aligned to the strong initial propagation of $ A_1 $ and subsequent offsetting effect of $ A_2 $. We use $ A_i = \left[ \sigma_{(A1),i} \cdot A_1 : \sigma_{(A2),i} \cdot A_2 \right] $ with weights $ w_1 = 0.073 $ and  $ w_2 = 0.263 $ and joint randomness $ u_{(A),i} \sim U(-1, +1) $ in the two scaling factors $ \sigma_{(Aj),i} = 1 + w_j \cdot u_{(A),i} $.
\begin{figure}[ht] 
	\centering
	\resizebox{0.95\textwidth}{!}{
		\input{Fig_HetIRF.tex}}
	\caption[Parameter heterogeneity of IRF.]{Parameter heterogeneity of individual ``true'' IRFs under common $ \mathsf{B}_i = \mathsf{B} $ as in SN1 and SN3. Heterogeneity emerges solely from $ A_i = \left[ \sigma_{(A1),i} \cdot A_1 : \sigma_{(A2),i} \cdot A_2 \right] $, here with individual realizations of $ u_{(A),i} = -1,\ldots,+1 $ at equal distances for illustrative purposes only. The baseline IRF (black line) is derived from constant baseline $ A = \left[ A_1 : A_2 \right] $ and $ \mathsf{B} $.}
	\label{fig:HetIRF}
\end{figure}

This weighted scaling scheme facilitates a ``controlled escalation'' in heterogeneity. The joint $ u_{(A),i} $ exert perfect group-dependence in $ \left\{ \boldsymbol{\eta}_{\beta i} \right\} $, which is covered by Assumption~\ref{as:random_parameters}(iii) and reflected in the spread of (non-intersecting) individual IRFs of Figure~\ref{fig:HetIRF}. In the scenarios SN1 and SN3, this spread is widest at the turning point at around $ h = 5 $ but of zero width at $ h = 0 $. Cross-sectional heteroskedasticity in SN2 or SN4, however, further amplifies the spreading by a factor of $ \sigma_{(CH),i} \sim U(0.75, 1.25) $ already from $ h = 0 $ onward until it collapses to the zero-line in the long run just like in SN1 and SN3. Despite these interactions, the scheme with two offsetting weights leaves the largest eigen root of the individual VAR processes almost unaffected in the range of $ \lambda_{i,1} \in \left(0.923,\ 0.931\right) $. It thus maintains the stability Assumption~\ref{as:SVAR}(i) by construction of the panel DGP and avoids the need for dedicated ex-post checks on stability and potential exclusions of non-stable VAR processes. 
\begin{figure}[ht] 
	\centering
	\resizebox{0.95\textwidth}{!}{
		\input{Fig_HetIRF2.tex}}
	\caption[Parameter heterogeneity and MG.]{Individual IRFs and their MG parameters under two different scaling schemes. Again, heterogeneity emerges solely from $ A_i = \left[ \sigma_{(A1),i} \cdot A_1 : \sigma_{(A2),i} \cdot A_2 \right] $ with individual realizations of $ u_{(A),i} = -1,\ldots,+1 $ at equal distances for illustrative purposes only. The baseline IRF (black line) is the median of both schemes and is derived from $ A = \left[ A_1 : A_2 \right] $ and $ \mathsf{B} $.}
	\label{fig:HetIRF2}
\end{figure}

Figure~\ref{fig:HetIRF2} compares the selected scaling scheme (here in red) to a more typical pattern in a statistical sense. The (blue) scaling scheme with $ w_1 = 0.05 $ and  $ w_2 = 0.10 $ illustrates exemplarily that $ f_h $ maps from $ A_i $ (not depicted) generated with symmetric marginal distributions to $ \Theta_{i,h} $ (light blue lines) having skewed marginal distributions at horizons $ h > 1 $. Here, the largest root is in the range of $ \lambda_{i,1} \in \left(0.885,\ 0.978 \right) $, leading to a typical long tail from more persistent individuals. Since the IRF transformation $ f_h $ involves exponentiation of $ \boldsymbol{\beta} $ to the power of $ h $,  $ f_h $ at $ h > 1 $ is not affine and thus typically exhibits
\begin{align} \label{eq:HetIRF2}
	\begin{split}		
		 \Theta_h
         = f_h{\left( \boldsymbol{\beta}, \boldsymbol{\sigma}, \boldsymbol{\varphi} \right)} 
         = f_h{\left( \E \left[ \boldsymbol{\beta}_i \right], \E \left[ \boldsymbol{\sigma}_i  \right], \boldsymbol{\varphi} \right)} 
         \neq \E \left[ f_h{\left( \boldsymbol{\beta}_i, \boldsymbol{\sigma}_i, \boldsymbol{\varphi} \right)} \right] 
         = \E \left[ \Theta_{i,h} \right].
	\end{split}
\end{align} 
Figure~\ref{fig:HetIRF2} depicts this by the congruent IRF of the baseline SVAR process ($ \Theta_h $, black line) and of the MG-SVAR ($ f_h{\left( \boldsymbol{\widebar{\beta}}, \boldsymbol{\widebar{\sigma}}, \boldsymbol{\widebar{\varphi}} \right)} $, blue dots) and the diverging MG-IRFs ($ N^{-1} \sum_{i=1}^{N} f_h{\left( \boldsymbol{\widehat{\beta}}_i, \boldsymbol{\widehat{\sigma}}_i, \boldsymbol{\widebar{\varphi}} \right)} $, blue triangles) due to increasing skewness over $ h $. As can be inferred, consistent estimates of individual VARs and IRFs converge to their parameter distribution but still result in an inconsistency between the IRFs of MG-SVARs and the MG-IRFs under slope heterogeneity.

Even under slope homogeneity, an equivalent relation remains for the map $ f_h $ from the estimated $ \widehat{A}_i $ to the estimated $ \widehat{\Theta}_{i,h} $. In finite samples, the sampling distribution of $ \widehat{\Theta}_{i,h} $ is known to be skewed \citep{Kilian1998} such that $ f_h{\left( \boldsymbol{\widebar{\beta}}, \boldsymbol{\widebar{\sigma}}, \boldsymbol{\widebar{\varphi}} \right)} \neq N^{-1} \sum_{i=1}^{N} f_h{\left( \boldsymbol{\widehat{\beta}}_i, \boldsymbol{\widehat{\sigma}}_i, \boldsymbol{\widebar{\varphi}} \right)} $ at $ h > 1 $. Compared to the IRFs of pooled or MG-SVAR estimates, MG-IRF estimates typically overemphasize the more persistent individual realizations in the skewed distribution such that larger sampling variances lead to larger MG-IRF estimates. The MG-IRF estimates in the literature remain a location parameter of the empirical distribution of individual effect estimates rather than referring those effect estimates back to the data generating statistical model. At least, if $ \widehat{A}_i $ degenerates asymptotically to the common point mass of homogeneous $ A_i = A $, randomness collapses and Eq.~\eqref{eq:HetIRF2} thus exhibits equality as a trivial case.

Comparing both scaling schemes, Figure~\ref{fig:HetIRF2} illustrates some favorable particularities of the (red) scheme with $ w_1 = 0.073 $ and  $ w_2 = 0.263 $ for MC simulations. 
Firstly, the approximate equality $ \Theta_h \approx \E \left[ \Theta_{i,h} \right] $ allows us to limit influences from the (now almost symmetric) parameter distribution of IRFs and thus to isolate influences from the sampling distribution on MG-IRFs. Secondly, the almost homogeneous persistence by $ \lambda_{i,1} \approx \lambda_{1} = 0.93 $ imposes not only stability $ \lambda_{i,1} < 1 \ \forall i $ but a long-run homogeneity that lets the individual IRFs already converge at mid-range horizons $ h $. Both features accord with economic theory, which often emphasizes heterogeneous short-run dynamics and has therewith inspired the use of MG-IRFs, e.g.~in \citet{GambacortaEtAl2014} for responses to unconventional monetary policy shocks.

As an aside, we remark that larger $ w_j $ do not necessarily translate into wider spreads of $ \Theta_{i,h} $ across $ i $ and thus cannot verify robustness of MG-IRFs against stronger heterogeneity either. Although $ w_1 = 0.073 $ and  $ w_2 = 0.263 $ imply larger VAR parameter variance, $ w_1 = 0.05 $ and  $ w_2 = 0.10 $ exert less control on the persistence and thus lead to a larger skew in the parameter distribution of $ \Theta_{i,h} $. Without any control on heterogeneity, even smaller variances in a panel DGP of all-independent $ a_{kk^\bullet,ij} $ in $ A_{ij} $ can occasionally escalate into large $ \lambda_{i,1} $ of unit- or explosive roots. These extreme individuals then dominate the MG-IRFs in the long run of $ h $ although the mass of individual IRFs is close to the baseline in this panel DGP. Nonetheless, those heterogeneity configurations which generate the typical pattern in a statistical sense do not pose a problem for MG-SVAR estimation because extreme individuals already average out by $ \widebar{a}_{kk^\bullet,j} $ at the element level of $ \widebar{A} $. Our single choice of scheme is simply motivated by the focus on the four residual structure scenarios under a typical pattern in an economic sense.

\subsection{Performance of IRF point estimation} \label{sec:SimRes_point}
While the MC study of Section~\ref{sec:Simulation} focuses on the inference validated by Corollary~\ref{cor:asym_PBBSIRF}, we report here the performance of the point estimation of $ \Theta_h $ in reference to Corollary~\ref{cor:CLT_SIRF}. The performance results are throughout based on MC simulations using the same 16 DGP setups of Section~\ref{sec:SimDGP} again, but now with an increased number of $ R_{\text{MC}} = 2000 $ replications.

\begin{figure}[ht] 
	\centering
	\resizebox{0.98\textwidth}{!}{
		\input{Fig_cbs_SN4N20T100.tex}}
	\caption[IRF performance.]{Average IRF and 90\%-quantile bands of the three moment-based identification methods (\textit{individual} with BC by \citet{Kilian1998}, \textit{mean-group} without any BC, and \textit{pooled MG} with BC) for proxy SVARs according to Monte Carlo simulations. $ R_{\text{MC}} = 2000 $ replications with sample size $ \left\{ N=20,  T=100 \right\} $ have been used to assess the SN4 \textit{small sample} performance.}
	\label{fig:cbs_SN1N20T500}
\end{figure}
Figure~\ref{fig:cbs_SN1N20T500} displays the $ 2 \times 2 $ IRF results for SN4 \textit{small sample} with residual two-way dependence as the most challenging DGP setup. Therein, \textit{SVAR (BC)} refers to the individual estimation and proxy-identification of a single $ N = 1$ unit with the individual bias correction of $ \boldsymbol{\widehat{\beta}}_{\text{BC},i} $ using \citet{Kilian1998} and with the sample covariance estimator $ \boldsymbol{\widehat{\sigma}}_{\text{BC},i} $ using the degrees of freedom $ T - (K \cdot p + 1) $. The individual results give insights into the estimator performance before combining them by MG and allow to compare the result from the individual and pooled identification. Unlike the panel estimates from heterogeneous panels, the individual SVAR estimates are generated from the ideal parameters $ A_i = A $ and $ \mathsf{B}_i = \mathsf{B} $. \textit{MG-SVAR} refers to the MG estimation of $ N $ individual SVAR without small-$T$-corrections on $ \boldsymbol{\widehat{\beta}}_i $ and $ \boldsymbol{\widehat{\sigma}}_i $ and with individual estimates $ \widehat{\mathsf{Q}}_i $. The procedure thus constitutes the panel MG-extension of the setup in \citet{JentschLunsford2021}. \textit{PMG-SVAR (BC)} has been outlined in Section~\ref{sec:BootBC} and is the estimation procedure of main interest. The procedure uses both small-$ T $-corrections and the pooled SVD estimator $ \widehat{\mathsf{Q}} $ to derive $ \widebar{\Theta}_{\text{BC},h} = f_h{\left( \boldsymbol{\widebar{\beta}}_{\text{BC}}, \boldsymbol{\widebar{\sigma}}_{\text{BC}}, \boldsymbol{\widebar{\varphi}} \right)} $.

\begin{table}[ht] 
    \centering
    \caption{Performance $ \left( \frac{100}{K^2} \cdot \sum_{k,s=1}^{K,K} f_\bullet(\cdot)_{h,ks} \right) $ of IRF point estimation}
    \resizebox{0.99\textwidth}{!}{
        \def\arraystretch{.85}
        \input{Tab_irfPOINT.tex}}
    \label{tab:irfPOINT}
\end{table}
Table~\ref{tab:irfPOINT} juxtaposes the aggregated performance of the individual, MG, and PMG estimation for all $ 4 \times 4 $ DGP setups. 
Comparing \textit{(i)} columns of $ (20,100) $ and $ (100,100) $ or $ (20,500) $ and $ (100,500) $ or \textit{(ii)} SN1 and SN3 for the individual approach gives hint on the precision of the simulation, since the results should be the same. Overall, MG estimation retains the small-$ T $ biases of the individual methods but makes those more visible. Within the RMSE, the bias persists while the estimator variance shrinks as $ N \to \infty $. Conversely, bias-corrected (BC) PMG estimation effectively remedies this issue and is thus considered in Section~\ref{sec:SimRes} for evaluating the finite-sample performance of the inferential procedures.

\subsection{Performance of local identification} \label{sec:SimRes_local}
To isolate the performance of local identification, we consider the Amari distance between the actual $ \mathsf{Q} $ and the estimated $ \mathsf{\widehat{Q}} $. Originating from the evaluation of ICA,\footnote{See e.g. \citet[Fig.~1]{Herwartz2017}, who identify panel SVARs by pooled ICA.} this performance metric is invariant to $ \mathsf{P}^{\pm} $ and the related periodicity of $ \mathsf{Q}(\boldsymbol{\theta}) $. It is calculated and scaled by
\begin{align} \label{eq:ari}
	\begin{split}
	    \left\Vert \mathsf{Q}, \mathsf{\widehat{Q}} \right\Vert_{\mathrm{Amari}} & =
        \sum_{k=1}^{K} \left( \frac{\sum_{s=1}^{K} |r_{ks}|}{\max_s |r_{ks}|} - 1 \right) +
        \sum_{s=1}^{K} \left( \frac{\sum_{k=1}^{K} |r_{ks}|}{\max_k |r_{ks}|} - 1 \right), 
        \quad r_{ks} = \left[ \mathsf{Q}' \mathsf{\widehat{Q}}  \right]_{ks}, \text{ and} \\
        \frac{ \left\Vert \mathsf{Q}, \mathsf{\widehat{Q}} \right\Vert_{\mathrm{Amari}} }{2K(K-1)} & \in \left[ 0, 1 \right].
	\end{split}
\end{align}
In the following, we first report the sampling distributions of this scaled Amari distance for all point estimation setups of Section~\ref{sec:SimRes_point}, which are now subsumed under the bivariate {DGP~I}. Subsequently, we introduce a trivariate {DGP~II} additionally and illustrate the local identification performance for both DGPs under heterogeneity in the identification problem.

\begin{figure}[ht] 
	\centering
	\resizebox{0.98\textwidth}{!}{
		\input{Fig_ari.tex}}
	\caption[Identification performance.]{Scaled Amari distance on $ \mathsf{Q} $ of the three identification methods (\textit{individual}, \textit{mean-group}, and \textit{pooled MG}) for proxy SVARs according to Monte Carlo simulations with $ R_{\text{MC}} = 2000 $ replications of DGP~I. The boxplots display median bars and min-max whiskers.}
	\label{fig:ari}
\end{figure}
Figure~\ref{fig:ari} displays simulated distributions of the Amari distance for the $ 4 \times 4 $ DGP setups of Section~\ref{sec:SimDGP} under DGP~I adapted from \citet{JentschLunsford2021}. Again, the individual results admit an assessment of simulation precision since they are based on the same number of $ T=100 $ periods in the left and $ T=500 $ in the right column respectively. Extreme realizations of the individual SVARs under the $ L = 1 $ weak proxy can be seen to span the complete interval of $ \left[ 0, 1 \right] $. 
Both panel estimators reflect consistent behavior as $ \left\Vert \mathsf{Q}, \mathsf{\widehat{Q}} \right\Vert_{\mathrm{Amari}} $ shrinks under increasing sample size irrespective of residual structure. While the estimation precision of MG rather profits from larger $ T $, the precision of PMG increases with the panel sample size $ NT $. An inspection of the systematic deviations from $ \mathsf{Q} $ attributes this to issues from the ``individually weak'' proxy, which the PMG identification reduces greatly but the MG retains as the cross-sectional average of individual biases.

For DGP II, we adapt the trivariate VAR process of lag-order $ p=1 $ from \citet{BrunsEtAl2025} to create a homogeneous panel VAR process (again without deterministic terms). Accordingly, we specify Eq.~\eqref{eq:PVAR} using the common parameter matrices
\begin{align*}
    A_{1} = 
    \begin{pmatrix}
        0.9 &   0 &   0 \\ 
        1/3 & 1/3 & 1/3 \\
        1/3 & 1/3 & 1/3
    \end{pmatrix}
  \text{\quad and \quad}
    \mathsf{B} =  
    \begin{pmatrix}
        1   & 0.2 & 0.2 \\
        0.2 & 1   & 0.2 \\
        0.2 & 0.2 & 1   \\
    \end{pmatrix} 
\end{align*}
such that $ \boldsymbol{\theta} \approx (0.197, 0.194, 0.041) $ in $ \mathsf{B} = \mathsf{B}^\circ \mathsf{Q}( \boldsymbol{\theta} ) $. Focusing on the SN4 \textit{small sample} performance, we only use $ T=100 $ as in \citet{BrunsEtAl2025}, $ N=20 $, and residual two-way dependence (SN4) as described in Section~\ref{sec:SimDGP}. The $ L $ proxy variables are generated by
\begin{align*}
    \boldsymbol{m}_{it} =  
    \begin{bmatrix}
        {\psi}_{(m),i} I_{K_1} & 0_{K_1 \times K_2} \\
        {\psi}_{(m),i} I_{K_1} & 0_{K_1 \times K_2}
    \end{bmatrix} 
    \begin{pmatrix}
        \boldsymbol{\epsilon}_{1,it} \\
        \boldsymbol{\epsilon}_{2,it}
    \end{pmatrix} 
    + \boldsymbol{e}_{(m),it} \text{\quad with \quad } \boldsymbol{e}_{(m),it} \overset{iid}{\sim} \mathcal{N}(0, I_L) \text{\ and \ } \boldsymbol{\epsilon}_{it} \overset{SN4}{\sim} (0, I_{K}).
\end{align*} 
They are mutually uncorrelated and of identical strength by $ {\psi}_{(m)} I_{K_1} $ and $ \text{Cov}\left( \boldsymbol{e}_{(m),it} \right) = I_L $. In two different configurations, we set a baseline proxy strength \textit{(i)} with $ \psi_{(m)} = 0.17 $ again as in the simulation for Figure~\ref{fig:ari} and \textit{(ii)} with $ \psi_{(m)} = 0.58 $ that implies a correlation of $\rho_{m, \epsilon_1} = 0.5 $ between each proxy variable and its attributed shock series as in \citet{BrunsEtAl2025}.

Moreover, we consider not only the three methods under just-identification again ($ L = K_1 = 2 $ for DGP~II), but additionally the pooled MG under over-identification by $ L = 2 > K_1 = 1 $ for DGP~I and $ L = 4 > K_1 = 2 $ for DGP~II respectively. The additional $ L - K_1 $ proxies are generated just as the first $ K_1 $ proxies, but they have their own uncorrelated error terms $ {e}_{(m),lit} $ and do not enter the just-identification. The remaining parameters for the $ K_2 = 1 $ shock series $ \boldsymbol{\epsilon}_{2,it} $ are exactly identified by the orthogonal complement in $ \left[ {\mathsf{Q}}_1 : {\mathsf{Q}}_{1\perp} \right] $.

We are interested in evaluating how merely elemental forms of parameter heterogeneity already affect the local identification performance of the identification methods. Specifically, we vary the individualized factors of the SVD selectively and without any interaction between them. The effects from this cross-sectional dispersion can be isolated if the baseline parameters are held constant over the $ R_{\text{MC}} = 2000 $ Monte Carlo replications. To do so, we introduce the elemental heterogeneity using antithetic draws of individual parameters in four cases: 
\begin{enumerate}[label=\textbf{CS\arabic*}]	
    \item \textbf{None.} This case specifies homogeneity of all parameters as the benchmark. As done under DGP~I, this ideal setup is also used for individual estimation and identification.
    
    \item \textbf{$ \mathsf{U}_i $.} Heterogeneity in the set-identified $ \mathsf{U}_i $ is directly introduced into $ \mathsf{Q}_i = \mathsf{Q} \mathsf{R}_i $ in order to briefly check on the violated homogeneity Assumption~\ref{as:SVAR}(iv) on $ \mathsf{Q} $. 
    For DGP~II, $ \mathsf{R}_i = \mathsf{R}\left( \alpha_i, n_i \right) $ rotates by the angle $ \alpha_i $ about an axis $ n_i = \left( \cos \omega_i, \sin \omega_i, 0 \right)'$. The axis $ n_i $ keeps the $ n_{K1} = 1 $ NQ-restriction by $ \mathsf{V}_{i,11} $ correctly aligned relative to its varying frame $ \mathsf{U}_{1,i} $. The angle  $ \omega_i \sim U \left( 0, 2\pi \right) $ distributes the tilt from $ \alpha_i $ (in expectation) evenly over the $ K_1 = 2 $ columns of $ \mathsf{Q}_1 $. For DGP~I, the rotation matrix $ \mathsf{R}_i $ simply reduces to $ \mathsf{R}\left( \alpha_i \right) $ due to $ \mathsf{V}_{i,11} = 1 $. For both DGPs, the angle $ \alpha_i \sim U \left(  -\frac{\pi}{6}, \frac{\pi}{6} \right) $ varies the subspace spanned by $ \mathsf{U}_{1,i} $ and, orthogonally to it, the one spanned by $ \mathsf{U}_{2,i} $. In the bivariate case, it covers two-thirds of the maximum admissible angle of $ \pm \frac{\pi}{4} $, where larger angles would introduce signed permutation via $ \mathsf{R}_i $. Each pair $ \{ - \alpha_i, +\alpha_{i} \} $ of antithetic draw preserves the baseline $ \mathsf{Q} $ as the orthogonal factor of the polar decomposition $ N^{-1} \sum_i \mathsf{Q}_i = \mathsf{Q} \mathsf{H} $. Notably, the symmetric factor $ \mathsf{H} $ is diagonal in expectation here and dampens the proxy strength $ \boldsymbol{\mathsf{d}}_1 $ at the panel level.
    
    \item \textbf{$ \psi_{(m),i} $.} Heterogeneity in the proxy strength from $ \psi_{(m),i} = \sigma_{(\psi),i} \psi_{(m)} $ is introduced via the scaling factor $ \sigma_{(\psi),i} = 1 + u_{(\psi),i} $ and antithetic draws of $ u_{(\psi),i} \sim U \left( -1, +1 \right) $. Each pair $ \{ -u_{(\psi),i}, +u_{(\psi),i} \} $ of the $ \frac{N}{2} $ antithetic draws perfectly averages to zero such that each panel draw of the $ R_{MC} = 2000 $ replications obeys $ N^{-1} \sum_i \psi_{(m),i} = \psi_{(m)} $ by construction. 
    
    \item \textbf{Rank.} Heterogeneity in the rank of the proxy moment $ \Sigma_{m\epsilon_1,i} $ is introduced as random failures of the rank condition in $ 50 \% $ of the individuals. Antithetic draws of $ \sigma_{(\psi),i} \in \{0, 2\} $ for $ \psi_{(m),i} = \sigma_{(\psi),i} \psi_{(m)} $ keep $ N^{-1} \sum_i \psi_{(m),i} = \psi_{(m)} $ constant again for each panel draw.
\end{enumerate}
\begin{figure}[ht] 
	\centering
	\resizebox{0.98\textwidth}{!}{
		\input{Fig_ari2.tex}}
	\caption[Identification performance.]{Scaled Amari distance on $ \mathsf{Q} $ of the three identification methods (\textit{individual}, \textit{mean-group}, and \textit{pooled MG}) for proxy SVARs according to Monte Carlo simulations with $ R_{\text{MC}} = 2000 $ replications. The boxplots display median bars and min-max whiskers.}
	\label{fig:ari2}
\end{figure}

Figure~\ref{fig:ari2} displays the simulated distributions of the Amari distance for the $ 2 \times 4 $ setups under DGP~I and II. Compared to the homogeneity benchmark, MG turns out susceptible to cases of weakly identified individuals under heterogeneous proxy strength. Although the information set is held constant at the panel level, the extreme case of rank failures exhibits this in particular. By contrast, CS3 and CS4 do not affect the performance of PMG as only the panel information set (i.e. total strength and number of proxies) is decisive here. The dampening effect introduced in CS2 by the heterogeneous $ \mathsf{U}_i $ is not visible due to the dominant sampling variation.

%% file: Tab_irfPOINT.tex
\begin{tabular}{lrrrrrcrrrrcrrrrcrrrr} 
 \hline \hline 
 & & \multicolumn{4}{c}{\textbf{Small sample}} && \multicolumn{4}{c}{\textbf{Medium sample}} && \multicolumn{4}{c}{\textbf{Medium sample}} && \multicolumn{4}{c}{\textbf{Large sample}} \\  
 & & \multicolumn{4}{c}{$ N=20, \ T=100 $} && \multicolumn{4}{c}{$ N=20, \ T=500 $} && \multicolumn{4}{c}{$ N=100, \ T=100 $} && \multicolumn{4}{c}{$ N=100, \ T=500 $} \\  
 \cmidrule(lr){3-6} \cmidrule(lr){8-11} \cmidrule(lr){13-16} \cmidrule(lr){18-21} 
 \textbf{Method} & $ h $ & SN1 & SN2 & SN3 & SN4 && SN1 & SN2 & SN3 & SN4 && SN1 & SN2 & SN3 & SN4 && SN1 & SN2 & SN3 & SN4 \\ 
 \hline 
\multicolumn{21}{l}{\textbf{$ \bullet $ Absolute bias}} \\ 
	SVAR (BC) & 0 & 18.3 & 21.2 & 20.4 & 19.7 &  &  2.9 &  3.0 &  2.5 &  3.6 &  & 20.2 & 22.2 & 19.3 & 21.7 &  &  2.9 &  3.2 &  2.8 &  3.5 \\
	  & 5 & 15.6 & 19.2 & 18.1 & 16.6 &  &  3.9 &  4.4 &  3.6 &  5.2 &  & 17.2 & 20.4 & 18.4 & 19.2 &  &  4.0 &  4.2 &  4.0 &  5.1 \\
	  & 15 &  2.8 &  4.6 &  5.3 &  3.7 &  &  1.5 &  1.4 &  0.6 &  2.2 &  &  3.3 &  5.5 &  4.9 &  4.6 &  &  1.6 &  1.6 &  1.3 &  2.4 \\
	MG-SVAR & 0 & 20.8 & 21.7 & 21.0 & 21.8 &  &  3.3 &  3.5 &  3.3 &  3.5 &  & 20.8 & 21.7 & 20.9 & 21.7 &  &  3.3 &  3.5 &  3.3 &  3.6 \\
	  & 5 & 30.2 & 31.7 & 30.0 & 31.3 &  &  7.1 &  7.4 &  7.0 &  7.3 &  & 30.4 & 31.6 & 30.1 & 31.6 &  &  7.1 &  7.4 &  7.0 &  7.5 \\
	  & 15 & 34.7 & 35.3 & 34.0 & 34.7 &  &  8.8 &  9.1 &  8.6 &  9.0 &  & 35.1 & 35.5 & 34.7 & 35.2 &  &  8.9 &  9.2 &  8.8 &  9.2 \\
	PMG-SVAR (BC) & 0 &  1.1 &  1.8 &  1.1 &  1.7 &  &  0.2 &  0.4 &  0.2 &  0.4 &  &  0.7 &  1.3 &  0.7 &  1.3 &  &  0.1 &  0.3 &  0.1 &  0.3 \\
	  & 5 &  1.2 &  3.5 &  1.1 &  3.2 &  &  0.3 &  0.7 &  0.3 &  0.6 &  &  0.8 &  2.8 &  0.5 &  2.7 &  &  0.2 &  0.6 &  0.1 &  0.6 \\
	  & 15 &  3.3 &  4.8 &  1.9 &  3.7 &  &  0.4 &  0.8 &  0.1 &  0.6 &  &  3.8 &  5.1 &  2.9 &  4.2 &  &  0.5 &  0.8 &  0.2 &  0.7 \\
\multicolumn{21}{l}{\textbf{$ \bullet $ Length of 90\%-quantiles}} \\ 
	SVAR (BC) & 0 & 160.1 & 165.1 & 163.1 & 165.3 &  &  59.0 &  59.7 &  58.0 &  62.4 &  & 163.0 & 167.5 & 160.4 & 166.4 &  &  61.2 &  62.1 &  59.4 &  61.9 \\
	  & 5 & 206.3 & 221.6 & 211.6 & 215.7 &  &  96.5 &  99.4 &  95.7 & 101.8 &  & 211.1 & 222.5 & 210.6 & 223.3 &  &  99.5 & 100.0 &  97.6 & 102.7 \\
	  & 15 & 178.4 & 181.9 & 180.2 & 179.6 &  &  72.1 &  74.3 &  73.2 &  75.7 &  & 174.6 & 183.8 & 180.8 & 186.6 &  &  73.2 &  73.9 &  73.7 &  75.0 \\
	MG-SVAR & 0 &  34.8 &  36.4 &  35.6 &  37.8 &  &  14.7 &  17.2 &  15.2 &  17.5 &  &  15.7 &  16.1 &  17.0 &  17.3 &  &   6.6 &   7.6 &   7.2 &   8.1 \\
	  & 5 &  40.6 &  41.6 &  46.8 &  49.0 &  &  25.4 &  28.7 &  29.0 &  31.4 &  &  18.2 &  18.8 &  24.7 &  25.3 &  &  11.4 &  12.4 &  15.0 &  15.2 \\
	  & 15 &  21.0 &  21.1 &  32.5 &  31.3 &  &  17.1 &  18.6 &  23.8 &  24.4 &  &   9.4 &   9.2 &  20.2 &  19.3 &  &   7.7 &   8.0 &  14.6 &  13.9 \\
	PMG-SVAR (BC) & 0 &  29.9 &  32.1 &  31.9 &  33.9 &  &  13.2 &  15.8 &  14.2 &  16.6 &  &  13.7 &  14.3 &  15.2 &  15.8 &  &   6.0 &   7.0 &   6.7 &   7.7 \\
	  & 5 &  52.4 &  53.9 &  64.6 &  63.2 &  &  25.0 &  28.0 &  29.1 &  31.5 &  &  24.0 &  24.3 &  34.5 &  34.4 &  &  11.3 &  12.1 &  15.6 &  16.1 \\
	  & 15 &  47.1 &  46.2 &  71.2 &  68.0 &  &  19.1 &  21.1 &  27.4 &  28.0 &  &  20.9 &  20.3 &  43.5 &  41.1 &  &   8.8 &   9.1 &  16.5 &  16.1 \\
\multicolumn{21}{l}{\textbf{$ \bullet $ Root mean square error}} \\ 
	SVAR (BC) & 0 & 52.0 & 55.8 & 54.5 & 54.4 &  & 20.3 & 20.6 & 18.6 & 21.9 &  & 54.1 & 56.7 & 52.5 & 55.7 &  & 19.1 & 20.5 & 19.9 & 21.6 \\
	  & 5 & 64.5 & 70.4 & 67.0 & 67.7 &  & 31.7 & 32.8 & 30.0 & 34.1 &  & 65.9 & 69.9 & 67.3 & 69.8 &  & 30.8 & 32.4 & 31.6 & 33.6 \\
	  & 15 & 55.9 & 58.1 & 57.1 & 57.7 &  & 22.8 & 23.7 & 22.5 & 24.5 &  & 56.2 & 57.8 & 57.6 & 58.9 &  & 22.5 & 23.6 & 23.0 & 24.0 \\
	MG-SVAR & 0 & 24.0 & 25.0 & 24.3 & 25.4 &  &  5.8 &  6.4 &  5.9 &  6.6 &  & 21.6 & 22.5 & 21.7 & 22.6 &  &  4.0 &  4.3 &  4.1 &  4.4 \\
	  & 5 & 32.7 & 34.3 & 33.3 & 34.8 &  & 10.5 & 11.5 & 11.3 & 12.1 &  & 30.9 & 32.1 & 31.1 & 32.6 &  &  7.9 &  8.4 &  8.5 &  9.0 \\
	  & 15 & 35.3 & 35.9 & 35.5 & 36.1 &  & 10.3 & 10.8 & 11.3 & 11.7 &  & 35.2 & 35.6 & 35.3 & 35.7 &  &  9.2 &  9.6 &  9.8 & 10.1 \\
	PMG-SVAR (BC) & 0 &  9.3 & 10.0 &  9.9 & 10.6 &  &  4.0 &  4.9 &  4.3 &  5.1 &  &  4.3 &  4.6 &  4.8 &  5.1 &  &  1.8 &  2.2 &  2.0 &  2.4 \\
	  & 5 & 16.1 & 16.8 & 19.4 & 19.9 &  &  7.6 &  8.7 &  8.9 &  9.6 &  &  7.3 &  7.9 & 10.6 & 10.9 &  &  3.4 &  3.7 &  4.8 &  5.0 \\
	  & 15 & 14.5 & 14.8 & 21.8 & 21.4 &  &  5.9 &  6.5 &  8.4 &  8.7 &  &  7.4 &  8.1 & 13.7 & 13.6 &  &  2.7 &  2.9 &  5.2 &  5.0 \\
\hline \hline 
\end{tabular}

%% file: Fig_ari.tex
\begin{tikzpicture}[x=1pt,y=1pt]
\definecolor{fillColor}{RGB}{255,255,255}
\path[use as bounding box,fill=fillColor,fill opacity=0.00] (0,0) rectangle (441.02,264.61);
\begin{scope}
\path[clip] (  0.00,  0.00) rectangle (441.02,264.61);
\definecolor{drawColor}{RGB}{255,255,255}
\definecolor{fillColor}{RGB}{255,255,255}

\path[draw=drawColor,line width= 0.6pt,line join=round,line cap=round,fill=fillColor] (  0.00,  0.00) rectangle (441.02,264.61);
\end{scope}
\begin{scope}
\path[clip] ( 38.56,147.65) rectangle (178.39,242.54);
\definecolor{fillColor}{RGB}{255,255,255}

\path[fill=fillColor] ( 38.56,147.65) rectangle (178.39,242.54);
\definecolor{drawColor}{gray}{0.92}

\path[draw=drawColor,line width= 0.3pt,line join=round] ( 38.56,162.74) --
	(178.39,162.74);

\path[draw=drawColor,line width= 0.3pt,line join=round] ( 38.56,184.31) --
	(178.39,184.31);

\path[draw=drawColor,line width= 0.3pt,line join=round] ( 38.56,205.88) --
	(178.39,205.88);

\path[draw=drawColor,line width= 0.3pt,line join=round] ( 38.56,227.44) --
	(178.39,227.44);

\path[draw=drawColor,line width= 0.6pt,line join=round] ( 38.56,151.96) --
	(178.39,151.96);

\path[draw=drawColor,line width= 0.6pt,line join=round] ( 38.56,173.53) --
	(178.39,173.53);

\path[draw=drawColor,line width= 0.6pt,line join=round] ( 38.56,195.09) --
	(178.39,195.09);

\path[draw=drawColor,line width= 0.6pt,line join=round] ( 38.56,216.66) --
	(178.39,216.66);

\path[draw=drawColor,line width= 0.6pt,line join=round] ( 38.56,238.23) --
	(178.39,238.23);

\path[draw=drawColor,line width= 0.6pt,line join=round] ( 58.53,147.65) --
	( 58.53,242.54);

\path[draw=drawColor,line width= 0.6pt,line join=round] ( 91.83,147.65) --
	( 91.83,242.54);

\path[draw=drawColor,line width= 0.6pt,line join=round] (125.12,147.65) --
	(125.12,242.54);

\path[draw=drawColor,line width= 0.6pt,line join=round] (158.42,147.65) --
	(158.42,242.54);
\definecolor{drawColor}{gray}{0.20}

\path[draw=drawColor,line width= 0.6pt,line join=round] ( 50.21,205.22) -- ( 50.21,238.10);

\path[draw=drawColor,line width= 0.6pt,line join=round] ( 50.21,167.33) -- ( 50.21,152.04);
\definecolor{fillColor}{RGB}{213,62,79}

\path[draw=drawColor,line width= 0.6pt,fill=fillColor,fill opacity=0.60] ( 46.46,205.22) --
	( 46.46,167.33) --
	( 53.95,167.33) --
	( 53.95,205.22) --
	( 46.46,205.22) --
	cycle;

\path[draw=drawColor,line width= 1.1pt] ( 46.46,183.75) -- ( 53.95,183.75);

\path[draw=drawColor,line width= 0.6pt,line join=round] ( 58.53,178.44) -- ( 58.53,237.29);

\path[draw=drawColor,line width= 0.6pt,line join=round] ( 58.53,159.86) -- ( 58.53,151.97);
\definecolor{fillColor}{RGB}{254,224,139}

\path[draw=drawColor,line width= 0.6pt,fill=fillColor,fill opacity=0.60] ( 54.79,178.44) --
	( 54.79,159.86) --
	( 62.28,159.86) --
	( 62.28,178.44) --
	( 54.79,178.44) --
	cycle;

\path[draw=drawColor,line width= 1.1pt] ( 54.79,167.65) -- ( 62.28,167.65);

\path[draw=drawColor,line width= 0.6pt,line join=round] ( 66.86,165.79) -- ( 66.86,201.15);

\path[draw=drawColor,line width= 0.6pt,line join=round] ( 66.86,155.75) -- ( 66.86,151.96);
\definecolor{fillColor}{RGB}{50,136,189}

\path[draw=drawColor,line width= 0.6pt,fill=fillColor,fill opacity=0.60] ( 63.11,165.79) --
	( 63.11,155.75) --
	( 70.60,155.75) --
	( 70.60,165.79) --
	( 63.11,165.79) --
	cycle;

\path[draw=drawColor,line width= 1.1pt] ( 63.11,160.05) -- ( 70.60,160.05);

\path[draw=drawColor,line width= 0.6pt,line join=round] ( 83.50,207.90) -- ( 83.50,238.17);

\path[draw=drawColor,line width= 0.6pt,line join=round] ( 83.50,167.24) -- ( 83.50,151.99);
\definecolor{fillColor}{RGB}{213,62,79}

\path[draw=drawColor,line width= 0.6pt,fill=fillColor,fill opacity=0.60] ( 79.76,207.90) --
	( 79.76,167.24) --
	( 87.25,167.24) --
	( 87.25,207.90) --
	( 79.76,207.90) --
	cycle;

\path[draw=drawColor,line width= 1.1pt] ( 79.76,186.26) -- ( 87.25,186.26);

\path[draw=drawColor,line width= 0.6pt,line join=round] ( 91.83,180.36) -- ( 91.83,237.54);

\path[draw=drawColor,line width= 0.6pt,line join=round] ( 91.83,160.35) -- ( 91.83,151.99);
\definecolor{fillColor}{RGB}{254,224,139}

\path[draw=drawColor,line width= 0.6pt,fill=fillColor,fill opacity=0.60] ( 88.08,180.36) --
	( 88.08,160.35) --
	( 95.57,160.35) --
	( 95.57,180.36) --
	( 88.08,180.36) --
	cycle;

\path[draw=drawColor,line width= 1.1pt] ( 88.08,168.82) -- ( 95.57,168.82);

\path[draw=drawColor,line width= 0.6pt,line join=round] (100.15,165.94) -- (100.15,201.67);

\path[draw=drawColor,line width= 0.6pt,line join=round] (100.15,155.95) -- (100.15,151.98);
\definecolor{fillColor}{RGB}{50,136,189}

\path[draw=drawColor,line width= 0.6pt,fill=fillColor,fill opacity=0.60] ( 96.40,165.94) --
	( 96.40,155.95) --
	(103.90,155.95) --
	(103.90,165.94) --
	( 96.40,165.94) --
	cycle;

\path[draw=drawColor,line width= 1.1pt] ( 96.40,160.20) -- (103.90,160.20);

\path[draw=drawColor,line width= 0.6pt,line join=round] (116.80,206.37) -- (116.80,238.00);

\path[draw=drawColor,line width= 0.6pt,line join=round] (116.80,167.06) -- (116.80,151.98);
\definecolor{fillColor}{RGB}{213,62,79}

\path[draw=drawColor,line width= 0.6pt,fill=fillColor,fill opacity=0.60] (113.05,206.37) --
	(113.05,167.06) --
	(120.54,167.06) --
	(120.54,206.37) --
	(113.05,206.37) --
	cycle;

\path[draw=drawColor,line width= 1.1pt] (113.05,185.01) -- (120.54,185.01);

\path[draw=drawColor,line width= 0.6pt,line join=round] (125.12,179.18) -- (125.12,236.25);

\path[draw=drawColor,line width= 0.6pt,line join=round] (125.12,160.35) -- (125.12,152.01);
\definecolor{fillColor}{RGB}{254,224,139}

\path[draw=drawColor,line width= 0.6pt,fill=fillColor,fill opacity=0.60] (121.38,179.18) --
	(121.38,160.35) --
	(128.87,160.35) --
	(128.87,179.18) --
	(121.38,179.18) --
	cycle;

\path[draw=drawColor,line width= 1.1pt] (121.38,168.75) -- (128.87,168.75);

\path[draw=drawColor,line width= 0.6pt,line join=round] (133.44,166.09) -- (133.44,200.93);

\path[draw=drawColor,line width= 0.6pt,line join=round] (133.44,155.79) -- (133.44,151.97);
\definecolor{fillColor}{RGB}{50,136,189}

\path[draw=drawColor,line width= 0.6pt,fill=fillColor,fill opacity=0.60] (129.70,166.09) --
	(129.70,155.79) --
	(137.19,155.79) --
	(137.19,166.09) --
	(129.70,166.09) --
	cycle;

\path[draw=drawColor,line width= 1.1pt] (129.70,160.32) -- (137.19,160.32);

\path[draw=drawColor,line width= 0.6pt,line join=round] (150.09,206.33) -- (150.09,238.19);

\path[draw=drawColor,line width= 0.6pt,line join=round] (150.09,167.34) -- (150.09,152.01);
\definecolor{fillColor}{RGB}{213,62,79}

\path[draw=drawColor,line width= 0.6pt,fill=fillColor,fill opacity=0.60] (146.35,206.33) --
	(146.35,167.34) --
	(153.84,167.34) --
	(153.84,206.33) --
	(146.35,206.33) --
	cycle;

\path[draw=drawColor,line width= 1.1pt] (146.35,183.97) -- (153.84,183.97);

\path[draw=drawColor,line width= 0.6pt,line join=round] (158.42,179.97) -- (158.42,238.14);

\path[draw=drawColor,line width= 0.6pt,line join=round] (158.42,159.83) -- (158.42,152.02);
\definecolor{fillColor}{RGB}{254,224,139}

\path[draw=drawColor,line width= 0.6pt,fill=fillColor,fill opacity=0.60] (154.67,179.97) --
	(154.67,159.83) --
	(162.16,159.83) --
	(162.16,179.97) --
	(154.67,179.97) --
	cycle;

\path[draw=drawColor,line width= 1.1pt] (154.67,168.64) -- (162.16,168.64);

\path[draw=drawColor,line width= 0.6pt,line join=round] (166.74,166.48) -- (166.74,204.52);

\path[draw=drawColor,line width= 0.6pt,line join=round] (166.74,155.97) -- (166.74,151.97);
\definecolor{fillColor}{RGB}{50,136,189}

\path[draw=drawColor,line width= 0.6pt,fill=fillColor,fill opacity=0.60] (162.99,166.48) --
	(162.99,155.97) --
	(170.49,155.97) --
	(170.49,166.48) --
	(162.99,166.48) --
	cycle;

\path[draw=drawColor,line width= 1.1pt] (162.99,160.44) -- (170.49,160.44);

\path[draw=drawColor,line width= 0.6pt,line join=round,line cap=round] ( 38.56,147.65) rectangle (178.39,242.54);
\end{scope}
\begin{scope}
\path[clip] ( 38.56, 30.69) rectangle (178.39,125.58);
\definecolor{fillColor}{RGB}{255,255,255}

\path[fill=fillColor] ( 38.56, 30.69) rectangle (178.39,125.58);
\definecolor{drawColor}{gray}{0.92}

\path[draw=drawColor,line width= 0.3pt,line join=round] ( 38.56, 45.78) --
	(178.39, 45.78);

\path[draw=drawColor,line width= 0.3pt,line join=round] ( 38.56, 67.35) --
	(178.39, 67.35);

\path[draw=drawColor,line width= 0.3pt,line join=round] ( 38.56, 88.91) --
	(178.39, 88.91);

\path[draw=drawColor,line width= 0.3pt,line join=round] ( 38.56,110.48) --
	(178.39,110.48);

\path[draw=drawColor,line width= 0.6pt,line join=round] ( 38.56, 35.00) --
	(178.39, 35.00);

\path[draw=drawColor,line width= 0.6pt,line join=round] ( 38.56, 56.57) --
	(178.39, 56.57);

\path[draw=drawColor,line width= 0.6pt,line join=round] ( 38.56, 78.13) --
	(178.39, 78.13);

\path[draw=drawColor,line width= 0.6pt,line join=round] ( 38.56, 99.70) --
	(178.39, 99.70);

\path[draw=drawColor,line width= 0.6pt,line join=round] ( 38.56,121.26) --
	(178.39,121.26);

\path[draw=drawColor,line width= 0.6pt,line join=round] ( 58.53, 30.69) --
	( 58.53,125.58);

\path[draw=drawColor,line width= 0.6pt,line join=round] ( 91.83, 30.69) --
	( 91.83,125.58);

\path[draw=drawColor,line width= 0.6pt,line join=round] (125.12, 30.69) --
	(125.12,125.58);

\path[draw=drawColor,line width= 0.6pt,line join=round] (158.42, 30.69) --
	(158.42,125.58);
\definecolor{drawColor}{gray}{0.20}

\path[draw=drawColor,line width= 0.6pt,line join=round] ( 50.21, 89.11) -- ( 50.21,121.15);

\path[draw=drawColor,line width= 0.6pt,line join=round] ( 50.21, 49.17) -- ( 50.21, 35.08);
\definecolor{fillColor}{RGB}{213,62,79}

\path[draw=drawColor,line width= 0.6pt,fill=fillColor,fill opacity=0.60] ( 46.46, 89.11) --
	( 46.46, 49.17) --
	( 53.95, 49.17) --
	( 53.95, 89.11) --
	( 46.46, 89.11) --
	cycle;

\path[draw=drawColor,line width= 1.1pt] ( 46.46, 66.66) -- ( 53.95, 66.66);

\path[draw=drawColor,line width= 0.6pt,line join=round] ( 58.53, 55.85) -- ( 58.53, 76.90);

\path[draw=drawColor,line width= 0.6pt,line join=round] ( 58.53, 46.36) -- ( 58.53, 35.08);
\definecolor{fillColor}{RGB}{254,224,139}

\path[draw=drawColor,line width= 0.6pt,fill=fillColor,fill opacity=0.60] ( 54.79, 55.85) --
	( 54.79, 46.36) --
	( 62.28, 46.36) --
	( 62.28, 55.85) --
	( 54.79, 55.85) --
	cycle;

\path[draw=drawColor,line width= 1.1pt] ( 54.79, 50.78) -- ( 62.28, 50.78);

\path[draw=drawColor,line width= 0.6pt,line join=round] ( 66.86, 41.14) -- ( 66.86, 55.21);

\path[draw=drawColor,line width= 0.6pt,line join=round] ( 66.86, 36.70) -- ( 66.86, 35.00);
\definecolor{fillColor}{RGB}{50,136,189}

\path[draw=drawColor,line width= 0.6pt,fill=fillColor,fill opacity=0.60] ( 63.11, 41.14) --
	( 63.11, 36.70) --
	( 70.60, 36.70) --
	( 70.60, 41.14) --
	( 63.11, 41.14) --
	cycle;

\path[draw=drawColor,line width= 1.1pt] ( 63.11, 38.60) -- ( 70.60, 38.60);

\path[draw=drawColor,line width= 0.6pt,line join=round] ( 83.50, 89.69) -- ( 83.50,121.14);

\path[draw=drawColor,line width= 0.6pt,line join=round] ( 83.50, 49.72) -- ( 83.50, 35.09);
\definecolor{fillColor}{RGB}{213,62,79}

\path[draw=drawColor,line width= 0.6pt,fill=fillColor,fill opacity=0.60] ( 79.76, 89.69) --
	( 79.76, 49.72) --
	( 87.25, 49.72) --
	( 87.25, 89.69) --
	( 79.76, 89.69) --
	cycle;

\path[draw=drawColor,line width= 1.1pt] ( 79.76, 66.34) -- ( 87.25, 66.34);

\path[draw=drawColor,line width= 0.6pt,line join=round] ( 91.83, 56.83) -- ( 91.83, 84.03);

\path[draw=drawColor,line width= 0.6pt,line join=round] ( 91.83, 47.21) -- ( 91.83, 35.05);
\definecolor{fillColor}{RGB}{254,224,139}

\path[draw=drawColor,line width= 0.6pt,fill=fillColor,fill opacity=0.60] ( 88.08, 56.83) --
	( 88.08, 47.21) --
	( 95.57, 47.21) --
	( 95.57, 56.83) --
	( 88.08, 56.83) --
	cycle;

\path[draw=drawColor,line width= 1.1pt] ( 88.08, 51.82) -- ( 95.57, 51.82);

\path[draw=drawColor,line width= 0.6pt,line join=round] (100.15, 41.24) -- (100.15, 53.36);

\path[draw=drawColor,line width= 0.6pt,line join=round] (100.15, 36.77) -- (100.15, 35.00);
\definecolor{fillColor}{RGB}{50,136,189}

\path[draw=drawColor,line width= 0.6pt,fill=fillColor,fill opacity=0.60] ( 96.40, 41.24) --
	( 96.40, 36.77) --
	(103.90, 36.77) --
	(103.90, 41.24) --
	( 96.40, 41.24) --
	cycle;

\path[draw=drawColor,line width= 1.1pt] ( 96.40, 38.68) -- (103.90, 38.68);

\path[draw=drawColor,line width= 0.6pt,line join=round] (116.80, 90.11) -- (116.80,121.12);

\path[draw=drawColor,line width= 0.6pt,line join=round] (116.80, 49.47) -- (116.80, 35.09);
\definecolor{fillColor}{RGB}{213,62,79}

\path[draw=drawColor,line width= 0.6pt,fill=fillColor,fill opacity=0.60] (113.05, 90.11) --
	(113.05, 49.47) --
	(120.54, 49.47) --
	(120.54, 90.11) --
	(113.05, 90.11) --
	cycle;

\path[draw=drawColor,line width= 1.1pt] (113.05, 67.31) -- (120.54, 67.31);

\path[draw=drawColor,line width= 0.6pt,line join=round] (125.12, 55.98) -- (125.12, 80.89);

\path[draw=drawColor,line width= 0.6pt,line join=round] (125.12, 46.68) -- (125.12, 35.00);
\definecolor{fillColor}{RGB}{254,224,139}

\path[draw=drawColor,line width= 0.6pt,fill=fillColor,fill opacity=0.60] (121.38, 55.98) --
	(121.38, 46.68) --
	(128.87, 46.68) --
	(128.87, 55.98) --
	(121.38, 55.98) --
	cycle;

\path[draw=drawColor,line width= 1.1pt] (121.38, 51.38) -- (128.87, 51.38);

\path[draw=drawColor,line width= 0.6pt,line join=round] (133.44, 41.61) -- (133.44, 56.30);

\path[draw=drawColor,line width= 0.6pt,line join=round] (133.44, 36.86) -- (133.44, 35.00);
\definecolor{fillColor}{RGB}{50,136,189}

\path[draw=drawColor,line width= 0.6pt,fill=fillColor,fill opacity=0.60] (129.70, 41.61) --
	(129.70, 36.86) --
	(137.19, 36.86) --
	(137.19, 41.61) --
	(129.70, 41.61) --
	cycle;

\path[draw=drawColor,line width= 1.1pt] (129.70, 38.80) -- (137.19, 38.80);

\path[draw=drawColor,line width= 0.6pt,line join=round] (150.09, 88.25) -- (150.09,121.24);

\path[draw=drawColor,line width= 0.6pt,line join=round] (150.09, 49.43) -- (150.09, 35.03);
\definecolor{fillColor}{RGB}{213,62,79}

\path[draw=drawColor,line width= 0.6pt,fill=fillColor,fill opacity=0.60] (146.35, 88.25) --
	(146.35, 49.43) --
	(153.84, 49.43) --
	(153.84, 88.25) --
	(146.35, 88.25) --
	cycle;

\path[draw=drawColor,line width= 1.1pt] (146.35, 67.11) -- (153.84, 67.11);

\path[draw=drawColor,line width= 0.6pt,line join=round] (158.42, 56.96) -- (158.42, 78.29);

\path[draw=drawColor,line width= 0.6pt,line join=round] (158.42, 47.29) -- (158.42, 35.01);
\definecolor{fillColor}{RGB}{254,224,139}

\path[draw=drawColor,line width= 0.6pt,fill=fillColor,fill opacity=0.60] (154.67, 56.96) --
	(154.67, 47.29) --
	(162.16, 47.29) --
	(162.16, 56.96) --
	(154.67, 56.96) --
	cycle;

\path[draw=drawColor,line width= 1.1pt] (154.67, 52.12) -- (162.16, 52.12);

\path[draw=drawColor,line width= 0.6pt,line join=round] (166.74, 41.32) -- (166.74, 55.14);

\path[draw=drawColor,line width= 0.6pt,line join=round] (166.74, 36.73) -- (166.74, 35.00);
\definecolor{fillColor}{RGB}{50,136,189}

\path[draw=drawColor,line width= 0.6pt,fill=fillColor,fill opacity=0.60] (162.99, 41.32) --
	(162.99, 36.73) --
	(170.49, 36.73) --
	(170.49, 41.32) --
	(162.99, 41.32) --
	cycle;

\path[draw=drawColor,line width= 1.1pt] (162.99, 38.68) -- (170.49, 38.68);

\path[draw=drawColor,line width= 0.6pt,line join=round,line cap=round] ( 38.56, 30.69) rectangle (178.39,125.58);
\end{scope}
\begin{scope}
\path[clip] (183.89,147.65) rectangle (323.73,242.54);
\definecolor{fillColor}{RGB}{255,255,255}

\path[fill=fillColor] (183.89,147.65) rectangle (323.73,242.54);
\definecolor{drawColor}{gray}{0.92}

\path[draw=drawColor,line width= 0.3pt,line join=round] (183.89,162.74) --
	(323.73,162.74);

\path[draw=drawColor,line width= 0.3pt,line join=round] (183.89,184.31) --
	(323.73,184.31);

\path[draw=drawColor,line width= 0.3pt,line join=round] (183.89,205.88) --
	(323.73,205.88);

\path[draw=drawColor,line width= 0.3pt,line join=round] (183.89,227.44) --
	(323.73,227.44);

\path[draw=drawColor,line width= 0.6pt,line join=round] (183.89,151.96) --
	(323.73,151.96);

\path[draw=drawColor,line width= 0.6pt,line join=round] (183.89,173.53) --
	(323.73,173.53);

\path[draw=drawColor,line width= 0.6pt,line join=round] (183.89,195.09) --
	(323.73,195.09);

\path[draw=drawColor,line width= 0.6pt,line join=round] (183.89,216.66) --
	(323.73,216.66);

\path[draw=drawColor,line width= 0.6pt,line join=round] (183.89,238.23) --
	(323.73,238.23);

\path[draw=drawColor,line width= 0.6pt,line join=round] (203.87,147.65) --
	(203.87,242.54);

\path[draw=drawColor,line width= 0.6pt,line join=round] (237.16,147.65) --
	(237.16,242.54);

\path[draw=drawColor,line width= 0.6pt,line join=round] (270.46,147.65) --
	(270.46,242.54);

\path[draw=drawColor,line width= 0.6pt,line join=round] (303.75,147.65) --
	(303.75,242.54);
\definecolor{drawColor}{gray}{0.20}

\path[draw=drawColor,line width= 0.6pt,line join=round] (195.55,179.45) -- (195.55,238.22);

\path[draw=drawColor,line width= 0.6pt,line join=round] (195.55,159.51) -- (195.55,151.99);
\definecolor{fillColor}{RGB}{213,62,79}

\path[draw=drawColor,line width= 0.6pt,fill=fillColor,fill opacity=0.60] (191.80,179.45) --
	(191.80,159.51) --
	(199.29,159.51) --
	(199.29,179.45) --
	(191.80,179.45) --
	cycle;

\path[draw=drawColor,line width= 1.1pt] (191.80,168.18) -- (199.29,168.18);

\path[draw=drawColor,line width= 0.6pt,line join=round] (203.87,158.14) -- (203.87,171.59);

\path[draw=drawColor,line width= 0.6pt,line join=round] (203.87,153.72) -- (203.87,151.97);
\definecolor{fillColor}{RGB}{254,224,139}

\path[draw=drawColor,line width= 0.6pt,fill=fillColor,fill opacity=0.60] (200.12,158.14) --
	(200.12,153.72) --
	(207.62,153.72) --
	(207.62,158.14) --
	(200.12,158.14) --
	cycle;

\path[draw=drawColor,line width= 1.1pt] (200.12,155.55) -- (207.62,155.55);

\path[draw=drawColor,line width= 0.6pt,line join=round] (212.19,157.89) -- (212.19,172.38);

\path[draw=drawColor,line width= 0.6pt,line join=round] (212.19,153.61) -- (212.19,151.97);
\definecolor{fillColor}{RGB}{50,136,189}

\path[draw=drawColor,line width= 0.6pt,fill=fillColor,fill opacity=0.60] (208.45,157.89) --
	(208.45,153.61) --
	(215.94,153.61) --
	(215.94,157.89) --
	(208.45,157.89) --
	cycle;

\path[draw=drawColor,line width= 1.1pt] (208.45,155.34) -- (215.94,155.34);

\path[draw=drawColor,line width= 0.6pt,line join=round] (228.84,179.64) -- (228.84,238.12);

\path[draw=drawColor,line width= 0.6pt,line join=round] (228.84,159.52) -- (228.84,152.03);
\definecolor{fillColor}{RGB}{213,62,79}

\path[draw=drawColor,line width= 0.6pt,fill=fillColor,fill opacity=0.60] (225.09,179.64) --
	(225.09,159.52) --
	(232.59,159.52) --
	(232.59,179.64) --
	(225.09,179.64) --
	cycle;

\path[draw=drawColor,line width= 1.1pt] (225.09,167.65) -- (232.59,167.65);

\path[draw=drawColor,line width= 0.6pt,line join=round] (237.16,158.41) -- (237.16,173.51);

\path[draw=drawColor,line width= 0.6pt,line join=round] (237.16,153.79) -- (237.16,151.96);
\definecolor{fillColor}{RGB}{254,224,139}

\path[draw=drawColor,line width= 0.6pt,fill=fillColor,fill opacity=0.60] (233.42,158.41) --
	(233.42,153.79) --
	(240.91,153.79) --
	(240.91,158.41) --
	(233.42,158.41) --
	cycle;

\path[draw=drawColor,line width= 1.1pt] (233.42,155.74) -- (240.91,155.74);

\path[draw=drawColor,line width= 0.6pt,line join=round] (245.49,157.99) -- (245.49,171.33);

\path[draw=drawColor,line width= 0.6pt,line join=round] (245.49,153.62) -- (245.49,151.96);
\definecolor{fillColor}{RGB}{50,136,189}

\path[draw=drawColor,line width= 0.6pt,fill=fillColor,fill opacity=0.60] (241.74,157.99) --
	(241.74,153.62) --
	(249.23,153.62) --
	(249.23,157.99) --
	(241.74,157.99) --
	cycle;

\path[draw=drawColor,line width= 1.1pt] (241.74,155.45) -- (249.23,155.45);

\path[draw=drawColor,line width= 0.6pt,line join=round] (262.14,178.98) -- (262.14,237.33);

\path[draw=drawColor,line width= 0.6pt,line join=round] (262.14,159.31) -- (262.14,151.98);
\definecolor{fillColor}{RGB}{213,62,79}

\path[draw=drawColor,line width= 0.6pt,fill=fillColor,fill opacity=0.60] (258.39,178.98) --
	(258.39,159.31) --
	(265.88,159.31) --
	(265.88,178.98) --
	(258.39,178.98) --
	cycle;

\path[draw=drawColor,line width= 1.1pt] (258.39,167.31) -- (265.88,167.31);

\path[draw=drawColor,line width= 0.6pt,line join=round] (270.46,158.32) -- (270.46,171.77);

\path[draw=drawColor,line width= 0.6pt,line join=round] (270.46,153.65) -- (270.46,151.96);
\definecolor{fillColor}{RGB}{254,224,139}

\path[draw=drawColor,line width= 0.6pt,fill=fillColor,fill opacity=0.60] (266.71,158.32) --
	(266.71,153.65) --
	(274.20,153.65) --
	(274.20,158.32) --
	(266.71,158.32) --
	cycle;

\path[draw=drawColor,line width= 1.1pt] (266.71,155.68) -- (274.20,155.68);

\path[draw=drawColor,line width= 0.6pt,line join=round] (278.78,158.07) -- (278.78,171.05);

\path[draw=drawColor,line width= 0.6pt,line join=round] (278.78,153.60) -- (278.78,151.97);
\definecolor{fillColor}{RGB}{50,136,189}

\path[draw=drawColor,line width= 0.6pt,fill=fillColor,fill opacity=0.60] (275.04,158.07) --
	(275.04,153.60) --
	(282.53,153.60) --
	(282.53,158.07) --
	(275.04,158.07) --
	cycle;

\path[draw=drawColor,line width= 1.1pt] (275.04,155.46) -- (282.53,155.46);

\path[draw=drawColor,line width= 0.6pt,line join=round] (295.43,180.95) -- (295.43,237.35);

\path[draw=drawColor,line width= 0.6pt,line join=round] (295.43,159.25) -- (295.43,151.98);
\definecolor{fillColor}{RGB}{213,62,79}

\path[draw=drawColor,line width= 0.6pt,fill=fillColor,fill opacity=0.60] (291.68,180.95) --
	(291.68,159.25) --
	(299.18,159.25) --
	(299.18,180.95) --
	(291.68,180.95) --
	cycle;

\path[draw=drawColor,line width= 1.1pt] (291.68,167.91) -- (299.18,167.91);

\path[draw=drawColor,line width= 0.6pt,line join=round] (303.75,158.44) -- (303.75,173.48);

\path[draw=drawColor,line width= 0.6pt,line join=round] (303.75,153.85) -- (303.75,151.97);
\definecolor{fillColor}{RGB}{254,224,139}

\path[draw=drawColor,line width= 0.6pt,fill=fillColor,fill opacity=0.60] (300.01,158.44) --
	(300.01,153.85) --
	(307.50,153.85) --
	(307.50,158.44) --
	(300.01,158.44) --
	cycle;

\path[draw=drawColor,line width= 1.1pt] (300.01,155.81) -- (307.50,155.81);

\path[draw=drawColor,line width= 0.6pt,line join=round] (312.08,158.13) -- (312.08,172.69);

\path[draw=drawColor,line width= 0.6pt,line join=round] (312.08,153.63) -- (312.08,151.96);
\definecolor{fillColor}{RGB}{50,136,189}

\path[draw=drawColor,line width= 0.6pt,fill=fillColor,fill opacity=0.60] (308.33,158.13) --
	(308.33,153.63) --
	(315.82,153.63) --
	(315.82,158.13) --
	(308.33,158.13) --
	cycle;

\path[draw=drawColor,line width= 1.1pt] (308.33,155.46) -- (315.82,155.46);

\path[draw=drawColor,line width= 0.6pt,line join=round,line cap=round] (183.89,147.65) rectangle (323.73,242.54);
\end{scope}
\begin{scope}
\path[clip] (183.89, 30.69) rectangle (323.73,125.58);
\definecolor{fillColor}{RGB}{255,255,255}

\path[fill=fillColor] (183.89, 30.69) rectangle (323.73,125.58);
\definecolor{drawColor}{gray}{0.92}

\path[draw=drawColor,line width= 0.3pt,line join=round] (183.89, 45.78) --
	(323.73, 45.78);

\path[draw=drawColor,line width= 0.3pt,line join=round] (183.89, 67.35) --
	(323.73, 67.35);

\path[draw=drawColor,line width= 0.3pt,line join=round] (183.89, 88.91) --
	(323.73, 88.91);

\path[draw=drawColor,line width= 0.3pt,line join=round] (183.89,110.48) --
	(323.73,110.48);

\path[draw=drawColor,line width= 0.6pt,line join=round] (183.89, 35.00) --
	(323.73, 35.00);

\path[draw=drawColor,line width= 0.6pt,line join=round] (183.89, 56.57) --
	(323.73, 56.57);

\path[draw=drawColor,line width= 0.6pt,line join=round] (183.89, 78.13) --
	(323.73, 78.13);

\path[draw=drawColor,line width= 0.6pt,line join=round] (183.89, 99.70) --
	(323.73, 99.70);

\path[draw=drawColor,line width= 0.6pt,line join=round] (183.89,121.26) --
	(323.73,121.26);

\path[draw=drawColor,line width= 0.6pt,line join=round] (203.87, 30.69) --
	(203.87,125.58);

\path[draw=drawColor,line width= 0.6pt,line join=round] (237.16, 30.69) --
	(237.16,125.58);

\path[draw=drawColor,line width= 0.6pt,line join=round] (270.46, 30.69) --
	(270.46,125.58);

\path[draw=drawColor,line width= 0.6pt,line join=round] (303.75, 30.69) --
	(303.75,125.58);
\definecolor{drawColor}{gray}{0.20}

\path[draw=drawColor,line width= 0.6pt,line join=round] (195.55, 63.23) -- (195.55,121.10);

\path[draw=drawColor,line width= 0.6pt,line join=round] (195.55, 42.61) -- (195.55, 35.01);
\definecolor{fillColor}{RGB}{213,62,79}

\path[draw=drawColor,line width= 0.6pt,fill=fillColor,fill opacity=0.60] (191.80, 63.23) --
	(191.80, 42.61) --
	(199.29, 42.61) --
	(199.29, 63.23) --
	(191.80, 63.23) --
	cycle;

\path[draw=drawColor,line width= 1.1pt] (191.80, 51.21) -- (199.29, 51.21);

\path[draw=drawColor,line width= 0.6pt,line join=round] (203.87, 37.90) -- (203.87, 45.27);

\path[draw=drawColor,line width= 0.6pt,line join=round] (203.87, 35.82) -- (203.87, 35.00);
\definecolor{fillColor}{RGB}{254,224,139}

\path[draw=drawColor,line width= 0.6pt,fill=fillColor,fill opacity=0.60] (200.12, 37.90) --
	(200.12, 35.82) --
	(207.62, 35.82) --
	(207.62, 37.90) --
	(200.12, 37.90) --
	cycle;

\path[draw=drawColor,line width= 1.1pt] (200.12, 36.69) -- (207.62, 36.69);

\path[draw=drawColor,line width= 0.6pt,line join=round] (212.19, 37.69) -- (212.19, 43.84);

\path[draw=drawColor,line width= 0.6pt,line join=round] (212.19, 35.66) -- (212.19, 35.00);
\definecolor{fillColor}{RGB}{50,136,189}

\path[draw=drawColor,line width= 0.6pt,fill=fillColor,fill opacity=0.60] (208.45, 37.69) --
	(208.45, 35.66) --
	(215.94, 35.66) --
	(215.94, 37.69) --
	(208.45, 37.69) --
	cycle;

\path[draw=drawColor,line width= 1.1pt] (208.45, 36.53) -- (215.94, 36.53);

\path[draw=drawColor,line width= 0.6pt,line join=round] (228.84, 64.80) -- (228.84,118.34);

\path[draw=drawColor,line width= 0.6pt,line join=round] (228.84, 42.42) -- (228.84, 35.03);
\definecolor{fillColor}{RGB}{213,62,79}

\path[draw=drawColor,line width= 0.6pt,fill=fillColor,fill opacity=0.60] (225.09, 64.80) --
	(225.09, 42.42) --
	(232.59, 42.42) --
	(232.59, 64.80) --
	(225.09, 64.80) --
	cycle;

\path[draw=drawColor,line width= 1.1pt] (225.09, 50.94) -- (232.59, 50.94);

\path[draw=drawColor,line width= 0.6pt,line join=round] (237.16, 37.91) -- (237.16, 44.55);

\path[draw=drawColor,line width= 0.6pt,line join=round] (237.16, 35.82) -- (237.16, 35.00);
\definecolor{fillColor}{RGB}{254,224,139}

\path[draw=drawColor,line width= 0.6pt,fill=fillColor,fill opacity=0.60] (233.42, 37.91) --
	(233.42, 35.82) --
	(240.91, 35.82) --
	(240.91, 37.91) --
	(233.42, 37.91) --
	cycle;

\path[draw=drawColor,line width= 1.1pt] (233.42, 36.69) -- (240.91, 36.69);

\path[draw=drawColor,line width= 0.6pt,line join=round] (245.49, 37.65) -- (245.49, 44.00);

\path[draw=drawColor,line width= 0.6pt,line join=round] (245.49, 35.70) -- (245.49, 35.00);
\definecolor{fillColor}{RGB}{50,136,189}

\path[draw=drawColor,line width= 0.6pt,fill=fillColor,fill opacity=0.60] (241.74, 37.65) --
	(241.74, 35.70) --
	(249.23, 35.70) --
	(249.23, 37.65) --
	(241.74, 37.65) --
	cycle;

\path[draw=drawColor,line width= 1.1pt] (241.74, 36.55) -- (249.23, 36.55);

\path[draw=drawColor,line width= 0.6pt,line join=round] (262.14, 62.86) -- (262.14,120.96);

\path[draw=drawColor,line width= 0.6pt,line join=round] (262.14, 42.07) -- (262.14, 35.00);
\definecolor{fillColor}{RGB}{213,62,79}

\path[draw=drawColor,line width= 0.6pt,fill=fillColor,fill opacity=0.60] (258.39, 62.86) --
	(258.39, 42.07) --
	(265.88, 42.07) --
	(265.88, 62.86) --
	(258.39, 62.86) --
	cycle;

\path[draw=drawColor,line width= 1.1pt] (258.39, 51.02) -- (265.88, 51.02);

\path[draw=drawColor,line width= 0.6pt,line join=round] (270.46, 38.02) -- (270.46, 43.83);

\path[draw=drawColor,line width= 0.6pt,line join=round] (270.46, 35.84) -- (270.46, 35.00);
\definecolor{fillColor}{RGB}{254,224,139}

\path[draw=drawColor,line width= 0.6pt,fill=fillColor,fill opacity=0.60] (266.71, 38.02) --
	(266.71, 35.84) --
	(274.20, 35.84) --
	(274.20, 38.02) --
	(266.71, 38.02) --
	cycle;

\path[draw=drawColor,line width= 1.1pt] (266.71, 36.76) -- (274.20, 36.76);

\path[draw=drawColor,line width= 0.6pt,line join=round] (278.78, 37.86) -- (278.78, 43.28);

\path[draw=drawColor,line width= 0.6pt,line join=round] (278.78, 35.77) -- (278.78, 35.00);
\definecolor{fillColor}{RGB}{50,136,189}

\path[draw=drawColor,line width= 0.6pt,fill=fillColor,fill opacity=0.60] (275.04, 37.86) --
	(275.04, 35.77) --
	(282.53, 35.77) --
	(282.53, 37.86) --
	(275.04, 37.86) --
	cycle;

\path[draw=drawColor,line width= 1.1pt] (275.04, 36.64) -- (282.53, 36.64);

\path[draw=drawColor,line width= 0.6pt,line join=round] (295.43, 63.23) -- (295.43,120.85);

\path[draw=drawColor,line width= 0.6pt,line join=round] (295.43, 42.13) -- (295.43, 35.00);
\definecolor{fillColor}{RGB}{213,62,79}

\path[draw=drawColor,line width= 0.6pt,fill=fillColor,fill opacity=0.60] (291.68, 63.23) --
	(291.68, 42.13) --
	(299.18, 42.13) --
	(299.18, 63.23) --
	(291.68, 63.23) --
	cycle;

\path[draw=drawColor,line width= 1.1pt] (291.68, 50.31) -- (299.18, 50.31);

\path[draw=drawColor,line width= 0.6pt,line join=round] (303.75, 38.02) -- (303.75, 44.20);

\path[draw=drawColor,line width= 0.6pt,line join=round] (303.75, 35.84) -- (303.75, 35.00);
\definecolor{fillColor}{RGB}{254,224,139}

\path[draw=drawColor,line width= 0.6pt,fill=fillColor,fill opacity=0.60] (300.01, 38.02) --
	(300.01, 35.84) --
	(307.50, 35.84) --
	(307.50, 38.02) --
	(300.01, 38.02) --
	cycle;

\path[draw=drawColor,line width= 1.1pt] (300.01, 36.76) -- (307.50, 36.76);

\path[draw=drawColor,line width= 0.6pt,line join=round] (312.08, 37.81) -- (312.08, 43.63);

\path[draw=drawColor,line width= 0.6pt,line join=round] (312.08, 35.78) -- (312.08, 35.00);
\definecolor{fillColor}{RGB}{50,136,189}

\path[draw=drawColor,line width= 0.6pt,fill=fillColor,fill opacity=0.60] (308.33, 37.81) --
	(308.33, 35.78) --
	(315.82, 35.78) --
	(315.82, 37.81) --
	(308.33, 37.81) --
	cycle;

\path[draw=drawColor,line width= 1.1pt] (308.33, 36.66) -- (315.82, 36.66);

\path[draw=drawColor,line width= 0.6pt,line join=round,line cap=round] (183.89, 30.69) rectangle (323.73,125.58);
\end{scope}
\begin{scope}
\path[clip] ( 38.56,125.58) rectangle (178.39,142.15);
\definecolor{drawColor}{gray}{0.20}
\definecolor{fillColor}{gray}{0.85}

\path[draw=drawColor,line width= 0.6pt,line join=round,line cap=round,fill=fillColor] ( 38.56,125.58) rectangle (178.39,142.15);
\definecolor{drawColor}{gray}{0.10}

\node[text=drawColor,anchor=base,inner sep=0pt, outer sep=0pt, scale=  0.88] at (108.47,130.83) {$ N=100, \ T=100 $};
\end{scope}
\begin{scope}
\path[clip] (183.89,125.58) rectangle (323.73,142.15);
\definecolor{drawColor}{gray}{0.20}
\definecolor{fillColor}{gray}{0.85}

\path[draw=drawColor,line width= 0.6pt,line join=round,line cap=round,fill=fillColor] (183.89,125.58) rectangle (323.73,142.15);
\definecolor{drawColor}{gray}{0.10}

\node[text=drawColor,anchor=base,inner sep=0pt, outer sep=0pt, scale=  0.88] at (253.81,130.83) {$ N=100, \ T=500 $};
\end{scope}
\begin{scope}
\path[clip] ( 38.56,242.54) rectangle (178.39,259.11);
\definecolor{drawColor}{gray}{0.20}
\definecolor{fillColor}{gray}{0.85}

\path[draw=drawColor,line width= 0.6pt,line join=round,line cap=round,fill=fillColor] ( 38.56,242.54) rectangle (178.39,259.11);
\definecolor{drawColor}{gray}{0.10}

\node[text=drawColor,anchor=base,inner sep=0pt, outer sep=0pt, scale=  0.88] at (108.47,247.79) {$ N=20, \ T=100 $};
\end{scope}
\begin{scope}
\path[clip] (183.89,242.54) rectangle (323.73,259.11);
\definecolor{drawColor}{gray}{0.20}
\definecolor{fillColor}{gray}{0.85}

\path[draw=drawColor,line width= 0.6pt,line join=round,line cap=round,fill=fillColor] (183.89,242.54) rectangle (323.73,259.11);
\definecolor{drawColor}{gray}{0.10}

\node[text=drawColor,anchor=base,inner sep=0pt, outer sep=0pt, scale=  0.88] at (253.81,247.79) {$ N=20, \ T=500 $};
\end{scope}
\begin{scope}
\path[clip] (  0.00,  0.00) rectangle (441.02,264.61);
\definecolor{drawColor}{gray}{0.20}

\path[draw=drawColor,line width= 0.6pt,line join=round] ( 58.53, 27.94) --
	( 58.53, 30.69);

\path[draw=drawColor,line width= 0.6pt,line join=round] ( 91.83, 27.94) --
	( 91.83, 30.69);

\path[draw=drawColor,line width= 0.6pt,line join=round] (125.12, 27.94) --
	(125.12, 30.69);

\path[draw=drawColor,line width= 0.6pt,line join=round] (158.42, 27.94) --
	(158.42, 30.69);
\end{scope}
\begin{scope}
\path[clip] (  0.00,  0.00) rectangle (441.02,264.61);
\definecolor{drawColor}{gray}{0.30}

\node[text=drawColor,anchor=base,inner sep=0pt, outer sep=0pt, scale=  0.88] at ( 58.53, 19.68) {SN1};

\node[text=drawColor,anchor=base,inner sep=0pt, outer sep=0pt, scale=  0.88] at ( 91.83, 19.68) {SN2};

\node[text=drawColor,anchor=base,inner sep=0pt, outer sep=0pt, scale=  0.88] at (125.12, 19.68) {SN3};

\node[text=drawColor,anchor=base,inner sep=0pt, outer sep=0pt, scale=  0.88] at (158.42, 19.68) {SN4};
\end{scope}
\begin{scope}
\path[clip] (  0.00,  0.00) rectangle (441.02,264.61);
\definecolor{drawColor}{gray}{0.20}

\path[draw=drawColor,line width= 0.6pt,line join=round] (203.87, 27.94) --
	(203.87, 30.69);

\path[draw=drawColor,line width= 0.6pt,line join=round] (237.16, 27.94) --
	(237.16, 30.69);

\path[draw=drawColor,line width= 0.6pt,line join=round] (270.46, 27.94) --
	(270.46, 30.69);

\path[draw=drawColor,line width= 0.6pt,line join=round] (303.75, 27.94) --
	(303.75, 30.69);
\end{scope}
\begin{scope}
\path[clip] (  0.00,  0.00) rectangle (441.02,264.61);
\definecolor{drawColor}{gray}{0.30}

\node[text=drawColor,anchor=base,inner sep=0pt, outer sep=0pt, scale=  0.88] at (203.87, 19.68) {SN1};

\node[text=drawColor,anchor=base,inner sep=0pt, outer sep=0pt, scale=  0.88] at (237.16, 19.68) {SN2};

\node[text=drawColor,anchor=base,inner sep=0pt, outer sep=0pt, scale=  0.88] at (270.46, 19.68) {SN3};

\node[text=drawColor,anchor=base,inner sep=0pt, outer sep=0pt, scale=  0.88] at (303.75, 19.68) {SN4};
\end{scope}
\begin{scope}
\path[clip] (  0.00,  0.00) rectangle (441.02,264.61);
\definecolor{drawColor}{gray}{0.30}

\node[text=drawColor,anchor=base east,inner sep=0pt, outer sep=0pt, scale=  0.88] at ( 33.61,148.93) {0.00};

\node[text=drawColor,anchor=base east,inner sep=0pt, outer sep=0pt, scale=  0.88] at ( 33.61,170.50) {0.25};

\node[text=drawColor,anchor=base east,inner sep=0pt, outer sep=0pt, scale=  0.88] at ( 33.61,192.06) {0.50};

\node[text=drawColor,anchor=base east,inner sep=0pt, outer sep=0pt, scale=  0.88] at ( 33.61,213.63) {0.75};

\node[text=drawColor,anchor=base east,inner sep=0pt, outer sep=0pt, scale=  0.88] at ( 33.61,235.20) {1.00};
\end{scope}
\begin{scope}
\path[clip] (  0.00,  0.00) rectangle (441.02,264.61);
\definecolor{drawColor}{gray}{0.20}

\path[draw=drawColor,line width= 0.6pt,line join=round] ( 35.81,151.96) --
	( 38.56,151.96);

\path[draw=drawColor,line width= 0.6pt,line join=round] ( 35.81,173.53) --
	( 38.56,173.53);

\path[draw=drawColor,line width= 0.6pt,line join=round] ( 35.81,195.09) --
	( 38.56,195.09);

\path[draw=drawColor,line width= 0.6pt,line join=round] ( 35.81,216.66) --
	( 38.56,216.66);

\path[draw=drawColor,line width= 0.6pt,line join=round] ( 35.81,238.23) --
	( 38.56,238.23);
\end{scope}
\begin{scope}
\path[clip] (  0.00,  0.00) rectangle (441.02,264.61);
\definecolor{drawColor}{gray}{0.30}

\node[text=drawColor,anchor=base east,inner sep=0pt, outer sep=0pt, scale=  0.88] at ( 33.61, 31.97) {0.00};

\node[text=drawColor,anchor=base east,inner sep=0pt, outer sep=0pt, scale=  0.88] at ( 33.61, 53.53) {0.25};

\node[text=drawColor,anchor=base east,inner sep=0pt, outer sep=0pt, scale=  0.88] at ( 33.61, 75.10) {0.50};

\node[text=drawColor,anchor=base east,inner sep=0pt, outer sep=0pt, scale=  0.88] at ( 33.61, 96.67) {0.75};

\node[text=drawColor,anchor=base east,inner sep=0pt, outer sep=0pt, scale=  0.88] at ( 33.61,118.23) {1.00};
\end{scope}
\begin{scope}
\path[clip] (  0.00,  0.00) rectangle (441.02,264.61);
\definecolor{drawColor}{gray}{0.20}

\path[draw=drawColor,line width= 0.6pt,line join=round] ( 35.81, 35.00) --
	( 38.56, 35.00);

\path[draw=drawColor,line width= 0.6pt,line join=round] ( 35.81, 56.57) --
	( 38.56, 56.57);

\path[draw=drawColor,line width= 0.6pt,line join=round] ( 35.81, 78.13) --
	( 38.56, 78.13);

\path[draw=drawColor,line width= 0.6pt,line join=round] ( 35.81, 99.70) --
	( 38.56, 99.70);

\path[draw=drawColor,line width= 0.6pt,line join=round] ( 35.81,121.26) --
	( 38.56,121.26);
\end{scope}
\begin{scope}
\path[clip] (  0.00,  0.00) rectangle (441.02,264.61);
\definecolor{drawColor}{RGB}{0,0,0}

\node[text=drawColor,anchor=base,inner sep=0pt, outer sep=0pt, scale=  1.10] at (181.14,  7.64) {Scenario};
\end{scope}
\begin{scope}
\path[clip] (  0.00,  0.00) rectangle (441.02,264.61);
\definecolor{drawColor}{RGB}{0,0,0}

\node[text=drawColor,rotate= 90.00,anchor=base,inner sep=0pt, outer sep=0pt, scale=  1.10] at ( 13.08,136.61) {Amari distance};
\end{scope}
\begin{scope}
\path[clip] (  0.00,  0.00) rectangle (441.02,264.61);
\definecolor{fillColor}{RGB}{255,255,255}

\path[fill=fillColor] (334.73,101.82) rectangle (435.52,171.40);
\end{scope}
\begin{scope}
\path[clip] (  0.00,  0.00) rectangle (441.02,264.61);
\definecolor{drawColor}{RGB}{0,0,0}

\node[text=drawColor,anchor=base west,inner sep=0pt, outer sep=0pt, scale=  1.10] at (340.23,157.26) {Method};
\end{scope}
\begin{scope}
\path[clip] (  0.00,  0.00) rectangle (441.02,264.61);
\definecolor{fillColor}{RGB}{255,255,255}

\path[fill=fillColor] (340.23,136.23) rectangle (354.68,150.69);
\definecolor{drawColor}{gray}{0.20}

\path[draw=drawColor,line width= 0.6pt] (347.46,137.68) --
	(347.46,139.85);

\path[draw=drawColor,line width= 0.6pt] (347.46,147.07) --
	(347.46,149.24);
\definecolor{fillColor}{RGB}{213,62,79}

\path[draw=drawColor,line width= 0.6pt,fill=fillColor,fill opacity=0.60] (342.04,139.85) rectangle (352.88,147.07);

\path[draw=drawColor,line width= 0.6pt] (342.04,143.46) --
	(352.88,143.46);

\path[draw=drawColor,line width= 0.6pt] (347.46,137.68) --
	(347.46,137.68);

\path[draw=drawColor,line width= 0.6pt] (347.46,149.24) --
	(347.46,149.24);
\end{scope}
\begin{scope}
\path[clip] (  0.00,  0.00) rectangle (441.02,264.61);
\definecolor{fillColor}{RGB}{255,255,255}

\path[fill=fillColor] (340.23,121.78) rectangle (354.68,136.23);
\definecolor{drawColor}{gray}{0.20}

\path[draw=drawColor,line width= 0.6pt] (347.46,123.22) --
	(347.46,125.39);

\path[draw=drawColor,line width= 0.6pt] (347.46,132.62) --
	(347.46,134.79);
\definecolor{fillColor}{RGB}{254,224,139}

\path[draw=drawColor,line width= 0.6pt,fill=fillColor,fill opacity=0.60] (342.04,125.39) rectangle (352.88,132.62);

\path[draw=drawColor,line width= 0.6pt] (342.04,129.01) --
	(352.88,129.01);

\path[draw=drawColor,line width= 0.6pt] (347.46,123.22) --
	(347.46,123.22);

\path[draw=drawColor,line width= 0.6pt] (347.46,134.79) --
	(347.46,134.79);
\end{scope}
\begin{scope}
\path[clip] (  0.00,  0.00) rectangle (441.02,264.61);
\definecolor{fillColor}{RGB}{255,255,255}

\path[fill=fillColor] (340.23,107.32) rectangle (354.68,121.78);
\definecolor{drawColor}{gray}{0.20}

\path[draw=drawColor,line width= 0.6pt] (347.46,108.77) --
	(347.46,110.94);

\path[draw=drawColor,line width= 0.6pt] (347.46,118.16) --
	(347.46,120.33);
\definecolor{fillColor}{RGB}{50,136,189}

\path[draw=drawColor,line width= 0.6pt,fill=fillColor,fill opacity=0.60] (342.04,110.94) rectangle (352.88,118.16);

\path[draw=drawColor,line width= 0.6pt] (342.04,114.55) --
	(352.88,114.55);

\path[draw=drawColor,line width= 0.6pt] (347.46,108.77) --
	(347.46,108.77);

\path[draw=drawColor,line width= 0.6pt] (347.46,120.33) --
	(347.46,120.33);
\end{scope}
\begin{scope}
\path[clip] (  0.00,  0.00) rectangle (441.02,264.61);
\definecolor{drawColor}{RGB}{0,0,0}

\node[text=drawColor,anchor=base west,inner sep=0pt, outer sep=0pt, scale=  0.88] at (360.18,140.43) {Individual};
\end{scope}
\begin{scope}
\path[clip] (  0.00,  0.00) rectangle (441.02,264.61);
\definecolor{drawColor}{RGB}{0,0,0}

\node[text=drawColor,anchor=base west,inner sep=0pt, outer sep=0pt, scale=  0.88] at (360.18,125.97) {MG};
\end{scope}
\begin{scope}
\path[clip] (  0.00,  0.00) rectangle (441.02,264.61);
\definecolor{drawColor}{RGB}{0,0,0}

\node[text=drawColor,anchor=base west,inner sep=0pt, outer sep=0pt, scale=  0.88] at (360.18,111.52) {PMG};
\end{scope}
\end{tikzpicture}

%% file: Fig_ari2.tex
\begin{tikzpicture}[x=1pt,y=1pt]
\definecolor{fillColor}{RGB}{255,255,255}
\path[use as bounding box,fill=fillColor,fill opacity=0.00] (0,0) rectangle (441.02,264.61);
\begin{scope}
\path[clip] (  0.00,  0.00) rectangle (441.02,264.61);
\definecolor{drawColor}{RGB}{255,255,255}
\definecolor{fillColor}{RGB}{255,255,255}

\path[draw=drawColor,line width= 0.6pt,line join=round,line cap=round,fill=fillColor] (  0.00,  0.00) rectangle (441.02,264.61);
\end{scope}
\begin{scope}
\path[clip] ( 38.56,147.65) rectangle (186.79,242.54);
\definecolor{fillColor}{RGB}{255,255,255}

\path[fill=fillColor] ( 38.56,147.65) rectangle (186.79,242.54);
\definecolor{drawColor}{gray}{0.92}

\path[draw=drawColor,line width= 0.3pt,line join=round] ( 38.56,162.75) --
	(186.79,162.75);

\path[draw=drawColor,line width= 0.3pt,line join=round] ( 38.56,184.31) --
	(186.79,184.31);

\path[draw=drawColor,line width= 0.3pt,line join=round] ( 38.56,205.88) --
	(186.79,205.88);

\path[draw=drawColor,line width= 0.3pt,line join=round] ( 38.56,227.45) --
	(186.79,227.45);

\path[draw=drawColor,line width= 0.6pt,line join=round] ( 38.56,151.96) --
	(186.79,151.96);

\path[draw=drawColor,line width= 0.6pt,line join=round] ( 38.56,173.53) --
	(186.79,173.53);

\path[draw=drawColor,line width= 0.6pt,line join=round] ( 38.56,195.10) --
	(186.79,195.10);

\path[draw=drawColor,line width= 0.6pt,line join=round] ( 38.56,216.67) --
	(186.79,216.67);

\path[draw=drawColor,line width= 0.6pt,line join=round] ( 38.56,238.24) --
	(186.79,238.24);

\path[draw=drawColor,line width= 0.6pt,line join=round] ( 59.73,147.65) --
	( 59.73,242.54);

\path[draw=drawColor,line width= 0.6pt,line join=round] ( 95.03,147.65) --
	( 95.03,242.54);

\path[draw=drawColor,line width= 0.6pt,line join=round] (130.32,147.65) --
	(130.32,242.54);

\path[draw=drawColor,line width= 0.6pt,line join=round] (165.62,147.65) --
	(165.62,242.54);
\definecolor{drawColor}{gray}{0.20}

\path[draw=drawColor,line width= 0.6pt,line join=round] ( 49.81,204.13) -- ( 49.81,238.23);

\path[draw=drawColor,line width= 0.6pt,line join=round] ( 49.81,166.56) -- ( 49.81,151.98);
\definecolor{fillColor}{RGB}{213,62,79}

\path[draw=drawColor,line width= 0.6pt,fill=fillColor,fill opacity=0.60] ( 46.83,204.13) --
	( 46.83,166.56) --
	( 52.78,166.56) --
	( 52.78,204.13) --
	( 46.83,204.13) --
	cycle;

\path[draw=drawColor,line width= 1.1pt] ( 46.83,183.44) -- ( 52.78,183.44);

\path[draw=drawColor,line width= 0.6pt,line join=round] ( 56.42,180.48) -- ( 56.42,237.90);

\path[draw=drawColor,line width= 0.6pt,line join=round] ( 56.42,160.91) -- ( 56.42,151.97);
\definecolor{fillColor}{RGB}{254,224,139}

\path[draw=drawColor,line width= 0.6pt,fill=fillColor,fill opacity=0.60] ( 53.45,180.48) --
	( 53.45,160.91) --
	( 59.40,160.91) --
	( 59.40,180.48) --
	( 53.45,180.48) --
	cycle;

\path[draw=drawColor,line width= 1.1pt] ( 53.45,169.73) -- ( 59.40,169.73);

\path[draw=drawColor,line width= 0.6pt,line join=round] ( 63.04,166.39) -- ( 63.04,212.07);

\path[draw=drawColor,line width= 0.6pt,line join=round] ( 63.04,155.81) -- ( 63.04,151.96);
\definecolor{fillColor}{RGB}{50,136,189}

\path[draw=drawColor,line width= 0.6pt,fill=fillColor,fill opacity=0.60] ( 60.06,166.39) --
	( 60.06,155.81) --
	( 66.02,155.81) --
	( 66.02,166.39) --
	( 60.06,166.39) --
	cycle;

\path[draw=drawColor,line width= 1.1pt] ( 60.06,160.54) -- ( 66.02,160.54);

\path[draw=drawColor,line width= 0.6pt,line join=round] ( 69.66,162.38) -- ( 69.66,187.57);

\path[draw=drawColor,line width= 0.6pt,line join=round] ( 69.66,154.73) -- ( 69.66,151.96);
\definecolor{fillColor}{RGB}{0,0,139}

\path[draw=drawColor,line width= 0.6pt,fill=fillColor,fill opacity=0.60] ( 66.68,162.38) --
	( 66.68,154.73) --
	( 72.64,154.73) --
	( 72.64,162.38) --
	( 66.68,162.38) --
	cycle;

\path[draw=drawColor,line width= 1.1pt] ( 66.68,158.15) -- ( 72.64,158.15);

\path[draw=drawColor,line width= 0.6pt,line join=round] ( 88.41,187.62) -- ( 88.41,237.72);

\path[draw=drawColor,line width= 0.6pt,line join=round] ( 88.41,163.54) -- ( 88.41,151.97);
\definecolor{fillColor}{RGB}{254,224,139}

\path[draw=drawColor,line width= 0.6pt,fill=fillColor,fill opacity=0.60] ( 85.43,187.62) --
	( 85.43,163.54) --
	( 91.39,163.54) --
	( 91.39,187.62) --
	( 85.43,187.62) --
	cycle;

\path[draw=drawColor,line width= 1.1pt] ( 85.43,174.83) -- ( 91.39,174.83);

\path[draw=drawColor,line width= 0.6pt,line join=round] ( 95.03,166.89) -- ( 95.03,205.92);

\path[draw=drawColor,line width= 0.6pt,line join=round] ( 95.03,156.14) -- ( 95.03,151.97);
\definecolor{fillColor}{RGB}{50,136,189}

\path[draw=drawColor,line width= 0.6pt,fill=fillColor,fill opacity=0.60] ( 92.05,166.89) --
	( 92.05,156.14) --
	( 98.00,156.14) --
	( 98.00,166.89) --
	( 92.05,166.89) --
	cycle;

\path[draw=drawColor,line width= 1.1pt] ( 92.05,160.54) -- ( 98.00,160.54);

\path[draw=drawColor,line width= 0.6pt,line join=round] (101.64,162.90) -- (101.64,189.41);

\path[draw=drawColor,line width= 0.6pt,line join=round] (101.64,154.62) -- (101.64,151.98);
\definecolor{fillColor}{RGB}{0,0,139}

\path[draw=drawColor,line width= 0.6pt,fill=fillColor,fill opacity=0.60] ( 98.67,162.90) --
	( 98.67,154.62) --
	(104.62,154.62) --
	(104.62,162.90) --
	( 98.67,162.90) --
	cycle;

\path[draw=drawColor,line width= 1.1pt] ( 98.67,157.97) -- (104.62,157.97);

\path[draw=drawColor,line width= 0.6pt,line join=round] (123.70,179.89) -- (123.70,236.08);

\path[draw=drawColor,line width= 0.6pt,line join=round] (123.70,159.33) -- (123.70,151.97);
\definecolor{fillColor}{RGB}{254,224,139}

\path[draw=drawColor,line width= 0.6pt,fill=fillColor,fill opacity=0.60] (120.73,179.89) --
	(120.73,159.33) --
	(126.68,159.33) --
	(126.68,179.89) --
	(120.73,179.89) --
	cycle;

\path[draw=drawColor,line width= 1.1pt] (120.73,168.12) -- (126.68,168.12);

\path[draw=drawColor,line width= 0.6pt,line join=round] (130.32,166.41) -- (130.32,195.73);

\path[draw=drawColor,line width= 0.6pt,line join=round] (130.32,156.08) -- (130.32,151.97);
\definecolor{fillColor}{RGB}{50,136,189}

\path[draw=drawColor,line width= 0.6pt,fill=fillColor,fill opacity=0.60] (127.34,166.41) --
	(127.34,156.08) --
	(133.30,156.08) --
	(133.30,166.41) --
	(127.34,166.41) --
	cycle;

\path[draw=drawColor,line width= 1.1pt] (127.34,160.33) -- (133.30,160.33);

\path[draw=drawColor,line width= 0.6pt,line join=round] (136.94,162.83) -- (136.94,186.67);

\path[draw=drawColor,line width= 0.6pt,line join=round] (136.94,154.74) -- (136.94,151.96);
\definecolor{fillColor}{RGB}{0,0,139}

\path[draw=drawColor,line width= 0.6pt,fill=fillColor,fill opacity=0.60] (133.96,162.83) --
	(133.96,154.74) --
	(139.92,154.74) --
	(139.92,162.83) --
	(133.96,162.83) --
	cycle;

\path[draw=drawColor,line width= 1.1pt] (133.96,158.08) -- (139.92,158.08);

\path[draw=drawColor,line width= 0.6pt,line join=round] (159.00,180.33) -- (159.00,238.00);

\path[draw=drawColor,line width= 0.6pt,line join=round] (159.00,159.63) -- (159.00,151.96);
\definecolor{fillColor}{RGB}{254,224,139}

\path[draw=drawColor,line width= 0.6pt,fill=fillColor,fill opacity=0.60] (156.02,180.33) --
	(156.02,159.63) --
	(161.98,159.63) --
	(161.98,180.33) --
	(156.02,180.33) --
	cycle;

\path[draw=drawColor,line width= 1.1pt] (156.02,167.96) -- (161.98,167.96);

\path[draw=drawColor,line width= 0.6pt,line join=round] (165.62,166.24) -- (165.62,197.94);

\path[draw=drawColor,line width= 0.6pt,line join=round] (165.62,155.87) -- (165.62,151.97);
\definecolor{fillColor}{RGB}{50,136,189}

\path[draw=drawColor,line width= 0.6pt,fill=fillColor,fill opacity=0.60] (162.64,166.24) --
	(162.64,155.87) --
	(168.59,155.87) --
	(168.59,166.24) --
	(162.64,166.24) --
	cycle;

\path[draw=drawColor,line width= 1.1pt] (162.64,160.28) -- (168.59,160.28);

\path[draw=drawColor,line width= 0.6pt,line join=round] (172.23,162.64) -- (172.23,185.70);

\path[draw=drawColor,line width= 0.6pt,line join=round] (172.23,155.04) -- (172.23,151.97);
\definecolor{fillColor}{RGB}{0,0,139}

\path[draw=drawColor,line width= 0.6pt,fill=fillColor,fill opacity=0.60] (169.26,162.64) --
	(169.26,155.04) --
	(175.21,155.04) --
	(175.21,162.64) --
	(169.26,162.64) --
	cycle;

\path[draw=drawColor,line width= 1.1pt] (169.26,158.39) -- (175.21,158.39);

\path[draw=drawColor,line width= 0.6pt,line join=round,line cap=round] ( 38.56,147.65) rectangle (186.79,242.54);
\end{scope}
\begin{scope}
\path[clip] ( 38.56, 30.69) rectangle (186.79,125.58);
\definecolor{fillColor}{RGB}{255,255,255}

\path[fill=fillColor] ( 38.56, 30.69) rectangle (186.79,125.58);
\definecolor{drawColor}{gray}{0.92}

\path[draw=drawColor,line width= 0.3pt,line join=round] ( 38.56, 45.78) --
	(186.79, 45.78);

\path[draw=drawColor,line width= 0.3pt,line join=round] ( 38.56, 67.35) --
	(186.79, 67.35);

\path[draw=drawColor,line width= 0.3pt,line join=round] ( 38.56, 88.92) --
	(186.79, 88.92);

\path[draw=drawColor,line width= 0.3pt,line join=round] ( 38.56,110.49) --
	(186.79,110.49);

\path[draw=drawColor,line width= 0.6pt,line join=round] ( 38.56, 35.00) --
	(186.79, 35.00);

\path[draw=drawColor,line width= 0.6pt,line join=round] ( 38.56, 56.57) --
	(186.79, 56.57);

\path[draw=drawColor,line width= 0.6pt,line join=round] ( 38.56, 78.14) --
	(186.79, 78.14);

\path[draw=drawColor,line width= 0.6pt,line join=round] ( 38.56, 99.71) --
	(186.79, 99.71);

\path[draw=drawColor,line width= 0.6pt,line join=round] ( 38.56,121.27) --
	(186.79,121.27);

\path[draw=drawColor,line width= 0.6pt,line join=round] ( 59.73, 30.69) --
	( 59.73,125.58);

\path[draw=drawColor,line width= 0.6pt,line join=round] ( 95.03, 30.69) --
	( 95.03,125.58);

\path[draw=drawColor,line width= 0.6pt,line join=round] (130.32, 30.69) --
	(130.32,125.58);

\path[draw=drawColor,line width= 0.6pt,line join=round] (165.62, 30.69) --
	(165.62,125.58);
\definecolor{drawColor}{gray}{0.20}

\path[draw=drawColor,line width= 0.6pt,line join=round] ( 49.81, 78.92) -- ( 49.81, 98.11);

\path[draw=drawColor,line width= 0.6pt,line join=round] ( 49.81, 58.37) -- ( 49.81, 37.23);
\definecolor{fillColor}{RGB}{213,62,79}

\path[draw=drawColor,line width= 0.6pt,fill=fillColor,fill opacity=0.60] ( 46.83, 78.92) --
	( 46.83, 58.37) --
	( 52.78, 58.37) --
	( 52.78, 78.92) --
	( 46.83, 78.92) --
	cycle;

\path[draw=drawColor,line width= 1.1pt] ( 46.83, 68.88) -- ( 52.78, 68.88);

\path[draw=drawColor,line width= 0.6pt,line join=round] ( 56.42, 49.09) -- ( 56.42, 80.54);

\path[draw=drawColor,line width= 0.6pt,line join=round] ( 56.42, 42.31) -- ( 56.42, 35.81);
\definecolor{fillColor}{RGB}{254,224,139}

\path[draw=drawColor,line width= 0.6pt,fill=fillColor,fill opacity=0.60] ( 53.45, 49.09) --
	( 53.45, 42.31) --
	( 59.40, 42.31) --
	( 59.40, 49.09) --
	( 53.45, 49.09) --
	cycle;

\path[draw=drawColor,line width= 1.1pt] ( 53.45, 45.29) -- ( 59.40, 45.29);

\path[draw=drawColor,line width= 0.6pt,line join=round] ( 63.04, 46.18) -- ( 63.04, 64.04);

\path[draw=drawColor,line width= 0.6pt,line join=round] ( 63.04, 40.83) -- ( 63.04, 35.81);
\definecolor{fillColor}{RGB}{50,136,189}

\path[draw=drawColor,line width= 0.6pt,fill=fillColor,fill opacity=0.60] ( 60.06, 46.18) --
	( 60.06, 40.83) --
	( 66.02, 40.83) --
	( 66.02, 46.18) --
	( 60.06, 46.18) --
	cycle;

\path[draw=drawColor,line width= 1.1pt] ( 60.06, 43.30) -- ( 66.02, 43.30);

\path[draw=drawColor,line width= 0.6pt,line join=round] ( 69.66, 44.01) -- ( 69.66, 58.93);

\path[draw=drawColor,line width= 0.6pt,line join=round] ( 69.66, 39.68) -- ( 69.66, 35.08);
\definecolor{fillColor}{RGB}{0,0,139}

\path[draw=drawColor,line width= 0.6pt,fill=fillColor,fill opacity=0.60] ( 66.68, 44.01) --
	( 66.68, 39.68) --
	( 72.64, 39.68) --
	( 72.64, 44.01) --
	( 66.68, 44.01) --
	cycle;

\path[draw=drawColor,line width= 1.1pt] ( 66.68, 41.57) -- ( 72.64, 41.57);

\path[draw=drawColor,line width= 0.6pt,line join=round] ( 88.41, 49.53) -- ( 88.41, 73.96);

\path[draw=drawColor,line width= 0.6pt,line join=round] ( 88.41, 42.55) -- ( 88.41, 35.56);
\definecolor{fillColor}{RGB}{254,224,139}

\path[draw=drawColor,line width= 0.6pt,fill=fillColor,fill opacity=0.60] ( 85.43, 49.53) --
	( 85.43, 42.55) --
	( 91.39, 42.55) --
	( 91.39, 49.53) --
	( 85.43, 49.53) --
	cycle;

\path[draw=drawColor,line width= 1.1pt] ( 85.43, 45.72) -- ( 91.39, 45.72);

\path[draw=drawColor,line width= 0.6pt,line join=round] ( 95.03, 46.50) -- ( 95.03, 63.82);

\path[draw=drawColor,line width= 0.6pt,line join=round] ( 95.03, 41.07) -- ( 95.03, 35.59);
\definecolor{fillColor}{RGB}{50,136,189}

\path[draw=drawColor,line width= 0.6pt,fill=fillColor,fill opacity=0.60] ( 92.05, 46.50) --
	( 92.05, 41.07) --
	( 98.00, 41.07) --
	( 98.00, 46.50) --
	( 92.05, 46.50) --
	cycle;

\path[draw=drawColor,line width= 1.1pt] ( 92.05, 43.47) -- ( 98.00, 43.47);

\path[draw=drawColor,line width= 0.6pt,line join=round] (101.64, 44.15) -- (101.64, 54.68);

\path[draw=drawColor,line width= 0.6pt,line join=round] (101.64, 39.85) -- (101.64, 35.65);
\definecolor{fillColor}{RGB}{0,0,139}

\path[draw=drawColor,line width= 0.6pt,fill=fillColor,fill opacity=0.60] ( 98.67, 44.15) --
	( 98.67, 39.85) --
	(104.62, 39.85) --
	(104.62, 44.15) --
	( 98.67, 44.15) --
	cycle;

\path[draw=drawColor,line width= 1.1pt] ( 98.67, 41.76) -- (104.62, 41.76);

\path[draw=drawColor,line width= 0.6pt,line join=round] (123.70, 51.17) -- (123.70, 84.79);

\path[draw=drawColor,line width= 0.6pt,line join=round] (123.70, 43.43) -- (123.70, 36.22);
\definecolor{fillColor}{RGB}{254,224,139}

\path[draw=drawColor,line width= 0.6pt,fill=fillColor,fill opacity=0.60] (120.73, 51.17) --
	(120.73, 43.43) --
	(126.68, 43.43) --
	(126.68, 51.17) --
	(120.73, 51.17) --
	cycle;

\path[draw=drawColor,line width= 1.1pt] (120.73, 46.63) -- (126.68, 46.63);

\path[draw=drawColor,line width= 0.6pt,line join=round] (130.32, 46.30) -- (130.32, 61.72);

\path[draw=drawColor,line width= 0.6pt,line join=round] (130.32, 40.93) -- (130.32, 35.66);
\definecolor{fillColor}{RGB}{50,136,189}

\path[draw=drawColor,line width= 0.6pt,fill=fillColor,fill opacity=0.60] (127.34, 46.30) --
	(127.34, 40.93) --
	(133.30, 40.93) --
	(133.30, 46.30) --
	(127.34, 46.30) --
	cycle;

\path[draw=drawColor,line width= 1.1pt] (127.34, 43.41) -- (133.30, 43.41);

\path[draw=drawColor,line width= 0.6pt,line join=round] (136.94, 43.94) -- (136.94, 56.10);

\path[draw=drawColor,line width= 0.6pt,line join=round] (136.94, 39.61) -- (136.94, 35.65);
\definecolor{fillColor}{RGB}{0,0,139}

\path[draw=drawColor,line width= 0.6pt,fill=fillColor,fill opacity=0.60] (133.96, 43.94) --
	(133.96, 39.61) --
	(139.92, 39.61) --
	(139.92, 43.94) --
	(133.96, 43.94) --
	cycle;

\path[draw=drawColor,line width= 1.1pt] (133.96, 41.62) -- (139.92, 41.62);

\path[draw=drawColor,line width= 0.6pt,line join=round] (159.00, 56.14) -- (159.00, 85.39);

\path[draw=drawColor,line width= 0.6pt,line join=round] (159.00, 45.65) -- (159.00, 36.80);
\definecolor{fillColor}{RGB}{254,224,139}

\path[draw=drawColor,line width= 0.6pt,fill=fillColor,fill opacity=0.60] (156.02, 56.14) --
	(156.02, 45.65) --
	(161.98, 45.65) --
	(161.98, 56.14) --
	(156.02, 56.14) --
	cycle;

\path[draw=drawColor,line width= 1.1pt] (156.02, 50.50) -- (161.98, 50.50);

\path[draw=drawColor,line width= 0.6pt,line join=round] (165.62, 46.50) -- (165.62, 63.02);

\path[draw=drawColor,line width= 0.6pt,line join=round] (165.62, 41.04) -- (165.62, 35.32);
\definecolor{fillColor}{RGB}{50,136,189}

\path[draw=drawColor,line width= 0.6pt,fill=fillColor,fill opacity=0.60] (162.64, 46.50) --
	(162.64, 41.04) --
	(168.59, 41.04) --
	(168.59, 46.50) --
	(162.64, 46.50) --
	cycle;

\path[draw=drawColor,line width= 1.1pt] (162.64, 43.54) -- (168.59, 43.54);

\path[draw=drawColor,line width= 0.6pt,line join=round] (172.23, 43.79) -- (172.23, 56.34);

\path[draw=drawColor,line width= 0.6pt,line join=round] (172.23, 39.65) -- (172.23, 35.41);
\definecolor{fillColor}{RGB}{0,0,139}

\path[draw=drawColor,line width= 0.6pt,fill=fillColor,fill opacity=0.60] (169.26, 43.79) --
	(169.26, 39.65) --
	(175.21, 39.65) --
	(175.21, 43.79) --
	(169.26, 43.79) --
	cycle;

\path[draw=drawColor,line width= 1.1pt] (169.26, 41.53) -- (175.21, 41.53);

\path[draw=drawColor,line width= 0.6pt,line join=round,line cap=round] ( 38.56, 30.69) rectangle (186.79,125.58);
\end{scope}
\begin{scope}
\path[clip] (192.29,147.65) rectangle (340.53,242.54);
\definecolor{fillColor}{RGB}{255,255,255}

\path[fill=fillColor] (192.29,147.65) rectangle (340.53,242.54);
\definecolor{drawColor}{gray}{0.92}

\path[draw=drawColor,line width= 0.3pt,line join=round] (192.29,162.75) --
	(340.53,162.75);

\path[draw=drawColor,line width= 0.3pt,line join=round] (192.29,184.31) --
	(340.53,184.31);

\path[draw=drawColor,line width= 0.3pt,line join=round] (192.29,205.88) --
	(340.53,205.88);

\path[draw=drawColor,line width= 0.3pt,line join=round] (192.29,227.45) --
	(340.53,227.45);

\path[draw=drawColor,line width= 0.6pt,line join=round] (192.29,151.96) --
	(340.53,151.96);

\path[draw=drawColor,line width= 0.6pt,line join=round] (192.29,173.53) --
	(340.53,173.53);

\path[draw=drawColor,line width= 0.6pt,line join=round] (192.29,195.10) --
	(340.53,195.10);

\path[draw=drawColor,line width= 0.6pt,line join=round] (192.29,216.67) --
	(340.53,216.67);

\path[draw=drawColor,line width= 0.6pt,line join=round] (192.29,238.24) --
	(340.53,238.24);

\path[draw=drawColor,line width= 0.6pt,line join=round] (213.47,147.65) --
	(213.47,242.54);

\path[draw=drawColor,line width= 0.6pt,line join=round] (248.76,147.65) --
	(248.76,242.54);

\path[draw=drawColor,line width= 0.6pt,line join=round] (284.06,147.65) --
	(284.06,242.54);

\path[draw=drawColor,line width= 0.6pt,line join=round] (319.35,147.65) --
	(319.35,242.54);
\definecolor{drawColor}{gray}{0.20}

\path[draw=drawColor,line width= 0.6pt,line join=round] (203.54,174.01) -- (203.54,236.51);

\path[draw=drawColor,line width= 0.6pt,line join=round] (203.54,157.80) -- (203.54,151.99);
\definecolor{fillColor}{RGB}{213,62,79}

\path[draw=drawColor,line width= 0.6pt,fill=fillColor,fill opacity=0.60] (200.57,174.01) --
	(200.57,157.80) --
	(206.52,157.80) --
	(206.52,174.01) --
	(200.57,174.01) --
	cycle;

\path[draw=drawColor,line width= 1.1pt] (200.57,164.19) -- (206.52,164.19);

\path[draw=drawColor,line width= 0.6pt,line join=round] (210.16,157.67) -- (210.16,169.55);

\path[draw=drawColor,line width= 0.6pt,line join=round] (210.16,153.56) -- (210.16,151.96);
\definecolor{fillColor}{RGB}{254,224,139}

\path[draw=drawColor,line width= 0.6pt,fill=fillColor,fill opacity=0.60] (207.18,157.67) --
	(207.18,153.56) --
	(213.14,153.56) --
	(213.14,157.67) --
	(207.18,157.67) --
	cycle;

\path[draw=drawColor,line width= 1.1pt] (207.18,155.35) -- (213.14,155.35);

\path[draw=drawColor,line width= 0.6pt,line join=round] (216.78,157.53) -- (216.78,169.54);

\path[draw=drawColor,line width= 0.6pt,line join=round] (216.78,153.51) -- (216.78,151.96);
\definecolor{fillColor}{RGB}{50,136,189}

\path[draw=drawColor,line width= 0.6pt,fill=fillColor,fill opacity=0.60] (213.80,157.53) --
	(213.80,153.51) --
	(219.76,153.51) --
	(219.76,157.53) --
	(213.80,157.53) --
	cycle;

\path[draw=drawColor,line width= 1.1pt] (213.80,155.32) -- (219.76,155.32);

\path[draw=drawColor,line width= 0.6pt,line join=round] (223.40,156.65) -- (223.40,166.72);

\path[draw=drawColor,line width= 0.6pt,line join=round] (223.40,153.31) -- (223.40,151.96);
\definecolor{fillColor}{RGB}{0,0,139}

\path[draw=drawColor,line width= 0.6pt,fill=fillColor,fill opacity=0.60] (220.42,156.65) --
	(220.42,153.31) --
	(226.37,153.31) --
	(226.37,156.65) --
	(220.42,156.65) --
	cycle;

\path[draw=drawColor,line width= 1.1pt] (220.42,154.83) -- (226.37,154.83);

\path[draw=drawColor,line width= 0.6pt,line join=round] (242.15,158.55) -- (242.15,181.50);

\path[draw=drawColor,line width= 0.6pt,line join=round] (242.15,153.74) -- (242.15,151.96);
\definecolor{fillColor}{RGB}{254,224,139}

\path[draw=drawColor,line width= 0.6pt,fill=fillColor,fill opacity=0.60] (239.17,158.55) --
	(239.17,153.74) --
	(245.12,153.74) --
	(245.12,158.55) --
	(239.17,158.55) --
	cycle;

\path[draw=drawColor,line width= 1.1pt] (239.17,155.82) -- (245.12,155.82);

\path[draw=drawColor,line width= 0.6pt,line join=round] (248.76,157.59) -- (248.76,170.08);

\path[draw=drawColor,line width= 0.6pt,line join=round] (248.76,153.51) -- (248.76,151.96);
\definecolor{fillColor}{RGB}{50,136,189}

\path[draw=drawColor,line width= 0.6pt,fill=fillColor,fill opacity=0.60] (245.79,157.59) --
	(245.79,153.51) --
	(251.74,153.51) --
	(251.74,157.59) --
	(245.79,157.59) --
	cycle;

\path[draw=drawColor,line width= 1.1pt] (245.79,155.20) -- (251.74,155.20);

\path[draw=drawColor,line width= 0.6pt,line join=round] (255.38,156.68) -- (255.38,165.51);

\path[draw=drawColor,line width= 0.6pt,line join=round] (255.38,153.27) -- (255.38,151.96);
\definecolor{fillColor}{RGB}{0,0,139}

\path[draw=drawColor,line width= 0.6pt,fill=fillColor,fill opacity=0.60] (252.40,156.68) --
	(252.40,153.27) --
	(258.36,153.27) --
	(258.36,156.68) --
	(252.40,156.68) --
	cycle;

\path[draw=drawColor,line width= 1.1pt] (252.40,154.69) -- (258.36,154.69);

\path[draw=drawColor,line width= 0.6pt,line join=round] (277.44,161.44) -- (277.44,194.09);

\path[draw=drawColor,line width= 0.6pt,line join=round] (277.44,154.77) -- (277.44,151.96);
\definecolor{fillColor}{RGB}{254,224,139}

\path[draw=drawColor,line width= 0.6pt,fill=fillColor,fill opacity=0.60] (274.46,161.44) --
	(274.46,154.77) --
	(280.42,154.77) --
	(280.42,161.44) --
	(274.46,161.44) --
	cycle;

\path[draw=drawColor,line width= 1.1pt] (274.46,157.51) -- (280.42,157.51);

\path[draw=drawColor,line width= 0.6pt,line join=round] (284.06,157.53) -- (284.06,170.02);

\path[draw=drawColor,line width= 0.6pt,line join=round] (284.06,153.47) -- (284.06,151.96);
\definecolor{fillColor}{RGB}{50,136,189}

\path[draw=drawColor,line width= 0.6pt,fill=fillColor,fill opacity=0.60] (281.08,157.53) --
	(281.08,153.47) --
	(287.04,153.47) --
	(287.04,157.53) --
	(281.08,157.53) --
	cycle;

\path[draw=drawColor,line width= 1.1pt] (281.08,155.21) -- (287.04,155.21);

\path[draw=drawColor,line width= 0.6pt,line join=round] (290.68,156.79) -- (290.68,170.40);

\path[draw=drawColor,line width= 0.6pt,line join=round] (290.68,153.30) -- (290.68,151.96);
\definecolor{fillColor}{RGB}{0,0,139}

\path[draw=drawColor,line width= 0.6pt,fill=fillColor,fill opacity=0.60] (287.70,156.79) --
	(287.70,153.30) --
	(293.66,153.30) --
	(293.66,156.79) --
	(287.70,156.79) --
	cycle;

\path[draw=drawColor,line width= 1.1pt] (287.70,154.81) -- (293.66,154.81);

\path[draw=drawColor,line width= 0.6pt,line join=round] (312.74,176.84) -- (312.74,236.44);

\path[draw=drawColor,line width= 0.6pt,line join=round] (312.74,158.51) -- (312.74,152.00);
\definecolor{fillColor}{RGB}{254,224,139}

\path[draw=drawColor,line width= 0.6pt,fill=fillColor,fill opacity=0.60] (309.76,176.84) --
	(309.76,158.51) --
	(315.71,158.51) --
	(315.71,176.84) --
	(309.76,176.84) --
	cycle;

\path[draw=drawColor,line width= 1.1pt] (309.76,166.26) -- (315.71,166.26);

\path[draw=drawColor,line width= 0.6pt,line join=round] (319.35,157.89) -- (319.35,170.24);

\path[draw=drawColor,line width= 0.6pt,line join=round] (319.35,153.57) -- (319.35,151.97);
\definecolor{fillColor}{RGB}{50,136,189}

\path[draw=drawColor,line width= 0.6pt,fill=fillColor,fill opacity=0.60] (316.38,157.89) --
	(316.38,153.57) --
	(322.33,153.57) --
	(322.33,157.89) --
	(316.38,157.89) --
	cycle;

\path[draw=drawColor,line width= 1.1pt] (316.38,155.37) -- (322.33,155.37);

\path[draw=drawColor,line width= 0.6pt,line join=round] (325.97,157.02) -- (325.97,167.03);

\path[draw=drawColor,line width= 0.6pt,line join=round] (325.97,153.45) -- (325.97,151.97);
\definecolor{fillColor}{RGB}{0,0,139}

\path[draw=drawColor,line width= 0.6pt,fill=fillColor,fill opacity=0.60] (322.99,157.02) --
	(322.99,153.45) --
	(328.95,153.45) --
	(328.95,157.02) --
	(322.99,157.02) --
	cycle;

\path[draw=drawColor,line width= 1.1pt] (322.99,154.93) -- (328.95,154.93);

\path[draw=drawColor,line width= 0.6pt,line join=round,line cap=round] (192.29,147.65) rectangle (340.53,242.54);
\end{scope}
\begin{scope}
\path[clip] (192.29, 30.69) rectangle (340.53,125.58);
\definecolor{fillColor}{RGB}{255,255,255}

\path[fill=fillColor] (192.29, 30.69) rectangle (340.53,125.58);
\definecolor{drawColor}{gray}{0.92}

\path[draw=drawColor,line width= 0.3pt,line join=round] (192.29, 45.78) --
	(340.53, 45.78);

\path[draw=drawColor,line width= 0.3pt,line join=round] (192.29, 67.35) --
	(340.53, 67.35);

\path[draw=drawColor,line width= 0.3pt,line join=round] (192.29, 88.92) --
	(340.53, 88.92);

\path[draw=drawColor,line width= 0.3pt,line join=round] (192.29,110.49) --
	(340.53,110.49);

\path[draw=drawColor,line width= 0.6pt,line join=round] (192.29, 35.00) --
	(340.53, 35.00);

\path[draw=drawColor,line width= 0.6pt,line join=round] (192.29, 56.57) --
	(340.53, 56.57);

\path[draw=drawColor,line width= 0.6pt,line join=round] (192.29, 78.14) --
	(340.53, 78.14);

\path[draw=drawColor,line width= 0.6pt,line join=round] (192.29, 99.71) --
	(340.53, 99.71);

\path[draw=drawColor,line width= 0.6pt,line join=round] (192.29,121.27) --
	(340.53,121.27);

\path[draw=drawColor,line width= 0.6pt,line join=round] (213.47, 30.69) --
	(213.47,125.58);

\path[draw=drawColor,line width= 0.6pt,line join=round] (248.76, 30.69) --
	(248.76,125.58);

\path[draw=drawColor,line width= 0.6pt,line join=round] (284.06, 30.69) --
	(284.06,125.58);

\path[draw=drawColor,line width= 0.6pt,line join=round] (319.35, 30.69) --
	(319.35,125.58);
\definecolor{drawColor}{gray}{0.20}

\path[draw=drawColor,line width= 0.6pt,line join=round] (203.54, 50.62) -- (203.54, 78.54);

\path[draw=drawColor,line width= 0.6pt,line join=round] (203.54, 42.83) -- (203.54, 35.80);
\definecolor{fillColor}{RGB}{213,62,79}

\path[draw=drawColor,line width= 0.6pt,fill=fillColor,fill opacity=0.60] (200.57, 50.62) --
	(200.57, 42.83) --
	(206.52, 42.83) --
	(206.52, 50.62) --
	(200.57, 50.62) --
	cycle;

\path[draw=drawColor,line width= 1.1pt] (200.57, 46.36) -- (206.52, 46.36);

\path[draw=drawColor,line width= 0.6pt,line join=round] (210.16, 38.69) -- (210.16, 43.28);

\path[draw=drawColor,line width= 0.6pt,line join=round] (210.16, 36.94) -- (210.16, 35.11);
\definecolor{fillColor}{RGB}{254,224,139}

\path[draw=drawColor,line width= 0.6pt,fill=fillColor,fill opacity=0.60] (207.18, 38.69) --
	(207.18, 36.94) --
	(213.14, 36.94) --
	(213.14, 38.69) --
	(207.18, 38.69) --
	cycle;

\path[draw=drawColor,line width= 1.1pt] (207.18, 37.72) -- (213.14, 37.72);

\path[draw=drawColor,line width= 0.6pt,line join=round] (216.78, 38.54) -- (216.78, 42.60);

\path[draw=drawColor,line width= 0.6pt,line join=round] (216.78, 36.85) -- (216.78, 35.20);
\definecolor{fillColor}{RGB}{50,136,189}

\path[draw=drawColor,line width= 0.6pt,fill=fillColor,fill opacity=0.60] (213.80, 38.54) --
	(213.80, 36.85) --
	(219.76, 36.85) --
	(219.76, 38.54) --
	(213.80, 38.54) --
	cycle;

\path[draw=drawColor,line width= 1.1pt] (213.80, 37.63) -- (219.76, 37.63);

\path[draw=drawColor,line width= 0.6pt,line join=round] (223.40, 37.90) -- (223.40, 41.26);

\path[draw=drawColor,line width= 0.6pt,line join=round] (223.40, 36.49) -- (223.40, 35.10);
\definecolor{fillColor}{RGB}{0,0,139}

\path[draw=drawColor,line width= 0.6pt,fill=fillColor,fill opacity=0.60] (220.42, 37.90) --
	(220.42, 36.49) --
	(226.37, 36.49) --
	(226.37, 37.90) --
	(220.42, 37.90) --
	cycle;

\path[draw=drawColor,line width= 1.1pt] (220.42, 37.14) -- (226.37, 37.14);

\path[draw=drawColor,line width= 0.6pt,line join=round] (242.15, 38.75) -- (242.15, 45.33);

\path[draw=drawColor,line width= 0.6pt,line join=round] (242.15, 37.00) -- (242.15, 35.24);
\definecolor{fillColor}{RGB}{254,224,139}

\path[draw=drawColor,line width= 0.6pt,fill=fillColor,fill opacity=0.60] (239.17, 38.75) --
	(239.17, 37.00) --
	(245.12, 37.00) --
	(245.12, 38.75) --
	(239.17, 38.75) --
	cycle;

\path[draw=drawColor,line width= 1.1pt] (239.17, 37.80) -- (245.12, 37.80);

\path[draw=drawColor,line width= 0.6pt,line join=round] (248.76, 38.58) -- (248.76, 44.46);

\path[draw=drawColor,line width= 0.6pt,line join=round] (248.76, 36.88) -- (248.76, 35.19);
\definecolor{fillColor}{RGB}{50,136,189}

\path[draw=drawColor,line width= 0.6pt,fill=fillColor,fill opacity=0.60] (245.79, 38.58) --
	(245.79, 36.88) --
	(251.74, 36.88) --
	(251.74, 38.58) --
	(245.79, 38.58) --
	cycle;

\path[draw=drawColor,line width= 1.1pt] (245.79, 37.67) -- (251.74, 37.67);

\path[draw=drawColor,line width= 0.6pt,line join=round] (255.38, 37.89) -- (255.38, 41.67);

\path[draw=drawColor,line width= 0.6pt,line join=round] (255.38, 36.56) -- (255.38, 35.18);
\definecolor{fillColor}{RGB}{0,0,139}

\path[draw=drawColor,line width= 0.6pt,fill=fillColor,fill opacity=0.60] (252.40, 37.89) --
	(252.40, 36.56) --
	(258.36, 36.56) --
	(258.36, 37.89) --
	(252.40, 37.89) --
	cycle;

\path[draw=drawColor,line width= 1.1pt] (252.40, 37.22) -- (258.36, 37.22);

\path[draw=drawColor,line width= 0.6pt,line join=round] (277.44, 41.60) -- (277.44, 52.24);

\path[draw=drawColor,line width= 0.6pt,line join=round] (277.44, 38.42) -- (277.44, 35.33);
\definecolor{fillColor}{RGB}{254,224,139}

\path[draw=drawColor,line width= 0.6pt,fill=fillColor,fill opacity=0.60] (274.46, 41.60) --
	(274.46, 38.42) --
	(280.42, 38.42) --
	(280.42, 41.60) --
	(274.46, 41.60) --
	cycle;

\path[draw=drawColor,line width= 1.1pt] (274.46, 39.89) -- (280.42, 39.89);

\path[draw=drawColor,line width= 0.6pt,line join=round] (284.06, 38.55) -- (284.06, 43.85);

\path[draw=drawColor,line width= 0.6pt,line join=round] (284.06, 36.85) -- (284.06, 35.22);
\definecolor{fillColor}{RGB}{50,136,189}

\path[draw=drawColor,line width= 0.6pt,fill=fillColor,fill opacity=0.60] (281.08, 38.55) --
	(281.08, 36.85) --
	(287.04, 36.85) --
	(287.04, 38.55) --
	(281.08, 38.55) --
	cycle;

\path[draw=drawColor,line width= 1.1pt] (281.08, 37.66) -- (287.04, 37.66);

\path[draw=drawColor,line width= 0.6pt,line join=round] (290.68, 37.89) -- (290.68, 41.62);

\path[draw=drawColor,line width= 0.6pt,line join=round] (290.68, 36.54) -- (290.68, 35.07);
\definecolor{fillColor}{RGB}{0,0,139}

\path[draw=drawColor,line width= 0.6pt,fill=fillColor,fill opacity=0.60] (287.70, 37.89) --
	(287.70, 36.54) --
	(293.66, 36.54) --
	(293.66, 37.89) --
	(287.70, 37.89) --
	cycle;

\path[draw=drawColor,line width= 1.1pt] (287.70, 37.18) -- (293.66, 37.18);

\path[draw=drawColor,line width= 0.6pt,line join=round] (312.74, 53.01) -- (312.74, 91.23);

\path[draw=drawColor,line width= 0.6pt,line join=round] (312.74, 44.26) -- (312.74, 35.53);
\definecolor{fillColor}{RGB}{254,224,139}

\path[draw=drawColor,line width= 0.6pt,fill=fillColor,fill opacity=0.60] (309.76, 53.01) --
	(309.76, 44.26) --
	(315.71, 44.26) --
	(315.71, 53.01) --
	(309.76, 53.01) --
	cycle;

\path[draw=drawColor,line width= 1.1pt] (309.76, 47.86) -- (315.71, 47.86);

\path[draw=drawColor,line width= 0.6pt,line join=round] (319.35, 38.63) -- (319.35, 42.95);

\path[draw=drawColor,line width= 0.6pt,line join=round] (319.35, 36.87) -- (319.35, 35.19);
\definecolor{fillColor}{RGB}{50,136,189}

\path[draw=drawColor,line width= 0.6pt,fill=fillColor,fill opacity=0.60] (316.38, 38.63) --
	(316.38, 36.87) --
	(322.33, 36.87) --
	(322.33, 38.63) --
	(316.38, 38.63) --
	cycle;

\path[draw=drawColor,line width= 1.1pt] (316.38, 37.66) -- (322.33, 37.66);

\path[draw=drawColor,line width= 0.6pt,line join=round] (325.97, 37.98) -- (325.97, 42.05);

\path[draw=drawColor,line width= 0.6pt,line join=round] (325.97, 36.48) -- (325.97, 35.09);
\definecolor{fillColor}{RGB}{0,0,139}

\path[draw=drawColor,line width= 0.6pt,fill=fillColor,fill opacity=0.60] (322.99, 37.98) --
	(322.99, 36.48) --
	(328.95, 36.48) --
	(328.95, 37.98) --
	(322.99, 37.98) --
	cycle;

\path[draw=drawColor,line width= 1.1pt] (322.99, 37.17) -- (328.95, 37.17);

\path[draw=drawColor,line width= 0.6pt,line join=round,line cap=round] (192.29, 30.69) rectangle (340.53,125.58);
\end{scope}
\begin{scope}
\path[clip] ( 38.56,125.58) rectangle (186.79,142.15);
\definecolor{drawColor}{gray}{0.20}
\definecolor{fillColor}{gray}{0.85}

\path[draw=drawColor,line width= 0.6pt,line join=round,line cap=round,fill=fillColor] ( 38.56,125.58) rectangle (186.79,142.15);
\definecolor{drawColor}{gray}{0.10}

\node[text=drawColor,anchor=base,inner sep=0pt, outer sep=0pt, scale=  0.88] at (112.67,130.83) {DGP~II, \ $ \psi_{(m)}=0.17 $};
\end{scope}
\begin{scope}
\path[clip] (192.29,125.58) rectangle (340.53,142.15);
\definecolor{drawColor}{gray}{0.20}
\definecolor{fillColor}{gray}{0.85}

\path[draw=drawColor,line width= 0.6pt,line join=round,line cap=round,fill=fillColor] (192.29,125.58) rectangle (340.53,142.15);
\definecolor{drawColor}{gray}{0.10}

\node[text=drawColor,anchor=base,inner sep=0pt, outer sep=0pt, scale=  0.88] at (266.41,130.83) {DGP~II, \ $ \psi_{(m)}=0.58 $};
\end{scope}
\begin{scope}
\path[clip] ( 38.56,242.54) rectangle (186.79,259.11);
\definecolor{drawColor}{gray}{0.20}
\definecolor{fillColor}{gray}{0.85}

\path[draw=drawColor,line width= 0.6pt,line join=round,line cap=round,fill=fillColor] ( 38.56,242.54) rectangle (186.79,259.11);
\definecolor{drawColor}{gray}{0.10}

\node[text=drawColor,anchor=base,inner sep=0pt, outer sep=0pt, scale=  0.88] at (112.67,247.79) {DGP~I, \ $ \psi_{(m)}=0.17 $};
\end{scope}
\begin{scope}
\path[clip] (192.29,242.54) rectangle (340.53,259.11);
\definecolor{drawColor}{gray}{0.20}
\definecolor{fillColor}{gray}{0.85}

\path[draw=drawColor,line width= 0.6pt,line join=round,line cap=round,fill=fillColor] (192.29,242.54) rectangle (340.53,259.11);
\definecolor{drawColor}{gray}{0.10}

\node[text=drawColor,anchor=base,inner sep=0pt, outer sep=0pt, scale=  0.88] at (266.41,247.79) {DGP~I, \ $ \psi_{(m)}=0.58 $};
\end{scope}
\begin{scope}
\path[clip] (  0.00,  0.00) rectangle (441.02,264.61);
\definecolor{drawColor}{gray}{0.20}

\path[draw=drawColor,line width= 0.6pt,line join=round] ( 59.73, 27.94) --
	( 59.73, 30.69);

\path[draw=drawColor,line width= 0.6pt,line join=round] ( 95.03, 27.94) --
	( 95.03, 30.69);

\path[draw=drawColor,line width= 0.6pt,line join=round] (130.32, 27.94) --
	(130.32, 30.69);

\path[draw=drawColor,line width= 0.6pt,line join=round] (165.62, 27.94) --
	(165.62, 30.69);
\end{scope}
\begin{scope}
\path[clip] (  0.00,  0.00) rectangle (441.02,264.61);
\definecolor{drawColor}{gray}{0.30}

\node[text=drawColor,anchor=base,inner sep=0pt, outer sep=0pt, scale=  0.88] at ( 59.73, 19.68) {none};

\node[text=drawColor,anchor=base,inner sep=0pt, outer sep=0pt, scale=  0.88] at ( 95.03, 19.68) {$ \mathsf{U}_{i} $};

\node[text=drawColor,anchor=base,inner sep=0pt, outer sep=0pt, scale=  0.88] at (130.32, 19.68) {$ \psi_{(m),i} $};

\node[text=drawColor,anchor=base,inner sep=0pt, outer sep=0pt, scale=  0.88] at (165.62, 19.68) {rank};
\end{scope}
\begin{scope}
\path[clip] (  0.00,  0.00) rectangle (441.02,264.61);
\definecolor{drawColor}{gray}{0.20}

\path[draw=drawColor,line width= 0.6pt,line join=round] (213.47, 27.94) --
	(213.47, 30.69);

\path[draw=drawColor,line width= 0.6pt,line join=round] (248.76, 27.94) --
	(248.76, 30.69);

\path[draw=drawColor,line width= 0.6pt,line join=round] (284.06, 27.94) --
	(284.06, 30.69);

\path[draw=drawColor,line width= 0.6pt,line join=round] (319.35, 27.94) --
	(319.35, 30.69);
\end{scope}
\begin{scope}
\path[clip] (  0.00,  0.00) rectangle (441.02,264.61);
\definecolor{drawColor}{gray}{0.30}

\node[text=drawColor,anchor=base,inner sep=0pt, outer sep=0pt, scale=  0.88] at (213.47, 19.68) {none};

\node[text=drawColor,anchor=base,inner sep=0pt, outer sep=0pt, scale=  0.88] at (248.76, 19.68) {$ \mathsf{U}_{i} $};

\node[text=drawColor,anchor=base,inner sep=0pt, outer sep=0pt, scale=  0.88] at (284.06, 19.68) {$ \psi_{(m),i} $};

\node[text=drawColor,anchor=base,inner sep=0pt, outer sep=0pt, scale=  0.88] at (319.35, 19.68) {rank};
\end{scope}
\begin{scope}
\path[clip] (  0.00,  0.00) rectangle (441.02,264.61);
\definecolor{drawColor}{gray}{0.30}

\node[text=drawColor,anchor=base east,inner sep=0pt, outer sep=0pt, scale=  0.88] at ( 33.61,148.93) {0.00};

\node[text=drawColor,anchor=base east,inner sep=0pt, outer sep=0pt, scale=  0.88] at ( 33.61,170.50) {0.25};

\node[text=drawColor,anchor=base east,inner sep=0pt, outer sep=0pt, scale=  0.88] at ( 33.61,192.07) {0.50};

\node[text=drawColor,anchor=base east,inner sep=0pt, outer sep=0pt, scale=  0.88] at ( 33.61,213.64) {0.75};

\node[text=drawColor,anchor=base east,inner sep=0pt, outer sep=0pt, scale=  0.88] at ( 33.61,235.21) {1.00};
\end{scope}
\begin{scope}
\path[clip] (  0.00,  0.00) rectangle (441.02,264.61);
\definecolor{drawColor}{gray}{0.20}

\path[draw=drawColor,line width= 0.6pt,line join=round] ( 35.81,151.96) --
	( 38.56,151.96);

\path[draw=drawColor,line width= 0.6pt,line join=round] ( 35.81,173.53) --
	( 38.56,173.53);

\path[draw=drawColor,line width= 0.6pt,line join=round] ( 35.81,195.10) --
	( 38.56,195.10);

\path[draw=drawColor,line width= 0.6pt,line join=round] ( 35.81,216.67) --
	( 38.56,216.67);

\path[draw=drawColor,line width= 0.6pt,line join=round] ( 35.81,238.24) --
	( 38.56,238.24);
\end{scope}
\begin{scope}
\path[clip] (  0.00,  0.00) rectangle (441.02,264.61);
\definecolor{drawColor}{gray}{0.30}

\node[text=drawColor,anchor=base east,inner sep=0pt, outer sep=0pt, scale=  0.88] at ( 33.61, 31.97) {0.00};

\node[text=drawColor,anchor=base east,inner sep=0pt, outer sep=0pt, scale=  0.88] at ( 33.61, 53.54) {0.25};

\node[text=drawColor,anchor=base east,inner sep=0pt, outer sep=0pt, scale=  0.88] at ( 33.61, 75.11) {0.50};

\node[text=drawColor,anchor=base east,inner sep=0pt, outer sep=0pt, scale=  0.88] at ( 33.61, 96.67) {0.75};

\node[text=drawColor,anchor=base east,inner sep=0pt, outer sep=0pt, scale=  0.88] at ( 33.61,118.24) {1.00};
\end{scope}
\begin{scope}
\path[clip] (  0.00,  0.00) rectangle (441.02,264.61);
\definecolor{drawColor}{gray}{0.20}

\path[draw=drawColor,line width= 0.6pt,line join=round] ( 35.81, 35.00) --
	( 38.56, 35.00);

\path[draw=drawColor,line width= 0.6pt,line join=round] ( 35.81, 56.57) --
	( 38.56, 56.57);

\path[draw=drawColor,line width= 0.6pt,line join=round] ( 35.81, 78.14) --
	( 38.56, 78.14);

\path[draw=drawColor,line width= 0.6pt,line join=round] ( 35.81, 99.71) --
	( 38.56, 99.71);

\path[draw=drawColor,line width= 0.6pt,line join=round] ( 35.81,121.27) --
	( 38.56,121.27);
\end{scope}
\begin{scope}
\path[clip] (  0.00,  0.00) rectangle (441.02,264.61);
\definecolor{drawColor}{RGB}{0,0,0}

\node[text=drawColor,anchor=base,inner sep=0pt, outer sep=0pt, scale=  1.10] at (189.54,  7.64) {Heterogeneity};
\end{scope}
\begin{scope}
\path[clip] (  0.00,  0.00) rectangle (441.02,264.61);
\definecolor{drawColor}{RGB}{0,0,0}

\node[text=drawColor,rotate= 90.00,anchor=base,inner sep=0pt, outer sep=0pt, scale=  1.10] at ( 13.08,136.61) {Amari distance};
\end{scope}
\begin{scope}
\path[clip] (  0.00,  0.00) rectangle (441.02,264.61);
\definecolor{fillColor}{RGB}{255,255,255}

\path[fill=fillColor] (351.53, 94.60) rectangle (435.52,178.63);
\end{scope}
\begin{scope}
\path[clip] (  0.00,  0.00) rectangle (441.02,264.61);
\definecolor{drawColor}{RGB}{0,0,0}

\node[text=drawColor,anchor=base west,inner sep=0pt, outer sep=0pt, scale=  1.10] at (357.03,164.48) {Method};
\end{scope}
\begin{scope}
\path[clip] (  0.00,  0.00) rectangle (441.02,264.61);
\definecolor{fillColor}{RGB}{255,255,255}

\path[fill=fillColor] (357.03,143.46) rectangle (371.49,157.91);
\definecolor{drawColor}{gray}{0.20}

\path[draw=drawColor,line width= 0.6pt] (364.26,144.90) --
	(364.26,147.07);

\path[draw=drawColor,line width= 0.6pt] (364.26,154.30) --
	(364.26,156.47);
\definecolor{fillColor}{RGB}{213,62,79}

\path[draw=drawColor,line width= 0.6pt,fill=fillColor,fill opacity=0.60] (358.84,147.07) rectangle (369.68,154.30);

\path[draw=drawColor,line width= 0.6pt] (358.84,150.69) --
	(369.68,150.69);

\path[draw=drawColor,line width= 0.6pt] (364.26,144.90) --
	(364.26,144.90);

\path[draw=drawColor,line width= 0.6pt] (364.26,156.47) --
	(364.26,156.47);
\end{scope}
\begin{scope}
\path[clip] (  0.00,  0.00) rectangle (441.02,264.61);
\definecolor{fillColor}{RGB}{255,255,255}

\path[fill=fillColor] (357.03,129.01) rectangle (371.49,143.46);
\definecolor{drawColor}{gray}{0.20}

\path[draw=drawColor,line width= 0.6pt] (364.26,130.45) --
	(364.26,132.62);

\path[draw=drawColor,line width= 0.6pt] (364.26,139.85) --
	(364.26,142.01);
\definecolor{fillColor}{RGB}{254,224,139}

\path[draw=drawColor,line width= 0.6pt,fill=fillColor,fill opacity=0.60] (358.84,132.62) rectangle (369.68,139.85);

\path[draw=drawColor,line width= 0.6pt] (358.84,136.23) --
	(369.68,136.23);

\path[draw=drawColor,line width= 0.6pt] (364.26,130.45) --
	(364.26,130.45);

\path[draw=drawColor,line width= 0.6pt] (364.26,142.01) --
	(364.26,142.01);
\end{scope}
\begin{scope}
\path[clip] (  0.00,  0.00) rectangle (441.02,264.61);
\definecolor{fillColor}{RGB}{255,255,255}

\path[fill=fillColor] (357.03,114.55) rectangle (371.49,129.01);
\definecolor{drawColor}{gray}{0.20}

\path[draw=drawColor,line width= 0.6pt] (364.26,116.00) --
	(364.26,118.16);

\path[draw=drawColor,line width= 0.6pt] (364.26,125.39) --
	(364.26,127.56);
\definecolor{fillColor}{RGB}{50,136,189}

\path[draw=drawColor,line width= 0.6pt,fill=fillColor,fill opacity=0.60] (358.84,118.16) rectangle (369.68,125.39);

\path[draw=drawColor,line width= 0.6pt] (358.84,121.78) --
	(369.68,121.78);

\path[draw=drawColor,line width= 0.6pt] (364.26,116.00) --
	(364.26,116.00);

\path[draw=drawColor,line width= 0.6pt] (364.26,127.56) --
	(364.26,127.56);
\end{scope}
\begin{scope}
\path[clip] (  0.00,  0.00) rectangle (441.02,264.61);
\definecolor{fillColor}{RGB}{255,255,255}

\path[fill=fillColor] (357.03,100.10) rectangle (371.49,114.55);
\definecolor{drawColor}{gray}{0.20}

\path[draw=drawColor,line width= 0.6pt] (364.26,101.54) --
	(364.26,103.71);

\path[draw=drawColor,line width= 0.6pt] (364.26,110.94) --
	(364.26,113.11);
\definecolor{fillColor}{RGB}{0,0,139}

\path[draw=drawColor,line width= 0.6pt,fill=fillColor,fill opacity=0.60] (358.84,103.71) rectangle (369.68,110.94);

\path[draw=drawColor,line width= 0.6pt] (358.84,107.32) --
	(369.68,107.32);

\path[draw=drawColor,line width= 0.6pt] (364.26,101.54) --
	(364.26,101.54);

\path[draw=drawColor,line width= 0.6pt] (364.26,113.11) --
	(364.26,113.11);
\end{scope}
\begin{scope}
\path[clip] (  0.00,  0.00) rectangle (441.02,264.61);
\definecolor{drawColor}{RGB}{0,0,0}

\node[text=drawColor,anchor=base west,inner sep=0pt, outer sep=0pt, scale=  0.88] at (376.99,147.66) {Individual};
\end{scope}
\begin{scope}
\path[clip] (  0.00,  0.00) rectangle (441.02,264.61);
\definecolor{drawColor}{RGB}{0,0,0}

\node[text=drawColor,anchor=base west,inner sep=0pt, outer sep=0pt, scale=  0.88] at (376.99,133.20) {MG};
\end{scope}
\begin{scope}
\path[clip] (  0.00,  0.00) rectangle (441.02,264.61);
\definecolor{drawColor}{RGB}{0,0,0}

\node[text=drawColor,anchor=base west,inner sep=0pt, outer sep=0pt, scale=  0.88] at (376.99,118.75) {PMG};
\end{scope}
\begin{scope}
\path[clip] (  0.00,  0.00) rectangle (441.02,264.61);
\definecolor{drawColor}{RGB}{0,0,0}

\node[text=drawColor,anchor=base west,inner sep=0pt, outer sep=0pt, scale=  0.88] at (376.99,104.29) {PMG $ L>K_1 $};
\end{scope}
\end{tikzpicture}

%% file: C_AppAsymp.tex
\section{Detailed proofs of asymptotic results} \label{sec:AppAsymp}

We follow the definition of the VAR process in Section \ref{sec:Estimation}. Analogously to \cite{BruggemannEtAl2016}, we additionally define the $(Kp \times K)$-dimensional matrices $C_{ij} = [\Xi_{i,j-1}', \dots, \Xi_{i,j-p}']'$. For any random vector $x$, let $\|x\|_p = (\mathbb E\|x\|^p )^{1/p}$, $p\geq 1$, denote its $L_p$-norm w.r.t the Euclidean vector norm. For a matrix $X$, $\|X\|$ denotes the induced operator norm, and $\|X\|_F$ the Frobenius norm. Both matrix norms are understood  with respect to the Euclidean norm. Throughout, \(a\lesssim b\) means that \(a\leq Cb\) for a constant \(C<\infty\) independent of all diverging sample and block dimensions. When the comparison is only stochastic, we use \(O_p(\cdot)\) or
\(O_{p^*}(\cdot)\) explicitly.


\subsection{Proof of Theorem \ref{th:AsyMG}} \label{app:AsyMG}

Theorem \ref{th:AsyMG} states that under Assumption~\ref{as:mixing_conditions} and the joint limit of $ (N,T) \rightarrow ( \infty, \infty ) $, whereas $N/T\to 0$, it holds that
\begin{align*}
    \sqrt{NT} & 
    \begin{pmatrix}
        \bar{\beta} - \beta \\
        \Tilde{\sigma} - \sigma \\
        \Tilde{\varphi} - \varphi
    \end{pmatrix}
    \overset{d}{\longrightarrow} \mathcal{N}(0,\Tilde{V}).
\end{align*}
In the following, all asymptotic statements for the mean-group estimators are understood to be centered at the corresponding realized finite-panel targets. In particular, for
\[
    \beta_N
    \coloneqq \frac{1}{N}\sum_{i=1}^N \beta_i,
    \qquad
    \sigma_N
    \coloneqq \frac{1}{N}\sum_{i=1}^N \operatorname{vech}(\Sigma_{u,i}),
    \qquad
    \varphi_N
    \coloneqq \frac{1}{N}\sum_{i=1}^N \varphi_i,
\]
the object of inference is the average parameter vector among the realized cross-sectional units. Thus, the asymptotics consider the object 
\(
    \sqrt{NT}
    \left(
    (\bar{\beta}-\beta_N)',
    (\tilde{\sigma}-\sigma_N)',
    (\tilde{\varphi}-\varphi_N)'
    \right)'
\)
and thus isolates the within-unit estimation error. The additional cross-sectional variation is not part of the asymptotic analysis and would generally lead to slower \(\sqrt{N}\) convergence rate.
The entries of $\Tilde{V}$ are: 
\begin{align*}
    \Tilde{V}_{11} 
    &= \lim_{N,T\to\infty} \frac{1}{NT} \sum_{h_{(i)}=-(N-1)}^{N-1} \sum_{i=\max\{1,1+h_{(i)}\}}^{\min\{N,N+h_{(i)}\}} \sum_{h_{(t)}=-(T-1)}^{T-1} \sum_{t=\max\{1,1+h_{(t)}\}}^{\min\{T,T+h_{(t)}\}} \mathbb E \Bigg[ \left(\Sigma_{ZZ,i}^{-1}\otimes I\right) \sum_{j,j'=1}^\infty \left(C_{ij}\otimes I \right)\\
    & \quad \times \vec(u_{it} u_{i,t-j}') \vec \left(u_{i+h_{(i)},t+h_{(t)}}u_{i+h_{(i)},t+h_{(t)}-j'}'\right)' \left(C_{i+h_{(i)},j'}\otimes I\right)' \left( \Sigma_{ZZ,i+h_{(i)}}^{-1}\otimes I\right)' \Bigg] \\ 
    \Tilde{V}_{22} 
    & = L_K \Bigg\{ \lim_{N,T\to\infty} \frac{1}{NT} \sum_{h_{(i)}=-(N-1)}^{N-1} \sum_{i=\max\{1,1+h_{(i)}\}}^{\min\{N,N+h_{(i)}\}} \sum_{h_{(t)}=-(T-1)}^{T-1} \sum_{t=\max\{1,1+h_{(t)}\}}^{\min\{T,T+h_{(t)}\}} \\
    & \quad \times \left( \mathbb E \left[ \vec(u_{it} u_{it}') \vec \left(u_{i+h_{(i)},t+h_{(t)}} u_{i+h_{(i)},t+h_{(t)}}'\right)'\right] - \vec(\Sigma_{u,i}) \vec (\Sigma_{u,i+h_{(i)}})' \right) \Bigg\} L_K' \\ 
    \Tilde{V}_{33} 
    & = \lim_{N,T\to\infty} \frac{1}{NT} \sum_{h_{(i)}=-(N-1)}^{N-1} \sum_{i=\max\{1,1+h_{(i)}\}}^{\min\{N,N+h_{(i)}\}} \Bigg\{ \mathbb E \Bigg[ \operatorname{chol}\left( \tilde \Sigma_{u,i} \right)^{-1} \sum_{h_{(t)}=-(T-1)}^{T-1} \sum_{t=\max\{1,1+h_{(t)}\}}^{\min\{T,T+h_{(t)}\}} \vec\left(u_{it} m_{it}'\right)\\
    & \quad \times  \vec\left(u_{i+h_{(i)},t+h_{(t)}} m_{i+h_{(i)},t+h_{(t)}}'\right)'\left( \operatorname{chol} \left( \Sigma_{u,i+h_{(i)}} \right)^{-1}\right)' \Bigg] - \vec(\Sigma_{\varepsilon^\circ m,i}) \vec (\Sigma_{\varepsilon^\circ m,i+h_{(i)}})' \Bigg\} \\
    \Tilde{V}_{21} 
    & = L_K \Bigg\{ \lim_{N,T\to\infty} \frac{1}{NT} \sum_{h_{(i)}=-(N-1)}^{N-1} \sum_{i=\max\{1,1+h_{(i)}\}}^{\min\{N,N+h_{(i)}\}} \sum_{h_{(t)}=-(T-1)}^{T-1} \sum_{t=\max\{1,1+h_{(t)}\}}^{\min\{T,T+h_{(t)}\}} \sum_{j=1}^\infty \mathbb E \Bigg[ \vec\left(u_{it} u_{it}'\right) \\
    & \quad \times \vec\left(u_{i+h_{(i)},t+h_{(t)}} u_{i+h_{(i)},t+h_{(t)}-j}'\right)' \left(C_{i+h_{(i)},j}\otimes I\right)' \left(\Sigma_{ZZ,i+h_{(i)}}^{-1}\otimes I\right)' \Bigg] \Bigg\} \\
    \Tilde{V}_{31} 
    & = \lim_{N,T\to\infty} \frac{1}{NT} \sum_{h_{(i)}=-(N-1)}^{N-1} \sum_{i=\max\{1,1+h_{(i)}\}}^{\min\{N,N+h_{(i)}\}} \sum_{h_{(t)}=-(T-1)}^{T-1} \sum_{t=\max\{1,1+h_{(t)}\}}^{\min\{T,T+h_{(t)}\}} \sum_{j=1}^\infty \mathbb E \Bigg[ \operatorname{chol}\left( \tilde \Sigma_{u,i} \right)^{-1} \\
    & \quad \times \vec\left(u_{it} m_{it}'\right) \vec\left(u_{i+h_{(i)},t+h_{(t)}} u_{i+h_{(i)},t+h_{(t)}-j}'\right)' \left(C_{i+h_{(i)},j}\otimes I\right)' \left(\Sigma_{ZZ,i+h_{(i)}}^{-1}\otimes I\right)' \Bigg] \\
    \Tilde{V}_{32} 
    & = \Bigg\{ \lim_{N,T\to\infty} \frac{1}{NT} \sum_{h_{(i)}=-(N-1)}^{N-1} \sum_{i=\max\{1,1+h_{(i)}\}}^{\min\{N,N+h_{(i)}\}}  \sum_{h_{(t)}=-(T-1)}^{T-1} \sum_{t=\max\{1,1+h_{(t)}\}}^{\min\{T,T+h_{(t)}\}} \mathbb E \Big[ \operatorname{chol}\left( \tilde \Sigma_{u,i} \right)^{-1} \vec\left(u_{it} m_{it}'\right) \\
    & \quad \times \vec\left(u_{i+h_{(i)},t+h_{(t)}} u_{i+h_{(i)},t+h_{(t)}}'\right)' - \vec\left(\Sigma_{\varepsilon^\circ m,i}\right) \vec\left(\Sigma_{u,i+h_{(i)}}\right)' \Big] \Bigg\} L_K'
\end{align*}

\paragraph{Consistency of the parameter estimators} 
For consistency of the parameter estimators, we follow \cite{JentschLunsford2021}. Since our estimator $\boldsymbol{\widebar{\varphi}}$ does not coincide with the estimator considered there, namely the (individual) estimator $ \widehat{\varphi}^{JL}_i = \vec\left(\widehat{\Sigma}_{mu,i}\right)$, we prove consistency for the PMG proxy moment estimator only. The PMG population target is defined as 
\begin{equation} \label{eq: PMG definition}
    \Sigma_{\epsilon^\circ m}
    :=
    \lim_{N\to\infty}
    \frac{1}{N}\sum_{i=1}^N
    \Sigma_{\epsilon^\circ m,i},
    \quad \text{where} \quad
    \Sigma_{\epsilon^\circ m,i}
    =
    \mathsf{B}_i^{\circ-1}\Sigma_{um,i}.
\end{equation}
By the uniform full-rank condition in Assumption~\ref{as:SVAR}(iii), uniform non-singularity holds:
\[
    \inf_{i\in\mathbb{N}}
    \lambda_{\min}(\Sigma_{u,i}) >0.
\]
Further, together with Assumption \ref{as:SVAR}~(i), the population regressor moment $\widehat{\Sigma}_{ZZ,i}=\E\left[ \boldsymbol{z}_{it}\boldsymbol{z}_{it}'\right]$ satisfies $\inf_{i\in\mathbb{N}} \lambda_{\min}(\widehat{\Sigma}_{ZZ,i})>0$.

By Hölder's inequality and the uniform moment bound in Assumption~\ref{as:mixing_conditions}~(ii), it holds
$\sup_{i\in\mathbb{N}}\|\Sigma_{um,i}\|<\infty$.
Further, by stationarity, the mixing condition, and the uniform moment bound in Assumption~\ref{as:mixing_conditions}~(i)--(ii), the law of large
numbers for mixing random fields \citep[Thm. 3]{jenish2009} implies
\[
    \frac{1}{N}\sum_{i=1}^N
    \left\|
        \widetilde{\Sigma}_{u,i}-\Sigma_{u,i}
    \right\|^2
    =
    o_p(1)
    \quad
    \text{and}
    \quad
    \frac{1}{N}\sum_{i=1}^N
    \left\|
        \widetilde{\Sigma}_{um,i}-\Sigma_{um,i}
    \right\|^2
    =
    o_p(1).
\]
By consistency of the OLS estimator,
\[
    \widehat{\Sigma}_{um,i}
    -
    \widetilde{\Sigma}_{um,i}
    =
    -
    (\widehat{\boldsymbol{A}}_i- \boldsymbol{A}_i)
    \frac{1}{T}\sum_{t=1}^T
    \boldsymbol{z}_{it}\boldsymbol{m}_{it}'.
\]
Uniform stability in Assumption~\ref{as:SVAR}(i), the moment and
mixing conditions in Assumption \ref{as:mixing_conditions}(i)--(ii),
and uniform non-singularity of \(\widehat{\Sigma}_{ZZ,i}\) imply $N^{-1}\sum_{i=1}^N \|\widehat{A}_i-A_i\|^2 = o_p(1)$.
Consequently, $N^{-1}\sum_{i=1}^N \left\| \widehat{\Sigma}_{um,i} - \widetilde{\Sigma}_{um,i} \right\|^2 = o_p(1)$.
Together, this yields $N^{-1}\sum_{i=1}^N \left\| \widehat{\Sigma}_{um,i} - \Sigma_{um,i} \right\|^2 = o_p(1)$. The inverse Cholesky factor is a composition of locally Lipschitz continuous maps on sets of positive definite matrices whose smallest eigenvalues are bounded away from zero. Hence, uniform non-singularity in Assumption~\ref{as:SVAR}(iii)
implies
\[
    \frac{1}{N}\sum_{i=1}^N
    \left\|
        \widehat{\mathsf{B}}_i^{\circ-1}
        -
        \mathsf{B}_i^{\circ-1}
    \right\|^2
    =
    o_p(1).
\]
Using $\widebar{\Sigma}_{\epsilon^\circ m} = N^{-1} \sum_{i=1}^N \widehat{\mathsf{B}}_i^{\circ-1} \widehat{\Sigma}_{um,i}$,
we obtain
\begin{equation} \label{eq: Bound Consistency of PMG}
\begin{aligned}
    \widebar{\Sigma}_{\epsilon^\circ m}
    -
    \Sigma_{\epsilon^\circ m}
    =
    \frac{1}{N}\sum_{i=1}^N
    \left(
        \widehat{\mathsf{B}}_i^{\circ-1}
        -
        \mathsf{B}_i^{\circ-1}
    \right)\Sigma_{um,i}   
    +
    \frac{1}{N}\sum_{i=1}^N
    \widehat{\mathsf{B}}_i^{\circ-1}
    \left(
        \widehat{\Sigma}_{um,i}
        -
        \Sigma_{um,i}
    \right)               
    +
    \left(
        \frac{1}{N}\sum_{i=1}^N
        \Sigma_{\epsilon^\circ m,i}
        -
        \Sigma_{\epsilon^\circ m}
    \right).
\end{aligned}
\end{equation}
The norm of the first term is bounded by
\[
\begin{aligned}
    &\sup_{i\in\mathbb{N}}\|\Sigma_{um,i}\|
    \frac{1}{N}\sum_{i=1}^N
    \left\|
        \widehat{\mathsf{B}}_i^{\circ-1}
        -
        \mathsf{B}_i^{\circ-1}
    \right\|   
    \leq
    \sup_{i\in\mathbb{N}}\|\Sigma_{um,i}\|
    \left[
        \frac{1}{N}\sum_{i=1}^N
        \left\|
            \widehat{\mathsf{B}}_i^{\circ-1}
            -
            \mathsf{B}_i^{\circ-1}
        \right\|^2
    \right]^\frac12
    =
    o_p(1).
\end{aligned}
\]
By the Cauchy--Schwarz inequality, the second term in \eqref{eq: Bound Consistency of PMG} is bounded by
\[
\begin{aligned}
    &
    \left(
        \frac{1}{N}\sum_{i=1}^N
        \left\|
            \widehat{\mathsf{B}}_i^{\circ-1}
        \right\|^2
    \right)^\frac12
    \left(
        \frac{1}{N}\sum_{i=1}^N
        \left\|
            \widehat{\Sigma}_{um,i}
            -
            \Sigma_{um,i}
        \right\|^2
    \right)^\frac12
    =
    o_p(1).
\end{aligned}
\]
Finally, the third term in \eqref{eq: Bound Consistency of PMG} converges to zero by the definition of \(\Sigma_{\epsilon^\circ m}\) in \eqref{eq: PMG definition} and consistency follows.

\paragraph{Joint asymptotic normality} 
We recall the definition of \(\Tilde{\sigma} \) and \(\Tilde{\varphi} \) in section \ref{sec:AsyInference} and consider the asymptotics of
\begin{align*}
    \sqrt{NT}
    \begin{pmatrix}
    \widebar{\beta} - \beta\\
    \Tilde{\sigma} - \sigma \\
    \Tilde{\varphi} - \varphi
    \end{pmatrix} 
    &=
    \frac{1}{\sqrt{NT}} \sum_{i=1}^{N} \sum_{t=1}^{T} 
    \begin{pmatrix}
    \left( \left( \widehat{\Sigma}_{ZZ,i}\right)^{-1} \otimes I \right) \sum_{j=1}^\infty \left( C_{ij} \otimes I \right) \vec \left(u_{it} u_{i,t-j}'\right)\\
    L_K  \vec \left( u_{it} u_{it}' - \Sigma_{u,i} \right)\\
    \vec \left( \operatorname{chol}(\Tilde{\Sigma}_{u,i})^{-1}u_{it} m_{it}' - \Sigma_{\epsilon^{\circ} m,i} \right)
    \end{pmatrix}\eqqcolon
    \sum_{i=1}^{N} \sum_{t=1}^{T} X_{it, NT}.
\end{align*}
We apply a CLT for a near-epoch dependent (NED) triangular array of random fields on an $\alpha$-mixing random field \citep[Thm. 2]{jenish2012}. The $\alpha$-mixing random field $\mathbf{x}$ acts as an input process for the triangular array $X = \left( X_{it, NT}: i\leq N, t\leq T \right)$ of a centered random field on the irregular spaced two-dimensional index set ($\mathbb{N}_+\times\mathbb{Z}$). Let
\[
\Sigma_{NT}
\coloneqq
\operatorname{Var}\left(\sum_{i=1}^N\sum_{t=1}^T X_{it,NT}\right).
\]
We apply the CLT to the scalar triangular array \(\lambda'X\), where $\|\lambda\|=1$ is of suitable dimension and fixed. Hence,
\[
\sigma^{-1}_{NT}
\sum_{i=1}^N\sum_{t=1}^T \lambda'X_{it,NT}
\overset{d}{\longrightarrow} \mathcal N(0,1), \quad \text{where }
\sigma^2_{NT} = \lambda' \Sigma_{NT} \lambda.
\]
The multivariate result follows by the Cramér-Wold device \citep[see, e.g.][Thm. 29.4]{Billingsley1995}. We examine the following conditions for $\lambda'X$ and $\mathbf{x}$:
\begin{enumerate} [label=\alph{enumi})]
    \item There exists some $\delta>0$, such that
    \begin{itemize}
        \item $\lambda'X$ is uniformly $L_{2+\delta}$-bounded, and
        \item $\sum_{r=1}^\infty r^{\frac{2\delta}{2+\delta}+1} \widehat{\alpha}(r)^{\frac{\delta}{2(2+\delta)}} < \infty$, where $\widehat{\alpha}$ is the  $\alpha$-mixing coefficient of $\mathbf{x}$.
    \end{itemize}
    \item The NED coefficients $\psi(s)$ of $\lambda'X$ satisfy $\sum_{s=1}^\infty s \psi(s) < \infty$.
    \item $\inf_{N\in\mathbb N,T \in\mathbb Z} \sigma^2_{NT}  > 0$.
\end{enumerate}

\paragraph{c)} is imposed in Assumption \ref{as:mixing_conditions}~(iii).

\paragraph{a)} For $\lambda'X = \lambda' (X^{(1)\prime}, X^{(2)\prime}, X^{(3)\prime})'$ we prove
\begin{align*}
    \sup_{N\in\mathbb{N}_+, T\in\mathbb{Z}} \sup_{i\leq N, t\leq T} \| \lambda'X_{it} \|_{2+\delta} 
    \leq \| \lambda \| \sup_{N,T} \sup_{i,t} \left( \| X^{(1)}_{it} \|_{2+\delta} + \| X^{(2)}_{it} \|_{2+\delta} + \| X^{(3)}_{it} \|_{2+\delta} \right) < \infty.
\end{align*}
It suffices to uniformly bound each component. With Assumptions \ref{as:SVAR}~(i), \ref{as:Proxy}, \ref{as:mixing_conditions}~(i)--(ii) and Hölder's inequality it follows that
\begin{align*}
    \sup_{N,T}\sup_{i\leq N, t\leq T}\|X^{(1)}_{it}\|_{2+\delta}
    &\leq
    \sup_{N,T}\sup_{i\leq N}
    \sum_{j=1}^{\infty}
    \|C_{ij}\|
    \left\|\widehat{\Sigma}_{ZZ,i}^{-1}\right\|_{2(2+\delta)}
    \left\|
    \operatorname{vec}\left(u_{it}u_{i,t-j}'\right)
    \right\|_{2(2+\delta)} \\
    &\leq
    \sup_{N,T}\sup_{i\leq N}
    \sum_{j=1}^{\infty}
    \|C_{ij}\|
    \left\|\widehat{\Sigma}_{ZZ,i}^{-1}\right\|_{2(2+\delta)}
    \|u_{it}\|_{4(2+\delta)}
    \|u_{i,t-j}\|_{4(2+\delta)}
    <\infty , \\
    \sup_{N,T}\sup_{i\leq N, t\leq T}\|X^{(2)}_{it}\|_{2+\delta}
    &\leq
    \|L_K\|
    \sup_{N,T}\sup_{i\leq N}
    \left\|
    \operatorname{vec}\left(u_{it}u_{it}'-\Sigma_{u,i}\right)
    \right\|_{2+\delta} \\
    &\leq
    \|L_K\|
    \sup_{N,T}\sup_{i\leq N}
    \left(
    \|u_{it}\|_{2(2+\delta)}^2
    +
    \|\Sigma_{u,i}\|
    \right)
    <\infty , \\
    \sup_{N,T}\sup_{i\leq N, t\leq T}\|X^{(3)}_{it}\|_{2+\delta}
    &\leq
    \sup_{N,T}\sup_{i\leq N}
    \left(
    \left\|
    \left(\operatorname{chol}\widetilde{\Sigma}_{u,i}\right)^{-1}
    u_{it}m_{it}'
    \right\|_{2+\delta}
    +
    \|\Sigma_{\varepsilon^\circ m,i}\|
    \right) \\
    &\leq
    \sup_{N,T}\sup_{i\leq N}
    \left(
    \left\|
    \left(\operatorname{chol}\widetilde{\Sigma}_{u,i}\right)^{-1}
    \right\|_{2(2+\delta)}
    \|u_{it}m_{it}'\|_{2(2+\delta)}
    +
    \|\Sigma_{\varepsilon^\circ m,i}\|
    \right) \\
    &\leq
    \sup_{N,T}\sup_{i\leq N}
    \left(
    \left\|
    \left(\operatorname{chol}\widetilde{\Sigma}_{u,i}\right)^{-1}
    \right\|_{2(2+\delta)}
    \|u_{it}\|_{4(2+\delta)}
    \|m_{it}\|_{4(2+\delta)}
    +
    \|\Sigma_{\varepsilon^\circ m,i}\|
    \right)
    <\infty .
\end{align*}
where $\delta>0$ is from Assumption \ref{as:mixing_conditions}.

As the \(L_2\)-NED property is preserved under linear combinations and coordinate-wise transformations \citep[p.~181]{jenish2012}, it suffices to verify the property component-wise. Let
\[
\mathcal F_{it}(s)
=
\sigma\{ \mathbf{x}_{i,t-s},\ldots,\mathbf{x}_{i,t+s}\}
\]
and abbreviate $\mathbb E_s \coloneqq \mathbb E[\cdot \mid \mathcal{F}_{it}(s)]$. Since \(\widehat{\Sigma}_{ZZ,i}\) and \(\widetilde{\Sigma}_{u,i}\) are computed from the full time series of unit \(i\), the feasible summands are not local functions of \(\mathbf{x}_{it} =(u_{it}',m_{it}')'.\) For the third component, the first-order Taylor expansion around
\((\Sigma_{u,i},\Sigma_{um,i})\) gives
\[
\begin{aligned}
    &
    \operatorname{vec}\!\left(
        \operatorname{chol}(\widetilde{\Sigma}_{u,i})^{-1}
        \widetilde{\Sigma}_{um,i}
    \right)
    -
    \operatorname{vec}\!\left(
        \operatorname{chol}(\Sigma_{u,i})^{-1}
        \Sigma_{um,i}
    \right)
    \\
    &\quad =
    \operatorname{vec}\!\left[
        \operatorname{chol}(\Sigma_{u,i})^{-1}
        \left(
            \widetilde{\Sigma}_{um,i}
            -
            \Sigma_{um,i}
        \right)
        -
        D_i
        \operatorname{chol}(\Sigma_{u,i})^{-1}
        \Sigma_{um,i}
    \right]
    +
    R_{iT},
\end{aligned}
\]
where $D_i$ is a lower-triangular matrix which appears from differentiating the inverse Cholesky factor and follows from the standard derivative of the Cholesky
decomposition; see, e.g., \citet{smith1995differentiation}.
Under the uniform eigenvalue bound for \(\Sigma_{u,i}\), the uniform consistency of
\(\widetilde{\Sigma}_{u,i}\), and the smoothness of the inverse Cholesky map on a neighborhood of \(\Sigma_{u,i}\), the Taylor remainder satisfies
\[
    \|R_{iT}\|
    \lesssim
    \left(
    \|\widetilde{\Sigma}_{u,i}-\Sigma_{u,i}\|
    +
    \|\widetilde{\Sigma}_{um,i}-\Sigma_{um,i}\|
    \right)^2,
\]
uniformly in \(i\). By the mixing and moment assumptions,
\[
    \sup_{i\leq N} \mathbb E\|R_{iT}\|
    \leq \sup_{i\leq N}
    \mathbb E\left\|
    \widetilde{\Sigma}_{u,i}-\Sigma_{u,i}
    \right\|^2
    +
    \sup_{i\leq N}
    \mathbb E\left\|
    \widetilde{\Sigma}_{um,i}-\Sigma_{um,i}
    \right\|^2
    \lesssim
    \frac{1}{T}.
\]
Thus, if \(N/T\to0\), the Taylor remainder is asymptotically negligible and
\[
    \sqrt{NT}
    \frac{1}{N}\sum_{i=1}^N R_{iT}
    =
    o_p(1).
\]
In this case, \(X^{(2)}\) and \(X^{(3)}\) are \(\mathcal F_{it}(s)\)-measurable for every \(s\geq0\), and their NED approximation error is zero. It remains to consider the first component \(X^{(1)}\). 
Since
\[
\sum_{j=1}^{s}
\left(C_{ij}\otimes I\right)
\operatorname{vec}(u_{it}u_{i,t-j}')
\]
is \(\mathcal F_{it}(s)\)-measurable, Jensen's inequality gives
\[
\begin{aligned}
    &
    \left\|
    \sum_{j=1}^{\infty}
    \left(C_{ij}\otimes I\right)
    \operatorname{vec}(u_{it}u_{i,t-j}')
    -
    \mathbb E_s
    \sum_{j=1}^{\infty}
    \left(C_{ij}\otimes I\right)
    \operatorname{vec}(u_{it}u_{i,t-j}')
    \right\|_2
    \\
    &\lesssim
    \left\|
    \sum_{j=s+1}^{\infty}
    \left(C_{ij}\otimes I\right)
    \operatorname{vec}(u_{it}u_{i,t-j}')
    \right\|_2
    \leq
    \sum_{j=s+1}^{\infty}
    \|C_{ij}\|
    \|u_{it}\|_4
    \|u_{i,t-j}\|_4 .
\end{aligned}
\]
By the uniform moment condition and the uniform stability condition,
\[
    \sup_{i,t}
    \left\|
    X^{(1)}-\mathbb E_sX^{(1)}
    \right\|_2
    \leq C\rho^s
\]
for some \(\rho\in(0,1)\). Therefore the linearized summand is \(L_2\)-NED on \(\mathbf{x}\), with NED coefficients satisfying
\[
    \sum_{s=1}^{\infty}s\psi(s)<\infty .
\]

\paragraph{b)} By \citet[Thm. 17.8]{davidson1994}, the sum of two $L_2$-NED random fields is again a $L_2$-NED random field with size equal to the negative of the minimum of the two sizes. Thus, by the partial results for $X^{(1)}$, $X^{(2)}$ and $X^{(3)}$, it follows that $X$ is uniformly $L_{2+\delta}$-integrable and $L_2$-NED on $\mathbf{x}$. The size of $X$ is $-2(2+\delta)/\delta$ for some $\delta>0$ and therefore the NED-coefficients satisfy \( \sum_{s=1}^\infty s\psi(s)<\infty\). Furthermore, by Assumption \ref{as:mixing_conditions}~(ii), the $\alpha$-mixing coefficients of $\mathbf{x}$ satisfy the summability condition from \citet[Thm. 2]{jenish2012}. \hfill $\square$

\subsection{Proof of Theorem \ref{th:pbb_consist}} \label{app:pbb_consist}


By Pólya's Theorem and Lemma \ref{lem: Equivalence of Bootstrap Estimators}, it suffices to show that
\begin{equation} \label{eq: PBB Consistency Statement}
    \sqrt{NT}
    \begin{pmatrix}
        \Tilde{\beta}^*- \Tilde{\beta} \\
        \Tilde{\sigma}^*- \Tilde{\sigma} \\
        \Tilde{\varphi}^* - \Tilde{\varphi}
    \end{pmatrix}
    \overset{d}{\rightarrow}\mathcal{N}(0,\Tilde{V}) \quad \text{w.r.t } P^*,
\end{equation}
where $\Tilde{V}$ is the covariance matrix from Theorem \ref{th:AsyMG} and we define, analogously to \citet{BruggemannEtAl2016},
\begin{align} \label{eq: Def beta Tilde}
    \Tilde{\beta}^* - \Tilde{\beta} \coloneqq \frac{1}{NT} \sum_{i=1}^{N} \sum_{t=1}^{T} \sum_{j=1}^{t-1} \left( \Tilde{\Sigma}_{ZZ,i}^{* -1} \otimes I \right) \left( C_{ij} \otimes I \right) \vec (\tilde{u}^*_{it} \tilde{u}^{*\prime}_{i,t-j}).
\end{align}
The expression in \eqref{eq: PBB Consistency Statement} can be written as
\begin{equation} \label{eq: Vector X_rl,NT from Bootstrap Consistency}
    \begin{aligned}
        &\sqrt{NT} \begin{pmatrix}
        \Tilde{\beta}^*- \Tilde{\beta} \\
        \Tilde{\sigma}^*- \Tilde{\sigma} \\
        \Tilde{\varphi}^* - \Tilde{\varphi}
        \end{pmatrix}
        = 
        \frac{1}{\sqrt{NT}} \sum_{i=1}^{N} \sum_{t=1}^{T}
        \begin{pmatrix}
        \sum_{j=1}^{t-1} \left( \Tilde{\Sigma}_{ZZ,i}^{* -1} \otimes I \right) \left( C_{ij} \otimes I \right) \vec \left(\tilde{u}^*_{it} \tilde{u}^{*\prime}_{i,t-j}\right) \\
        L_K  \vec \left(\tilde{u}^*_{it} \tilde{u}^{*\prime}_{it} - u_{it} u_{it}'\right)\\
        \vec \left( \operatorname{chol} \left( \Tilde{\Sigma}_{u,i}^* \right)^{-1}  \tilde{u}^*_{it} \tilde{m}^{*\prime}_{it} - \operatorname{chol} \left( \Tilde{\Sigma}_{u,i} \right)^{-1}  u_{it} m_{it}' \right)
        \end{pmatrix} \\
        & = \frac{1}{\sqrt{NT}} \sum_{\ell=1}^{N^\circ} \sum_{r=1}^{T^\circ} \sum_{m=1}^{b_{(i)}} \sum_{s=1}^{b_{(t)}} 
        \begin{pmatrix}
        \sum_{j=1}^{t-1} \left( \Tilde{\Sigma}_{ZZ,i(\ell,m)}^{* -1} \otimes I \right) \left( C_{i(\ell,m),j} \otimes I \right) \vec \left(\tilde{u}^*_{\substack{i(\ell,m),\\t(r,s)}} \tilde{u}^{*\prime}_{\substack{i(\ell,m),\\t(r,s)-j}}\right) \\
        L_K  \vec \left(\tilde{u}^*_{\substack{i(\ell,m),\\t(r,s)}} \tilde{u}^{*\prime}_{\substack{i(\ell,m),\\t(r,s)}} - u_{\substack{i(\ell,m),\\t(r,s)}} u_{\substack{i(\ell,m),\\t(r,s)}}'\right)\\
        \vec \left( \operatorname{chol} \left( \Tilde{\Sigma}_{u,i}^* \right)^{-1}  \tilde{u}^*_{\substack{i(\ell,m),\\t(r,s)}} \tilde{m}^{*\prime}_{\substack{i(\ell,m),\\t(r,s)}} - \operatorname{chol} \left( \Tilde{\Sigma}_{u,i(\ell,m)} \right)^{-1}  u_{\substack{i(\ell,m),\\t(r,s)}} m_{\substack{i(\ell,m),\\t(r,s)}}' \right)
        \end{pmatrix} \\
        & \eqqcolon \sum_{\ell=1}^{N^\circ} \sum_{r=1}^{T^\circ} X^*_{\ell r,N^\circ T^\circ},
    \end{aligned}
\end{equation}
abbreviating $i(\ell,m) \coloneqq (l-1)b_{(i)}+m$ and $t(r,s) \coloneqq (r-1)b_{(t)}+s$, for $m\in \{1, \dots, b_{(i)}\}$, $s\in \{1, \dots, b_{(t)}\}$, as we introduce the notation of summing over blocks ($\ell$ in the cross-sectional and $r$ in the time-dimension) and in-block positions. The triangular array $\left(X^*_{\ell r,N^\circ T^\circ}: l\leq N^\circ, r\leq T^\circ\right)$ of normalized block sums is the subject of our asymptotic analysis. Invoking Prop.\ 6.3.9 of \cite{BrockwellDavis1991}, denote by $X^{\ast(q)}_{\ell r,N^\circ T^\circ}$ the truncated vector from \eqref{eq: Vector X_rl,NT from Bootstrap Consistency}, where we replace the sum $\sum_{j=1}^{t-1}$ in the first component by $\sum_{j=1}^q$. We use the finite truncation arguments
\begin{enumerate}[label=\alph*)]
    \item $\sum_{\ell,r} X^{\ast(q)}_{\ell r,N^\circ T^\circ}
    \;\overset{d}{\longrightarrow}\; \mathcal{N}(0,\Tilde V_q)$
    in probability w.r.t $P^*$ as $N,T \to \infty$ with $\Tilde V_q$ defined in \eqref{eq: Tilde Vq}, 
    \item $\Tilde V_q \to \Tilde V$ as $q\to\infty$, where $\Tilde V$ is as defined in Theorem \ref{th:AsyMG},
    \item $ \forall \delta > 0: \;
    \lim_{q \to \infty} \limsup_{N,T \to \infty}
    P^\ast\Big( \sum_{\ell,r}
    \left \| X^\ast_{\ell r,N^\circ T^\circ} - X^{\ast(q)}_{\ell r,N^\circ T^\circ} \right \| > \delta
    \Big) = 0$ in probability,
\end{enumerate}
and conclude the desired result.
\paragraph{(b)} is directly observed from the definition of $\Tilde V_q$ and $\Tilde V$. 
\paragraph{(c)} The only term to consider is the first entry of $X^{\ast(q)}_{\ell r,N^\circ T^\circ}$. Without loss of generality, we may assume $q>b_{(t)}$, as we later consider the asymptotics of $q\to\infty$. For each $q>b_{(t)}$ and $\delta >0$ by Markov's and Hölder's inequalities, it holds that
\begin{equation*} \label{eq: Brockwell full}
\begin{aligned}
    &\text{P}^* \left( \left\| \sum_{\ell,r}
    \left( X^*_{\ell r, N^\circ T^\circ} - X^{* (q)}_{\ell r, N^\circ T^\circ} \right) \right\| > \delta \right) 
    \leq \frac{ \text{E}^* \left\| \sum_{\ell,r} \left( X^*_{\ell r, N^\circ T^\circ} - X^{* (q)}_{\ell r, N^\circ T^\circ} \right) \right\| }{\delta} \\
    &\leq \frac{1}{\delta} \left\| \Sigma_{ZZ}^{-1} \right\|
    \text{E}^* 
    \left\| 
    \frac{1}{\sqrt{NT}} \sum_{\ell=1}^{N^\circ}
    \sum_{m=1}^{b_{(i)}} 
    \sum_{j=q+1}^T \sum_{r=1}^{T^\circ} \sum_{s=1}^{b_{(t)}} 
    \big(C_{i(\ell,m),j} \otimes I \big)
    \operatorname{vec}\!\left( \tilde u^{\ast}_{\substack{i(\ell,m),\\t(r,s)}}
    \,\tilde u^{\ast\prime}_{\substack{i(\ell,m),\\t(r,s)-j}} \right) \right\| \\
    &+ \frac{1}{\delta} 
    \text{E}^* \left\| \frac1{\sqrt{N}} \sum_{\ell,m}
    \left( \left(
    \Sigma_{ZZ}^{-1} - \Sigma_{ZZ,[i^*_\ell+m]\bmod N}^{-1} \right) \otimes I \right)
    \frac{1}{\sqrt{T}} 
    \sum_{j=q+1}^T \sum_{r,s} 
    \big(C_{i(\ell,m),j} \otimes I \big)
    \operatorname{vec}\!\left( \tilde u^{\ast}_{\substack{i(\ell,m),\\t(r,s)}}
    \,\tilde u^{\ast\prime}_{\substack{i(\ell,m),\\t(r,s)-j}} \right) \right\|\\
    &+ \frac{1}{\delta} 
    \text{E}^* \left\| \frac1{\sqrt{N}} \sum_{\ell,m}
    \left( \left( \Sigma_{ZZ,[i^*_\ell+m]\bmod N}^{-1} - \Tilde \Sigma_{ZZ,i(\ell,m)}^{\ast -1} \right) \otimes I \right)
    \frac{1}{\sqrt{T}} 
    \sum_{j=q+1}^T \sum_{r,s} 
    \big(C_{i(\ell,m),j} \otimes I \big)
    \operatorname{vec}\!\left( \tilde u^{\ast}_{\substack{i(\ell,m),\\t(r,s)}}
    \,\tilde u^{\ast\prime}_{\substack{i(\ell,m),\\t(r,s)-j}} \right) \right\|\\
    &\eqqcolon R_{NT,q}^{(1)} + R_{NT,q}^{(2)} + R_{NT,q}^{(3)},
\end{aligned}
\end{equation*}
where $\Sigma_{ZZ,[i^*_\ell+m]\bmod N}$ denotes the variance of $Z_{[i^*_\ell+m]\bmod N,t}$ and $i^*_\ell\in\boldsymbol{i^*}$ is the starting index sampled for the $\ell$-th block in the cross-sectional dimension and $\Sigma_{ZZ}$ is the regressors second-moment matrix implied by the common VAR parameter \(\boldsymbol{\beta}\), and the common innovation covariance parameter \(\boldsymbol{\sigma}\) from Assumption \ref{as:random_parameters}. By the moment condition in Assumption \ref{as:cumulants} and a CLT for mixing random fields \citep[Thm. 1]{jenish2009}, 
$R_{NT,q}^{(1)}$ is $O_p(a_q)$ where
$a_q \sim \sum_{j>q}\sup_{i\in\mathbb N }\left\|C_{ij} \right\|$. Next we consider $R_{NT,q}^{(2)}$. For this, we abbreviate
\[
    H_{i,q}^*
    =
    \frac{1}{\sqrt{T}} 
    \sum_{j=q+1}^T \sum_{r=1}^{T^\circ} \sum_{s=1}^{b_{(t)}} 
    \big(C_{ij} \otimes I \big)
    \operatorname{vec}\!\left( \tilde u^{\ast}_{i,t(r,s)}
    \,\tilde u^{\ast\prime}_{i,t(r,s)-j} \right)
\]
and observe, by the law of iterated expectation and Jensen's inequality,
\begin{align*}
    \left(R_{NT,q}^{(2)}\right)^2
    &\leq
    \frac{1}{\delta^2} \text{E}^* \left\| \frac{1}{\sqrt{N}} \sum_{\ell=1}^{N^\circ}
    \sum_{m=1}^{b_{(i)}} 
    \left( \left( \Sigma_{ZZ}^{-1} - \Sigma_{ZZ,[i^*_\ell+m]\bmod N}^{-1} \right) \otimes I \right) H_{i(\ell,m),q}^* \right\|^2 \\
    &= 
    \frac{1}{\delta^2} \text{E}^* \left[ \text{E}^*_t \left\|
    \frac{1}{\sqrt{N}} \sum_{\ell=1}^{N^\circ}
    \sum_{m=1}^{b_{(i)}} 
    \left( \left( \Sigma_{ZZ}^{-1} - \Sigma_{ZZ,[i^*_\ell+m]\bmod N}^{-1} \right) \otimes I \right) H_{i(\ell,m),q}^* \right\|^2 \right],
\end{align*}
where $\text{E}^*_t [\cdot] \coloneqq \text{E}^* [\cdot \mid \boldsymbol{i^*}]$ denotes the expectation w.r.t the bootstrap measure given the set of all drawn cross-sectional indices $\boldsymbol{i^*}$ from \eqref{eq:Resampling}, which is equivalent to the set of block starting indices $\left( i_\ell: 1\leq \ell\leq N^\circ \right)$ given. We compute the inner conditional expectation only, integrability w.r.t $P^*$ follows. 
\begin{align*}
    &\text{E}^*_t \left\|
    \frac{1}{\sqrt{N}} \sum_{\ell=1}^{N^\circ}
    \sum_{m=1}^{b_{(i)}} 
    \left( \left( \Sigma_{ZZ}^{-1} - \Sigma_{ZZ,[i^*_\ell+m]\bmod N}^{-1} \right) \otimes I \right) H_{i(\ell,m),q}^* \right\|^2 \\
    &\leq
    \frac1N \sum_{\ell_1,\ell_2=1}^{N^\circ}
    \sum_{m_1,m_2=1}^{b_{(i)}} 
    \left\|\Sigma_{ZZ}^{-1} - \Sigma_{ZZ,[i^*_{\ell_1}+m_1]\bmod N}^{-1} \right\|
    \left\|\Sigma_{ZZ}^{-1} - \Sigma_{ZZ,[i^*_{\ell_2}+m_2]\bmod N}^{-1} \right\|
    \text{E}^*_t \left\| H_{i(\ell_1,m_1),q}^* H_{i(\ell_2,m_2),q}^{* \prime} \right\| \\
    &\lesssim
    \frac1N \sum_{\ell_1,\ell_2=1}^{N^\circ}
    \sum_{m_1,m_2=1}^{b_{(i)}} 
    \left\|\Sigma_{ZZ} - \Sigma_{ZZ,[i^*_{\ell_1}+m_1]\bmod N} \right\|
    \left\|\Sigma_{ZZ} - \Sigma_{ZZ,[i^*_{\ell_2}+m_2]\bmod N} \right\|
    \text{E}^*_t \left\| H_{i(\ell_1,m_1),q}^* H_{i(\ell_2,m_2),q}^{* \prime} \right\| \\
    &\lesssim
    \left(
    \frac1N \sum_{\ell=1}^{N^\circ}
    \sum_{m=1}^{b_{(i)}} 
    \left\|\Sigma_{ZZ} - \Sigma_{ZZ,[i^*_\ell+m]\bmod N} \right\|^2
    \right)
    \left(
    \sup_{i\leq N} \sum_{k=1}^N \text{E}^*_t \left\| H_{i,q}^* H_{k,q}^{* \prime} \right\|
    \right),
\end{align*}
where the second-to-last inequality holds by the inverse perturbation bound together with Assumption \ref{as:SVAR}~(i). By serial uncorrelatedness, we have $\text{E}^*_t H_{i,q}^* =0$ uniformly over $i\in \mathbb N_+$. With Assumption \ref{as:cumulants} and the definition of the conditional expectation $P^*$, 
\[
    \sup_{i\leq N} \sum_{k=1}^N \text{E}^*_t \left\| H_{i,q}^* H_{k,q}^{* \prime} \right\|
    = O_p(a_q^2).
\]
For a stable VAR(\(p\)) process, the map $(\boldsymbol{\beta}, \boldsymbol{\sigma})\mapsto \Sigma_{ZZ}$ is locally Lipschitz \citep[ch. 2]{Lutkepohl2005} and thus, under Assumption~\ref{as:random_parameters}, the average squared parameter
heterogeneity satisfies
\begin{align*}
    \frac1N \sum_{i=1}^{N}
    \left\|\Sigma_{ZZ} - \widehat{\Sigma}_{ZZ,i} \right\|^2
    \lesssim
    \frac1N\sum_{i=1}^N
    \left(
    \left\|\boldsymbol{\beta}_i-\boldsymbol{\beta}\right\|^2
    +
    \left\|\boldsymbol{\sigma}_i-\boldsymbol{\sigma}\right\|^2
    \right)
    =
    \left\|\Sigma_{\eta_\beta}\right\|
    +
    \left\|\Sigma_{\eta_\sigma}\right\| < \infty,
\end{align*}
applying a suitable law of large numbers to the random-parameter sequences. Thus, taken together, $R_{NT,q}^{(2)} = O_p(a_q)$. Finally, consider $R_{NT,q}^{(3)}$:
\begin{align*}
    R_{NT,q}^{(3)} 
    \lesssim
    \sqrt{N} \left( \frac1N \sum_{\ell=1}^{N^\circ} \sum_{m=1}^{b_{(i)}} \text{E}^*\left\| \widetilde\Sigma_{ZZ,[i_\ell^* +m]\bmod N}^{* -1}-\Sigma_{ZZ,i(\ell,m)}^{-1}\right\|^2 \right)^\frac{1}{2} 
    \left( \frac1N \sum_{\ell=1}^{N^\circ}
    \sum_{m=1}^{b_{(i)}} \text{E}^*\left\| H_{i(\ell,m),q}^* \right\|^2 \right)^\frac{1}{2}
\end{align*}
It holds that
\[
    \frac1N \sum_{\ell=1}^{N^\circ} \sum_{m=1}^{b_{(i)}}
    \text{E}^*
    \left\|\widetilde\Sigma_{ZZ,[i_\ell^* +m]\bmod N}^{* -1}-\Sigma_{ZZ,i(\ell,m)}^{-1}\right\|^2
    =
    O_p\left(\frac1T\right).
\]
because, for $\widetilde\Sigma_{ZZ,i}^* = T^{-1} \sum_{t=1}^T \widetilde{\boldsymbol{z}}_{it}^* \widetilde{\boldsymbol{z}}_{it}^{*\prime}$, the conditional second moment for each entry $f,g\in\{1, \dots, Kp\}$ is given as
\[
    \frac1{T^2}
    \sum_{t,s}
    \operatorname{Cov}^*
    \left(
    \widetilde z_{it,f}^* \widetilde z_{it,g}^*,
    \widetilde z_{is,f}^* \widetilde z_{is,g}^*
    \right).
\]
By absolute summability of fourth-order cumulants in Assumption \ref{as:cumulants},
\[
    \sum_s
    \left|
    \operatorname{Cov}^*
    \left(
    \widetilde z_{it,f}^* \widetilde z_{it,g}^*,
    \widetilde z_{is,f}^* \widetilde z_{is,g}^*
    \right)
    \right|
    =
    O_p(1),
\]
uniformly over $i\in\mathbb N_+$ and $t\in\mathbb Z$. As in the proof of $R_{NT,q}^{(2)}$, $N^{-1} \sum_{i=1}^N  \text{E}^*\left\| H_{iq}^* \right\|^2 = O_{p}(a_q^2).$
Taken together, $R_{NT,q}^{(3)} = O_p\left( \sqrt{\frac{N}{T}} a_q\right).$
Thus, as $N/T \to 0$ and $a_q\to 0$ as $q\to \infty$, $R_{NT,q}$ approaches $0$ as $q\to \infty$.

\paragraph{(a)} For this part, we assume $q<\infty$ to be a fixed integer. Analogously to \citet{BruggemannEtAl2016}, the summation in $X^{\ast(q)}_{\ell r,N^\circ T^\circ}$ can be split up according to the bootstrap blocking, i.e. with respect to $\tilde u^*_{it}$ and $\tilde u^*_{i,t-j}$ lying in the same or different blocks. Assuming that $N,T$ are sufficiently large such that $b_{(t)}>q$, at most one block boundary can be crossed when shifting by \(j\). For \(s\in \{1,\dots,b_{(t)}\} \), both indices $\tilde u^*_{i,t(r,s)} $ and $\tilde u^*_{i,t(r,s)-j}$ lie in the same block iff the latter stays in block \(r\).
Hence we consider the following decomposition: 
\begin{align*}
    &\sum_{\ell =1}^{N^\circ} \sum_{r=1}^{T^\circ} X^{\ast(q)}_{\ell r,N^\circ T^\circ}
    = \sum_{\ell, r} \frac{1}{\sqrt{NT}}
    \left\{ 
    \begin{pmatrix}
        \sum_{m=1}^{b_{(i)}} \sum_{j=1}^q \sum_{s=1}^{j} \left( \Tilde{\Sigma}_{ZZ,i(\ell,m)}^{\ast -1} \otimes I \right) \big(C_{i(\ell,m),j} \otimes I \big)
        \operatorname{vec}\!\left(
        \tilde u^{\ast}_{\substack{i(\ell,m),\\t(r,s)}}
        \,\tilde u^{\ast\prime}_{\substack{i(\ell,m),\\t(r,s)-j}}
        \right) \\
        0_{K(K+1)/2} \\
        0_{KL}
    \end{pmatrix} \right. \\
    &\quad +
    \begin{pmatrix}
        \sum_{m=1}^{b_{(i)}} \sum_{j=1}^q \sum_{s=j+1}^{b_{(t)}} \left( \Tilde{\Sigma}_{ZZ,i(\ell,m)}^{\ast -1} \otimes I \right) \big(C_{i(\ell,m),j} \otimes I \big)
        \operatorname{vec}\!\left(
        \tilde u^{\ast}_{\substack{i(\ell,m),\\t(r,s)}}
        \,\tilde u^{\ast\prime}_{\substack{i(\ell,m),\\t(r,s)-j}}
        \right) - \text{E}^\ast [\cdot]\\
        \sum_{m=1}^{b_{(i)}} \sum_{s=1}^{b_{(t)}} L_K \operatorname{vec}\!\left(
        \tilde u^{\ast}_{\substack{i(\ell,m),\\t(r,s)}}
        \,\tilde u^{\ast\prime}_{\substack{i(\ell,m),\\t(r,s)}}
        \right) - \text{E}^\ast [\cdot] \\
        \sum_{m=1}^{b_{(i)}} \sum_{s=1}^{b_{(t)}} \vec \left( \operatorname{chol} \left( \Tilde{\Sigma}_{u,i(\ell,m)}^* \right)^{-1}  \tilde{u}^*_{\substack{i(\ell,m),\\t(r,s)}} \tilde{m}^{*\prime}_{\substack{i(\ell,m),\\t(r,s)}} \right) - \text{E}^\ast [\cdot]
    \end{pmatrix} \\
    &\quad +
    \left. \begin{pmatrix}
        \sum_{m=1}^{b_{(i)}} \sum_{j=1}^q \sum_{s=j+1}^{b_{(t)}} \text{E}^\ast \left[ \left( \Tilde{\Sigma}_{ZZ,i(\ell,m)}^{\ast -1} \otimes I \right) \big(C_{i(\ell,m),j} \otimes I \big)
        \operatorname{vec}\!\left(
        \tilde u^{\ast}_{\substack{i(\ell,m),\\t(r,s)}}
        \,\tilde u^{\ast\prime}_{\substack{i(\ell,m),\\t(r,s)-j}}
        \right) \right] \\
        \sum_{m=1}^{b_{(i)}} \sum_{s=1}^{b_{(t)}} L_K \text{E}^\ast \left[ \operatorname{vec}\!\left(
        \tilde u^{\ast}_{\substack{i(\ell,m),\\t(r,s)}}
        \,\tilde u^{\ast\prime}_{\substack{i(\ell,m),\\t(r,s)}} - u_{\substack{i(\ell,m),\\t(r,s)}}
        \, u^{\prime}_{\substack{i(\ell,m),\\t(r,s)}} \right) \right] \\
        \sum_{m=1}^{b_{(i)}} \sum_{s=1}^{b_{(t)}} \text{E}^\ast \left[ \vec \left( \operatorname{chol} \left( \Tilde{\Sigma}_{u,i(\ell,m)}^* \right)^{-1}  \tilde{u}^*_{\substack{i(\ell,m),\\t(r,s)}} \tilde{m}^{*\prime}_{\substack{i(\ell,m),\\t(r,s)}} - \operatorname{chol} \left( \Tilde{\Sigma}_{u,i(\ell,m)} \right)^{-1}  u_{\substack{i(\ell,m),\\t(r,s)}} m_{\substack{i(\ell,m),\\t(r,s)}}' \right) \right]
    \end{pmatrix} \right\} \\
    &\eqqcolon A_1^\ast + A_2^\ast + A_3.
\end{align*}
We show \textbf{(i)} $A_1^\ast\to 0$ and \textbf{(ii)} $A_2^\ast\overset{d}{\longrightarrow}\mathcal{N}(0,\Tilde V_q)$ in probability w.r.t $P^*$ as well as \textbf{(iii)} $A_3\to 0$ in probability w.r.t $\mathbb P$.

\paragraph{(i)} To apply Markov's inequality, we show $\text{E}^*\| A_1^\ast \|=o_p(1)$. 
Considering the non-zero entry in $A_1^*$ only, we proceed with the same arguments as in the proof of part \textbf{(c)}. The important distinction is that in this part of the proof, the in-block-position index $s$ only goes up until $j<q$ and $q<\infty$ is fixed:
\begin{equation*}
\begin{aligned}
    \text{E}^*\| A_1^\ast \|
    &\leq \left\| \Sigma_{ZZ}^{-1} \right\|
    \text{E}^* 
    \left\| 
    \frac{1}{\sqrt{NT}} \sum_{\ell=1}^{N^\circ}
    \sum_{m=1}^{b_{(i)}} 
    \sum_{j=1}^q \sum_{r=1}^{T^\circ} \sum_{s=1}^j 
    \big(C_{i(\ell,m),j} \otimes I \big)
    \operatorname{vec}\!\left( \tilde u^{\ast}_{\substack{i(\ell,m),\\t(r,s)}}
    \,\tilde u^{\ast\prime}_{\substack{i(\ell,m),\\t(r,s)-j}} \right) \right\| \\
    &+ 
    \text{E}^* \left\| \frac1{\sqrt{N}} \sum_{\ell,m}
    \left( \left(
    \Sigma_{ZZ}^{-1} - \Sigma_{ZZ,[i^*_\ell+m]\bmod N}^{-1} \right) \otimes I \right)
    \frac{1}{\sqrt{T}} 
    \sum_{j=1}^q \sum_{r=1}^{T^\circ} \sum_{s=1}^j 
    \big(C_{i(\ell,m),j} \otimes I \big)
    \operatorname{vec}\!\left( \tilde u^{\ast}_{\substack{i(\ell,m),\\t(r,s)}}
    \,\tilde u^{\ast\prime}_{\substack{i(\ell,m),\\t(r,s)-j}} \right) \right\|\\
    &+ 
    \text{E}^* \left\| \frac1{\sqrt{N}} \sum_{\ell,m}
    \left( \left( \Sigma_{ZZ,[i^*_\ell+m]\bmod N}^{-1} - \Tilde \Sigma_{ZZ,i(\ell,m)}^{\ast -1} \right) \otimes I \right)
    \frac{1}{\sqrt{T}} 
    \sum_{j=1}^q \sum_{r=1}^{T^\circ} \sum_{s=1}^j 
    \big(C_{i(\ell,m),j} \otimes I \big)
    \operatorname{vec}\!\left( \tilde u^{\ast}_{\substack{i(\ell,m),\\t(r,s)}}
    \,\tilde u^{\ast\prime}_{\substack{i(\ell,m),\\t(r,s)-j}} \right) \right\|\\
    &\eqqcolon A_{11} + A_{12} + A_{13},
\end{aligned}
\end{equation*}
and $A_{11}$, $A_{12} = O_p\left(b_{(t)}^{-1}\right)$ and $A_{13}= O_p\left(\sqrt{\frac{N}{T}}\frac1{b_{(t)}}\right)$.

\paragraph{(iii)} Convergence of the first two components follows analogously to \citet[Lemma A.3]{BruggemannEtAl2016}. Thus, we only provide details on the third component and write
\begin{align*}
    &\frac{1}{\sqrt{NT}} \sum_{\ell, r} \sum_{m=1}^{b_{(i)}} \sum_{s=1}^{b_{(t)}} \text{E}^\ast \Biggl[ \vec \Biggl( \operatorname{chol} \left(\Tilde{\Sigma}_{u,i(\ell,m)}^{\ast}\right)^{-1} \tilde u^{\ast}_{\substack{i(\ell,m),\\t(r,s)}} \, m^{\ast\prime}_{\substack{i(\ell,m),\\t(r,s)}} 
    - \operatorname{chol} \left(\Tilde{\Sigma}_{u,i(\ell,m)}\right)^{-1}  u_{\substack{i(\ell,m),\\t(r,s)}} \, m^{\prime}_{\substack{i(\ell,m),\\t(r,s)}} \Biggr)
    \Biggr] \\
    &= \frac{1}{\sqrt{NT}} \sum_{\ell, r} \sum_{m,s}
    \Biggl\{
    \text{E}^\ast \left[ \left( I \otimes \operatorname{chol} \left( \Tilde{\Sigma}^*_{u,i(\ell,m)}\right)^{-1} \right) \vec \left(  \tilde u^{\ast}_{\substack{i(\ell,m),\\t(r,s)}} \, m^{\ast\prime}_{\substack{i(\ell,m),\\t(r,s)}}  
    - u_{\substack{i(\ell,m),\\t(r,s)}} \, m^{\prime}_{\substack{i(\ell,m),\\t(r,s)}} \right) \right] \\
    & + \left( I \otimes \left( \text{E}^\ast \left[ \operatorname{chol} \left( \Tilde{\Sigma}_{u,i(\ell,m)}^{\ast}\right)^{-1}\right] - \operatorname{chol} \left( \Tilde{\Sigma}_{u,i(\ell,m)} \right)^{-1} \right) \right) \vec \left( u_{\substack{i(\ell,m),\\t(r,s)}} \, m^{\prime}_{\substack{i(\ell,m),\\t(r,s)}} \right) \Biggr\} \\
    &\eqqcolon A_{31} + A_{32}.
\end{align*}
By exact cancellation of the summands, the mean and variance of $A_{31}$ are of order $O(NT)^{-1/2}$ and $O(NT)^{-1}$, respectively. We now provide proof for $A_{32}=o_p(1)$. We use the structural decomposition in Assumption~\ref{as:SVAR}(iv), which states $\mathsf{B}_i=\mathsf{B}_i^\circ\mathsf{Q}$ with $\mathsf{B}_i^\circ=\operatorname{chol}(\Sigma_{u,i})$ and a homogeneous $\mathsf{Q}$. 
Assumption \ref{as:mixing_conditions}~(i) further implies, uniformly in $i\leq N$, 
\begin{align*}
    \frac{1}{T}\sum_{r,s} 
    u_{i,t(r,s)} \, m^{\prime}_{i,t(r,s)} 
    =
    \mathsf{B}_i^\circ\mathsf{Q}\Sigma_{\epsilon m,i}
    +O_p(T^{-1/2}).
\end{align*}
Moreover, the continuity of the Cholesky factor and Assumption
\ref{as:random_parameters}~(ii) imply
\begin{align*}
    \frac{1}{N}\sum_{i=1}^N
    \left\|\mathsf{B}_i^\circ-\mathsf{B}^\circ\right\|^2
    =
    O_p\left(\left\|\Sigma_{\eta_\sigma}\right\|\right),
\end{align*}
and the same rate holds for the inverse Cholesky factors by the inverse perturbation bound (uniform non-singularity and stability hold by Assumption \ref{as:SVAR}~(i) and (iii)). 
Consequently, by definition of the conditional expectation and consistency of the covariance estimators,
\begin{align*}
    A_{32}
    =
    \sqrt{\frac{T}{N}}
    \sum_{\ell,m}
    \left(
    I\otimes
    \left(
    \frac{1}{N}\sum_{i=0}^{N-1}\mathsf{B}_{[i+m]\bmod N}^{\circ-1}
    -
    \mathsf{B}_{i(\ell,m)}^{\circ-1}
    \right)
    \right)
    \operatorname{vec}
    \left(
    \mathsf{B}_{i(\ell,m)}^\circ
    \mathsf{Q}
    \Sigma_{\epsilon m,i(\ell,m)}
    \right)
    +
    O_p\left(
    \left\|\Sigma_{\eta_\sigma}\right\|^\frac12
    \right)
    +o_p(1).
\end{align*}
A first-order expansion around $\mathsf{B}^\circ$ yields
\begin{align*}
    \mathsf{B}_{i(\ell,m)}^{\circ-1}
    =
    \mathsf{B}^{\circ-1}
    -
    \mathsf{B}^{\circ-1}
    \left(
    \mathsf{B}_{i(\ell,m)}^\circ-\mathsf{B}^\circ
    \right)
    \mathsf{B}^{\circ-1}
    +
    O_p\left(
    \left\|
    \mathsf{B}_{i(\ell,m)}^\circ-\mathsf{B}^\circ
    \right\|^2
    \right).
\end{align*}
Since the block positions $(\ell,m)$ enumerate the $N$ cross-sectional units, assuming independence between $\boldsymbol{\eta}_{\sigma i}$ and $(\epsilon_{it}', m_{it}')'$ implies that the cross-sectional average of the first-order term is $O_p(N^{-1/2}\|\Sigma_{\eta_\sigma}\|^\frac12)$, whereas its second-order remainder is $O_p(\|\Sigma_{\eta_\sigma}\|)$. The homogeneous matrix $\mathsf{Q}$ factors out of both terms. Hence,
\begin{align*}
    \|A_{32}\|
    =
    O_p\left(
    \sqrt{T\left\|\Sigma_{\eta_\sigma}\right\|}
    +
    \sqrt{NT}\left\|\Sigma_{\eta_\sigma}\right\|
    \right)
    +o_p(1).
\end{align*}
Under $N/T\to 0$ and $\left\|\Sigma_{\eta_\sigma}\right\| = o\left(T^{-1}\right)$, both terms vanish and $A_{32}=o_p(1)$.


\paragraph{(ii)} Finally, we establish asymptotic normality of \(A_2^*\).
Because the PBB algorithm preserves dependence along both the cross-sectional and temporal dimensions, the summands in \(A_2^*\) are dependent whenever their associated blocks share a starting index in either dimension. Thus we apply a central limit theorem for an array of two-way clustered random fields. Crucially, the data (i.e. block sums) generated by the PBB algorithm satisfy the property of separate exchangeability, as their joint distribution is invariant under arbitrary separate permutations of the row-block and column-block indices. We follow the framework and result from \cite{CameronEtAl2011} and \cite{Menzel2021} and prove the following requirements, denoting the summands of $A_2^*$ by $\left(\Tilde X^{\ast (q)}_{\ell r, N^\circ T^\circ}: \ell\leq N^\circ, r\leq T^\circ\right)$, where $\text{E}^* \left[ \Tilde{X}^{\ast (q)}_{\ell r,N^\circ T^\circ} \right] =0$ by construction:

\begin{enumerate} [label=(\Alph{enumi})]
    \item \textbf{Covariance-consistent estimator: } \\
    We have, in probability w.r.t $\mathbb P$,
    \[
    \sum_{\ell_1,\ell_2=1}^{N^\circ} \sum_{r_1,r_2=1}^{T^\circ} \text{E}^*\left[\Tilde X^{\ast (q)}_{\ell_1r_1, N^\circ T^\circ} \Tilde X^{\ast (q) \prime}_{\ell_2r_2, N^\circ T^\circ}\right] \to \Tilde{V}_q, 
    \]
    where
    \[
    \tilde V_q = 
    \begin{pmatrix}
        \tilde V_q^{(1,1)} & \tilde V_q^{(2,1) \prime} & \tilde V_q^{(3,1) \prime}\\
        \tilde V_q^{(2,1)} & \tilde V^{(2,2)} & \tilde V^{(3,2) \prime} \\
        \tilde V_q^{(3,1)} & \tilde V^{(3,2)} & \tilde V^{(3,3)}
    \end{pmatrix},
    \]
    with
    \begin{equation} \label{eq: Tilde Vq}
    \begin{aligned}
        \Tilde{V}_q^{(1,1)} 
        &= \lim_{N,T\to\infty} \frac{1}{NT} \sum_{h_{(i)}=-(N-1)}^{N-1} \sum_{i=\max\{1,1+h_{(i)}\}}^{\min\{N,N+h_{(i)}\}} \sum_{h_{(t)}=-(T-1)}^{T-1} \sum_{t=\max\{1,1+h_{(t)}\}}^{\min\{T,T+h_{(t)}\}} \mathbb E \Bigg[ \left(\widehat{\Sigma}_{ZZ,i}^{-1}\otimes I\right) \Big[ \sum_{j,j'=1}^q (C_{ij}\otimes I)\\
        & \quad \times \vec\left(u_{it} u_{i,t-j}'\right) \vec\left(u_{i+h_{(i)},t+h_{(t)}}u_{i+h_{(i)},t+h_{(t)}-j'}'\right)' \left(C_{i+h_{(i)},j'}\otimes I\right)' \Big] \left( \widehat \Sigma_{ZZ,i+h_{(i)}}^{-1}\otimes I\right)' \Bigg] \\ 
        \Tilde{V}_q^{(2,1)} 
        & = L_K \Bigg[ \lim_{N,T\to\infty} \frac{1}{NT} \sum_{h_{(i)}=-(N-1)}^{N-1} \sum_{i=\max\{1,1+h_{(i)}\}}^{\min\{N,N+h_{(i)}\}} \sum_{h_{(t)}=-(T-1)}^{T-1} \sum_{t=\max\{1,1+h_{(t)}\}}^{\min\{T,T+h_{(t)}\}} \sum_{j=1}^q \mathbb E \Bigg[ \vec\left(u_{it} u_{it}'\right) \\
        & \quad \times \vec\left(u_{i+h_{(i)},t+h_{(t)}} u_{i+h_{(i)},t+h_{(t)}-j}'\right)' \left(C_{i+h_{(i)},j}\otimes I\right)' \left(\widehat\Sigma_{ZZ,i+h_{(i)}}^{-1}\otimes I\right)' \Bigg] \Bigg] \\
        \Tilde{V}_q^{(3,1)} 
        & = \lim_{N,T\to\infty} \frac{1}{NT} \sum_{h_{(i)}=-(N-1)}^{N-1} \sum_{i=\max\{1,1+h_{(i)}\}}^{\min\{N,N+h_{(i)}\}} \sum_{h_{(t)}=-(T-1)}^{T-1} \sum_{t=\max\{1,1+h_{(t)}\}}^{\min\{T,T+h_{(t)}\}} \sum_{j=1}^q \mathbb E \Bigg[ \operatorname{chol}\left( \Tilde \Sigma_{u,i} \right)^{-1} \\
        & \quad \times \vec \left(u_{it} m_{it}'\right) \vec\left(u_{i+h_{(i)},t+h_{(t)}} u_{i+h_{(i)},t+h_{(t)}-j}'\right)' \left(C_{i+h_{(i)},j}\otimes I\right)' \left(\widehat \Sigma_{ZZ,i+h_{(i)}}^{-1}\otimes I\right)' \Bigg]
    \end{aligned}
    \end{equation}
    and $\Tilde V^{(2,2)}$, $\Tilde V^{(3,3)}$ and $\Tilde V^{(3,2)}$ are defined in the proof of Theorem \ref{th:AsyMG}.
    \item \textbf{Integrability / Non-degenerate variance: } \\
    Recall $\text{E}^*\left[\Tilde{X}^{\ast (q)}_{\ell r,N^\circ T^\circ}\right]=0$ and consider the Hoeffding projection expansion \citep{CameronEtAl2011, Menzel2021} for separately exchangeable arrays. Define the random variables
    \vspace{-15pt}
    \begin{align*}
        a_{\ell, N^\circ T^\circ} &\coloneqq \text{E}^*[\Tilde{X}^{\ast (q)}_{\ell r,N^\circ T^\circ} | \boldsymbol{i^*} ], \\
        g_{r, N^\circ T^\circ} &\coloneqq \text{E}^*[\Tilde{X}^{\ast (q)}_{\ell r,N^\circ T^\circ} | \boldsymbol{t^*}], \\
        v_{\ell r, N^\circ T^\circ} &\coloneqq \text{E}^*[\Tilde{X}^{\ast (q)}_{\ell r,N^\circ T^\circ} | \boldsymbol{i^*}, \boldsymbol{t^*}] - a_{\ell, N^\circ T^\circ} - g_{r, N^\circ T^\circ}, \\
        e_{\ell r, N^\circ T^\circ} &\coloneqq \Tilde{X}^{\ast (q)}_{\ell r,N^\circ T^\circ} - \text{E}^*[\Tilde{X}^{\ast (q)}_{\ell r,N^\circ T^\circ} | \boldsymbol{i^*}, \boldsymbol{t^*}],
    \end{align*}
    where $\boldsymbol{i^*}$ and $\boldsymbol{t^*}$ are the sets of drawn indices as in \eqref{eq:Resampling}. Further, let
    \begin{align*}
        \sigma^2_a \coloneqq \operatorname{Var}^*(a_{\ell, N^\circ T^\circ}), \quad 
        \sigma^2_g \coloneqq \operatorname{Var}^*(g_{r, N^\circ T^\circ}), \quad
        \sigma^2_v \coloneqq \operatorname{Var}^*(v_{\ell r, N^\circ T^\circ}), \quad
        \sigma^2_e \coloneqq \operatorname{Var}^*(e_{\ell r, N^\circ T^\circ}).
    \end{align*}
    It holds $T^{\circ 2} N^\circ \sigma_a^2=O_p(1)$ and $N^{\circ 2} T^{\circ} \sigma_g^2=O_p(1)$. 
    Further, the random variables 
    \[
    a_{\ell, N^\circ T^\circ}/\sigma_a, \qquad g_{r, N^\circ T^\circ}/\sigma_g, \qquad v_{\ell r, N^\circ T^\circ}/\sigma_v, \qquad e_{\ell r, N^\circ T^\circ}/\sigma_e,
    \]    
    have bounded moments up to order $(4+\delta)$ for some $\delta>0$, whenever the respective variances $\sigma^2_a$, $\sigma^2_g$, $\sigma^2_v$, $\sigma^2_e$ are non-zero.
    \item \textbf{Spectral representation: } $\left(v_{\ell r,N^\circ T^\circ}: \ell\leq N^\circ, r\leq T^\circ \right)$ admits a spectral representation, where the sequence of singular values are uniformly bounded by a null sequence and the first three moments of the eigenfunctions are bounded.
\end{enumerate}

Conditions (A) and (B) taken together refer to Assumption 2.1 (Integrability) in \cite{Menzel2021}. With (A), we further observe that the bootstrap algorithm replicates the joint distribution from Theorem \ref{th:AsyMG}. Condition (B) implies that the correctly scaled variance contributions of \(a_{\ell,N^\circ T^\circ}\) and \(g_{r,N^\circ T^\circ}\) remain bounded asymptotically (note that in contrast to \cite{Menzel2021} we consider normalized summands in the CLT). Therefore, any non-negligible limits of these projection components are included in the long-run covariance matrix \(\widetilde V_q\) and therefore affect the value of \(\widetilde V_q\), but not the \(\sqrt{NT}\)-normalization. A slower convergence rate on the other hand would arise if one of the corresponding scaled projection variances diverged. Condition (C) refers to Assumption 2.2 (Spectral Representation) in \cite{Menzel2021}. This condition rules out the case in which a fixed low-rank interaction component remains first order, which would otherwise lead to a generally non-Gaussian limit \citep[p.11]{Menzel2021}.

\textbf{(A)} While convergence is proven block by block, we present the argument for one representative case only (which is $\ell_1=\ell_2\equiv \ell$, $r_1=r_2\equiv r$, i.e. all in one block). The remaining cases are analogous: any additional blocks, and hence additional summands, are controlled by Assumption \ref{as:cumulants}. For the first block, consider 
\begin{equation} \label{eq: V11 converging term}
\begin{aligned} 
    &\frac{1}{NT} \sum_{\ell=1}^{N^\circ} \sum_{r=1}^{T^\circ} \sum_{m_1,m_2=1}^{b_{(i)}} \sum_{j_1, j_2=1}^q \sum_{s_1=j_1+1}^{b_{(t)}} \sum_{s_2=j_2+1}^{b_{(t)}} \text{E}^* \Biggl[ \left(\Tilde{\Sigma}_{ZZ,i(\ell,m_1)}^{* -1} \otimes I \right) \left( C_{i(\ell,m_1), j_1} \otimes I\right) \vec \left( \tilde{u}^*_{\substack{i(\ell,m_1) \\ t(r,s_1)}} \tilde{u}^{*\prime}_{\substack{i(\ell,m_1) \\ t(r,s_1)-j_1}} \right) \\
    &\quad \times \vec \left( \tilde{u}^*_{\substack{i(\ell,m_2) \\ t(r,s_2)}} \tilde{u}^{*\prime}_{\substack{i(\ell,m) \\ t(r,s_2)-j_2}} \right)' \left( C_{i(\ell,m_2), j_2} \otimes I\right)' \left(\Tilde{\Sigma}_{ZZ,i(\ell,m_2)}^{* -1} \otimes I \right)' \Biggr].
\end{aligned}
\end{equation}
Noting that $\Tilde{\Sigma}_{ZZ,i}^{*}$ converges in probability w.r.t $P^*$ and is nonsingular uniformly in $i\leq N$, 
\begin{equation*} 
\begin{aligned}
    &\frac{1}{NT} \sum_{\ell=1}^{N^\circ} \sum_{r=1}^{T^\circ} \sum_{m_1,m_2}\sum_{j_1, j_2} \sum_{s_1} \sum_{s_2}
    \text{E}^* \Biggl\| \left( C_{i(\ell,m_1), j_1} \otimes I\right) \vec \left( \tilde{u}^*_{\substack{i(\ell,m_1) \\ t(r,s_1)}} \tilde{u}^{*\prime}_{\substack{i(\ell,m_1) \\ t(r,s_1)-j_1}} \right) \vec \left( \tilde{u}^*_{\substack{i(\ell,m_2) \\ t(r,s_2)}} \tilde{u}^{*\prime}_{\substack{i(\ell,m_2) \\ t(r,s_2)-j_2}} \right)' \left( C_{i(\ell,m_2), j_2} \otimes I\right)' \Biggr\| \\
    &=\frac{1}{b_{(i)}b_{(t)}} 
    \sum_{j_1, j_2} 
    \sum_{h_{(i)}=-(b_{(i)}-1)}^{b_{(i)}-1} \sum_{h_{(t)}=-(b_{(t)}-1)}^{b_{(t)}-1} 
    \sum_{i=\max\{1,1-h_{(i)}\}}^{\min \{N+b_{(i)}-1,N+b_{(i)}-h_{(i)}-1\}} \sum_{t=\max\{1,1-h_{(t)}\}}^{\min \{T,T-h_{(t)}\}} \frac{O(b_{(i)} b_{(t)})}{N(T-b_{(t)}+1)}  \\
    &\quad \times \left\| C_{i\bmod N, j_1} \right\| \Biggl\| \vec \left( u^c_{\substack{i\bmod N \\ t}} u^{c\prime}_{\substack{[i+m]\bmod N \\ t-j_1}} \right)  \vec \left( u^c_{\substack{[i+h_{(i)}]\bmod N \\ t+h_{(t)}}} u^{c\prime}_{\substack{[i+h_{(i)}]\bmod N \\ t+h_{(t)}-j_2}} \right)' \Biggr\| \left\| C_{[i+h_{(i)}]\bmod N, j_2} \right\|,
\end{aligned}
\end{equation*}
is $O_p(1)$ by Assumption \ref{as:cumulants} and thus \eqref{eq: V11 converging term} converges in probability to $\Tilde V_q^{(1,1)}$ as $N,T\to\infty$.

\textbf{(B)} We first verify the existence of the displayed projection. Conditional on the observed sample, the row-block starting indices \((i_\ell)_{\ell=1}^{N^\circ}\) and the column-block starting indices \((t_r)_{r=1}^{T^\circ}\) are i.i.d. within their respective dimensions and independent across dimensions. Hence each component of \(\Tilde X^{\ast(q)}_{\ell r,N^\circ T^\circ}\) can be written as a function of the (under \(P^*\)) mutually independent arrays \((i_\ell)_\ell\), \((t_r)_r\), and \((\varepsilon_{\ell r})_{\ell r}\), where \((\varepsilon_{\ell r})_{\ell r}\) denotes the remaining cell-specific bootstrap randomness. The projection variables are defined as stated in (B). Next, we argue integrability and non-degeneracy. We consider $a_{\ell, N^\circ T^\circ}$ in detail and make use of the notation $\text{E}^*_t [\cdot] \coloneqq \text{E}^* [\cdot \mid \boldsymbol{i^*}]$ (in this case the specification $i_\ell \equiv i^*$, where $i^*$ is a random index) from part \textbf{(c)},
\begin{align*}
    a_{\ell, N^\circ T^\circ} 
    &= \frac{1}{\sqrt{NT}} \sum_{m=1}^{b_{(i)}} \sum_{j=1}^q
    \text{E}_t^\ast \left( 
    \begin{matrix}
        \left( \Tilde{\Sigma}_{ZZ,i(\ell,m)}^{\ast -1} \otimes I \right) \sum_{s=j+1}^{b_{(t)}} \big(C_{i(\ell,m),j} \otimes I\big)
        \operatorname{vec}\!\left(
        \tilde u^{\ast}_{i(\ell,m),\,t(r,s)}
        \,\tilde u^{\ast\prime}_{i(\ell,m),\,t(r,s)-j}
        \right) \\
        \sum_{s=1}^{b_{(t)}} L_K \operatorname{vec}\!\left(
        \tilde u^{\ast}_{i(\ell,m),\,t(r,s)}
        \,\tilde u^{\ast\prime}_{i(\ell,m),\,t(r,s)}
        \right) \\
        \left( I \otimes \operatorname{chol} \left( \Tilde{\Sigma}_{u,[i+m]\bmod N}^* \right)^{-1} \right) \sum_{s=1}^{b_{(t)}} \vec \left( \tilde{u}^*_{i(\ell,m),\,t(r,s)} m^{*\prime}_{i(\ell,m),\,t(r,s)} \right)
    \end{matrix} \right)
\end{align*}
Note that the conditional variance of $a_{\ell,N^\circ T^\circ}$ accounts only for the randomness in the cross-sectional dimension, as the time dimension has already been averaged out. For now, we consider the conditional variance $\sigma_{a,1}^2 \coloneqq \operatorname{Var}^* \left(a_{\ell,N^\circ T^\circ}^{(1)}\right)$ of its first component.
\[
\begin{aligned}
    \sigma_{a,1}^2
    &=
    \frac{1}{NT}
    \sum_{m_1,m_2=1}^{b_{(i)}}
    \sum_{j_1,j_2=1}^{q}
    \sum_{s_1=j_1+1}^{b_{(t)}}
    \sum_{s_2=j_2+1}^{b_{(t)}}
    \frac{1}{(T-b_{(t)}+1)^2}
    \sum_{\tau_1, \tau_2=0}^{T-b_{(t)}}
    \operatorname{Cov}^*
    \Bigg(
    \left(
    \widetilde\Sigma_{ZZ,i^*+m_1}^{*-1}
    \otimes I
    \right)
    \left(
    C_{i^*+m_1,j_1}
    \otimes I
    \right)
    \\
    &\quad\times
    \operatorname{vec}
    \left(
    u^c_{\substack{i^*+m_1 \\ \tau_1+s_1}}
    u^{c\prime}_{\substack{i^*+m_1 \\ \tau_1+s_1-j_1}}
    \right), \left(
    \widetilde\Sigma_{ZZ,i^*+m_2}^{*-1}
    \otimes I
    \right)
    \left(
    C_{i^*+m_2,j_2}
    \otimes I
    \right)
    \operatorname{vec}
    \left(
    u^c_{\substack{i^*+m_2 \\ \tau_2+s_2}}
    u^{c\prime}_{\substack{i^*+m_2 \\\tau_2+s_2-j_2}}
    \right)
    \Bigg) .
\end{aligned}
\]
By Assumptions \ref{as:SVAR}~(ii) and the cumulant condition in Assumption \ref{as:cumulants}, which is applied to the sums over relative effective index distances, i.e. over \(m_2-m_1\) and \((\tau_2+s_2)-(\tau_1+s_1)\). On the other hand, the finite lag sums over \(j_1,j_2\) and the in-block multiplicity over \(s_2\) are not absorbed by cumulant summability and must be kept explicitly, which leads to the rate $\sigma_a^2 = O_p\left( b_{(i)} b_{(t)}^2 N^{-1} T^{-2} \right)$ and thus $T^{\circ 2} N^\circ \sigma_a^2 = O_p(1)$. Similarly, $N^{\circ 2} T^\circ \sigma_g^2 = O_p(1)$.

We are left to prove the standardized moment conditions component by component. Instead of performing this step directly, we verify the equivalent Lyapunov-type requirements for the sums entering the statistic. For triangular arrays, such moment bounds imply the Lindeberg condition, \citep[see, e.g.,][Thm.~27.3] {Billingsley1995}. Let \(\lambda\) be an arbitrary real vector of suitable dimension with \(\|\lambda\|=1\), and define 
\[
    a_{\ell,\lambda}
    \coloneqq
    \text E^*\!\left[ \left.
    \lambda'
    \widetilde X_{\ell r,N^\circ T^\circ}^{*(q)}
    \right\vert \boldsymbol{i^*}
    \right],
    \qquad
    g_{r,\lambda}
    \coloneqq
    \text E^*\!\left[ \left.
    \lambda'
    \widetilde X_{\ell r,N^\circ T^\circ}^{*(q)}
    \right\vert \boldsymbol{t^*}
    \right],
    \qquad
    w_{\ell r,\lambda}
    \coloneqq
    \lambda'
    \widetilde X_{\ell r,N^\circ T^\circ}^{*(q)}
    -
    a_{\ell,\lambda}
    -
    g_{r,\lambda}.
\]
Then
\[
    \sum_{\ell=1}^{N^\circ}
    \sum_{r=1}^{T^\circ}
    \lambda'
    \widetilde X_{\ell r,N^\circ T^\circ}^{*(q)}
    =
    T^\circ\sum_{\ell=1}^{N^\circ}a_{\ell,\lambda}
    +
    N^\circ\sum_{r=1}^{T^\circ}g_{r,\lambda}
    +
    \sum_{\ell=1}^{N^\circ}
    \sum_{r=1}^{T^\circ}
    w_{\ell r,\lambda}.
\]
The row projection is controlled as the moment bounds for the block averages and the cumulant summability condition imply for $p=2+\delta$, $\delta>0$,
\[
    N^\circ (T^\circ)^p
    \text E^*\left\|a_{\ell,\lambda}\right\|^p
    =
    O_p\left(
    \frac{N}{b_{(i)}}
    \left(\frac{T}{b_{(t)}}\right)^p
    \left(
    \frac{b_{(i)}b_{(t)}^2}{NT^2}
    \right)^{p/2}
    \right)
    =
    O_p\left(
    \left(
    \frac{b_{(i)}}{N}
    \right)^{\delta/2}
    \right)
    =o_p(1).
\]
Analogously, the column projection satisfies
\[
    (N^\circ)^p T^\circ
    \text E^*\left\|g_{r,\lambda}\right\|^p
    =
    O_p\left(
    \left(
    \frac{b_{(t)}}{T}
    \right)^{\delta/2}
    \right)
    =o_p(1).
\]
Finally, for the remaining two-way component,
\[
    N^\circ T^\circ
    \text E^*\left\|w_{\ell r,\lambda}\right\|^p
    =
    O_p\left(
    \left(
    \frac{b_{(i)}b_{(t)}}{NT}
    \right)^{\delta/2}
    \right)
    =o_p(1).
\]
Thus, no individual row, column, or two-way block component dominates. Since the full conditional variance satisfies
\[
    \operatorname{Var}^*\left(
    \sum_{\ell=1}^{N^\circ}
    \sum_{r=1}^{T^\circ}
    \lambda'
    \widetilde X_{\ell r,N^\circ T^\circ}^{*(q)}
    \right)
    \overset{p}{\longrightarrow}
    \lambda'\widetilde V_q\lambda,
\]
with \(\lambda'\widetilde V_q\lambda<\infty\), the Cramér--Wold device yields the desired conditional Gaussian limit.

\textbf{(C)} Finally, we consider $v_{\ell r, N^\circ T^\circ}$, which can be understood as a function of the random variables $i_\ell$ and $t_r$, i.e. $ v_{\ell r, N^\circ T^\circ} \equiv v (i_\ell, t_r)$. From the PBB Scheme, $(i_\ell)_{\ell=1}^{N^\circ}$ and $(t_r)_{r=1}^{T^\circ}$ are i.i.d.. Let
\[
F_a = \frac{1}{N}\sum_{i=0}^{N-1} \delta_{i},
\qquad
F_g = \frac{1}{T-b_{(t)}+1}\sum_{t=0}^{T-b_{(t)}} \delta_{t}.
\]
$L_2(F_a)\cong \mathbb{R}^N$ and $L_2(F_g)\cong \mathbb{R}^{T-b_{(t)}}$ are finite-dimensional Euclidean spaces, equipped with weighted inner products, and 
from \textbf{(B)}, $v(i, t)$ is square-integrable w.r.t $F_a \otimes F_g$. Therefore, the linear operator
\[
    V:\mathbb{R}^N \to \mathbb{R}^{T-b_{(t)}},\qquad
    (Vx)(i) \coloneqq \frac{1}{T-b_{(t)}+1}\sum_{t=0}^{T-b_{(t)}} v(i,t)x(t).
\]
and its adjoint $V^*$ are bounded and compact. By the finite-dimensional Spectral Theorem, e.g. in \citet[Section~7.3]{Horn2012}, there exist singular values $c_k \ge 0$ and orthonormal systems
$\{\phi_k\}_{k=1}^r \subset \mathbb{R}^N$ and $\{\psi_k\}_{k=1}^r \subset \mathbb{R}^{T-b_{(t)}}$, where $r\equiv r_{NT} \le \min(N,T-b_{(t)})$, such that
$V\psi_k = c_k \phi_k$, $V^*\phi_k = c_k \psi_k$. Therefore, the kernel representation of $V$ for a random draw $(i_\ell,t_r)$ is
\[
v(i_\ell,t_r)
=
\sum_{k=1}^r c_k\, \phi_k(i_\ell)\,\psi_k(t_r).
\]
Assume additionally that the interaction kernel satisfies the following uniform spectral regularity condition. There exist a deterministic null sequence \(\left(\overline c_k\right)_{k\geq 1}\) and a constant \(B<\infty\) such that, with probability approaching one,
$c_{k,NT}\le \overline c_k$ for all $k\leq r$,
and
\[
    \sup_{k\le r}
    \left\{
    \frac1N\sum_{i=0}^{N-1}|\phi_{k,NT}(i)|^3
    +
    \frac{1}{T-b_{(t)}+1}
    \sum_{t=0}^{T-b_{(t)}}|\psi_{k,NT}(t)|^3
    \right\}
    \le B .
\]
With these two conditions additionally imposed, the interaction projection \(v_{\ell r,N^\circ T^\circ}\) satisfies (C).

\subsubsection{Auxiliary Results}

In the following we prove three auxiliary results for the proof of Theorem \ref{th:pbb_consist}. The first lemma justifies that we can sample $u$ instead of the estimated $\widehat{u}$. This allows us to apply the mixing conditions imposed on \(\mathbf{x}=(u',m')'\).

\begin{lemma} [Equivalence of Bootstrap Estimators] \label{lem: Equivalence of Bootstrap Estimators}
    Under the assumptions of Theorem \ref{th:pbb_consist}, it holds that
    \begin{align} 
    \sqrt{NT} \left( \left(\boldsymbol{\widebar{\beta}}^* - \boldsymbol{\widebar{\beta}}\right)-\left(\boldsymbol{\Tilde{\beta}}^*- \boldsymbol{\Tilde{\beta}}\right) \right) &= o_{p^*}(1), \label{eq: Equivalence betas}\\
    \sqrt{NT} \left( \left(\boldsymbol{\widebar{\sigma}}^* - \boldsymbol{\widebar{\sigma}}\right)-\left(\boldsymbol{\Tilde{\sigma}}^*- \boldsymbol{\Tilde{\sigma}}\right) \right) &= o_{p^*}(1), \label{eq: Equivalence sigmas}\\
    \sqrt{NT} \left( \left(\boldsymbol{\widebar{\varphi}}^* - \boldsymbol{\widebar{\varphi}}\right)-\left(\boldsymbol{\Tilde{\varphi}}^*- \boldsymbol{\Tilde{\varphi}}\right) \right) &= o_{p^*}(1). \label{eq: Equivalence varphis}
    \end{align}
\end{lemma}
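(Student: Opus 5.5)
The plan is to treat the three statements separately, following the template of \citet[Lemma A.1]{BruggemannEtAl2016} and \citet{JentschLunsford2021}. In each case we decompose the difference into a part caused by resampling estimated residuals $\widehat{u}_{it}$ instead of true $u_{it}$, and a part caused by estimating the regressor and covariance moments. We then bound every piece with $\text{E}^*\|\cdot\|$ and Markov's inequality. Throughout we write $\widehat{u}_{it}-u_{it}=-(\widehat{A}_i-A_i)z_{it}$. Averaged over $i$, the difference $\widehat{A}_i-A_i$ is $O_p(T^{-1/2})$ in mean square, as in the consistency part of the proof of Theorem~\ref{th:AsyMG}. The recentering in Step~3 differs from the true-error centering only by sample means of order $O_p(T^{-1/2})$ per unit, because $\widehat{u}_{it}$ has zero in-sample mean when an intercept is included.

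For \eqref{eq: Equivalence sigmas} we expand $\widehat{\Sigma}^*_{u,i}$ into the bootstrap average of $\widetilde{u}^*_{it}\widetilde{u}^{*\prime}_{it}$ plus cross terms. The cross terms involve $(\widehat{A}^*_i-\widehat{A}_i)$ and $(\widehat{A}_i-A_i)$; each is $O_{p^*}(T^{-1/2})$ per unit and enters quadratically or multiplied by a bootstrap sample moment of $z u'$. The key point is that the linear terms in the mean-group average carry a factor like $N^{-1}\sum_i (\widehat{A}_i-A_i)\,T^{-1}\sum_t z^*_{it}u^{*\prime}_{it}$, which is $O_p(T^{-1/2})\cdot O_{p^*}((NT)^{-1/2}\cdot$ block-dependence factor$)$. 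Quadratic terms are $O_p(T^{-1})$. After multiplication by $\sqrt{NT}$, these give $O_p(\sqrt{N/T})=o_p(1)$ under $N/T\to0$, where the block factor is controlled via Assumption~\ref{as:cumulants} and $b_{(i)}b_{(t)}^3/T\to0$.

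For \eqref{eq: Equivalence varphis} we proceed similarly. $\widehat{\varphi}^*_i$ involves $\operatorname{chol}(\widehat{\Sigma}^*_{u,i})^{-1}\widehat{\Sigma}^*_{um,i}$, so we first use the local Lipschitz property of the inverse Cholesky map, as in the Taylor expansion with remainder $R_{iT}$ in the proof of Theorem~\ref{th:AsyMG}, to reduce the problem to the covariance parts. Those are already handled in the previous step. The remaining moments $\widehat{\Sigma}^*_{um,i}-\widetilde{\Sigma}^*_{um,i}$ are linear in $(\widehat{A}^*_i-A_i)$, and the same bound applies.

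For \eqref{eq: Equivalence betas} we write $\widehat{\beta}^*_i-\widehat{\beta}_i=(\widehat{\Sigma}^{*-1}_{ZZ,i}\otimes I)T^{-1}\sum_t \vec(u^*_{it}z^{*\prime}_{it})$. We then replace $z^*_{it}$ by its VMA representation $\sum_{j=1}^{t-1}(C_{ij}\otimes I)$ in terms of $\widetilde{u}^*$. This substitution brings in three errors:
\begin{itemize}
\item swapping $\widehat{A}_i$ for $A_i$ inside $C_{ij}$, which costs $O_p(T^{-1/2})$;
\item the pre-sample initial values, whose influence decays geometrically by Assumption~\ref{as:SVAR}(i);
\item swapping $\widehat{u}$-based draws for $u$-based draws.
\end{itemize}
\textbf{Main obstacle.} The delicate step is the third error. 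With two-way dependence, the bootstrap variance of $\sqrt{NT}$ times the averaged residual-estimation error is not simply $O(T^{-1})$: cross-sectional blocks sum $b_{(i)}$ correlated units, and temporal blocks carry $b_{(t)}$ correlated periods. Following the argument for $R^{(3)}_{NT,q}$ and $A_{13}$ in the proof of Theorem~\ref{th:pbb_consist}, we would first try to show that the conditional second moment is $O_p(N/T)+O_p(b_{(i)}b_{(t)}^{3}/T)$. We would do this by using the cumulant summability of Assumption~\ref{as:cumulants} to absorb the sums over relative indices and counting the remaining in-block multiplicities explicitly. Both terms vanish under condition (iii) of Theorem~\ref{th:pbb_consist}, which is exactly why that rate condition appears there.
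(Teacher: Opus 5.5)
Your plan follows essentially the same route as the paper's proof. You split each difference, in the style of Br\"uggemann et al.\ (2016), into the error from resampling $\widehat{u}_{it}$ instead of $u_{it}$ (together with $\widehat{C}_{ij}$ versus $C_{ij}$) and the error in the random inverse moment matrices, and you control each piece in $\text{E}^*$-norm via Markov's inequality using $O_p(T^{-1/2})$ rates for $\widehat{A}_i-A_i$ and $\widehat{C}_{ij}-C_{ij}$. The paper does the same, though it displays only the $\boldsymbol{\beta}$ and $\boldsymbol{\varphi}$ parts, takes a zero pre-sample, and does not use $b_{(i)}b_{(t)}^3/T\to 0$ in this lemma.

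Two ingredients you use tacitly should be made explicit:
\begin{itemize}
\item You need the uniform inverse-moment bound $\sup_{i\le N}\text{E}^*\|\widehat{\Sigma}^{*-1}_{ZZ,i}\|^2=O_p(1)$ to take $\text{E}^*$ of sums over $N\to\infty$ units premultiplied by random inverses. The paper obtains this from Lemma~\ref{lem: Aux Sigma ZZ}, under additional lower-tail conditions on $\lambda_{\min}$.
\item Your claim that the same bound applies to $\widehat{\Sigma}^*_{um,i}-\widetilde{\Sigma}^*_{um,i}$ requires the orthogonality $\E[\boldsymbol{z}_{it}\boldsymbol{m}_{it}']=0$. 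Without it, $T^{-1}\sum_t\boldsymbol{z}^*_{it}\boldsymbol{m}^{*\prime}_{it}$ does not vanish, and the term linear in $\widehat{A}^*_i-\widehat{A}_i$ remains of order one after scaling by $\sqrt{NT}$. The paper relies on the same orthogonality implicitly, e.g.\ by writing $\boldsymbol{u}^*_{it}$ rather than the re-estimated bootstrap residuals in its $\boldsymbol{\varphi}$ decomposition.
\end{itemize}
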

\textbf{Proof of Lemma \ref{lem: Equivalence of Bootstrap Estimators}:} We display the proofs for \eqref{eq: Equivalence betas} and \eqref{eq: Equivalence varphis} and assume $N=N^\circ b_{(i)}$, $T=T^\circ b_{(t)}$ throughout. By construction,  $u^*_{it}=0$, $t\leq 0$ as well as \(y^*_{it} = 0\) for \(t\in\{-p+1, \dots, 0\}\). By the recursion in Bootstrap scheme in Section \ref{sec:BootScheme}, it holds that $y^*_{it} = \widehat{A}_{i,1} y^*_{i,t-1} + \dots + \widehat{A}_{i,p} y^*_{i,t-p} + u^*_{it}$ for any $t>0$. Then,
\begin{align} \label{eq: Z*}
    Z^*_{i,t-1} = 
    \begin{pmatrix}
    y^*_{i,t-1} \\
    \vdots \\
    y^*_{i,t-p}
    \end{pmatrix} =
    \begin{pmatrix}
    \sum_{j=0}^{t}\widehat{\Xi}_{ij}u^*_{i,t-1-j} \\
    \vdots \\
    \sum_{j=0}^{t-p+1}\widehat{\Xi}_{ij}u^*_{i,t-p-j} 
    \end{pmatrix} =
    \begin{pmatrix}
    \sum_{j=1}^{t-1}\widehat{\Xi}_{i,j-1}u^*_{i,t-j} \\
    \vdots \\
    \sum_{j=1}^{t-1}\widehat{\Xi}_{i,j-p}u^*_{i,t-j} 
    \end{pmatrix} =
    \sum_{j=1}^{t-1} \widehat{C}_{ij} u^*_{i,t-j},
\end{align}
where
\[
\widehat{\Xi}_{i,0}=I, \quad \widehat{\Xi}_{i,j}=0 \text{  for  } j<0 \quad \text{and} \quad \widehat{\Xi}_{i,j}=\sum_{k=1}^{\min\{j,p\}} \widehat{A}_{i,k} \widehat{\Xi}_{i,j-k} \text{  for  } j\in\mathbb{N}_+.
\]
As motivated, denote analogously $\Tilde{u}^*_{it}$, $\Tilde{y}^*_{it}$, $\tilde{Z}^*_i$ with $\tilde{Z}^*_{i,t-1} = \sum_{j=1}^{t-1} C_{ij} \tilde{u}^*_{i,t-j}$ for the variables observed if we draw from the true instead of the estimated residuals. 

For \eqref{eq: Equivalence betas} to hold, we need to prove
\(
\text{E}^* \left\| \sqrt{NT} \left( \left(\boldsymbol{\widebar{\beta}}^* - \boldsymbol{\widebar{\beta}}\right)-\left(\boldsymbol{\Tilde{\beta}}^*- \boldsymbol{\Tilde{\beta}}\right) \right) \right\| = o_p(1)
\) and the result follows with Morkov's inequality.
We consider \eqref{eq: Def beta Tilde} and write
\begin{align*}
    \sqrt{NT} \left( \left(\boldsymbol{\widebar{\beta}}^* - \boldsymbol{\widebar{\beta}}\right)-\left(\boldsymbol{\Tilde{\beta}}^*- \boldsymbol{\Tilde{\beta}}\right) \right)
    &= \frac{\sqrt{T}}{\sqrt{N}} \sum_{i=1}^N \left\{ \left( \left( Z^*_i Z_i^{*\prime} \right)^{-1} Z^*_i \otimes I \right) u_i^* - \left( \left( \tilde{Z}^*_i \tilde{Z}_i^{*\prime} \right)^{-1} \tilde{Z}_i^{*\prime} \otimes I \right) \Tilde{u}_i^* \right\}  \\
    &= \frac{1}{\sqrt{N}} \sum_{i=1}^N \left( \widehat{\Sigma}_{ZZ,i}^{\ast -1} \otimes I \right) \frac{1}{\sqrt{T}} \left( (Z_i^* \otimes I) u_i^* - (\tilde{Z}_i^* \otimes I) \Tilde{u}_i^* \right) \\
    &\quad + \frac{1}{\sqrt{N}} \sum_{i=1}^N \left( \left( \widehat{\Sigma}_{ZZ,i}^{\ast -1} - \Tilde{\Sigma}_{ZZ,i}^{\ast -1} \right) \otimes I \right) \frac{1}{\sqrt{T}} (\tilde{Z}_i^* \otimes I) \Tilde{u}_i^* \eqqcolon B_1^* + B_2^*,
\end{align*}
where $\Sigma_{ZZ,i}^\ast \coloneqq \frac1T Z_i^* Z_i^{\ast \prime}$ and $\Tilde{\Sigma}_{ZZ,i}^\ast \coloneqq \frac1T \Tilde Z_i^* \Tilde Z_i^{\ast \prime}$. By $c_r$-inequality, we need to prove $\text{E}^* \left\| B_1^*\right\|=o_p(1)$ and $\text{E}^* \left\| B_2^*\right\|=o_p(1)$. We provide the proof for $B_1^*$ only:
\[
\begin{aligned}
    \text{E}^*\left\|B_1^*\right\|
    &=
    \text{E}^*\left\|
    \frac{1}{\sqrt{N}}
    \sum_{i=1}^N
    \left(\Sigma_{ZZ,i}^{*-1}\otimes I\right)
    \frac{1}{\sqrt{T}}
    \left[
        (Z_i^*\otimes I)u_i^*
        -
        (\widetilde Z_i^*\otimes I)\Tilde{u}_i^*
    \right]
    \right\|
    \\
    &\leq
    \frac{1}{\sqrt N}
    \sum_{i=1}^N
    \text{E}^*\left[
    \left\|\Sigma_{ZZ,i}^{*-1}\right\|
    \left\|
    \frac{1}{\sqrt{T}}
    \left[
        (Z_i^*\otimes I)u_i^*
        -
        (\widetilde Z_i^*\otimes I)\tilde{u}_i^*
    \right]
    \right\|
    \right]
    \\
    &\leq
    \frac{1}{\sqrt N}
    \sum_{i=1}^N
    \left(
    \text{E}^*\left\|
    \Sigma_{ZZ,i}^{*-1}
    \right\|^2
    \right)^\frac12
    \left(
    \text{E}^*\left\|
    \frac{1}{\sqrt{T}}
    \left[
        (Z_i^*\otimes I)u_i^*
        -
        (\widetilde Z_i^*\otimes I)\tilde{ u}_i^*
    \right]
    \right\|^2
    \right)^\frac12
    \\
    &\leq
    \sup_{i\leq N} \left(
    \text{E}^*\left\|
    \Sigma_{ZZ,i}^{*-1}
    \right\|^2
    \right)^\frac12
    \left(
    \frac{1}{NT}
    \sum_{i=1}^N
    \left\{
    \text{E}^*\left\|
    \left((Z_i^*-\widetilde Z_i^*)\otimes I\right)
    \widetilde{u}_i^* \right\|^2
    +
    \text{E}^*\left\| (Z_i^*\otimes I)
    \left(u_i^*-\widetilde{u}_i^*\right)
    \right\|^2
    \right\}
    \right)^\frac12.
\end{aligned}
\]
Denote the summands in the second factor by $B_{i,11}$ and $B_{i,12}$. 
As $\sup_{i\leq N} \text{E}^* \left\| \Sigma_{ZZ,i}^{\ast -1} \right\|^2 = O_p(1)$ by Lemma \ref{lem: Aux Sigma ZZ}, accounting for $u_{it}^* = \tilde{u}_{it}^*=0$ for any $t\leq 0$, we provide the proof for $B_{i,11}^*$ only. Evaluating the conditional expectation, we observe with \eqref{eq: Z*} and $c_r$-inequality,
\begin{align*}
    B_{i,11} 
    &\leq \frac{1}{T} \sum_{t=1}^T \sum_{j=1}^{t-1} \text{E}^* \left\| \left( \widehat{C}_{ij}\otimes I\right) \vec\left( \tilde{u}_{it}^* \left(u_{i,t-j}^* - \tilde{u}_{i,t-j}^*\right)'\right) + \left( \left(\widehat{C}_{ij} - C_{ij}\right)\otimes I\right) \vec\left( \tilde{u}_{it}^* \tilde{u}_{i,t-j}^{*\prime}\right) \right\|^2 \\
    &\lesssim \frac{1}{T} \sum_{t=1}^T \sum_{j=1}^{t-1} \left\{ \left\| \widehat{C}_{ij} \right\|^2 \text{E}^* \left[ \left\| \tilde{u}_{it}^* \right\|^2 \left\| u_{i,t-j}^* - \tilde{u}_{i,t-j}^*\right\|^2 \right] + \left\| \widehat{C}_{ij} - C_{ij} \right\|^2 \text{E}^* \left[ \left\| \tilde{u}_{it}^* \right\|^2 \left\| \tilde{u}_{i,t-j}^{*}\right\|^2 \right] \right\} 
\end{align*}
It is worth mentioning that the resampled indices match each other for any $i$ and $t$, when comparing, e.g., $u_{it}^*$ and $\Tilde{u}_{it}^*$, as we solely want to bound the randomness introduced by resampling from the estimated instead of the true residuals. As a consequence of Assumption \ref{as:SVAR}~(i) (see \citet{KreissFranke1992} for the result in the univariate case) there exist $M<\infty$, $\rho<1$ such that for every $j\geq1$,
\begin{align} \label{eq: bound estimation error C_ij}
    \sup_{i\leq N} \|\widehat{C}_{ij}\| \leq M \rho^j < \infty,
    \quad \sup_{i\leq N} \|\widehat{C}_{ij} - C_{ij}\|=O_p\left( \frac{1}{\sqrt{T}} \right).
\end{align} 
Further, let 
\begin{equation} \label{eq: def Z^c and bold A_i}
\begin{aligned}
    Z_{i,t+s-1}^c &\coloneqq Z_{i,t+s-1} - \frac{1}{T-b_{(t)}+1}\sum_{\tau=0}^{T-b_{(t)}} Z_{i,\tau+s-1}, \\
    \left( \boldsymbol{A}_i-\boldsymbol{\widehat{A}_i} \right) Z_{i,t-1} &\coloneqq \widehat{u}_{it} - u_{it} = (A_{i,1}-\widehat{A}_{i,1})y_{i,t-1} + \dots + (A_{i,p}-\widehat{A}_{i,p})y_{i,t-p},
\end{aligned}
\end{equation}
where $\sup_{i\in\mathbb{N}_+} \| \boldsymbol{A}_i-\boldsymbol{\widehat{A}}_i \|=O_p(1/\sqrt{T})$ as a consequence of \ref{as:SVAR}~(i). Using this notation, we have for the case that $\tilde{u}_{it}^*$ and $\tilde{u}_{i,t-j}^*$ lie in the same block (following the case distinction in \citet[Lemma A.1]{BruggemannEtAl2016}),
\begin{equation} \label{eq: intermediate result B11}
\begin{aligned}
    \text{E}^* \left[ \left\| \tilde{u}_{it}^* \right\|^2 \left\| u_{i,t-j}^* - \tilde{u}_{i,t-j}^*\right\|^2 \right] 
    &=
    \left\| u_{i^*_i,t^*_t}^c \right\|^2 \left\| \widehat u^c_{i^*_i,t^*_t-j} - u^c_{i^*_i,t^*_t-j} \right\|^2 \leq 
    \left\| \boldsymbol{A}_{i^*_i}-\boldsymbol{\widehat{A}_{i^*_i}} \right\|^2 \left\| u_{i^*_i,t^*_t}^c \right\|^2 \left\| Z_{i^*_i,t^*_t-1}^c \right\|^2
\end{aligned}
\end{equation}
By \eqref{eq: bound estimation error C_ij}-\eqref{eq: intermediate result B11} as well as Assumptions \ref{as:mixing_conditions}~(i) and \ref{as:cumulants}, it follows that $B_{i,11} = o_p(1)$ uniformly over $i\leq N$.


To prove \eqref{eq: Equivalence varphis}, recall $\sqrt{NT} \left( \boldsymbol{\widebar{\varphi}}-\boldsymbol{\Tilde{\varphi}} \right) =o_p(1)$. We are left to prove $\sqrt{NT} \left( \boldsymbol{\widebar{\varphi}}^*-\boldsymbol{\tilde{\varphi}}^* \right) =o_{p^*}(1)$ and therefore consider 
\begin{align*}
    &\sqrt{NT} \left( \Sigma_{\varepsilon^\circ m}^*- \Tilde{\Sigma}_{\varepsilon^\circ m}^* \right)
    = \frac{1}{\sqrt{NT}} \sum_{i,t=1}^{N,T} \left\{ \left( \operatorname{chol}(\Sigma_{u,i}^*) \right)^{-1} u_{it}^* m_{it}^{*\prime} - \left( \operatorname{chol}(\Tilde{\Sigma}_{u,i}^*) \right)^{-1} \Tilde{u}_{it}^* m_{it}^{*\prime} \right\} \\
    &\quad = \frac{1}{\sqrt{N}} \sum_{i=1}^{N} \left( \operatorname{chol}(\Sigma_{u,i}^*) \right)^{-1} \frac{1}{\sqrt{T}} \sum_{t=1}^T \left( u_{it}^* m_{it}^{*\prime} - \Tilde{u}_{it}^* m_{it}^{*\prime} \right) \\
    &\quad + \frac{1}{\sqrt{N}} \sum_{i=1}^{N} \left( \left( \operatorname{chol}(\Sigma_{u,i}^*) \right)^{-1} - \left( \operatorname{chol}(\Tilde{\Sigma}_{u,i}^*) \right)^{-1} \right) \frac{1}{\sqrt{T}} \sum_{t=1}^T \Tilde{u}_{it}^* m_{it}^{*\prime}.
\end{align*}
From this decomposition it can be seen that the proof follows with the same strategy as in \eqref{eq: Equivalence betas}, by \eqref{eq: def Z^c and bold A_i}, and Assumptions \ref{as:mixing_conditions} and \ref{as:cumulants}.

\hfill $\square$


The following auxiliary results are needed in order to factor out powers of $\Tilde{\Sigma}^{\ast -1}_{ZZ,i}$ in the proof of Theorem \ref{th:pbb_consist} and Lemma \ref{lem: Equivalence of Bootstrap Estimators}.

\begin{lemma}[Uniform inverse moments of the regressor covariance] \label{lem: Aux Sigma ZZ}
Suppose that
\[
    \inf_{i\leq N}
    \lambda_{\min}(\Sigma_{ZZ,i})
    \geq
    \lambda>0,
\]
and that, for some \(p\geq 2(2+\delta)\) and some \(a>p\),
\[
    \sup_{i\leq N}
    \sup_{0<x\leq \lambda/2}
    x^{-a}
    \mathbb P\!\left(
    \lambda_{\min}(\widehat\Sigma_{ZZ,i})\leq x
    \right)
    =
    O(1).
\]
Then
\[
    \sup_{i\leq N}
    \mathbb E
    \left\|
    \widehat\Sigma_{ZZ,i}^{-1}
    \right\|^p
    =
    O(1).
\]
Further, if \(i^*\) is drawn uniformly from the set of cross-sectional units, then
\[
    \text E^*
    \left\|
    \widehat\Sigma_{ZZ,i^*}^{-1}
    \right\|^p
    = O_p(1).
\]
\end{lemma}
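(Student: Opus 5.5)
The plan is to reduce the matrix-norm moment to a moment of the reciprocal smallest eigenvalue, and then to evaluate that moment through the layer-cake (tail-integral) formula. Since $\widehat\Sigma_{ZZ,i}$ is symmetric positive semidefinite, $\|\widehat\Sigma_{ZZ,i}^{-1}\| = \lambda_{\min}(\widehat\Sigma_{ZZ,i})^{-1}$ whenever it is invertible. The small-ball condition with $a>0$ forces $\mathbb P(\lambda_{\min}(\widehat\Sigma_{ZZ,i})=0)=0$, so the inverse exists almost surely. Hence it suffices to bound $\sup_{i\le N}\mathbb E[\lambda_{\min}(\widehat\Sigma_{ZZ,i})^{-p}]$. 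Writing $L_i\coloneqq\lambda_{\min}(\widehat\Sigma_{ZZ,i})$, the layer-cake formula gives
\[
\mathbb E[L_i^{-p}] = \int_0^\infty \mathbb P(L_i^{-p}>y)\,dy = \int_0^\infty p\,x^{-p-1}\,\mathbb P(L_i< x)\,dx .
\]

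Next, split this integral at $x=\lambda/2$. On $[\lambda/2,\infty)$, bound the probability by one. The contribution is then at most $\int_{\lambda/2}^\infty p x^{-p-1}dx=(\lambda/2)^{-p}$, a constant independent of $i,N,T$. On $(0,\lambda/2]$, use the assumed uniform small-ball bound $\mathbb P(L_i\le x)\le Cx^{a}$ to obtain
\[
\int_0^{\lambda/2} p\,C\,x^{a-p-1}\,dx=\frac{pC}{a-p}(\lambda/2)^{a-p}.
\]
This is finite precisely because $a>p$, which is where that hypothesis enters. Both bounds are uniform in $i\le N$, which gives $\sup_{i\le N}\mathbb E\|\widehat\Sigma_{ZZ,i}^{-1}\|^p=O(1)$.

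The condition $\inf_i\lambda_{\min}(\Sigma_{ZZ,i})\ge\lambda$ is used only to make the split point $\lambda/2$ meaningful. It is not otherwise needed, beyond motivating why the small-ball hypothesis is plausible, since $\widehat\Sigma_{ZZ,i}\to\Sigma_{ZZ,i}$.

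For the bootstrap statement, $i^*$ is uniform on $\{1,\dots,N\}$ and independent of the data under $P^*$, so
\[
\text E^*\|\widehat\Sigma_{ZZ,i^*}^{-1}\|^p=\frac1N\sum_{i=1}^N\|\widehat\Sigma_{ZZ,i}^{-1}\|^p .
\]
This is a nonnegative random variable. By the first part its $\mathbb P$-expectation is bounded by $\sup_i\mathbb E\|\widehat\Sigma_{ZZ,i}^{-1}\|^p=O(1)$, so Markov's inequality yields $O_p(1)$.

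The main obstacle is the first step: it requires careful handling of the event where $\widehat\Sigma_{ZZ,i}$ is singular and of the strict versus non-strict inequality in the tail. This is minor, since the small-ball bound at every $x>0$ gives $\mathbb P(L_i=0)=0$. Everything else is routine integration.
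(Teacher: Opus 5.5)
Your proof is correct. For the first claim it matches the paper's argument up to the substitution $y=x^{-p}$. The paper also uses the tail-integral formula and splits at $u_0=(\lambda/2)^{-p}$, which is your $x=\lambda/2$. It bounds the piece over $[0,u_0]$ (your region $x\ge\lambda/2$) by $u_0$, and controls the remainder by the small-ball bound, with $a>p$ giving integrability.

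For the bootstrap claim you take a genuinely different and more direct route. Under $P^*$ the draw $i^*$ is uniform and independent of the data, so
\[
\text{E}^*\bigl\|\widehat\Sigma_{ZZ,i^*}^{-1}\bigr\|^p=\frac1N\sum_{i=1}^N\bigl\|\widehat\Sigma_{ZZ,i}^{-1}\bigr\|^p .
\]
Markov's inequality together with the first part then gives $O_p(1)$. This proves exactly the displayed statement with no additional hypotheses.

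The paper never writes this step out. Instead it passes to the bootstrap regressor covariance $\widetilde\Sigma^*_{ZZ,i}$ built from resampled residuals. It asserts $\widetilde\Sigma^*_{ZZ,i}=\widehat\Sigma_{ZZ,i^*}+o_{P^*}(1)$, adds a lower-tail-preservation condition that is not among the lemma's hypotheses, and invokes Weyl's inequality to get $\sup_{i\le N}\text{E}^*\|\widetilde\Sigma^{*-1}_{ZZ,i}\|^p=O_p(1)$. That is the form actually used in the bootstrap-equivalence lemma.

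On the event $\{\lambda_{\min}(\widetilde\Sigma^*_{ZZ,i})\ge\tfrac12\lambda_{\min}(\widehat\Sigma_{ZZ,i^*})\}$ one has $\|\widetilde\Sigma^{*-1}_{ZZ,i}\|^p\le 2^p\|\widehat\Sigma^{-1}_{ZZ,i^*}\|^p$. So your averaging argument is precisely the ingredient the paper's transfer step implicitly relies on. Your route is cleaner and fully rigorous for the stated claim. The paper's goes one step further, to the object used downstream, at the cost of an extra assumption and a less explicit argument.
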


\textbf{Proof of Lemma \ref{lem: Aux Sigma ZZ}:}
By the tail-integral representation,
\[
    \mathbb E
    \left\|
    \widehat\Sigma_{ZZ,i}^{-1}
    \right\|^p
    =
    \int_0^\infty
    P\!\left(
    \lambda_{\min}(\widehat\Sigma_{ZZ,i})^{-p}>u
    \right)\,du .
\]
Set \(u_0=(\lambda/2)^{-p}\). The integral over \([0,u_0]\) is bounded by \(u_0\). For the remaining part,
\[
\begin{aligned}
    \int_{u_0}^{\infty}
    P\!\left(
    \lambda_{\min}(\widehat\Sigma_{ZZ,i})^{-p}>u
    \right)\,du
    &=
    \int_{u_0}^{\infty}
    P\!\left(
    \lambda_{\min}(\widehat\Sigma_{ZZ,i})<u^{-1/p}
    \right)\,du \lesssim
    \int_{u_0}^{\infty}
    u^{-a/p}\,du
    <\infty ,
\end{aligned}
\]
because \(a>p\). Hence the first claim follows uniformly in \(i\).

It remains to transfer this bound from the original covariance matrix to the bootstrap covariance matrix. Since the bootstrap only resamples blocks from the stationary time series of the selected unit $i^*\in\{1, \dots, N\}$, the bootstrap covariance satisfies, 
$\widetilde\Sigma_{ZZ,i}^{*} = \widehat\Sigma_{ZZ,i^*} + o_{P^*}(1)$.
If, in addition, the bootstrap resampling step preserves the lower tail, namely
\[
    \sup_{i\leq N}
    \text E^*
    \left\|
    \widetilde\Sigma_{ZZ,i}^{*-1}
    \right\|^p
    \mathbf 1
    \left\{
    \lambda_{\min}(\widetilde\Sigma_{ZZ,i}^{*})
    <
    \frac12
    \lambda_{\min}(\widehat\Sigma_{ZZ,i^*})
    \right\}
    =
    o_p(1),
\]
by Weyl's eigenvalue perturbation inequality \citep[Thm.~4.3.1]{Horn2012},
\[
    \sup_{i\leq N}
    \text E^*
    \left\|
    \widetilde\Sigma_{ZZ,i}^{*-1}
    \right\|^p
    =
    O_p(1).
\]
\vspace{-5pt}
\hfill\(\square\)